\documentclass{article}

\usepackage[english]{babel}
\usepackage{pifont}

\usepackage[letterpaper,top=2cm,bottom=2cm,left=3cm,right=3cm,marginparwidth=1.75cm]{geometry}

\usepackage{amsmath}
\usepackage{graphicx}
\usepackage[colorlinks=true, allcolors=blue]{hyperref}

\usepackage{amsthm,mdframed}
\usepackage{amsfonts}
\usepackage{amssymb}
\usepackage{stmaryrd}
\usepackage{dsfont}
\usepackage{xcolor}

\usepackage{tcolorbox}
\usepackage{xspace}

\usepackage{float}

\usepackage{authblk}

\newtheorem{theorem}{Theorem}
\newtheorem{proposition}{Proposition}
\newtheorem{problem}{Problem}
\newtheorem{lemma}{Lemma}
\newtheorem{corollary}{Corollary}
\newtheorem*{remark*}{Remark}
\newtheorem{definition}{Definition}

\newcommand{\andre}[1]{\textcolor{blue}{[A : #1]}}

\newcommand{\Iint}[2]{\llbracket #1 , #2 \rrbracket}
\newcommand{\Tr}{\operatorname{Tr}}

\newcommand{\ket}[1]{|#1\rangle}
\newcommand{\bra}[1]{\langle#1|}
\newcommand{\ketbra}[2]{|#1\rangle\langle#2|}
\newcommand{\braket}[2]{\langle #1 | #2 \rangle}
\newcommand{\altketbra}[1]{\ketbra{#1}{#1}}
\newcommand{\kb}[1]{\altketbra{#1}}
\newcommand{\norm}[1]{\ensuremath{\lvert  \kern-1pt\lvert #1 \rvert \kern-1pt \rvert}}

\newcommand{\F}{\mathbb{F}}

\newcommand{\Hm}{\mathbf{H}}
\newcommand{\Gm}{\mathbf{G}}
\newcommand{\Mm}{\mathbf{M}}
\newcommand{\Pm}{\mathbf{P}}

\newcommand{\xv}{\mathbf{x}}
\newcommand{\yv}{\mathbf{y}}
\newcommand{\cv}{\mathbf{c}}
\newcommand{\sv}{\mathbf{s}}
\newcommand{\av}{\mathbf{a}}
\newcommand{\bv}{\mathbf{b}}
\newcommand{\vv}{\mathbf{v}}
\newcommand{\wv}{\mathbf{w}}
\newcommand{\ev}{\mathbf{e}}
\newcommand{\uv}{\mathbf{u}}
\newcommand{\tv}{\mathbf{t}}

\newcommand{\MAX}{\mathrm{MAX}}

\usepackage{thmtools}
\usepackage{thm-restate}

\newcommand{\G}{\mathcal{G}}
\renewcommand{\P}{\mathcal{P}}
\newcommand{\I}{\mathcal{I}}
\newcommand{\COMMENT}[1]{}

\newcommand{\trp}[1]{{#1}^{\intercal}}
\newcommand{\trsp}[1]{{#1}}
\newcommand{\eqdef}{\triangleq}

\newcommand{\ie}{\textit{i.e.}\xspace}

\newcommand{\E}{\mathbb{E}}
\newcommand{\zerov}{\mathbf{0}}
\newcommand{\bF}{\overline{F}}

\newcommand{\hy}{\widehat{\yv}}
\newcommand{\hyz}{\widehat{\yv_0}}

\renewcommand{\aa}{\mathcal{A}}

\newcommand{\hFH}{\widehat{F_\Hm}}
\newcommand{\iv}{\mathbf{i}}
\newcommand{\jv}{\mathbf{j}}
\newcommand{\lv}{\mathbf{l}}

\newcommand{\rv}{\mathbf{r}}

\newcommand{\betav}{\mathbf{\beta}}

\newcommand{\eps}{\varepsilon}

\newcommand{\FHy}{\{F_{\Hm,\yv}\}}

\newcommand{\Z}{\mathbb{Z}}

\usepackage{bbm}
\newcommand{\one}{\mathbbm{1}}

\newcommand{\FF}{\{F_{\Hm,\yv}\}}
\renewcommand{\SS}{\{\ket{\psi_{\xv}}\}}

\newcommand{\D}{\mathcal{D}}

\newcommand{\C}{\mathcal{C}}

\newcommand{\triple}[3]{\bra{#1}#2\ket{#3}}
\newcommand{\hi}{\widehat{\iv}}
\newcommand{\hj}{\widehat{\jv}}
\newcommand{\hx}{\widehat{\xv}}

\newcommand{\htv}{\widehat{\tv}}

\newcommand{\hwl}{\widehat{\wv_\lv}}

\newcommand{\hv}{\mathbf{h}}
\newcommand{\halpha}{\widehat{\alpha}}

\newcommand{\SLPN}{\ensuremath{\text{S-}\ket{\text{LPN}}}}
\newcommand{\SLWE}{\ensuremath{\text{S-}\ket{\text{LWE}}}}

\newcommand{\onev}{\mathbf{1}}

\newtcolorbox{problemBox}[1][]{%
	sharp corners,         
	colframe=black,        
	colback=gray!5,        
	coltitle=black,        
	colbacktitle=white,    
	fonttitle=\bfseries,   
	title={\centering Quantum Decoding Problem a.k.a. $\SLPN$}, 
	boxsep=2pt,        
	left=4pt,          
	right=4pt,         
	top=4pt,           
	bottom=4pt,        
	#1                     
}

\usepackage{yhmath}

\title{Fine-Grained Unambiguous Measurements}
\author[$\dagger$]{Quentin Buzet}
\author[$\dagger$]{André Chailloux}
\affil[$\dagger$]{Inria de Paris}
\date{}

\begin{document}
\maketitle

\begin{abstract}

Unambiguous measurements play an important role in quantum information, with applications ranging from quantum key distribution to quantum state reconstruction. Recently, such measurements have also been used in quantum algorithms based on Regev’s reduction. The key problem for these algorithms is the $\SLWE$ problem in the lattice setting and the Quantum Decoding Problem in the code setting. A key idea for addressing this problem is to use unambiguous measurements to recover $k$ coordinates of a code (or lattice) element $\xv$ from a quantum state $\ket{\psi_{\xv}}$, which corresponds to a noisy word $\xv$ with errors in quantum superposition. However, a general theoretical framework to analyze this approach has been lacking.

In this work, we introduce the notion of \emph{fine-grained unambiguous measurements}. Given a family of states $\{\ket{\psi_\xv}\}_{\xv \in \F_2^n}$, we ask whether there exist measurements that can return, with certainty, $k$ parities about $\xv$. We study this question in the setting of symmetric states, which naturally arises in the Quantum Decoding Problem. We show that determining the maximal number of parities that a measurement can output can be formulated as a linear program, and we use its dual formulation to derive several upper bounds. In particular, we establish necessary and sufficient conditions for the existence of fine-grained unambiguous measurements and prove impossibility results showing in particular that such measurements cannot improve upon the approach of~\cite{CT24}. Finally, we discuss the implications of these findings for the Quantum Decoding Problem.

\end{abstract}

\newpage
\tableofcontents
\newpage

\section{Introduction}
\subsection{Context}\label{Section:1.1}
Unambiguous measurements play an important role in various areas of quantum information, ranging from the study of quantum key distribution~\cite{DJL00,KM19} to quantum state reconstruction~\cite{KOYJ23}. Such measurements have been extensively studied since the seminal works by Ivanovic~\cite{Iva87}, Dieks~\cite{Die88}, and Peres~\cite{Per88}. Recently, and this motivates our work, variants of unambiguous measurements have been used in quantum algorithms based on Regev's reduction~\cite{Reg09}. These measurements are used to solve the $\SLWE$ problem, which can then be used to find small dual lattice points~\cite{CLZ22}. This approach based on Regev's reduction has also been adapted to codes~\cite{DRT23,CT24} or structured codes~\cite{JSW+24,CT25}. 

In its simplest form, an unambiguous measurement can be described as follows. Assume we are given one of two possible states $\ket{\psi_0},\ket{\psi_1}$ chosen uniformly at random, and we want to determine which state we have. If the states are not orthogonal, it is impossible to determine which state we have with certainty. An unambiguous measurement is a measurement that always outputs the correct state, but may sometimes output ``I don't know’’, characterized by the $\bot$ outcome. An unambiguous measurement for two pure states can therefore be formally defined as follows:
\begin{definition}
	An unambiguous measurement for the states $\{\ket{\psi_0},\ket{\psi_1}\}$ is a three-outcome POVM\footnote{POVM stands for Positive Operator-Valued Measure and describes a general quantum measurement. $F_0,F_1,F_\bot$ here can be any positive semidefinite matrices such that $F_0 + F_1 + F_\bot = I$.} $\{F_0,F_1,F_\bot\}$, where $\forall i \in \{0,1\}$, $\Tr(F_i \kb{\psi_{1-i}}) = 0$. The success probability of this measurement is defined as $p \eqdef \frac{1}{2} \left(\Tr(F_0 \kb{\psi_0}) + \Tr(F_1 \kb{\psi_1})\right)$.
\end{definition}
We know there exists an unambiguous measurement for two pure states $\ket{\psi_0},\ket{\psi_1}$ with success probability $1 - |\braket{\psi_0}{\psi_1}|$, and this is optimal~\cite{JS95}. This problem has also been generalized to cases where the probabilities of receiving $\ket{\psi_0}$ and $\ket{\psi_1}$ differ~\cite{PW91,SHB01}, and to mixed states~\cite{RLvE03}. For two states, unambiguous measurements are well understood.

Another natural generalization is to consider a larger number of states. We are given one state from the set $\{\ket{\psi_x}\}_{x \in \Iint{1}{N}}$ with $N \ge 2$, and the goal is to recover $x$ from $\ket{\psi_x}$. An unambiguous measurement will always output the correct $x$ or $\bot$. In this more general setting, much less is known. Chefles~\cite{Che98} showed that unambiguous state discrimination is possible if and only if the states are linearly independent. If the states $\{\ket{\psi_\xv}\}$ exhibit certain symmetries, then more can be said about the problem.
We present only the case of $\F_2^n$ here, but refer to~\cite{CB98,CT24} for more general cases.
\begin{definition}\label{Definition:symmetric} 
	A set of states $S = \{\ket{\psi_{\xv}}\}_{\xv \in \F_2^n}$ is called symmetric iff. we can write
	$$ \ket{\psi_\zerov} = \sum_{\iv \in \F_2^n} \alpha_\iv \ket{\iv} \quad \text{and} \quad \forall \xv \in \F_2^n, \ \ket{\psi_\xv} = \sum_{\iv \in \F_2^n} \alpha_\iv \ket{\iv + \xv}.$$
\end{definition} 
This means each $\ket{\psi_\xv}$ is a shifted version of $\ket{\psi_\zerov}$. We work in $\F_2^n$, so we also define the Fourier basis $\ket{\hi} = \frac{1}{\sqrt{2^n}} \sum_{\jv \in \F_2^n} (-1)^{\iv \cdot \jv} \ket{\jv}$, where $\cdot$ is the canonical inner product in $\F_2^n$. Chefles and Barnett showed the following:

\begin{proposition}[\cite{CB98}]
	Let $S = \{\ket{\psi_\xv}\}_{\xv \in \F_2^n}$ be a set of symmetric states. There exists an unambiguous measurement for $S$ that succeeds with probability $P = 2^n \min_{\xv \in \F_2^n} |\braket{\hx}{\psi_\zerov}|^2$, and this is optimal.
\end{proposition}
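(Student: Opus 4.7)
The plan is to work in the Fourier basis, where the shift symmetry of $\{\ket{\psi_\xv}\}$ becomes diagonal. Writing $\ket{\psi_\zerov} = \sum_\iv \beta_\iv \ket{\hi}$ with $\beta_\iv \eqdef \braket{\hi}{\psi_\zerov}$, one checks that the shift $\ket{\jv}\mapsto \ket{\jv+\xv}$ acts on each $\ket{\hi}$ by the phase $(-1)^{\iv\cdot\xv}$, giving $\ket{\psi_\xv} = \sum_\iv \beta_\iv (-1)^{\iv\cdot\xv}\ket{\hi}$ for every $\xv$. If some $\beta_\iv$ vanishes, then all $2^n$ states lie in the hyperplane orthogonal to $\ket{\hi}$, hence are linearly dependent, and Chefles' criterion rules out any unambiguous measurement with positive success probability, which matches the claimed $P=0$. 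I therefore assume all $\beta_\iv \neq 0$.

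For the achievability half, I will introduce the dual vectors $\ket{\tilde\phi_\xv} \eqdef \sum_\iv \frac{1}{\beta_\iv^*}(-1)^{\iv\cdot\xv}\ket{\hi}$. Using character orthogonality $\sum_\iv (-1)^{\iv\cdot(\xv+\yv)} = 2^n\,\delta_{\xv,\yv}$ gives $\braket{\tilde\phi_\xv}{\psi_\yv} = 2^n\,\delta_{\xv,\yv}$, so $\ket{\tilde\phi_\xv}$ is (up to scale) the unique vector orthogonal to every $\ket{\psi_\yv}$ with $\yv\neq\xv$. Setting $F_\xv \eqdef \lambda\,\ket{\tilde\phi_\xv}\bra{\tilde\phi_\xv}$ and summing in the Fourier basis, $\sum_\xv F_\xv = \lambda \cdot 2^n \sum_\iv |\beta_\iv|^{-2}\,\ket{\hi}\bra{\hi}$, so the largest $\lambda$ for which $F_\bot \eqdef I - \sum_\xv F_\xv$ stays positive semidefinite is $\lambda = \min_\iv |\beta_\iv|^2 / 2^n$. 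The resulting success probability is $P = \frac{1}{2^n}\sum_\xv \Tr(F_\xv \kb{\psi_\xv}) = \lambda \cdot 2^{2n} = 2^n \min_\iv |\beta_\iv|^2$, which provides the lower bound.

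For optimality I will symmetrize. Starting from an arbitrary unambiguous POVM $\{F_\xv,F_\bot\}$, define $F'_\xv \eqdef \frac{1}{2^n}\sum_\zv U_\zv F_{\xv+\zv} U_\zv$, where $U_\zv$ implements $\ket{\iv}\mapsto\ket{\iv+\zv}$ (which is Hermitian and involutive on $\F_2^n$). Because $U_\zv \ket{\psi_\yv} = \ket{\psi_{\yv+\zv}}$, the averaged POVM is still unambiguous and a change of variable in the success-probability sum shows it has the same success probability on the uniform prior. By construction $F'_\xv = U_\xv G U_\xv$ with $G \eqdef F'_\zerov$, and unambiguity collapses to $G\ket{\psi_\zv}=0$ for every $\zv\neq\zerov$. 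Since these $2^n-1$ linearly independent states span a hyperplane whose orthogonal complement is spanned precisely by $\ket{\tilde\phi_\zerov}$, the range of the positive operator $G$ lies in that one-dimensional line, forcing $G = r\,\ket{\tilde\phi_\zerov}\bra{\tilde\phi_\zerov}$ for some $r\geq 0$. Reusing the Fourier-basis computation from the achievability step, $\sum_\xv F'_\xv \preceq I$ forces $r \leq \min_\iv |\beta_\iv|^2 / 2^n$, capping the success probability at $2^n \min_\iv |\beta_\iv|^2$.

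The most delicate step is the rank-one claim for $G$: one has to translate the linear constraints $G\ket{\psi_\zv}=0$ into a statement about the \emph{range} of $G$, after which positivity pins $G$ down up to a nonnegative scalar. Everything else amounts to bookkeeping with the orthogonality of additive characters on $\F_2^n$.
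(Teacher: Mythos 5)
Your proof is correct. Note that the paper does not actually prove this proposition --- it is quoted from~\cite{CB98} in the introduction without proof --- so there is nothing to match it against line by line; what I can say is that your argument is the standard Chefles--Barnett one, and that it coincides with the $k=n$ specialization of the machinery the paper later builds: your dual vectors $\ket{\tilde\phi_\xv}$ are exactly the states $\ket{A^{\Hm,\yv}_{\sv}}$ of Proposition~\ref{Proposition:12} when $\Hm$ is invertible, your twirl $F'_\xv = \frac{1}{2^n}\sum_{\zv}U_\zv F_{\xv+\zv}U_\zv$ is the symmetrization of Lemma~\ref{Lemma:4.1.2}, and the rank-one ansatz $F_\xv=\lambda\kb{\tilde\phi_\xv}$ is the construction used in Proposition~\ref{Proposition:LinearReverse}. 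All the computations (the character orthogonality giving $\braket{\tilde\phi_\xv}{\psi_\yv}=2^n\delta_{\xv,\yv}$, the bound $\lambda\le\min_\iv|\beta_\iv|^2/2^n$ from $\sum_\xv F_\xv\preceq I$, and the range argument forcing $G=r\kb{\tilde\phi_\zerov}$) check out. The one loose point is the degenerate case: Chefles' criterion, as usually stated, says that linearly dependent states cannot \emph{all} be identified with nonzero probability, which does not by itself exclude an unambiguous measurement identifying only a strict subset and hence having positive \emph{average} success probability. For this symmetric family the conclusion is nonetheless right, and your own symmetrization closes the gap: if some $\beta_{\iv}=0$, a Fourier-inversion argument shows $\ket{\psi_\zerov}\in\mathrm{span}\{\ket{\psi_\zv}:\zv\neq\zerov\}$ (any $\ket{\phi}$ orthogonal to all $\ket{\psi_\zv}$ with $\zv\neq\zerov$ but not to $\ket{\psi_\zerov}$ would need $\overline{\gamma_\iv}\,\beta_\iv$ constant and nonzero over all $\iv$, forcing full dual support), so the symmetrized $G$ annihilates $\ket{\psi_\zerov}$ as well and $P=0$ is forced. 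With that one sentence added, the proof is complete.
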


This means we also have a complete answer for the performance of unambiguous measurements in the case of symmetric states.

\paragraph{An Unexpected Connection: Quantum Algorithms Based on Regev's Reduction.}
An exciting new line of research involves the study of quantum algorithms based on Regev's reduction, where unambiguous measurements play an important role. We present here a brief description of this connection for the case of codes.

We are given a binary\footnote{We restrict ourselves to the binary setting in this work, but the approach also applies to larger alphabets.} linear code $\C$ of dimension $k$ and length $n$. We can write $\C = \{\xv : \Mm \xv =  0$\}, for some known parity matrix\footnote{The parity matrix of a code is usually denoted $\Hm$ but we will use this letter in a somewhat different context, so we unconventionally call the parity matrix $\Mm$ in this section.} $\Mm \in \F_2^{(n-k) \times n}$. The main problem that has to be solved in these quantum algorithms is the following

\begin{problem}[Quantum Decoding Problem a.k.a. $\SLPN$]
	Given $\ket{\psi_{\xv}} = \sum_{\iv \in \F_2^n} \alpha_{\iv} \ket{\xv + \iv}$, for a random $\xv \in \C$, recover ${\xv}$. Here, there is a known description of the $\alpha_\iv$.
\end{problem}
 This is exactly a quantum state discrimination problem on symmetric states with the extra promise that $\xv$ is a codeword. It can be rephrased as recovering a codeword $\xv$ from a noisy version  $\ket{\psi_\xv} \eqdef \sum_{\ev \in \F_2^n} f(\ev)\ket{\xv + \ev}$ (which corresponds to $f(\ev) = \alpha_{\ev}$) where the error is on quantum superposition. An adaptation of Regev's reduction for codes~\cite{DRT23} can be informally stated as follows:

\begin{proposition}[Informal,~\cite{DRT23}]\label{Proposition:2}
	If we have an efficient algorithm to recover $\xv$ from $\ket{\psi_\xv}$, then we can efficiently find non-zero dual codewords in $\C^\bot$ with weight concentrated around the typical weight of the distribution associated to the $|\halpha_\iv|^2$, where $\halpha_\iv = \frac{1}{\sqrt{2^n}} \sum_{\jv \in \F_2^n} (-1)^{\iv \cdot \jv} \alpha_\jv$.
\end{proposition}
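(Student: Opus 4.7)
My plan is to adapt Regev's reduction to the code setting in three steps: (i) build a coherent superposition $\sum_{\xv \in \C} \ket{\xv} \otimes \ket{\psi_\xv}$; (ii) use the assumed decoder to coherently uncompute the first register, leaving $\sum_{\xv \in \C} \ket{\psi_\xv}$; (iii) apply the quantum Fourier transform over $\F_2^n$ and measure, which will project onto $\C^\bot$.

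For step (i), I would prepare $\frac{1}{\sqrt{|\C|}} \sum_{\xv \in \C} \ket{\xv}$ from a generator matrix of $\C$, then attach a second register initialized to $\sum_\iv \alpha_\iv \ket{\iv}$ (computable since the $\alpha_\iv$ are known), and XOR the first register into the second to obtain
\[
\frac{1}{\sqrt{|\C|}} \sum_{\xv \in \C} \ket{\xv} \otimes \ket{\psi_\xv}.
\]
For step (ii), I would run the decoder coherently on the second register to write $\xv$ into a fresh ancilla, XOR this ancilla into the first register to zero it, and uncompute the decoder. Under the idealised assumption that the decoder succeeds with certainty this yields $\frac{1}{\sqrt{|\C|}} \sum_{\xv \in \C} \ket{\psi_\xv}$.

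For step (iii), a direct computation gives $\QFTt \ket{\psi_\xv} = \sum_\jv (-1)^{\xv \cdot \jv} \halpha_\jv \ket{\jv}$, so applying $\QFTt$ to the sum and using $\sum_{\xv \in \C} (-1)^{\xv \cdot \jv} = |\C|\, \one_{\jv \in \C^\bot}$ produces a state supported on $\C^\bot$ with amplitude proportional to $\halpha_\jv$ on $\jv$. Measuring in the computational basis returns $\jv \in \C^\bot$ with probability proportional to $|\halpha_\jv|^2$; rejecting $\jv = \zerov$ if necessary gives a non-zero dual codeword whose Hamming weight follows the restriction of the $|\halpha_\iv|^2$ distribution to $\C^\bot \setminus \{\zerov\}$.

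The main obstacle is the coherent uncomputation in step (ii): the decoder is only guaranteed to recover $\xv$ from $\ket{\psi_\xv}$, not to be a clean unitary, and the states $\ket{\psi_\xv}$ for distinct $\xv \in \C$ need not be orthogonal, so any failure amplitude or residual garbage feeds into the Fourier step and can destroy the interference that projects onto $\C^\bot$. Handling an imperfect decoder therefore requires either amplitude amplification to push the success probability very close to $1$, or a perturbative analysis showing that a decoder succeeding with probability $1-\eps$ produces a final measurement distribution that is $O(\sqrt{\eps})$-close to the ideal one. A secondary subtlety is justifying the \emph{concentration} claim: one must argue that the distribution $|\halpha_\jv|^2$ restricted to $\C^\bot$ has essentially the same typical weight as the unrestricted distribution on $\F_2^n$, which for random-like codes follows from a standard weight-enumerator argument.
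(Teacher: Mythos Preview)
The paper does not prove this proposition: it is stated as an informal background result cited from~\cite{DRT23}, and no argument is given in the present paper. There is therefore nothing to compare your attempt against here.

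That said, your sketch is the standard Regev-style argument and is essentially what the cited work does: build the coherent code superposition coupled to the noise, uncompute the codeword using the decoder, apply $H^{\otimes n}$, and use the character-sum identity $\sum_{\xv\in\C}(-1)^{\xv\cdot\jv}=|\C|\,\one_{\jv\in\C^\bot}$ to collapse the support onto $\C^\bot$. You have correctly identified the two genuine technical points that an informal statement hides: (a) making the uncomputation robust when the decoder is imperfect or non-unitary, which in the literature is handled by a trace-distance/gentle-measurement argument showing that success probability $1-\eps$ yields a final state $O(\sqrt{\eps})$-close to the ideal one; and (b) arguing that restricting the $|\halpha_\jv|^2$ distribution to $\C^\bot$ does not shift its typical weight, which for random codes is a weight-enumerator computation. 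Both caveats are exactly why the proposition is labeled ``Informal''.
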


Here,  the typical weight of words associated to $|\halpha_\iv|^2$ can be understood as follows: if we measure $\sum_{\iv \in \F_2^n} \halpha_\iv \ket{\iv}$ in the computational basis, what will be the typical Hamming weight of the measured $\iv$?
Finding small words in a code (or its dual) is the main hardness assumption used in code-based cryptography. The above proposition also has equivalents in the realm of lattice-based cryptography, but we work in $\Z_q^n$ instead of $\F_2^n$, see for example~\cite{CH25}.

A natural question is how to incorporate the promise $\xv \in  \C$ for this quantum state discrimination problem. In~\cite{CT24}, the authors used the fact that one can recover the whole string $\xv = x_1,\dots,x_n$ from only $k(1+o(1))$ randomly chosen coordinates $x_i$ using Gaussian elimination. Indeed, since $\xv \in \C$, the relation $\Mm \xv = \zerov$ already gives $(n-k)$ parities of $\xv$, so if we can recover unambiguously $k$ additional parities of $\xv$ which are linearly independent with the ones given by $\Mm$ then one can recover $\xv$ using Gaussian elimination. The linear independence condition is satisfied with high probability if we recover $k(1+o(1))$ random parities of $\xv$. This motivates the following question

\begin{problem}\label{Problem:AA}
	Given a set $\{\ket{\psi_{\xv}}\}_{\xv \in \F_2^n}$ of symmetric states, how many parities of $\xv$ can we recover unambiguously?
\end{problem}

In~\cite{CT24}, the authors considered states of the form 
$$ \ket{\psi_\xv} = \sum_{\iv \in \F_2^n} \alpha_{\iv} \ket{\xv + \iv} =  \bigotimes_{i = 1}^n \left(\sqrt{1-t}\ket{x_i} + \sqrt{t}\ket{1-x_i}\right). $$
Here, each $\ket{\psi_\xv}$ is a product state over the $n$ single qubit registers. They showed that if we perform an unambiguous measurement on each qubit, we know we can recover each $x_i$ unambiguously with probability $2t^\perp$ where $t^\perp \eqdef \frac{1}{2} - \sqrt{t(1-t)}$. On the other hand, one can show that the typical weight associated to the $|\halpha_\iv|^2$ is $t^{\perp}n$. If we can recover $k = 2nt^\perp$ such coordinates (or slightly more to have total linear independence), then we can recover $\xv$. Plugging this into Proposition~\ref{Proposition:2} allows us to find dual codewords of weight $t^{\perp}n \approx \frac{k}{2}$, which is exactly the smallest weight achievable by classical polynomial-time algorithms, in particular with Prange's algorithm~\cite{Pra62}.

The work of~\cite{CT24} has the following limitations
\begin{enumerate}\setlength\itemsep{-0.2em}
	\item The $\alpha_{\iv}$ are of a very specific form, and the corresponding states $\ket{\psi_{\xv}}$ are product states.
	\item The measurements are restricted to learn $k$ independent bits of $\xv$ and not $k$ parities of $\xv$.
	\item Measurements used to unambiguously learn these $k$ bits are product measurements.
\end{enumerate}

These limitations raise the following questions
\begin{quote}
	Can we study Problem~\ref{Problem:AA} in all generality without any restriction on the amplitudes $\alpha_{\iv}$? Can we find better measurements for learning unambiguously $k$ parities of $\xv$ from $\ket{\psi_{\xv}}$ such that we can surpass Prange's barrier when used in~Proposition~\ref{Proposition:2}?
\end{quote}
We lacked a theoretical framework to do so, and the main purpose of this article is to fill this gap.

\subsection{Brief overview of contributions}
In this work, we introduce the notion of fine-grained unambiguous measurements. Instead of asking to recover the entire $\xv \in \F_2^n$ from $\ket{\psi_\xv}$ or output $\bot$, we ask whether it is possible to recover unambiguously some partial information about $\xv = (x_1,\dots,x_n)$. One could naturally consider learning unambiguously a subset $\{x_i\}_{i \in I}$ of the bits of $\xv$. This can be generalized to learning different parities $(\hv_1 \cdot \xv, \dots, \hv_k \cdot \xv)$. 

We formalize this by saying that our measurement outputs a pair $(\Hm,\yv)$, where $\Hm \in \F_2^{k \times n}$ is of full rank (for some $k \in \Iint{0}{n}$) and $\yv \in \F_2^k$. The output $(\Hm,\yv)$ corresponds to the information: ``I know with certainty that $\Hm\xv = \yv$’’ or equivalently $(\hv_1 \cdot \xv = \yv_1, \dots , \hv_k \cdot \xv = \yv_k)$, where the $\hv_i$ are the lines of $\Hm$. The $\bot$ outcome corresponds to the empty matrix/vector pair. Notice that this framework encompasses standard unambiguous measurements, where we only allow full rank matrices $\Hm \in \F_2^{n \times n}$ (which allows us to recover $\xv$ from $(\Hm,\yv)$ using Gaussian elimination) and empty matrices which corresponds to the $\bot$ outcome. 

As in the work of Chefles and Barnett, and motivated by the application to $\SLPN$, we restrict ourselves to sets of symmetric states $S = \{\ket{\psi_{\xv}}\}_{\xv \in \F_2^n}$. Our goal is, for a certain $S$, to bound the maximal number of parities that a fine-grained unambiguous measurement can correctly output given a random $\ket{\psi_\xv} \in S$. Here is a brief overview of our contributions.

\begin{enumerate}
	\item We formally introduce the notion of fine-grained unambiguous measurements. We give several theoretical results regarding fine-grained measurements. In particular, we show how the optimization that provides the best fine-grained unambiguous measurement for a symmetric set of states $S$ can be phrased as a linear program. 
	\item We study the associated dual linear program to derive upper bounds on the  number of parities. We provide specific solutions and also discuss the optimal bound. Our results generalize existing bounds for full unambiguous state discrimination. We apply our results to $\SLPN$. We show that one \emph{cannot} break the $\frac{k}{2}$ barrier of Prange’s algorithm using fine-grained unambiguous measurements. This shows the inherent limitations of the approach of~\cite{CT24}. We discuss the implications for the $\SLPN$ problem and, more generally, for state discrimination with prior information.
	\item We also discuss the computational efficiency of these measurements. We give sufficient conditions under which we can efficiently compute these fine-grained measurements. 
\end{enumerate}

\subsection{Detailed overview of contributions}
We start from a set of symmetric states $S = \{\ket{\psi_\xv}\}$. We fix an integer $n$. For $k \in \Iint{0}{n}$, let $\Lambda_k$ be the set of matrices in $\F_2^{k \times n}$ of full rank $k$. We first define the set $\Gamma(S)$ of fine-grained unambiguous measurements associated to $S$.
\begin{definition}
	Let $S = \{\ket{\psi_{\xv}}\}_{\xv \in \F_2^n}$ be a set of states. Let $\Gamma(S)$ be the set of measurements $\FHy$ satisfying
	\begin{enumerate}\setlength\itemsep{-0.2em}
		\item $\forall k \in \Iint{0}{n}, \ \forall (\Hm,\yv) \in \Lambda_k \times \F_2^k, \ F_{\Hm,\yv} \succeq \zerov $
		\item $  \sum_{k = 0}^n \sum_{\substack{\Hm \in \Lambda_k \\ \yv \in \F_2^k}} F_{\Hm,\yv} = I$.
		\item $	\forall k \in \Iint{0}{n}, \forall (\Hm,\yv) \in \Lambda_k \times \F_2^k, \ \forall \xv \ s.t. \ \Hm \xv \neq \yv, \ \Tr(F_{\Hm,\yv}\kb{\psi_\xv}) = 0.$
	\end{enumerate}
	The first two conditions ensure that $\FHy$ is a valid POVM and the last condition is the unambiguity condition, which means that the outcome $(\Hm,\yv)$ corresponds to the statement ``I know with certainty that $\Hm \xv = \yv$".
\end{definition}

Our goal is to upper bound the number of learned parities, which is given by the quantity $\rho(S) $ below
\begin{align*}\rho(S) & \eqdef \max_{\FHy \in \Gamma(S)} \rho(S,\FHy) \\ \text{ with } \quad  \rho(S,\FHy) & \eqdef \E_{\xv \leftarrow \F_2^n} \left[\sum_{k \in \Iint{0}{n}} \sum_{\substack{\Hm \in \Lambda_k \\ \yv \in \F_2^k}}  C(k) \Tr(F_{\Hm,\yv}\kb{\psi_\xv})\right],
\end{align*} $ \text{for some function } C : \Iint{0}{n} \rightarrow \mathbb{R}_+$. The ``score" $\rho(S)$ depends on this function $C$ as there are different ways of quantifying the quality of the best fine-grained unambiguous measurement. In this work, we will consider two settings:
\begin{itemize}
	\item The threshold setting. We are given a certain $\tau \in \Iint{1}{n}$ and we want to determine the maximum probability of learning at least $\tau$ parities. This is characterized by the function $C(k) = 1$ if $k \ge \tau$ and $C(k) = 0$ otherwise. 
	\item The average number of parities setting. We want to determine the maximum average number of parities learned. This is characterized by the function $C(k) = k$.
\end{itemize} 
Our focus is on these two scenarios, but our general results will apply for any function $C : \Iint{0}{n} \rightarrow \mathbb{R}_+$. 

\subsubsection{Expressing $\rho(S)$ as a linear program}

Upper bounding $\rho(S)$ and in particular maximizing over measurements that satisfy the fine-grained unambiguity condition seems hard to handle at first glance but we provide several simplifications that will make the problem easier, using the fact that the states $\{\ket{\psi_{\xv}}\}$ are symmetric. Ultimately, we show that $\rho(S)$ can be expressed as a maximization linear program, which is much easier to handle. This requires several careful steps. 

\paragraph{1. Simplifying the expression of $\rho(S)$ using the fact that $S$ is a set of symmetric states.}

We introduce the set of symmetric fine-grained measurements, which are defined as follows:

\begin{definition}\label{Definition:Gammas}
	Let $S = \{\ket{\psi_\xv}\}_{\xv \in \F_2^n}$ be a set of states. We define
	$$ \Gamma_s(S) = \left\{ \FHy \in \Gamma(S) : \forall (\Hm,\yv) \in \Lambda_k \times \F_2^k, \ \forall \av \in \F_2^n, \ X_{\av}F_{\Hm,\yv}X_{\av} = F_{\Hm,\yv + \Hm \av}\right\},$$
	where $X_\av$ is the Pauli shift operator in $\F_2^n$ satisfying $X_\av \ket{\xv} = \ket{\xv + \av}$.
\end{definition}

Our first result is to prove the following

\begin{restatable}{theorem}{thmqpu}\label{Theorem:4.1}
	Let $S = \{\ket{\psi_{\xv}}\}_{\xv \in \F_2^n}$ be a set of symmetric states with
	$ \ket{\psi_{\zerov}} = \sum_{\iv \in \F_2^n} \halpha_\iv \ket{\hi}.$
	Then
	$$ \rho(S) = \max \left\{
	\sum_{k \in \Iint{0}{n}}\sum_{\substack{\Hm \in \Lambda_k \\ \yv \in \F_2^k}}  \sum_{\iv \in \F_2^n} C(k)|\halpha_\iv|^2 \triple{\hi}{F_{\Hm,\yv}}{\hi}
	: \{F_{\Hm,\yv}\} \in \Gamma_s(S) \right\}.$$
\end{restatable}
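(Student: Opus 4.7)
The plan is to prove the equality $\rho(S) = \max_{\Gamma_s(S)} (\dots)$ in two stages: first, show that the restriction from $\Gamma(S)$ to the symmetric subset $\Gamma_s(S)$ is free (a symmetrization argument), and second, rewrite the objective for symmetric measurements using the Fourier basis and show the off-diagonal cross-terms cancel.

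\textbf{Step 1: Symmetrization.} Given an arbitrary $\FHy \in \Gamma(S)$, I define the symmetrized operators
\[
\widetilde{F}_{\Hm,\yv} \eqdef \frac{1}{2^n} \sum_{\av \in \F_2^n} X_\av F_{\Hm,\yv + \Hm\av} X_\av.
\]
A direct computation (change of variable $\cv=\av+\bv$) shows that $X_\av \widetilde{F}_{\Hm,\yv} X_\av = \widetilde{F}_{\Hm,\yv+\Hm\av}$, giving the defining property of $\Gamma_s(S)$. Positivity is preserved since $X_\av$ is unitary. The POVM normalization holds because for each fixed $\av$, $\sum_{k,\Hm,\yv} F_{\Hm,\yv+\Hm\av} = \sum_{k,\Hm,\yv} F_{\Hm,\yv} = I$ after reindexing. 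The unambiguity condition is preserved: if $\Hm\xv \neq \yv$, then using $\kb{\psi_\xv} = X_\xv \kb{\psi_\zerov} X_\xv$ and cyclicity of trace, each term of $\Tr(\widetilde F_{\Hm,\yv}\kb{\psi_\xv})$ becomes $\Tr(F_{\Hm,\yv+\Hm\av}\kb{\psi_{\xv+\av}})$, which vanishes because $\Hm(\xv+\av) \neq \yv+\Hm\av$.

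\textbf{Step 2: Invariance of the objective.} For the objective, I compute
\[
\rho(S,\{\widetilde F_{\Hm,\yv}\}) = \frac{1}{2^n}\sum_\av \E_\xv \sum_{k,\Hm,\yv} C(k)\Tr(F_{\Hm,\yv+\Hm\av}\kb{\psi_{\xv+\av}}),
\]
and after the two bijective substitutions $\xv'=\xv+\av$ and $\yv'=\yv+\Hm\av$, each summand in $\av$ equals $\rho(S,\FHy)$, so $\rho(S,\{\widetilde F_{\Hm,\yv}\}) = \rho(S,\FHy)$. This proves $\rho(S) = \max_{\Gamma_s(S)}\rho(S,\FHy)$.

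\textbf{Step 3: Rewriting in the Fourier basis.} For a symmetric $\FHy \in \Gamma_s(S)$, using $\ket{\psi_\xv}=X_\xv\ket{\psi_\zerov}$ and the defining relation of $\Gamma_s$, I get $\Tr(F_{\Hm,\yv}\kb{\psi_\xv}) = \Tr(F_{\Hm,\yv+\Hm\xv}\kb{\psi_\zerov})$. Reindexing the sum over $\yv$ makes the $\E_\xv$ trivial, yielding
\[
\rho(S,\FHy) = \sum_{k,\Hm,\yv} C(k)\Tr(F_{\Hm,\yv}\kb{\psi_\zerov}).
\]
Expanding $\kb{\psi_\zerov} = \sum_{\iv,\jv}\halpha_\iv\overline{\halpha_\jv}\ketbra{\hi}{\hj}$ gives cross-terms $\triple{\hj}{F_{\Hm,\yv}}{\hi}$. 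The key observation is that $X_\av \ket{\hi} = (-1)^{\iv\cdot\av}\ket{\hi}$, so the symmetry relation implies $(-1)^{(\iv+\jv)\cdot\av}\triple{\hj}{F_{\Hm,\yv}}{\hi} = \triple{\hj}{F_{\Hm,\yv+\Hm\av}}{\hi}$ for every $\av$. Summing over $\yv \in \F_2^k$ eliminates the shift on the right side, while the left side retains the sign factor; choosing some $\av$ with $(\iv+\jv)\cdot\av=1$ when $\iv\neq\jv$ forces $\sum_\yv\triple{\hj}{F_{\Hm,\yv}}{\hi} = 0$. Hence only diagonal terms $\iv=\jv$ survive, giving exactly the expression in the statement.

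\textbf{Main obstacle.} The computations are all short, but the delicate point is designing the symmetrization (step 1) so that the defining shift-covariance of $\Gamma_s(S)$ holds while the unambiguity and normalization constraints are preserved; the specific twist $\yv \mapsto \yv+\Hm\av$ inside the average is what makes all three conditions compatible simultaneously, and one has to verify it consistently with the correct change of variables in step 2.
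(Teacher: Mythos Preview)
Your proof is correct and follows essentially the same route as the paper's: the symmetrization $\widetilde F_{\Hm,\yv}=\frac{1}{2^n}\sum_\av X_\av F_{\Hm,\yv+\Hm\av}X_\av$ is exactly the paper's construction (their Lemma~4.1.2), your Step~3 reduction to $\kb{\psi_\zerov}$ via the reindexing $\yv'=\yv+\Hm\xv$ is their Lemma~4.1.1, and your off-diagonal cancellation argument is the content of their Proposition on the diagonality of $\widehat{F_\Hm}$. The only cosmetic difference is that the paper packages the diagonality as a separate proposition about $F_\Hm=\sum_\yv F_{\Hm,\yv}$, whereas you derive the vanishing of $\sum_\yv\triple{\hj}{F_{\Hm,\yv}}{\hi}$ directly inside the main computation; both arguments use the same identity $(-1)^{(\iv+\jv)\cdot\av}\triple{\hj}{F_{\Hm,\yv}}{\hi}=\triple{\hj}{F_{\Hm,\yv+\Hm\av}}{\hi}$.
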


In this new expression, we restricted the set of fine-grained measurements that we maximize on to the set $\Gamma_s(S)$. More importantly, we replaced the quantity $\E_{\xv \leftarrow \F_2^n} \left[\Tr(F_{\Hm,\yv} \kb{\psi_\xv})\right]$ with the quantity $\sum_{\iv \in \F_2^n} |\halpha_\iv|^2 \triple{\hi}{F_{\Hm,\yv}}{\hi}$ that depends only on the Fourier amplitudes of $\ket{\psi_\zerov}$ as well as on the Fourier diagonal elements of each $F_{\Hm,\yv}$. To prove this, we use the fact that we work with a set $S$ of symmetric states, and we also exploit the symmetries of the set $\Gamma_s(S)$ we introduce.

\paragraph{2. Relation on the Fourier diagonal elements of each $F_{\Hm,\yv}$.}

The issue with the above expression is that we still have to maximize over (symmetric) fine-grained unambiguous measurements. An appealing approach would be to define some variables $\lambda_\iv^{\Hm,\yv} = \triple{\hi}{F_{\Hm,\yv}}{\hi}$ and try to translate the condition $\FHy \in \Gamma_s(S) \subseteq \Gamma(S)$ into linear conditions on the $\lambda_\iv^{\Hm,\yv}$. This is actually possible to do and in order to present these linear relations, we have to introduce the notion of dual cosets.

\begin{definition}
	For each matrix $\Hm \in \Lambda_k$, we consider an arbitrary matrix $\Gm_\Hm \in \F_2^{n-k \times n}$ such that $Im(\trp{(\Gm_\Hm)}) = Ker(\Hm)$. We then define 
	$$ D_\Hm(\sv) = \{\xv \in \F_2^n : \Gm_\Hm \cdot \xv = \sv\}.$$
\end{definition}
Notice that this definition depends on the choice of $\Gm_\Hm$ but this choice only influences how the dual cosets are labeled. All our results will hold for any choice of $\Gm_\Hm$. We first restrict ourselves to sets $S = \{\ket{\psi_{\xv}}\}$ of symmetric states which have full dual support {\ie} $\forall \iv \in \F_2^n, \ \braket{\psi_\zerov}{\hi} \neq 0$, which implies $\forall \xv,\iv \in \F_2^n, \braket{\psi_\xv}{\hi} \neq 0$ from the fact that we have a symmetric set of states.
\noindent We prove the following
\begin{theorem}\label{Theorem:Intro2}
	Let $S = \{\ket{\psi_\xv}\}_{\xv \in \F_2^n}$ be a set of symmetric states with full dual support. We have 
	$$ \FHy \in \Gamma(S) \Rightarrow \forall k \in \Iint{0}{n}, \forall (\Hm,\yv) \in\Lambda_k \times \F_2^k, \ \forall \sv \in \F_2^{n-k}, \ \forall \iv,\jv \in D_\Hm(\sv), \ |\halpha_\iv|^2 \lambda_\iv^{\Hm,\yv} = |\halpha_\jv|^2 \lambda_\jv^{\Hm,\yv}.$$
\end{theorem}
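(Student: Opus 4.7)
The plan is to upgrade the scalar unambiguity constraint to an operator equation using positive semidefiniteness, derive linear identities for the Fourier matrix entries of each $F_{\Hm,\yv}$ via Fourier inversion on $\F_2^n$, and then combine two specialised identities using Hermiticity to obtain a relation involving only the Fourier diagonal entries $\lambda_\iv^{\Hm,\yv}$. First I would observe that $F_{\Hm,\yv} \succeq \zerov$ together with $\Tr(F_{\Hm,\yv}\kb{\psi_\xv}) = 0$ forces $F_{\Hm,\yv}\ket{\psi_\xv} = \zerov$. Since Pauli shifts are diagonal in the Fourier basis, $X_\xv\ket{\hi} = (-1)^{\iv\cdot \xv}\ket{\hi}$, symmetry of $S$ yields $\ket{\psi_\xv} = \sum_{\iv \in \F_2^n} \halpha_\iv (-1)^{\iv\cdot\xv}\ket{\hi}$. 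Setting $M_{\rv,\iv} \eqdef \triple{\widehat{\rv}}{F_{\Hm,\yv}}{\hi}$ and pairing the operator equation with $\bra{\widehat{\rv}}$ turns it into
$$ \sum_{\iv \in \F_2^n} \halpha_\iv M_{\rv,\iv}(-1)^{\iv\cdot \xv} = 0 \quad \text{for every } \rv \in \F_2^n \text{ and every } \xv \text{ with } \Hm\xv \neq \yv, $$
so for each $\rv$ the Fourier transform of $\iv \mapsto \halpha_\iv M_{\rv,\iv}$ is supported on the affine subspace $\{\xv : \Hm\xv = \yv\}$.

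Next I would apply Fourier inversion on $\F_2^n$. Fix any $\xv_0$ with $\Hm\xv_0 = \yv$ and set $V = Ker(\Hm) = Im(\trp{\Gm_\Hm})$, so that $\{\xv : \Hm\xv = \yv\} = \xv_0 + V$ and $V^\perp = Ker(\Gm_\Hm)$. For $\iv, \jv \in D_\Hm(\sv)$ one has $\iv - \jv \in V^\perp$, hence $(-1)^{\iv \cdot \vv} = (-1)^{\jv \cdot \vv}$ for every $\vv \in V$. Inverting the Fourier transform of the supported function and comparing its $\iv$-th and $\jv$-th coefficients therefore yields, for every $\rv$,
$$ \halpha_\iv M_{\rv,\iv} = \epsilon\, \halpha_\jv M_{\rv,\jv}, \qquad \epsilon \eqdef (-1)^{(\iv-\jv)\cdot \xv_0} \in \{-1,+1\}, $$
with $\epsilon$ independent of the choice of $\xv_0$ because $\iv - \jv \in V^\perp$.

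Finally I would specialise this identity to $\rv = \iv$ and $\rv = \jv$, obtaining the two equations $\halpha_\iv M_{\iv,\iv} = \epsilon\, \halpha_\jv M_{\iv,\jv}$ and $\halpha_\iv M_{\jv,\iv} = \epsilon\, \halpha_\jv M_{\jv,\jv}$. The full dual support hypothesis guarantees $\halpha_\iv, \halpha_\jv \neq 0$, so one may solve for $M_{\iv,\jv}$ and $M_{\jv,\iv}$ and then impose Hermiticity of $F_{\Hm,\yv}$, namely $M_{\iv,\jv} = \overline{M_{\jv,\iv}}$ with $M_{\iv,\iv}, M_{\jv,\jv} \in \mathbb{R}$ and $\epsilon \in \mathbb{R}$. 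Equating the two resulting expressions yields $|\halpha_\iv|^2 M_{\iv,\iv} = |\halpha_\jv|^2 M_{\jv,\jv}$, which is precisely the claim $|\halpha_\iv|^2 \lambda_\iv^{\Hm,\yv} = |\halpha_\jv|^2 \lambda_\jv^{\Hm,\yv}$. The main obstacle is that the unambiguity condition controls the full operator $F_{\Hm,\yv}$ rather than its Fourier diagonal alone, so off-diagonal entries $M_{\iv,\jv}$ unavoidably enter the argument and can be eliminated only at the end through the combined use of Hermiticity and full dual support; without the latter the Fourier relations become vacuous on coordinates where $\halpha_\iv$ vanishes.
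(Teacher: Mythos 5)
Your proof is correct, but it takes a genuinely different route from the paper's. The paper upgrades the trace condition to membership of every eigenvector of $F_{\Hm,\yv}$ in $W_{\Hm,\yv}$ (the orthogonal complement of $\mathrm{span}\{\ket{\psi_\xv} : \Hm\xv \neq \yv\}$), computes $\dim W_{\Hm,\yv} = 2^{n-k}$ via the linear independence of the $\ket{\psi_\xv}$ (Proposition~\ref{Proposition:FullDual}), and exhibits an explicit orthogonal basis $\{\ket{A_\sv^{\Hm,\yv}}\}_{\sv}$ of $W_{\Hm,\yv}$ (Proposition~\ref{Proposition:12}), each element Fourier-supported on a single dual coset with amplitudes $1/\overline{\halpha_\iv}$; the identity then follows because $|\halpha_\iv|^2\,|\braket{\hi}{A_\sv^{\Hm,\yv}}|^2$ is constant on $\D_\Hm(\sv)$. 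You instead argue entrywise: the kernel condition $F_{\Hm,\yv}\ket{\psi_\xv} = \zerov$ plus Fourier inversion over the affine support $\{\xv : \Hm\xv = \yv\}$ gives $\halpha_\iv M_{\rv,\iv} = \epsilon\,\halpha_\jv M_{\rv,\jv}$ for every row $\rv$, and specializing to $\rv \in \{\iv,\jv\}$ together with Hermiticity eliminates the off-diagonal entries. Both arguments are sound, and your phase $\epsilon = (-1)^{(\iv+\jv)\cdot\xv_0}$ is indeed well defined since $\iv+\jv \in \mathrm{Ker}(\Gm_\Hm) = V^\perp$. Your route is more elementary for this particular statement: it needs neither the dimension count nor the explicit basis, and if you multiply the $\rv=\iv$ relation by $\overline{\halpha_\iv}$ and substitute (rather than divide by $\halpha_\jv$), it even dispenses with the full-dual-support hypothesis, since $\halpha_\iv = 0$ then forces $|\halpha_\jv|^2 M_{\jv,\jv} = 0$. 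What the paper's heavier construction buys is reuse: the basis $\{\ket{A_\sv^{\Hm,\yv}}\}$ is precisely what is needed later to build the measurement achieving $\rho^L(S)$ (Proposition~\ref{Proposition:LinearReverse}) and to implement it efficiently, which your entrywise argument does not provide.
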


In order to prove this statement, we look at the matrices $F_{\Hm,\yv}$. They are positive semidefinite and if we write $F_{\Hm,\yv} = \sum_i \mu_{\iv} \kb{A_i}$, the unambiguity condition tells us that we have  $\forall \xv \ s.t. \ \Hm\xv \neq \yv, \ \braket{A_i}{\psi_\xv} = 0$. We provide an explicit orthogonal basis of the space of states orthogonal to each $\ket{\psi_\xv}$ for $\xv$ such that $\Hm \xv = \yv$. The basis is expressed in terms of the dual cosets of $\Hm$ which then allows us to prove our theorem.

\paragraph{3. Rewriting $\rho(S)$ as a linear program}
We now plug in the relation from Theorem~\ref{Theorem:Intro2} into the expression of $\rho(S)$ from Theorem~\ref{Theorem:4.1}. We obtain the following expression 
\begin{definition}
	$$ \rho^L(S) \eqdef \max_{(\lambda_\iv^{\Hm,\yv})} \rho^L(S;(\lambda_\iv^{\Hm,\yv})) \eqdef \left\{\sum_{k = 0}^n \sum_{\substack{\Hm \in \Lambda_k \\ \yv \in \F_2^k}}\sum_{\iv \in \F_2^n} C(k) |\halpha_\iv|^2 \lambda_\iv^{\Hm,\yv} \right\},$$
	where the maximum is over nonnegative real numbers $(\lambda_\iv^{\Hm,\yv})$ satisfying 
	\begin{align}\label{Eq:I12}
		\forall \iv \in \F_2^n, \ \sum_{(\Hm,\yv)\in\I} \lambda_\iv^{\Hm,\yv} = 1
	\end{align}
	\begin{align}\label{Eq:I1} 
		\forall k \in \Iint{0}{n}, \forall (\Hm,\yv) \in\Lambda_k \times \F_2^k, \ \forall \sv \in \F_2^{n-k}, \ \forall \iv,\jv \in D_\Hm(\sv), \ |\halpha_\iv|^2 \lambda_\iv^{\Hm,\yv} = |\halpha_\jv|^2 \lambda_\jv^{\Hm,\yv}.
	\end{align}
\end{definition}
Because Theorem~\ref{Theorem:Intro2} is an implication, we have that $\rho(S) \le \rho^L(S)$ when $S$ has full dual support. In order to conclude, we have to deal with two issues:
\begin{enumerate}
	\item We have to show actually that $\rho(S) = \rho^L(S)$ when $S$ is a set of symmetric states with full dual support. To do so, we start from some $(\lambda_\iv^{\Hm,\yv})$ satisfying Equations~\ref{Eq:I1},~\ref{Eq:I12} and from these real numbers, we manage to construct a fine-grained POVM $\FHy$ such that $\rho(S;\FHy) \ge \rho^L(S;(\lambda_\iv^{\Hm,\yv}))$. 
	\item We have to remove the full dual support requirement. We show that every set $S$ can be approximated with another set of states that has full dual support and then use density arguments to show that if $\rho(S) = \rho^L(S)$ when $S$ has full dual support then this equality must also for any $S$ that doesn't have full dual support.
\end{enumerate}
Having dealt with these two final issues, we obtain the final theorem of this section
\begin{theorem}
	Let $S$ be a set of symmetric states. We have $\rho(S) = \rho^L(S)$.
\end{theorem}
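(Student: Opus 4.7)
The plan is to prove both inequalities under the hypothesis of full dual support, and then drop this hypothesis by a continuity argument. The upper bound $\rho(S) \leq \rho^L(S)$ is nearly immediate: Theorem~\ref{Theorem:4.1} restricts the optimum to $\Gamma_s(S)$, and for a symmetric POVM $\FHy$, setting $\lambda_\iv^{\Hm,\yv} := \triple{\hi}{F_{\Hm,\yv}}{\hi}$ yields a feasible LP solution with the same objective value, since~\eqref{Eq:I12} is the diagonal part of $\sum F_{\Hm,\yv} = I$ and~\eqref{Eq:I1} is exactly Theorem~\ref{Theorem:Intro2}.

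For the reverse direction, I would construct a symmetric POVM explicitly from any LP feasible point. First, noting that the LP objective depends on $\yv$ only through $\sum_\yv \lambda_\iv^{\Hm,\yv}$, I may replace each $\lambda_\iv^{\Hm,\yv}$ by its average $\frac{1}{2^k}\sum_{\yv'}\lambda_\iv^{\Hm,\yv'}$ without changing the objective or violating any constraint; so I assume $\lambda_\iv^{\Hm,\yv}$ is independent of $\yv$ and set $c_\sv^\Hm := |\halpha_\iv|^2 \lambda_\iv^{\Hm,\yv}$, well-defined for any $\iv \in D_\Hm(\sv)$ by~\eqref{Eq:I1}. Then for each triple $(\Hm,\yv,\sv)$, I would check that the subspace of $\mathrm{span}\{\ket{\hi} : \iv \in D_\Hm(\sv)\}$ orthogonal to every $\ket{\psi_\xv}$ with $\Hm\xv \ne \yv$ is one-dimensional and is spanned by
\[
\ket{w_{\Hm,\yv,\sv}} = \frac{1}{\sqrt{\gamma_\sv^\Hm}}\sum_{\iv \in D_\Hm(\sv)} \frac{(-1)^{\vv_\iv \cdot \yv}}{\overline{\halpha_\iv}}\ket{\hi}, \qquad \gamma_\sv^\Hm := \sum_{\iv \in D_\Hm(\sv)} \frac{1}{|\halpha_\iv|^2},
\]
where $\vv_\iv \in \F_2^k$ is the unique vector with $\iv = \iv_\sv + \trp{\Hm}\vv_\iv$ for a fixed coset representative $\iv_\sv$. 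Defining $F_{\Hm,\yv} := \sum_\sv \gamma_\sv^\Hm c_\sv^\Hm \kb{w_{\Hm,\yv,\sv}}$ then yields positive semidefinite operators, automatically satisfying the unambiguity condition, whose Fourier diagonal elements match $\lambda_\iv^{\Hm,\yv}$ exactly.

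The main technical step is to verify $\sum_{\Hm,\yv}F_{\Hm,\yv} = I$, which I would check in the Fourier basis. The diagonal sums to $1$ by~\eqref{Eq:I12}. For an off-diagonal entry at $(\hi,\hj)$ with $\iv \ne \jv$: within a fixed $\Hm$ it is nonzero only when $\iv$ and $\jv$ lie in a common coset $D_\Hm(\sv)$, in which case writing $\iv + \jv = \trp{\Hm}\vv_0$ it carries a factor $(-1)^{\vv_0 \cdot \yv}$, so the sum over $\yv \in \F_2^k$ collapses to zero since $\vv_0 \ne \zerov$. Finally, to remove the full dual support assumption, I would approximate $S$ by symmetric sets $S^\eps$ whose Fourier amplitudes satisfy $|\halpha_\iv^\eps| \ge \eps > 0$ (e.g., by mixing $\ket{\psi_\zerov}$ with the uniform Fourier superposition and renormalizing), apply $\rho(S^\eps) = \rho^L(S^\eps)$ for $\eps > 0$, and let $\eps \to 0$ using continuity of the LP data in $\halpha^\eps$ together with compactness of the POVM set. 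The main obstacle I expect is the off-diagonal cancellation in the POVM construction—both the $\yv$-averaging step and the precise form of $\ket{w_{\Hm,\yv,\sv}}$ are essential for it—together with the density step, whose delicacy stems from the fact that constraint~\eqref{Eq:I1} degenerates whenever $\halpha_\iv = 0$.
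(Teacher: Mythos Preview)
Your proposal is correct and follows essentially the same route as the paper: Theorem~\ref{Theorem:4.1} plus Theorem~\ref{Theorem:Intro2} for $\rho(S)\le\rho^L(S)$, an explicit POVM built from the coset vectors $\ket{A_\sv^{\Hm,\yv}}$ (your $\ket{w_{\Hm,\yv,\sv}}$ is their normalization) for the reverse inequality under full dual support, and a density argument to remove that hypothesis. Your block-diagonal choice $F_{\Hm,\yv}=\sum_\sv \gamma_\sv^\Hm c_\sv^\Hm\kb{w_{\Hm,\yv,\sv}}$ together with the $\yv$-cancellation of off-diagonal Fourier entries is in fact slightly cleaner than the paper's presentation, which first writes $F_{\Hm,\yv}$ as a rank-one operator $\kb{\Phi_{\Hm,\yv}}$ before effectively switching to the same block-diagonal form to verify $\sum F_{\Hm,\yv}\preceq I$.
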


\subsubsection{The dual linear program and upper bounds on $\rho(S)$}

\paragraph{1. Formulation of the dual linear program.}
Our goal is to provide upper bounds on $\rho(S)$. Since we have an expression of $\rho(S)$ as a linear optimization problem, it is natural to consider the associated dual linear program. We can show that the dual linear program can be expressed as follows

$$\sigma^L(S) = \min_{(b_\iv)_{\iv \in \F_2^n}} \sigma^L(S,(b_\iv)) \eqdef \sum_{\iv\in\F_2^n} b_\iv |\hat{\alpha}_\iv|^2,$$
where we minimize over nonnegative reals $(b_\iv)$ such that 
\begin{align}\label{Eq:I2}\forall k\in\Iint{0}{n}, \ \forall \Hm \in \Lambda_k, \ \forall \sv \in \F_2^{n-k}, \  \sum_{\iv\in \D_\Hm(\sv)} b_{\iv} \ge C(k) 2^k.\end{align}
We have directly by strong duality that $\rho^L(S) = \sigma^L(S)$, so finding nonnegative reals $(b_\iv)$ that satisfy Equation~\ref{Eq:I2} and computing $\sigma^L(S,(b_\iv))$ will yield an upper bound on $\rho(S) = \rho^L(S)$.

\paragraph{2. The threshold setting.} We first consider the threshold setting. Recall that we have a threshold $\tau \in \Iint{1}{n}$ and choose the function $C(k) = 1$ for $k \ge \tau$ and $C(k) = 0$ otherwise. Let $\rho(S,\tau),\rho^L(S,\tau)$ and $\sigma^L(S,\tau)$ be the (equal) values of the different optimization programs in this setting. Our first result is to prove a necessary and sufficient condition for which $\rho(S,\tau) \neq 0$ in this setting. This is a generalization of the result of Chefles and Barnett for regular unambiguous state discrimination. In order to do so, we have to introduce the notion of $\tau$-universal sets, which are subsets of $\F_2^n$ that intersect every affine subspace of $\F_2^n$ of dimension $\tau$. We then show

\begin{theorem}
	$$ \rho(S,\tau) = 0 \Leftrightarrow \text{There exists a } \tau\text{-universal set } V \ s.t. \ \forall \iv \in V, \ \halpha_\iv = 0.$$ 
\end{theorem}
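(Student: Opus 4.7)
The approach is to leverage LP duality, since by the preceding results $\rho(S,\tau)=\rho^L(S,\tau)=\sigma^L(S,\tau)$, so it suffices to characterize when the dual optimum vanishes. The essential preliminary step is a geometric reformulation of the dual constraints: as $(k,\Hm,\sv)$ vary with $\Hm\in\Lambda_k$, the dual cosets $D_\Hm(\sv)$ range over exactly the $k$-dimensional affine subspaces of $\F_2^n$. Indeed, for any affine subspace $W=\xv_0+U$ of dimension $k$, one can pick $\Hm\in\Lambda_k$ with $Ker(\Hm)=U^\perp$ (so that $Ker(\Gm_\Hm)=U$) and then set $\sv=\Gm_\Hm \xv_0$. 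In the threshold setting, the dual program therefore reduces to: minimize $\sum_\iv b_\iv|\halpha_\iv|^2$ over $b_\iv\ge 0$ subject to $\sum_{\iv\in W}b_\iv\ge 2^{\dim W}$ for every affine subspace $W$ of dimension $\ge \tau$.

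For the $(\Leftarrow)$ direction, assume there is a $\tau$-universal set $V\subseteq\F_2^n$ with $\halpha_\iv=0$ for all $\iv\in V$. I would define the dual candidate $b_\iv=2^n$ for $\iv\in V$ and $b_\iv=0$ otherwise. Any $k$-dimensional affine subspace with $k\ge\tau$ contains a $\tau$-dimensional affine subspace and therefore meets $V$ by $\tau$-universality, so each constraint $\sum_{\iv\in W}b_\iv\ge 2^n\ge 2^k$ is satisfied. The objective vanishes because $b_\iv$ is supported on $V$, on which $\halpha_\iv$ is zero. Hence $\sigma^L(S,\tau)=0$ and thus $\rho(S,\tau)=0$.

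For the $(\Rightarrow)$ direction, assume $\rho(S,\tau)=\sigma^L(S,\tau)=0$. The dual LP is feasible (take $b_\iv=2^n$ everywhere) and has finitely many variables and constraints, so its minimum is attained; pick an optimizer $(b_\iv)_{\iv\in\F_2^n}$ with $\sum_\iv b_\iv|\halpha_\iv|^2=0$. Every summand is nonnegative and must vanish, which forces $\halpha_\iv=0$ on $V\eqdef\{\iv:b_\iv>0\}$. Moreover $V$ is $\tau$-universal, for if some $\tau$-dimensional affine subspace $W$ were disjoint from $V$ then $\sum_{\iv\in W}b_\iv=0$, contradicting the constraint $\sum_{\iv\in W}b_\iv\ge 2^\tau>0$.

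The only substantive subtlety is the geometric translation of the dual cosets into arbitrary affine subspaces (together with the easy observation that $\tau$-universality upgrades to $k$-universality for $k\ge\tau$, since every $k$-dim affine subspace contains a $\tau$-dim affine subspace). Once that dictionary is in place, both directions reduce to a single line of LP reasoning, so the proof is essentially a combinatorial reinterpretation of zero-objective dual feasibility, with no real obstacle beyond invoking strong duality and attainment of the minimum for a feasible finite LP.
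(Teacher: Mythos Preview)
Your proof is correct and follows essentially the same LP-duality route as the paper: construct a dual solution supported on $V$ for $(\Leftarrow)$, and read off $\tau$-universality from the support of an optimal dual solution for $(\Rightarrow)$. The only notable difference is that in $(\Leftarrow)$ the paper takes $b_\iv = 2^\tau \one_V(\iv)$ and proves the sharper counting fact $|D_\Hm(\sv)\cap V|\ge 2^{k-\tau}$ (by extending $\Gm_\Hm$ to a full-rank matrix in $\Lambda_{n-\tau}$), thereby obtaining the quantitative bound $\rho(S,\tau)\le 2^\tau\sum_{\iv\in V}|\halpha_\iv|^2$; your choice $b_\iv=2^n\one_V(\iv)$ is coarser but entirely sufficient for the stated equivalence, and your explicit dictionary between dual cosets and affine subspaces makes the argument cleaner.
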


For the $\Leftarrow$ implication, we actually have a stronger quantitative bound

\begin{proposition}\label{Proposition:I1}
	$$ \rho(S,\tau) \le \min\left\{2^{\tau} \sum_{\iv \in V} |\halpha_\iv|^2 : V \text{ is } \tau\text{-universal} \right\}.$$
\end{proposition}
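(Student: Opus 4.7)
The plan is to exhibit, for each $\tau$-universal set $V$, a feasible point of the dual linear program $\sigma^L(S,\tau)$ whose objective value is $2^\tau \sum_{\iv \in V} |\halpha_\iv|^2$. Since strong duality gives $\rho(S,\tau) = \sigma^L(S,\tau)$, any feasible dual point upper bounds $\rho(S,\tau)$, and taking the minimum over $V$ will yield the proposition.

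The natural candidate is to set $b_\iv = 2^\tau \cdot \one[\iv \in V]$. Nonnegativity is immediate, and the objective value is $\sum_\iv b_\iv |\halpha_\iv|^2 = 2^\tau \sum_{\iv \in V} |\halpha_\iv|^2$ as desired. The real content is to verify feasibility, i.e.\ that
$$\sum_{\iv \in D_\Hm(\sv)} b_\iv \ge C(k) 2^k \quad \text{for all } k \in \Iint{0}{n}, \ \Hm \in \Lambda_k, \ \sv \in \F_2^{n-k}.$$
For $k < \tau$ the right-hand side vanishes since $C(k) = 0$, so only the case $k \ge \tau$ matters. In that case the inequality amounts to $|D_\Hm(\sv) \cap V| \ge 2^{k-\tau}$.

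The key geometric observation is that $D_\Hm(\sv) = \{\xv : \Gm_\Hm \xv = \sv\}$ is an affine subspace of $\F_2^n$ of dimension $k$ (since $\Gm_\Hm$ has rank $n-k$). Any $k$-dimensional affine subspace of $\F_2^n$ can be partitioned into $2^{k-\tau}$ disjoint translates of a fixed $\tau$-dimensional linear subspace, each of which is itself an affine subspace of dimension $\tau$. Because $V$ is $\tau$-universal, every one of these $2^{k-\tau}$ pieces must contain at least one element of $V$, and since they are disjoint this gives $|D_\Hm(\sv) \cap V| \ge 2^{k-\tau}$. Multiplying by $b_\iv = 2^\tau$ on $V$ yields the required lower bound $2^k$.

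With feasibility established, weak duality combined with the explicit dual formulation gives
$$\rho(S,\tau) = \rho^L(S,\tau) = \sigma^L(S,\tau) \le \sigma^L(S,(b_\iv)) = 2^\tau \sum_{\iv \in V} |\halpha_\iv|^2,$$
and minimizing over all $\tau$-universal sets $V$ concludes the argument. There is no genuine obstacle in this proof: the only step that deserves care is the partitioning argument for $k > \tau$ (the case $k = \tau$ is exactly the definition of $\tau$-universality), which boils down to the elementary fact that any affine subspace splits into cosets of any subspace of its directing space.
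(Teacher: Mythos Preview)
Your proof is correct and follows essentially the same approach as the paper: choose the dual candidate $b_\iv = 2^\tau \one_V(\iv)$ and verify feasibility by showing $|D_\Hm(\sv)\cap V|\ge 2^{k-\tau}$ for $k\ge\tau$. The only cosmetic difference is that you phrase the key step geometrically (partition the $k$-dimensional affine space $D_\Hm(\sv)$ into $2^{k-\tau}$ cosets of a $\tau$-dimensional subspace and apply $\tau$-universality to each), whereas the paper does the equivalent thing in matrix language by extending $\Gm_\Hm\in\Lambda_{n-k}$ to a matrix $\Mm\in\Lambda_{n-\tau}$ and letting the extra $k-\tau$ coordinates index those same cosets.
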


For both sides of the implication, we use the dual formulation. In particular, for this last proposition, for any $\tau$-independent set $V$, we show that choosing $b_\iv = 2^\tau \one_V(\iv)$ satisfies the constraints of Equation~\ref{Eq:I2}, which implies that  $\sigma^L(S,\tau) \le \sigma^L(S,\tau;(b_\iv)) =  2^{\tau} \sum_{\iv \in V} |\halpha_\iv|^2$.

We are now ready to prove our main statement related to the original problem of solving $\SLPN$ using fine-grained unambiguous measurements. We define $B_d \eqdef \{\iv \in \F_2^n : |\iv|_H \le d\}$.

\begin{theorem}\label{Proposition:ThresholdBound}
	Let $S = \{\ket{\psi_{\xv}}\}_{\xv \in \F_2^n}$ be a set of symmetric states with $\ket{\psi_\zerov} = \sum_{\iv \in \F_2^n} \halpha_{\iv} \ket{\hi}$. Let $\gamma > 2$ be an absolute constant. Let $\eps$ such that $\sum_{\iv \notin B_d} |\halpha_\iv|^2 = \eps$. Then $\rho(S,\gamma d) \le \eps(1 + o(1))$, where $o(1)$ is a quantity that goes to $0$ as $d,n \rightarrow \infty$.
\end{theorem}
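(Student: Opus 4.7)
The plan is to construct a feasible solution to the dual linear program whose objective value is at most $\eps(1+o(1))$. Since $\rho(S,\gamma d) = \sigma^L(S,\gamma d)$ by the strong-duality identification from the previous subsection, any such feasible $(b_\iv)$ yields an upper bound on $\rho(S,\gamma d)$. The natural ansatz, inspired by the shape of the objective, is to place all the weight on Fourier indices of large Hamming weight: $b_\iv = c\cdot\one_{\iv\notin B_d}$ for some scalar $c > 1$ to be chosen.

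As $(\Hm,\sv)$ varies over $\Lambda_k\times\F_2^{n-k}$, the cosets $D_\Hm(\sv)$ range over all affine subspaces of $\F_2^n$ of dimension $k$. With the ansatz above, the dual constraint (\ref{Eq:I2}) then reduces, at each level $k\ge\gamma d$, to
$$c\bigl(2^k - |A\cap B_d|\bigr) \ge 2^k\qquad\text{for every affine subspace } A \text{ of dimension } k,$$
and the crux of the proof is the combinatorial lemma $|A\cap B_d|\le|B_d^{(k)}|$, where $|B_d^{(k)}| := \sum_{w\le d}\binom{k}{w}$ denotes the size of the Hamming ball of radius $d$ in $\F_2^k$. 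I would prove the lemma by putting $A$ in systematic form: after reordering coordinates, write $A = \av + \{(\yv,\Mm\yv) : \yv \in \F_2^k\}$ for some matrix $\Mm$. Setting $\zv = \av_1 + \yv$ and $\bv = \av_2 + \Mm\av_1$, every element of $A$ has Hamming weight $|\zv| + |\bv + \Mm\zv| \ge |\zv|$, so the bijection $\yv \leftrightarrow \zv$ between $A$ and $\F_2^k$ restricts to an injection $A\cap B_d \hookrightarrow B_d^{(k)}$.

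Next, I would establish that $|B_d^{(k)}|/2^k$ is uniformly small over $k\ge\gamma d$. This ratio equals $\Pr[\mathrm{Bin}(k,1/2)\le d]$, which is monotone decreasing in $k$ for fixed $d$ (adding an independent $\mathrm{Bernoulli}(1/2)$ summand can only push the sum above $d$), so its supremum on $k\ge\gamma d$ is attained at $k=\gamma d$. The Chernoff bound then yields $|B_d^{(k)}|/2^k \le \exp(-(\gamma-2)^2 d/(2\gamma)) =: \beta$, and since $\gamma>2$ is an absolute constant, $\beta = 2^{-\Omega(d)} = o(1)$. Taking $c = 1/(1-\beta) = 1+o(1)$ makes the dual constraint hold at every level $k\ge\gamma d$, and the objective evaluates to
$$\sum_\iv b_\iv|\halpha_\iv|^2 = c\sum_{\iv\notin B_d}|\halpha_\iv|^2 = \frac{\eps}{1-\beta} = \eps(1+o(1)),$$
which is exactly the claimed bound.

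The only genuinely nontrivial step is the combinatorial lemma; everything else (strong duality, the reduction to a uniform covering bound, and the monotonicity of the Chernoff ratio in $k$, which is what lets one constant $c$ work uniformly over all $k\ge\gamma d$) is straightforward. The systematic-form trick handles the lemma in a few lines, so I do not anticipate any substantial obstacle once the right ansatz is identified.
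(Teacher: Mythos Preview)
Your proposal is correct and follows essentially the same strategy as the paper: construct the dual solution $b_\iv = c\cdot\one_{\iv\notin B_d}$, reduce the constraint to a bound on $|A\cap B_d|$ for affine $A$ of dimension $k$, prove that bound via a systematic-form/information-set argument (this is exactly the paper's Proposition~\ref{Proposition:Affine}), and then argue that a single constant $c=1+o(1)$ works uniformly over all $k\ge\tau$.

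The one genuine difference is how you handle the uniformity in $k$. The paper takes $c=2^\tau/(2^\tau-\sum_{a\le d}\binom{\tau}{a})$ and verifies the constraint at each $k\ge\tau$ by the termwise inequality $\binom{k}{a}/\binom{\tau}{a}\le 2^{k-\tau}$ for $a\le d\le\tau/2$. You instead observe that $|B_d^{(k)}|/2^k=\Pr[\mathrm{Bin}(k,\tfrac12)\le d]$ is monotone nonincreasing in $k$ by a one-line coupling, so the worst case is $k=\tau$, and then invoke Chernoff to quantify the $o(1)$. Your route is a bit cleaner and makes the $o(1)$ explicit; the paper's route yields a slightly sharper constant but leaves the asymptotic implicit. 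Both arguments are short and equivalent in substance.
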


For states $\ket{\psi_\xv} = \sum_{\ev} f(\ev) \ket{\xv + \ev}$ where $f$ is a Bernoulli function of parameter $t$, if we write $\ket{\psi_\zerov} = \sum_{\iv \in \F_2^n} \halpha_\iv \ket{\hi}$, then 
$\sum_{\iv \notin B_d} |\halpha_\iv|^2 = negl(n)$ for $d = t^{\perp}(1+o(1))$. It was shown in~\cite{CT24} that one can learn around $2t^\perp$ coordinates of $\xv$ (so in particular parities of $\xv$) from $\ket{\psi_\xv}$. The above proposition shows that this is essentially optimal. 

In order to prove this proposition, we need a stronger statement than Proposition~\ref{Proposition:I1}. What we show using linear algebraic arguments is that $\overline{B_d}$ almost covers any affine subspace $V$ of $\F_2^n$ of dimension $\tau = \lceil \gamma d \rceil$, meaning that $\frac{|\overline{B_{d}} \cap V|}{|V|} = 1 - o(1)$. 

\paragraph{3. The average number of parities setting.}

We also study the average number of parities setting. Recall that here, we choose $C(k)= k$. Let $\rho_{Av}(S),\rho_{Av}^L(S),\sigma^L_{Av}(S)$ the values of the different optimization problems with this choice of $C$. 
We give $2$ families of dual solutions, and we give matching potential primal solutions. The first upper bound can be seen as an equivalent of Proposition~\ref{Proposition:ThresholdBound} and this shows that if the average weight of the dual support is $d$ then one can learn at most $2d$ parities of $\xv$ from $\ket{\psi_{\xv}}$.

\begin{theorem}\label{Theorem:6}
	$\rho_{Av}(S) \le 2\sum_{\iv \in \F_2^n} |\iv|_H |\halpha_{\iv}|^2$, where $|\cdot|_H$ is the Hamming weight of a binary vector. 
\end{theorem}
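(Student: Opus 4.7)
The plan is to exploit strong duality: since $\rho_{Av}(S) = \sigma^L_{Av}(S)$, it suffices to exhibit a feasible point $(b_\iv)$ of the dual program (with $C(k)=k$) whose objective value equals $2\sum_{\iv} |\iv|_H |\halpha_\iv|^2$. The natural guess, which makes the objective match the desired bound exactly, is
\[
b_\iv \eqdef 2|\iv|_H.
\]
These are clearly nonnegative, so the only thing to check is the dual feasibility condition of Equation~\ref{Eq:I2}, which here reads: for every $k\in\Iint{0}{n}$, every $\Hm \in \Lambda_k$ and every $\sv \in \F_2^{n-k}$,
\[
\sum_{\iv \in D_\Hm(\sv)} 2|\iv|_H \ \ge\ k\cdot 2^k.
\]
Since $\Gm_\Hm$ has rank $n-k$, the set $D_\Hm(\sv)$ is an affine subspace of $\F_2^n$ of dimension $k$, so this inequality reduces to the purely combinatorial claim that the \emph{average} Hamming weight of a vector in any $k$-dimensional affine subspace of $\F_2^n$ is at least $k/2$.

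To establish this claim I would write such an affine subspace as $A = \vv_0 + V$ with $V$ a $k$-dimensional linear subspace, and expand the sum coordinate by coordinate:
\[
\frac{1}{2^k}\sum_{\iv \in A} |\iv|_H \ =\ \sum_{i=1}^n \Pr_{\vv \in V}[\vv_i + (\vv_0)_i = 1].
\]
For each coordinate $i \in \mathrm{supp}(V)$ the projection $\vv \mapsto \vv_i$ is a non-trivial linear form on $V$, so exactly half of $V$ satisfies $\vv_i + (\vv_0)_i = 1$, contributing $\tfrac{1}{2}$; for $i \notin \mathrm{supp}(V)$ the contribution is $(\vv_0)_i \ge 0$. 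Hence the average weight is at least $|\mathrm{supp}(V)|/2$. The key standard fact that closes the argument is that a $k$-dimensional subspace of $\F_2^n$ has support of size at least $k$: otherwise $V$ would embed into $\F_2^{k-1}$, contradicting $\dim V = k$. This gives average weight $\ge k/2$, i.e. $\sum_{\iv\in A}|\iv|_H \ge k\cdot 2^{k-1}$, which is exactly the dual feasibility constraint.

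The main (and really only) obstacle is this combinatorial inequality about affine subspaces of $\F_2^n$; everything else is mechanical application of the dual LP machinery set up earlier. Once feasibility is verified, weak duality yields
\[
\rho_{Av}(S) \ =\ \rho^L_{Av}(S)\ =\ \sigma^L_{Av}(S)\ \le\ \sigma^L_{Av}\bigl(S;(b_\iv)\bigr)\ =\ \sum_{\iv \in \F_2^n} 2|\iv|_H |\halpha_\iv|^2,
\]
which is the statement of the theorem. It is worth noting that the bound is tight on the constant $2$ precisely because the worst-case affine subspaces (translates of coordinate-subspaces) saturate the support inequality, matching the intuition that one cannot extract more than roughly twice the dual weight in parities.
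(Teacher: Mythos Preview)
Your proof is correct and follows essentially the same approach as the paper: exhibit the dual feasible point $b_\iv = 2|\iv|_H$ and verify the constraint $\sum_{\iv \in A} |\iv|_H \ge k\,2^{k-1}$ for every $k$-dimensional affine subspace $A$. The only cosmetic difference is in how the combinatorial lemma is proved: the paper restricts to a set $I$ of $k$ coordinates on which the projection is a bijection onto $\F_2^k$ (so those coordinates alone already contribute $k\,2^{k-1}$), whereas you argue coordinate-by-coordinate that every coordinate in $\mathrm{supp}(V)$ contributes exactly $\tfrac12$ on average and then invoke $|\mathrm{supp}(V)|\ge k$; these are equivalent and equally short.
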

In order to prove this bound, we show using algebraic arguments that the choice $b_\iv = 2|\iv|_H$ is a valid solution of the dual linear program and we immediately have $\sigma_{Av}(S) \le \sigma_{Av}(S;(b_\iv)) = 2\sum_{\iv \in \F_2^n} |\iv|_H |\halpha_{\iv}|^2$. This proposition implies the following

\begin{corollary}
	Let $S = \{\ket{\psi_{\xv}}\}_{\xv \in \F_2^n}$ be a set of symmetric states with $\ket{\psi_\zerov} = \sum_{\iv \in \F_2^n} \halpha_{\iv} \ket{\hi}$. Assume that $\sum_{\iv \notin B_d} |\halpha_\iv|^2 = \eps$. Then
	$$ \rho_{Av}(S) \le 2d (1-\eps) + 2n\eps.$$
\end{corollary}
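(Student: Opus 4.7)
The corollary is an almost immediate consequence of Theorem~\ref{Theorem:6}: it upper bounds $\rho_{Av}(S)$ by $2\sum_{\iv \in \F_2^n}|\iv|_H\,|\halpha_\iv|^2$, and the remaining work is purely analytic, namely to bound this weighted average of Hamming weights using only the information that mass $1-\eps$ lies on $B_d$ and mass $\eps$ lies outside.

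The plan is to split the sum $\sum_{\iv \in \F_2^n} |\iv|_H |\halpha_\iv|^2$ according to whether $\iv$ is in the Hamming ball $B_d$ or not. On $B_d$, the Hamming weight satisfies the trivial bound $|\iv|_H \le d$ by definition; on its complement, the trivial bound $|\iv|_H \le n$ applies. Combining these with the hypothesis $\sum_{\iv \notin B_d} |\halpha_\iv|^2 = \eps$ (and hence $\sum_{\iv \in B_d} |\halpha_\iv|^2 = 1-\eps$, since $\{|\halpha_\iv|^2\}$ is a probability distribution because $\ket{\psi_\zerov}$ is a unit vector expanded in the orthonormal Fourier basis) yields
\begin{align*}
\sum_{\iv \in \F_2^n} |\iv|_H |\halpha_\iv|^2 &= \sum_{\iv \in B_d} |\iv|_H |\halpha_\iv|^2 + \sum_{\iv \notin B_d} |\iv|_H |\halpha_\iv|^2 \\
&\le d \sum_{\iv \in B_d} |\halpha_\iv|^2 + n \sum_{\iv \notin B_d} |\halpha_\iv|^2 = d(1-\eps) + n\eps.
\end{align*}
Multiplying by $2$ and invoking Theorem~\ref{Theorem:6} gives $\rho_{Av}(S) \le 2d(1-\eps) + 2n\eps$, as desired.

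There is essentially no obstacle in this argument; the only point worth double-checking is that $\sum_{\iv} |\halpha_\iv|^2 = 1$, which follows from the fact that $\{\ket{\hi}\}_{\iv\in\F_2^n}$ is an orthonormal basis of the ambient Hilbert space and $\ket{\psi_\zerov}$ is a unit vector, so the complement mass on $B_d$ is indeed exactly $1-\eps$.
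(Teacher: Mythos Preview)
Your proof is correct and follows the same approach as the paper: the paper states this as a direct corollary of Theorem~\ref{Theorem:6} and gives exactly your split-and-bound argument (bounding $|\iv|_H$ by $d$ on $B_d$ and by $n$ on its complement) in one sentence.
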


This means that if the dual support of $\ket{\psi_\zerov}$ is highly concentrated on words of weight at most $d$, then one cannot unambiguously learn on average much more than $2d$ parities of $\xv$ from $\ket{\psi_\xv}$.

The bound of Theorem~\ref{Theorem:6} is sometimes far from tight. We give another bound based on a different choice of dual solutions

\begin{proposition}
	$\rho_{Av}(S) \le (2^n+n-1)|\halpha_\zerov|^2 + (n-1) \sum_{\iv \in \F_2^n \setminus \{\zerov\}} |\halpha_\iv|^2.$
\end{proposition}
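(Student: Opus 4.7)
The plan is to invoke strong duality---$\rho_{Av}(S) = \sigma^L_{Av}(S)$---and then exhibit a single feasible dual solution whose objective matches the claimed right-hand side. Reading the target bound backwards, the natural guess is
$$b_\zerov = 2^n + n - 1, \qquad b_\iv = n - 1 \ \text{ for } \iv \neq \zerov,$$
which is nonnegative (assuming $n \ge 1$) and whose objective $\sum_\iv b_\iv |\halpha_\iv|^2$ is exactly $(2^n + n - 1)|\halpha_\zerov|^2 + (n-1)\sum_{\iv \neq \zerov}|\halpha_\iv|^2$. So the only real work is verifying the dual constraint.

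The core step is checking $\sum_{\iv \in D_\Hm(\sv)} b_\iv \ge k \cdot 2^k$ for every $k \in \Iint{0}{n}$, $\Hm \in \Lambda_k$, and $\sv \in \F_2^{n-k}$. Since $D_\Hm(\sv)$ is an affine subspace of $\F_2^n$ of dimension $k$ (hence cardinality $2^k$), the sum depends only on whether $\zerov \in D_\Hm(\sv)$. In the first case, the sum equals $(2^n + n - 1) + (2^k - 1)(n-1) = 2^n + (n-1)2^k$, and the required inequality rewrites as $2^n \ge (k - n + 1)\,2^k$, which holds for every $k \le n-1$ (the right-hand side is then nonpositive) and holds with equality when $k = n$. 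In the second case, the sum equals $(n-1)2^k$ and the inequality collapses to $k \le n-1$; this is automatic here, because for $k = n$ the affine subspace $D_\Hm(\sv) = \F_2^n$ always contains $\zerov$, so the second case only arises when $k \le n-1$.

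This dual choice contrasts with the one behind Theorem~\ref{Theorem:6} (namely $b_\iv = 2|\iv|_H$): here the weights $b_\iv$ are almost constant in $\iv$, and the single large coefficient on $\halpha_\zerov$ is what absorbs the constraints coming from affine subspaces through the origin. I do not anticipate a serious obstacle, since once strong duality and the case split are in place, the argument reduces to two elementary inequalities. The only point requiring a little bookkeeping is the boundary behavior at $k = 0$ (where $D_\Hm(\sv)$ is a single point and the right-hand side is $0$) and at $k = n$ (where $D_\Hm(\sv) = \F_2^n$ forces the first case and the constraint is saturated).
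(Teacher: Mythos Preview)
Your proposal is correct and follows essentially the same approach as the paper: you choose the identical dual solution $b_\zerov = 2^n+n-1$, $b_\iv = n-1$ for $\iv\neq\zerov$, and verify the constraint $\sum_{\iv\in D_\Hm(\sv)} b_\iv \ge k2^k$ by the same case split on whether $\zerov\in D_\Hm(\sv)$, with the same observation that the case $\zerov\notin D_\Hm(\sv)$ cannot occur when $k=n$.
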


We also provide in Appendix~\ref{Appendix:n=2} a more in-depth study of the case $ n = 2$ where these two families of solutions span all possible dual solutions (which is not the case as $n$ increases). 

\subsubsection{Computational hardness}

Finally, we study the computational hardness of implementing such measurements. We managed to express $\rho(S)$ as a linear maximization program. A natural question is, given solutions $(\lambda_\iv^{\Hm,\yv})$ of the primal problem, whether one can efficiently construct fine-grained measurements from these solutions. We positively answer this question if we find a solution which is quantum sampleable.

\begin{theorem}\label{thm:cosntruction}
	Let $S$ be a set of symmetric states and let $(\lambda_\iv^{\Hm,\yv})$ be a primal solution, {\ie} an ensemble of nonnegative reals satisfying Equation~\ref{Eq:I12} and~\ref{Eq:I1}. Assume that these numbers are efficiently quantum sampleable {\ie} that the unitary 
	
	\begin{equation}\label{eq:impl_U}
		U : \ket{\iv}\ket{0} \mapsto \sum_{k \in \Iint{0}{n}} \sum_{\Hm \in \Lambda_k} \sqrt{\lambda_\iv^\Hm} \frac{|\halpha_\iv|}{\halpha_\iv} \ket{\iv}\ket{\Hm},
	\end{equation}
	can be computed in time $poly(n)$. Then we can construct a POVM $\FHy \in \Gamma(S)$ such that 
	\begin{enumerate}
		\item $\rho(S;\FHy) = \rho^L(S;(\lambda_\iv^\Hm))$.
		\item The POVM $\FHy$ can be efficiently implemented in time $poly(n)$.
	\end{enumerate}
\end{theorem}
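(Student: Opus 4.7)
The plan is to realise the measurement as an explicit quantum circuit built from $U$, the Hadamard transform $\QFTt$ over $\F_2^n$, and an efficient linear-algebra bijection. Given an input $\ket{\psi_\xv}=\sum_\iv\halpha_\iv(-1)^{\xv\cdot\iv}\ket{\hi}$, the procedure is: (1) apply $\QFTt$, producing $\sum_\iv \halpha_\iv(-1)^{\xv\cdot\iv}\ket{\iv}$; (2) append an ancilla $\ket{0}$ and apply $U$, obtaining $\sum_{\iv,\Hm} |\halpha_\iv|\sqrt{\lambda_\iv^\Hm}(-1)^{\xv\cdot\iv}\ket{\iv}\ket{\Hm}$; (3) measure the ancilla to learn $\Hm$, which occurs with probability $p_\Hm = \sum_\iv |\halpha_\iv|^2 \lambda_\iv^\Hm$, independent of $\xv$; (4) fix once and for all a section $\iv_0:\F_2^{n-k}\to\F_2^n$ with $\Gm_\Hm \iv_0(\sv)=\sv$ so that every $\iv$ decomposes uniquely as $\iv=\iv_0(\Gm_\Hm\iv)+\Hm^\top\bv$, and apply the bijection $B_\Hm:\ket{\iv}\mapsto\ket{\sv}\ket{\bv}$; (5) apply $\QFTt$ to the $\bv$-register and measure it, outputting the result as $\yv$.

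The crucial analytical step is showing that step~(5) returns $\yv=\Hm\xv$ deterministically. Summing Equation~\ref{Eq:I1} over $\yv$ and setting $\lambda_\iv^\Hm \eqdef \sum_\yv \lambda_\iv^{\Hm,\yv}$ shows that $|\halpha_\iv|^2\lambda_\iv^\Hm$ is constant on each dual coset $D_\Hm(\sv)$; denote by $\nu_\sv^\Hm$ the common value of $|\halpha_\iv|\sqrt{\lambda_\iv^\Hm}$ on $D_\Hm(\sv)$. Rewriting the post-step-(3) state using $\iv=\iv_0(\sv)+\Hm^\top\bv$ and the identity $\xv\cdot(\Hm^\top\bv)=(\Hm\xv)\cdot\bv$, it equals (up to normalisation)
\[
\sum_\sv \nu_\sv^\Hm (-1)^{\xv\cdot\iv_0(\sv)} \sum_\bv (-1)^{(\Hm\xv)\cdot\bv} \ket{\iv_0(\sv)+\Hm^\top\bv}.
\]
Applying $B_\Hm$ factorises this as a garbage register $\sum_\sv \nu_\sv^\Hm(-1)^{\xv\cdot\iv_0(\sv)}\ket{\sv}$ tensored with $\sum_\bv (-1)^{(\Hm\xv)\cdot\bv}\ket{\bv}=\sqrt{2^k}\ket{\widehat{\Hm\xv}}$, and the final $\QFTt$ sends the second factor to $\ket{\Hm\xv}$. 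The overall output distribution is therefore $\Pr[(\Hm,\yv)\mid\xv]=p_\Hm\,\one[\yv=\Hm\xv]$, which automatically fulfils the unambiguity condition of $\Gamma(S)$; positivity and completeness of the induced $F_{\Hm,\yv}$ follow from the fact that the whole procedure is a measurement of a Naimark-type dilation. Summing the score yields $\rho(S;\FHy)=\sum_\Hm C(k_\Hm)p_\Hm=\sum_\Hm C(k_\Hm)\sum_\iv |\halpha_\iv|^2\lambda_\iv^\Hm=\rho^L(S;(\lambda_\iv^{\Hm,\yv}))$, the last equality because the objective of $\rho^L$ depends on the primal only through the marginals $\lambda_\iv^\Hm$.

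Efficiency is then routine: $\QFTt$ is a tensor product of $n$ Hadamards, $U$ costs $poly(n)$ by hypothesis, and $B_\Hm$ is obtained from $\Hm$ by coherently computing $\sv=\Gm_\Hm\iv$ and solving $\iv-\iv_0(\sv)=\Hm^\top\bv$ for $\bv$ via $poly(n)$ Gaussian elimination. The main obstacle I anticipate is making the factorisation step rigorous: one must verify carefully that Equation~\ref{Eq:I1} really forces $|\halpha_\iv|\sqrt{\lambda_\iv^\Hm}$ to depend only on $\sv=\Gm_\Hm\iv$, and that the phases introduced by the choice of $\iv_0(\sv)$ collect entirely in the garbage register, leaving exactly the $(-1)^{(\Hm\xv)\cdot\bv}$ pattern on the $\bv$-register. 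A minor technicality concerns indices where $\halpha_\iv=0$, for which the phase $|\halpha_\iv|/\halpha_\iv$ in the definition of $U$ is formally undefined; since such $\ket{\iv}$ carry zero amplitude after step~(1), the behaviour of $U$ there can be extended arbitrarily to complete it to a unitary without affecting the analysis.
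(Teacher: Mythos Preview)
Your proposal is correct and follows essentially the same approach as the paper: apply the Hadamard transform, apply $U$, exploit the coset constancy from Equation~\ref{Eq:I1} to factor the resulting state as a garbage $\sv$-register times $\ket{\widehat{\Hm\xv}}$, and Fourier-transform the latter to read off $\Hm\xv$ deterministically. The only cosmetic differences are that you measure the $\Hm$-register before applying $B_\Hm$ (the paper keeps everything coherent and measures at the end, which is equivalent by deferred measurement), and you are more explicit about the efficiency of the coset-decomposition bijection and the $\halpha_\iv=0$ technicality.
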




\subsection{Discussion and Perspectives}

Let us put these results in perspective.

The main contribution of this work is to introduce fine-grained measurements and to perform an extensive study of these measurements. In the case of symmetric states, we provide a full characterization of the effectiveness of these measurements in terms of a linear program. This characterization allows us in particular to provide necessary and sufficient conditions for the existence of fine-grained unambiguous measurements that succeed with non zero probability, and to give precise upper bounds both in the threshold setting and in the average case setting. 

 Regarding our original question related to $\SLPN$, Theorem~\ref{Proposition:ThresholdBound} and, to some extent Theorem~\ref{Theorem:6} show that one cannot beat the $\frac{k}{2}$ barrier that we discussed in Section~\ref{Section:1.1}. However the optimal measurement for $\SLPN$, that uses the Pretty Good Measurement actually breaks this barrier. The way to interpret these two results is that there are actually two variants of the discrimination problem: 
\begin{enumerate}
	\item Given $\ket{\psi_{\cv}}$ for $\cv \in \C$ where $\C$ is a $k$-dimensional code, recover $\cv$. This is the actual $\SLPN$ problem. 
	\item Given $\ket{\psi_{\xv}}$ for $\xv \in \F_2^n$, recover $k$ parities of $\xv$. This is the problem we consider in our work. We then use the information that $\xv \in \C$ to recover $\xv$ from these $k$ parities.
\end{enumerate}

The second task is harder than the first one but it allows to work with a discrimination problem independently of the chosen code $\C$. The work of~\cite{CT24} gave information theoretic bounds for the first problem and we give in this work information theoretic bounds for the second problem. Our main conclusion is that the first task can be significantly easier than the second one, and that one cannot expect to beat Prange's bound if we only consider the second variant.

In other words, the main conclusion of this work is therefore that we have to strongly use the structure of the code $\C$ in order to solve the $\SLPN$, way beyond just recovering a codeword from $k$ parities. More generally, we introduced a framework for learning unambiguously $k$-parities of $\xv$ from a pure state $\ket{\psi_\xv}$. This is a very natural problem and we provided a extensive study of this question, at least for symmetric states. It would be interesting also to see what results hold if we remove this symmetry requirement. Finally, we believe fine-grained unambiguous measurements can have other applications, for example in the study of Quantum Oblivious Transfer or Quantum Random Access Codes.

\setcounter{theorem}{0}

\section{Preliminaries}
\paragraph{Notations.}
In this article, we work in the vector space $\F_2^n$. Vectors of $\F_2^n$ will be \emph{column vectors} and will be written in small bolds letters $\xv,\yv, \dots $. The canonical inner product in $\F_2^n$ between vectors $\xv$ and $\yv$ is denoted $\xv \cdot \yv$. Matrices will be written in capital bold letters $\Gm,\Hm,\dots$. We write $\trp{\xv}$ (resp. $\trp{\Mm}$) to denote the transpose of a vector (resp. of a matrix). For a Hermitian matrix $\Mm$, we write $\Mm \succeq 0$ when $\Mm$ is positive semidefinite.

For $\Mm \in \mathbb{C}^{\F_2^n \times \F_2^n}$, we write $(\Mm)_{\iv,\jv}$ to denote the entry of $\Mm$ in row $\iv$ and column $\jv$. In the \emph{bra-ket} notation, we can write $(\Mm)_{\iv,\jv} = \triple{\iv}{\Mm}{\jv}$.

\subsection{Quantum computing preliminaries}
We consider the Hadamard unitary $H : \ket{b} \rightarrow \frac{1}{\sqrt{2}}\left(\ket{0} + (-1)^{b}\ket{b}\right)$ which satisfies $H = H^\dagger$. The Quantum Fourier Transform over $\F_2^n$ is just the operation $H^{\otimes n}$.
\begin{definition}[Shift and Phase operators]
For $\bv$ in $\mathbb{F}_2^n$, let $X_{\bv}$ be the shift operator $X_{\bv}\ket{\xv} = \ket{\xv + \bv}$ and $Z_{\bv}$ be the phase operator $Z_{\bv}\ket{\xv} = (-1)^{\xv \cdot \bv}$$ \ket{\xv}$.
\end{definition}
Notice that because we work in $\F_2^n$, we have $X^\dagger_\bv = X_\bv$.

\begin{definition}\label{def:action_qft}
	For a vector $\xv \in \F_2^n$, we define $\ket{\hx} = H^{\otimes n} \ket{\xv} = \frac{1}{\sqrt{2^n}}\sum_{\yv \in \F_2^n} (-1)^{\xv\cdot\yv} \ket{\yv} $. For a matrix $\Mm \in \mathbb{C}^{\F_2^{n} \times \F_2^n}$, we define $\widehat{\Mm} = H^{\otimes n} \Mm H^{\otimes n}$.
\end{definition}

Let $\Mm \in \mathbb{C}^{\F_2^n \times \F_2^n}$. From our definitions, we have that for any $\xv,\yv \in \F_2^n$, we have $\triple{\xv}{\widehat{\Mm}}{\yv} = \triple{\hx}{\Mm}{\hy}$. The elements $\triple{\hx}{\Mm}{\hx}$ are called the Fourier diagonal elements of $\Mm$.

\begin{proposition}\label{proposition:shift}
	We have for all $\bv$ in $\mathbb{F}_2^n$ that 
	$ \ket{\hx}$  is  an eigenstate of $X_\bv$ associated to the eigenvalue  $(-1)^{\xv \cdot \bv}$ and
	\begin{eqnarray}
	X_\bv \cdot H^{\otimes n} & = & H^{\otimes n} \cdot Z_{\bv} \label{eq:eigenstate}\\
	H^{\otimes n} \cdot X_\bv &=& Z_\bv \cdot H^{\otimes n}. \label{eq:dual_eigenstate}
	\end{eqnarray}
\end{proposition}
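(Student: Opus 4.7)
The plan is to first establish the eigenstate claim by a direct calculation on the Fourier basis state, then deduce the two operator identities from it by checking equality on the computational basis (for the first) and by taking adjoints (for the second).

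First I would apply $X_\bv$ directly to $\ket{\hx} = \frac{1}{\sqrt{2^n}} \sum_{\yv \in \F_2^n} (-1)^{\xv \cdot \yv} \ket{\yv}$. Using $X_\bv \ket{\yv} = \ket{\yv + \bv}$ and then performing the substitution $\yv' = \yv + \bv$ in the sum (which is a bijection of $\F_2^n$), the phase becomes $(-1)^{\xv \cdot (\yv' + \bv)} = (-1)^{\xv \cdot \bv} (-1)^{\xv \cdot \yv'}$, so the global factor $(-1)^{\xv \cdot \bv}$ can be pulled out and the remaining sum is exactly $\ket{\hx}$. This proves the eigenstate claim.

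Next, to show $X_\bv \cdot H^{\otimes n} = H^{\otimes n} \cdot Z_\bv$, I would check equality on an arbitrary computational basis vector $\ket{\xv}$. On the left-hand side, $H^{\otimes n} \ket{\xv} = \ket{\hx}$ by Definition~\ref{def:action_qft}, and then by the eigenstate property we just proved, $X_\bv \ket{\hx} = (-1)^{\xv \cdot \bv} \ket{\hx}$. On the right-hand side, $Z_\bv \ket{\xv} = (-1)^{\xv \cdot \bv} \ket{\xv}$, and applying $H^{\otimes n}$ produces $(-1)^{\xv \cdot \bv} \ket{\hx}$. The two agree on a basis, so the operator identity~\eqref{eq:eigenstate} holds.

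Finally, for identity~\eqref{eq:dual_eigenstate}, I would take the adjoint of~\eqref{eq:eigenstate}. Using that $H^{\otimes n}$ is Hermitian and that $X_\bv^\dagger = X_\bv$, $Z_\bv^\dagger = Z_\bv$ (both noted in the preliminaries), the adjoint of $X_\bv H^{\otimes n} = H^{\otimes n} Z_\bv$ is $H^{\otimes n} X_\bv = Z_\bv H^{\otimes n}$, which is exactly the desired identity. There is no real obstacle here — the whole proposition is a routine unfolding of definitions, and the only care needed is the index substitution in the first computation and the bookkeeping of adjoints at the end.
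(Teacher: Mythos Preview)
Your argument is correct in every step: the eigenstate claim follows from the substitution $\yv' = \yv + \bv$, identity~\eqref{eq:eigenstate} then follows by checking on the computational basis, and identity~\eqref{eq:dual_eigenstate} follows by taking adjoints (the self-adjointness of $Z_\bv$ is immediate from its definition as a real diagonal operator, even though the paper only explicitly notes $X_\bv^\dagger = X_\bv$). The paper itself states this proposition without proof, treating it as a standard preliminary fact, so there is no approach to compare against.
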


\subsection{Binary linear codes}
Let $\P_2(n)$ be the set of all subspaces of $\F_2^n$ and let $\G_2(n,k)$ be the set of all subspaces of $\F_2^n$ of dimension $k$. In particular, we have $\P_2(n) = \bigcup_{k = 0}^n \G_2(n,k)$.

A binary linear code $\C$ of dimension $k$ and length $n$ is an element of $\G_2(n,k)$. It can be characterized both by a generator matrix $\Gm \in \F_2^{k \times n}$ of rank $k$ or a parity matrix $\Hm \in \F_2^{(n-k) \times n}$ of rank $(n-k)$, so that 
$$
	\C = \{\trp{\Gm}\sv : \sv \in \F_2^k\} = \{\cv \in \F_2^n : \Hm \cv = \zerov\} = Ker(\Hm),$$
where again, we use the convention that $\C$ is a column subspace. Notice that the same code $\C \in \G_2(n,k)$ can have many different generator and parity matrices. The dual code of $\C$ is $\C^\bot = \{\yv \in \F_2^n : \forall \cv \in \C, \ \cv \cdot \yv = 0\}$. One can check that a generator matrix $\Gm$ of $\C$ is also a parity matrix of $\C^\bot$.

\begin{definition}
	We denote $\Lambda_{n,k}$ the set of matrices $\Mm \in \F_2^{k \times n}$ of full rank $k$. When $n$ is clear from context, we will omit the dependency in $n$ and write $\Lambda_k$ instead of $\Lambda_{n,k}$.
\end{definition}

It will be useful to fix a generator matrix associated to a code.
\begin{definition}
	$\forall k \in \Iint{0}{n}$, $\forall \C \in \G_2(n,k)$, we associate to $\C$ a fixed arbitrary matrix $\Gm_\C \in \Lambda_k$ such that $span\{\trp{(\Gm_\C)} \sv : \sv \in \F_2^k\} = \C$.
	If the code $\C$ is specified by a parity matrix $\Hm$, we also denote by $\Gm_\Hm$, this associated generator matrix.
\end{definition} 

\begin{definition}\label{Definition:DualCosets}
	Let $\Hm \in \Lambda_k$. Let $\C = Ker(\Hm) \in \G_2(n,n-k)$, which implies that $\Hm \trp{(\Gm_\C)} = \trp{\Hm}\Gm_\C = \zerov$. We define the dual cosets $\D_\Hm(\sv)$ associated to $\Hm$ as follows
	$$ \forall \sv \in \F_2^{n-k}, \ \D_\Hm(\sv) \eqdef \{\xv \in \F_2^n :   \Gm_\C \cdot \xv = \sv\}.$$
	Let any $\vv \in \F_2^n$ such that $\Gm_\C \cdot \vv = \sv$. Notice that we can also write 
	$$ \D_\Hm(\sv) = \{\trp{\Hm}\uv  + \vv : \uv \in \F_2^k\}.$$
\end{definition}

The following proposition will be very useful throughout this work. 

\begin{proposition}\label{Proposition:CodeSum}
	Let $\mathcal{C} \in \G_2(n,k)$ be a linear code. 
	\begin{equation*}
		\sum_{\cv \in\mathcal{C}} (-1)^{\vv \cdot \cv}
		= \left\{
		\begin{array}{ll}
			|\mathcal{C}| & \text{if } \vv \in \C^\bot \\
			0 & \text{otherwise}
		\end{array}
		\right.
	\end{equation*}
\end{proposition}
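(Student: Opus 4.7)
The plan is to split into the two cases according to whether $\vv \in \C^\bot$ or not, since the statement is essentially the character orthogonality relation applied to the linear subspace $\C \subseteq \F_2^n$. First, if $\vv \in \C^\bot$, then by the very definition of the dual code every $\cv \in \C$ satisfies $\vv \cdot \cv = 0$, so every summand equals $(-1)^0 = 1$ and the total is just $|\C|$.

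The main case is $\vv \notin \C^\bot$. Here I would use a shift-and-cancel argument exploiting that $\C$ is a linear subspace. By definition of the dual code, $\vv \notin \C^\bot$ means there exists $\cv_0 \in \C$ such that $\vv \cdot \cv_0 = 1$. The map $\cv \mapsto \cv + \cv_0$ is then a bijection of $\C$ onto itself, which uses crucially that $\C$ is closed under addition and that $\cv_0 \in \C$. Reindexing the sum through this bijection and using bilinearity of the inner product gives
$$ \sum_{\cv \in \C} (-1)^{\vv \cdot \cv} \;=\; \sum_{\cv \in \C} (-1)^{\vv \cdot (\cv + \cv_0)} \;=\; (-1)^{\vv \cdot \cv_0} \sum_{\cv \in \C} (-1)^{\vv \cdot \cv} \;=\; -\sum_{\cv \in \C} (-1)^{\vv \cdot \cv}. $$
A quantity equal to its own negation must vanish, so the sum is $0$.

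There is no real obstacle here: the statement is the classical orthogonality relation for the characters $\cv \mapsto (-1)^{\vv \cdot \cv}$ of the additive group $\F_2^n$ restricted to the subgroup $\C$. The only point that has to be verified is the existence of the witness $\cv_0 \in \C$ with $\vv \cdot \cv_0 = 1$ in the second case, and this follows immediately from the definition of $\C^\bot$. The linear-subspace structure of $\C$ is what makes the shift $\cv \mapsto \cv+\cv_0$ well-defined as a permutation of $\C$, and this is the only property of $\C$ that the argument uses.
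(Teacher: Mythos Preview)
Your proof is correct and follows essentially the same approach as the paper: both split into the cases $\vv \in \C^\bot$ and $\vv \notin \C^\bot$, and in the latter case use the existence of $\cv_0 \in \C$ with $\vv \cdot \cv_0 = 1$ together with the shift $\cv \mapsto \cv + \cv_0$ to show the sum equals its own negation.
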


\begin{proof}
    If $\vv \in\mathcal{C}^\bot$, then $\vv \cdot \cv = 0$ for all $\cv \in\mathcal{C}$. If $\vv \notin\mathcal{C}^\bot$, then there exists $\cv_0\in\mathcal{C}$ such that, $\vv \cdot \cv_0 = 1$.
    \begin{equation*}
        \sum_{\cv\in\mathcal{C}} (-1)^{\vv\cdot\cv} = \sum_{\cv\in\mathcal{C}} (-1)^{\vv\cdot(\cv+\cv_0)} = (-1)^{\vv\cdot\cv_0} \sum_{\cv\in\mathcal{C}} (-1)^{\vv\cdot\cv} = - \sum_{\cv\in\mathcal{C}} (-1)^{\vv \cdot \cv}
    \end{equation*}
    Thus, $\sum_{\cv\in\mathcal{C}} (-1)^{\vv\cdot\cv} = 0$.
\end{proof}

\subsection{Fine-Grained Unambiguous Measurements on $\mathbb{F}_2^n$}
We want to capture what it means to learn $k$ parities of $\xv = \begin{pmatrix} x_1 \\ \vdots \\ x_n \end{pmatrix}$. A way to capture this is to consider matrices $\Hm \in \F_2^{k \times n}$ of rank $k$, and to write $\Hm \trsp{\xv} = \trsp{\yv}$. For example, if
$$ \Hm = \begin{pmatrix} 1 $ 0 $ 1 \\ 0 $ 1 $ 1 \end{pmatrix} \textrm{ and } \yv = \begin{pmatrix} 1 \\ 0 \end{pmatrix} \textrm{ then } \Hm \trsp{\xv} = \trsp{\yv} \Leftrightarrow x_1 \oplus x_3 = 1 \wedge x_2 \oplus x_3 = 0.$$
Here, we know $2$ parities of $\xv$. We now define the information sets associated to the knowledge of such parities:
\begin{definition}
	$$
	\forall k \in \Iint{0}{n}, \ \I_k \eqdef  \{(\Hm,\yv) \in \Lambda_k \times \F_2^k\} \quad ; \quad \I \eqdef \cup_{k = 0}^n \I_k \quad ; \quad  \I^* = \I\setminus\I_0.$$
\end{definition}
 We are now ready to define Fine-Grained Unambiguous Measurements over a set of states $S$.
\begin{definition}[Fine-Grained Unambiguous Measurements on $\mathbb{F}_2^n$]
	A fine-grained unambiguous measurement on $\F_2^n$ associated to a set of states $\{ \ket{\psi_{\xv}}, \xv \in\mathbb{F}_2^n \}$ is a POVM $\{F_{\Hm,\yv}\}_{(\Hm,\yv) \in \I}$  such that
	\begin{equation*}
		\forall (\Hm,\yv) \in \I, \ \forall \xv \in\mathbb{F}_2^n \text{ s.t. } \Hm\trsp{\xv} \ne \trsp{\yv}, \ \Tr(F_{\Hm,\yv}\kb{\psi_{\xv}}) = 0.
	\end{equation*}
	When $(\Hm,\yv)$ are the null matrix and null vector, the associated $POVM$ element is also denoted $F_\bot$. In particular, we can write 
	$$ F_{\bot} = I - \sum_{(\Hm,\yv) \in \I^*} F_{\Hm,\yv}.$$
\end{definition}

The different outcomes $(\Hm,\yv)$ correspond to the information: ``I know with certainty that $\Hm \trsp{\xv} = \trsp{\yv}$.'' 

\begin{definition}
	Given a set of states $S = \{\ket{\psi_{\xv}}\}_{\xv \in \F_2^n}$, the set of fine-grained unambiguous measurements associated to $S$ is denoted $\Gamma(S)$.
\end{definition}

\paragraph{The associated optimization program.}

It is simple to evaluate the quality of a standard unambiguous measurement, just by looking at the probability of success. In our setting, we will have a distribution over the number of parities we obtain. As we can think of, the two most natural settings are the threshold setting, where we ask the measurement to output at least $\tau$ parities and the average parity setting where we look at the average number of parities that the measurement outputs. Both these settings, and many other settings, can be encompassed by the quantity below

\begin{definition}
	Let $S = \{\ket{\psi_{\xv}}\}_{\xv \in \F_2^n}$ be a set of states and $\{F_{\Hm,\yv} : (\Hm,\yv) \in \I\} \in \Gamma(S)$. The quality of the measurement $\FHy$ on the set $S$ for some function $C : \Iint{0}{n} \rightarrow \mathbb{R}$ is given by the quantity:
	$$ \rho(S,\{F_{\Hm,\yv}\}) \eqdef \E_{\xv \in \F_2^n}\left[\sum_{k \in \Iint{0}{n}}\sum_{(\Hm,\yv) \in \I_k} C(k) \cdot \Tr\left(F_{\Hm,\yv} \kb{\psi_{\xv}}\right)\right].$$
	The optimal value of a measurement for this choice of function $C$ then becomes
	$$ \rho(S) \eqdef \max_{\{F_{\Hm,\yv}\} \in \Gamma(S)} \rho(S,\{F_{\Hm,\yv}\}).$$
\end{definition}

The threshold setting then corresponds to the case where $C(k) = 1$ for $k \ge \tau$ and $C(k) = 0$ otherwise. The average parity setting corresponds to the case $C(k) = k$.

\paragraph{Symmetric states.}
In this article, we will work on sets of symmetric states.
\begin{definition}
	A set $S = \{\ket{\psi_{\xv}}\}_{\xv \in \F_2^n}$ is called symmetric iff. $\forall \xv \in \F_2^n, \ \ket{\psi_\xv} = X_\xv \ket{\psi_\zerov}$.
\end{definition}

If we write $\ket{\psi_\zerov} = \sum_{\iv \in \F_2^n} \halpha_\iv \ket{\hi}$, we have  $\ket{\psi_\xv} = \sum_{\iv \in \F_2^n} \halpha_\iv (-1)^{\xv \cdot \iv}\ket{\hi}$.

\section{Reformulation as a linear program}
In this section, our goal is to show that the optimization program for $\rho(S)$ can be rephrased in terms of a linear program when $S$ is a set of symmetric states. Our proof will go in three steps:
\begin{enumerate}
	\item We first provide a simplified expression of $\rho(S)$. We introduce the notion of symmetrized measurements which allows us in particular to express the objective $\rho(S)$ as a function of the diagonal Fourier coefficients of the matrices $\{F_{\Hm,\yv}\}$. 
	\item We then show linear relations between the different diagonal Fourier coefficients of the matrices $\{F_{\Hm,\yv}\}$ using the unambiguity condition of these measurements. 
	\item We use these relations to relax the optimization program into a linear program with objective $\rho^L(S)$. Finally, we show that this relaxation doesn't change the objective value, namely that $\rho(S) = \rho^L(S)$.
\end{enumerate}

\subsection{Reformulation of $\rho(S)$ involving the Fourier diagonal elements of $F_{\Hm,\yv}$}\label{Sec:3.1}

We introduce the set of symmetric fine-grained measurements, which are defined as follows:

\begin{definition}\label{Definition:Gammas}
	Let $S$ be a set of states. We define
	$$ \Gamma_s(S) = \left\{ \FHy \in \Gamma(S) : \forall (\Hm,\yv) \in \I, \ \forall \av \in \F_2^n, \ X_{\av}F_{\Hm,\yv}X_{\av} = F_{\Hm,\yv + \Hm \av}\right\}.$$
\end{definition}

\noindent The main goal of this section is to prove the following theorem.

\begin{restatable}{theorem}{thmqpu}\label{Theorem:4.1}
	Let $S = \{\ket{\psi_{\xv}}\}_{\xv \in \F_2^n}$ be a set of symmetric states with
	$ \ket{\psi_{\zerov}} = \sum_{\iv \in \F_2^n} \halpha_\iv \ket{\hi}.$
	Then
	$$ \rho(S) = \max \left\{
	\sum_{k \in \Iint{0}{n}}\sum_{(\Hm,\yv) \in \I_k}  \sum_{\iv \in \F_2^n} C(k)|\halpha_\iv|^2 \triple{\hi}{F_{\Hm,\yv}}{\hi}
	: \{F_{\Hm,\yv}\} \in \Gamma_s(S) \right\}.$$
\end{restatable}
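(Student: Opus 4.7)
The plan is a two-step argument. First, I would show that for any $\FHy \in \Gamma(S)$ (symmetric or not), the objective $\rho(S,\FHy)$ collapses, after averaging over $\xv$, to a sum involving only the Fourier-diagonal elements $\triple{\hi}{F_{\Hm,\yv}}{\hi}$. Second, I would symmetrize an arbitrary $\FHy \in \Gamma(S)$ into $\FHy' \in \Gamma_s(S)$ with the same objective value, so that the maximization can be restricted to $\Gamma_s(S)$ without loss.

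For the first step, since $S$ is symmetric and $\ket{\psi_\zerov} = \sum_\iv \halpha_\iv \ket{\hi}$, Proposition~\ref{proposition:shift} gives $\ket{\psi_\xv} = X_\xv \ket{\psi_\zerov} = \sum_\iv \halpha_\iv (-1)^{\xv\cdot\iv}\ket{\hi}$. Expanding $\kb{\psi_\xv}$ and averaging gives
$$\E_\xv \Tr(F_{\Hm,\yv}\kb{\psi_\xv}) = \sum_{\iv,\jv \in \F_2^n} \halpha_\iv \overline{\halpha_\jv}\, \E_\xv\!\left[(-1)^{\xv\cdot(\iv+\jv)}\right] \triple{\hj}{F_{\Hm,\yv}}{\hi},$$
and by Proposition~\ref{Proposition:CodeSum} applied to $\C = \F_2^n$, only the diagonal terms $\iv = \jv$ survive, yielding $\sum_\iv |\halpha_\iv|^2 \triple{\hi}{F_{\Hm,\yv}}{\hi}$. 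Thus for every $\FHy \in \Gamma(S)$, $\rho(S,\FHy)$ equals the target expression of the theorem, and since $\Gamma_s(S) \subseteq \Gamma(S)$ this already yields $\rho(S) \ge \max_{\FHy \in \Gamma_s(S)}\rho(S,\FHy)$, the right-hand side of the theorem.

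For the reverse inequality, I would define, for each $\FHy \in \Gamma(S)$, the symmetrized POVM
$$F'_{\Hm,\yv} \eqdef \frac{1}{2^n}\sum_{\av \in \F_2^n} X_\av F_{\Hm,\yv+\Hm\av} X_\av,$$
and verify that $\FHy' \in \Gamma_s(S)$ with $\rho(S,\FHy') = \rho(S,\FHy)$. Positivity is immediate. The identity-sum condition follows because for each fixed $(\Hm,\av)$ the map $\yv \mapsto \yv+\Hm\av$ is a bijection of $\F_2^k$, so the inner sum reproduces $\sum_{(\Hm,\yv) \in \I_k} F_{\Hm,\yv}$ and globally the sum collapses to $\frac{1}{2^n}\sum_\av X_\av I X_\av = I$. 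For unambiguity, the symmetry of $S$ gives $X_\av \kb{\psi_\xv} X_\av = \kb{\psi_{\xv+\av}}$, hence $\Tr(F'_{\Hm,\yv}\kb{\psi_\xv}) = \frac{1}{2^n}\sum_\av \Tr(F_{\Hm,\yv+\Hm\av}\kb{\psi_{\xv+\av}})$, and each summand vanishes when $\Hm\xv \neq \yv$ since then $\Hm(\xv+\av) \neq \yv+\Hm\av$. The covariance identity $X_\bv F'_{\Hm,\yv} X_\bv = F'_{\Hm,\yv+\Hm\bv}$ follows from the substitution $\av \to \av+\bv$ inside the average. Finally, a change of variables $(\xv,\yv) \mapsto (\xv+\av,\yv+\Hm\av)$ in the expression of $\rho(S,\FHy')$ absorbs the average over $\av$ and yields $\rho(S,\FHy') = \rho(S,\FHy)$, concluding $\rho(S) \le $ RHS.

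The main conceptual subtlety, which I expect to be the only non-mechanical point, is guessing the correct symmetrization ansatz: shifting the label $\yv$ by $\Hm\av$ in tandem with the Pauli shift $X_\av$ is precisely what preserves both POVM closure and unambiguity, whereas the naive condition $X_\av F_{\Hm,\yv} X_\av = F_{\Hm,\yv}$ would be inconsistent with the latter since the accepting syndrome must move together with the state. Once this ansatz is identified, all four verifications reduce to routine index bookkeeping using the reindexing $\yv \mapsto \yv + \Hm\av$ and the symmetry of $S$.
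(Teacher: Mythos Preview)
Your proof is correct. The symmetrization construction and its verification coincide with the paper's Lemma~\ref{Lemma:4.1.2}, but your first step takes a more direct route than the paper does. The paper first restricts to $\Gamma_s(S)$ and reduces to $\rho_2(S,\FHy) = \sum_{k,\Hm,\yv} C(k)\Tr(F_{\Hm,\yv}\kb{\psi_\zerov})$, then invokes Proposition~\ref{Proposition:DiagonalFH} (which uses the covariance $X_\av F_{\Hm,\yv} X_\av = F_{\Hm,\yv+\Hm\av}$ to show that each $\widehat{F_\Hm}$ is diagonal) to extract the Fourier-diagonal form. You instead observe that the average $\E_\xv\kb{\psi_\xv}$ is already diagonal in the Fourier basis for any symmetric $S$, so the diagonal expression for $\rho(S,\FHy)$ holds for \emph{every} $\FHy \in \Gamma(S)$, not only symmetric ones, and Proposition~\ref{Proposition:DiagonalFH} is bypassed entirely. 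This is a cleaner argument for Theorem~\ref{Theorem:4.1} in isolation; the paper's route has the side benefit of establishing that the Fourier-diagonal elements $\triple{\hi}{F_{\Hm,\yv}}{\hi}$ are independent of $\yv$ when $\FHy \in \Gamma_s(S)$, a structural fact that is used later when collapsing the linear program to variables $\lambda_\iv^\Hm$.
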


In order to prove this theorem, we first present a  simplification of $\rho(S)$, given by the proposition below. 

\subsubsection{First simplification}
\begin{proposition}\label{Proposition:4.1}
	Let $S = \{\ket{\psi_{\xv}}\}$ be a set of symmetric states. For any $\{F_{\Hm,\yv}\}$, we define 
	$$ \rho_2(S,\{F_{\Hm,\yv}\}) \eqdef \sum_{k \in \Iint{0}{n}}\sum_{(\Hm,\yv) \in \I_k} C(k) \cdot \Tr\left(F_{\Hm,\yv} \kb{\psi_{\zerov}}\right).$$ Then
	$ \rho(S) = \max\left\{\rho_2(S,\{F_{\Hm,\yv}\}) : \{F_{\Hm,\yv}\} \in \Gamma_s(S)\right\}$.
\end{proposition}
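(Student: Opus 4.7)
The plan is to prove the two inequalities $\rho(S) \le \max_{F \in \Gamma_s(S)} \rho_2(S, \{F_{\Hm,\yv}\})$ and $\rho(S) \ge \max_{F \in \Gamma_s(S)} \rho_2(S, \{F_{\Hm,\yv}\})$ separately. The key tool is a symmetrization map $\Gamma(S) \to \Gamma_s(S)$ that preserves the original objective $\rho(S, \cdot)$, together with the observation that on $\Gamma_s(S)$ the objectives $\rho$ and $\rho_2$ coincide. Since $\Gamma_s(S) \subseteq \Gamma(S)$, the second inequality is then immediate.

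For the symmetrization, given $\FHy \in \Gamma(S)$ I define
\[
\tilde{F}_{\Hm,\yv} \eqdef \frac{1}{2^n}\sum_{\av \in \F_2^n} X_{\av}\, F_{\Hm,\yv+\Hm\av}\, X_{\av},
\]
and verify the four required properties. Positivity is immediate since each summand is a unitary conjugate of a PSD matrix. For the resolution of identity, $\sum_{(\Hm,\yv) \in \I} \tilde{F}_{\Hm,\yv} = \frac{1}{2^n}\sum_{\av} X_\av\bigl(\sum_{(\Hm,\yv)} F_{\Hm,\yv+\Hm\av}\bigr) X_\av$; for fixed $\Hm,\av$ the inner sum over $\yv$ absorbs the shift $\Hm\av$, so it equals $I$. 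For the unambiguity condition, whenever $\Hm\xv \ne \yv$, the identity $X_\av \ket{\psi_\xv} = \ket{\psi_{\xv+\av}}$ gives $\Tr(\tilde{F}_{\Hm,\yv}\kb{\psi_\xv}) = \frac{1}{2^n}\sum_\av \Tr(F_{\Hm,\yv+\Hm\av}\kb{\psi_{\xv+\av}})$, and since $\Hm(\xv+\av) \ne \yv+\Hm\av$ termwise, every trace vanishes. Finally, the covariance $X_\bv \tilde{F}_{\Hm,\yv} X_\bv = \tilde{F}_{\Hm,\yv+\Hm\bv}$ follows by the substitution $\cv = \av+\bv$. Hence $\tilde{F} \in \Gamma_s(S)$.

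To conclude, I compute $\rho(S, \tilde{F})$ using the same computation $\Tr(\tilde{F}_{\Hm,\yv}\kb{\psi_\xv}) = \frac{1}{2^n}\sum_\av \Tr(F_{\Hm,\yv+\Hm\av}\kb{\psi_{\xv+\av}})$ and observe that for each fixed $\av$, the change of variables $(\xv',\yv') = (\xv+\av, \yv+\Hm\av)$ is a bijection on $\F_2^n \times \F_2^k$, so the double sum over $(\xv,\yv)$ is invariant. This yields $\rho(S,\tilde{F}) = \rho(S,F)$, proving that symmetrization preserves the objective. Next, for any $F \in \Gamma_s(S)$, the symmetry of $S$ gives $\Tr(F_{\Hm,\yv}\kb{\psi_\xv}) = \Tr(X_\xv F_{\Hm,\yv} X_\xv \kb{\psi_\zerov}) = \Tr(F_{\Hm,\yv+\Hm\xv}\kb{\psi_\zerov})$; averaging over $\yv \in \F_2^k$ relabels $\yv' = \yv+\Hm\xv$, killing the dependence on $\xv$ and proving $\rho(S,F) = \rho_2(S,F)$. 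Combining the two directions yields the claim.

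The main obstacle is not conceptual but notational: one has to choose the correct parameterization of the symmetrization so that both the covariance property \emph{and} the POVM/unambiguity conditions are preserved. Taking $\tilde{F}_{\Hm,\yv} = \frac{1}{2^n}\sum_\av X_\av F_{\Hm,\yv} X_\av$ (without shifting the second index) would destroy the normalization after summing over $\yv$; shifting by $\Hm\av$ inside the argument is the unique choice that makes the sum over $\yv$ telescope into $I$ and that is compatible with the definition of $\Gamma_s(S)$.
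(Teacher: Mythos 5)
Your proposal is correct and follows essentially the same route as the paper: the same symmetrization map $F_{\Hm,\yv}\mapsto \frac{1}{2^n}\sum_{\av}X_\av F_{\Hm,\yv+\Hm\av}X_\av$ (the paper's Lemma~\ref{Lemma:4.1.2}), the same verification that $\rho$ and $\rho_2$ coincide on $\Gamma_s(S)$ (Lemma~\ref{Lemma:4.1.1}), and the same two-inequality conclusion. The only cosmetic difference is that you show symmetrization preserves $\rho$ and then invoke $\rho=\rho_2$ on $\Gamma_s(S)$, whereas the paper computes $\rho_2$ of the symmetrized POVM directly; these are the same calculation in a different order.
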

\begin{proof}
	We fix a set $S$ of symmetric states. We start with the following lemma
	\begin{lemma}\label{Lemma:4.1.1}
		$\forall \{F_{\Hm,\yv}\} \in \Gamma_s(S), \ \rho(S,\{F_{\Hm,\yv}\}) = \rho_2(S,\FHy)$.
	\end{lemma}
	\begin{proof}
		We fix any $\{F_{\Hm,\yv}\} \in \Gamma_s(S)$ and write
		\begin{align*}
			\rho(S,\{F_{\Hm,\yv}\}) & = \frac{1}{2^n} \sum_{\xv \in \F_2^n} \sum_{k \in \Iint{0}{n}} C(k) \sum_{(\Hm,\yv) \in \I_k} \Tr(F_{\Hm,\yv} \kb{\psi_{\xv}}) \\
			& = \frac{1}{2^n} \sum_{\xv \in \F_2^n} \sum_{k \in \Iint{0}{n}} C(k) \sum_{(\Hm,\yv) \in \I_k} \Tr(X_{\xv}F_{\Hm,\yv} X_{\xv}\kb{\psi_{\zerov}}) & \text{since } S \text{ is a set of symmetric states} \\
			& = \frac{1}{2^n} \sum_{\xv \in \F_2^n} \sum_{k \in \Iint{0}{n}} C(k) \sum_{(\Hm,\yv) \in \I_k} \Tr(F_{\Hm,\yv + \Hm\xv} \kb{\psi_{\zerov}}) & \text{since } \FHy \in \Gamma_s(S) \\
			& = \frac{1}{2^n} \sum_{k \in \Iint{0}{n}} C(k) 2^{n-k} \sum_{(\Hm,\yv) \in \I_k} \sum_{\av \in \F_2^k} \Tr(F_{\Hm,\av} \kb{\psi_\zerov}) & \text{introducing new variable } \av = \yv + \Hm\xv \\
			& = \sum_{k \in \Iint{0}{n}} C(k) \sum_{(\Hm,\av) \in \I_k} \Tr(F_{\Hm,\av} \kb{\psi_\zerov}) \\
			& = \rho_2(S,\{F_{\Hm,\yv}\})
		\end{align*}
	\end{proof}
	\begin{lemma}\label{Lemma:4.1.2}
		Let $\{F_{\Hm,\yv}\} \in \Gamma(S)$ and let  $\{\bF_{\Hm,\yv}\}$ such that 
		$$ \forall (\Hm,\yv) \in \I, \ \bF_{\Hm,\yv} \eqdef \frac{1}{2^{n}}\sum_{\av \in \F_2^n} X_{\av} F_{\Hm,\yv + {\Hm}\av}X_{\av}.$$
		Then $\{\bF_{\Hm,\yv}\} \in \Gamma_s(S)$.
	\end{lemma}
	\begin{proof}
		Regarding non-negativity, we write 
		$$ \forall (\Hm,\yv) \in \I, \ \bF_{\Hm,\yv} = \frac{1}{2^n}\sum_{\av \in \F_2^n} X_{\av} F_{\Hm,\yv + \Hm\av}X_{\av}.$$
		Now, since $\forall \av \in \F_2^n, \ X_{\av} = X_{\av}^\dagger$ and each $F_{\Hm,\yv + \Hm\av} \succeq \zerov$, we directly have that each $ X_{\av} F_{\Hm,\yv + \Hm\av}X_{\av} \succeq 0$ hence $\bF_{\Hm,\yv} \succeq \zerov$. 
		
		Now fix any $(\Hm,\yv) \in \I$ and $\xv \in \F_2^n$ such that $\Hm \xv \neq \yv$. We write 
		\begin{align*}
			\Tr(\bF_{\Hm,\yv}\kb{\psi_{\xv}}) & = \frac{1}{2^n} \sum_{\av \in \F_2^n} \Tr(X_\av F_{\Hm,\yv + \Hm\av} X_\av \kb{\psi_{\xv}}) \\
			& = \frac{1}{2^n} \sum_{\av \in \F_2^n} \Tr(F_{\Hm,\yv + \Hm\av} X_\av \kb{\psi_{\xv}}X^\dagger_\av) \\
			& = \frac{1}{2^n} \sum_{\av \in \F_2^n} \Tr(F_{\Hm,\yv + \Hm\av} \kb{\psi_{\xv+\av}}) \\
			& = 0
		\end{align*}
		where in the last equality, we use that fact that $\Hm\xv \neq \yv \Rightarrow \forall \av \in \F_2^n, \ \Hm(\xv + \av) \neq \yv + \Hm \av$ and the fact that $\{F_{\Hm,\yv}\} \in \Gamma(S)$. Finally, we have 
		\begin{align*}
			\sum_{(\Hm,\yv) \in \I} \bF_{\Hm,\yv} & = \frac{1}{2^n} \sum_{\av \in \F_2^n} \sum_{(\Hm,\yv) \in \I} X_\av F_{\Hm,\yv} X_\av \\
			& = \frac{1}{2^n} \sum_{\av \in \F_2^n}  X_\av I X_\av \\
			& = I
		\end{align*}
		We proved that $\{\bF_{\Hm,\yv}\} \in \Gamma(S)$. For the second requirement of Definition~\ref{Definition:Gammas}, we write
		\begin{align*} X_{\bv}\bF_{\Hm,\yv}X_{\bv} & = \frac{1}{2^{n}} \sum_{\av \in \F_2^n} X_{\av + \bv} F_{\Hm,\yv + \Hm \av} X_{\av + \bv} \\
			& = \frac{1}{2^{n}}  \sum_{\av' \in \F_2^n} X_{\av'} F_{\Hm,\yv + \Hm\bv + \Hm\av'} X_{\av'}  & \text{variable change } \av' = \av + \bv \\ & =  \bF_{\Hm,\yv + \Hm\bv}
		\end{align*}
		which concludes the proof of our lemma.
	\end{proof}
	
	We can now continue the proof of Proposition~\ref{Proposition:4.1}, and we will prove the equality by proving the inequality both ways. First, we write 
	\begin{align*}
		\rho(S) & = \max\left\{\rho(S,\{F_{\Hm,\yv}\}) : \FHy \in \Gamma(S)\right\} \\
		& \ge \max\left\{\rho(S,F_{\Hm,\yv}) : \FHy \in \Gamma_s(S)\right\} & \text{since } \Gamma_s(S) \subseteq \Gamma(S) \\
		& = \max\left\{\rho_2(S,\FHy) : \FHy \in \Gamma_s(S)\right\} & \text{from Lemma}~\ref{Lemma:4.1.1} 
	\end{align*}
	For the reverse inequality, let $\{F^\MAX_{\Hm,\yv}\} \in \Gamma(S)$ such that $\rho(S) = \rho(S,\{F^\MAX_{\Hm,\yv}\})$, and let $\{\bF_{\Hm,\yv}\}$ such that  $$ \forall (\Hm,\yv) \in \I, \ \bF_{\Hm,\yv} \eqdef \frac{1}{2^{n}}\sum_{\av \in \F_2^n} X_{\av} F^\MAX_{(\Hm,\yv + {\Hm}\av)}X_{\av}.$$
	We now write 
	\begin{align*}
		\max\{\rho_2(S,\FHy) : \FHy \in \Gamma_s(S)\} & \ge
		\rho_2(S,\bF_{\Hm,\yv}) & \text{from Lemma}~\ref{Lemma:4.1.2} \\  
		& = \sum_{k \in \Iint{0}{n}} C(k) \sum_{(\Hm,\yv) \in \I_k} \Tr(\bF_{\Hm,\yv} \kb{\psi_{0}}) \\
		&  = \sum_{k \in \Iint{0}{n}} C(k) \sum_{(\Hm,\yv) \in \I_k} \frac{1}{2^n} \sum_{\av \in \F^2_n} \Tr(X_\av F^\MAX_{\Hm,\yv + \Hm\av} X_{\av}\kb{\psi_{0}}) \\
		& = \frac{1}{2^n} \sum_{\av \in \F_2^n}\sum_{k \in \Iint{0}{n}} C(k) \sum_{(\Hm,\yv) \in \I_k} \Tr(F^\MAX_{\Hm,\yv + \Hm\av} \kb{\psi_{\av}}) \\
		& = \frac{1}{2^n} \sum_{\av \in \F_2^n}\sum_{k \in \Iint{0}{n}} C(k) \sum_{(\Hm,\yv') \in \I_k} \Tr(F^\MAX_{\Hm,\yv'} \kb{\psi_{\av}}) & (\yv' = \yv + \Hm\av)\\
		& = \rho(S,\{F^\MAX_{\Hm,\yv}\}) \\
		& = \rho(S)
	\end{align*}
	Putting our two inequalities together, we obtain 
	$$ \rho(S) = \max\left\{\rho_2(S,\FHy) : \FHy \in \Gamma_s(S)\right\}.$$
\end{proof}
\subsubsection{Fourier diagonal terms}
\begin{proposition}\label{Proposition:DiagonalFH}
	Let $S = \{\ket{\psi_\xv}\}_{\xv \in \F_2^n}$ be a set of symmetric states and let $\{F_{\Hm,\yv}\} \in \Gamma_s(S)$. We define
	$$ F_\Hm \eqdef \sum_{\yv \in \F_2^k} F_{\Hm,\yv} \quad \forall k \in \Iint{0}{n}, \ \forall \Hm \in \Lambda_k.$$
	Then for each $(\Hm,\yv) \in \I$ and $\iv,\jv \in \F_2^n$, we have 
	$$ (\hFH)_{\iv,\jv} = \left\{\begin{tabular}{cl}
		$0 $& { if } $\iv \neq \jv$ \\
		$2^k \bra{\iv} \widehat{F_{\Hm,\yv}} \ket{\iv}$ & { if } $\iv = \jv $
	\end{tabular}\right.
	$$
	In particular, each $\widehat{F_\Hm}$ is a diagonal matrix, and the diagonal terms $\bra{\iv} \widehat{F_{\Hm,\yv}} \ket{\iv}$ are independent of $\yv$.
\end{proposition}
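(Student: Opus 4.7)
The proof plan centers on exploiting the symmetry condition $X_\av F_{\Hm,\yv} X_\av = F_{\Hm,\yv + \Hm\av}$ built into $\Gamma_s(S)$, combined with the fact that the Fourier basis states $\ket{\hi}$ diagonalize every Pauli shift $X_\av$ with eigenvalue $(-1)^{\iv \cdot \av}$ (by Proposition~\ref{proposition:shift}).

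First I would sum the symmetry relation over $\yv \in \F_2^k$ to deduce that $F_\Hm$ itself is shift-invariant. Indeed,
$$ X_\av F_\Hm X_\av = \sum_{\yv \in \F_2^k} X_\av F_{\Hm,\yv} X_\av = \sum_{\yv \in \F_2^k} F_{\Hm,\yv + \Hm\av} = F_\Hm,$$
where the last step uses that for fixed $\av$, $\yv \mapsto \yv + \Hm\av$ is a bijection of $\F_2^k$. So $F_\Hm$ commutes with every $X_\av$, $\av \in \F_2^n$. Because the joint eigenspaces of the family $\{X_\av\}_{\av \in \F_2^n}$ are the one-dimensional subspaces spanned by the $\ket{\hi}$ (distinct $\iv$'s give distinct eigenvalue signatures $\av \mapsto (-1)^{\iv \cdot \av}$), this forces $F_\Hm$ to be diagonal in the Fourier basis, i.e. $\triple{\hi}{F_\Hm}{\hj} = 0$ for $\iv \neq \jv$. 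Translating via $(\hFH)_{\iv,\jv} = \triple{\hi}{F_\Hm}{\hj}$ from Definition~\ref{def:action_qft} gives the off-diagonal part of the claim.

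Next, for the diagonal entries, I would apply the symmetry relation directly to the sandwich $\triple{\hi}{\cdot}{\hi}$. Using $X_\av \ket{\hi} = (-1)^{\iv \cdot \av}\ket{\hi}$ (and $X_\av^\dagger = X_\av$), we get
$$ \triple{\hi}{F_{\Hm, \yv + \Hm\av}}{\hi} = \triple{\hi}{X_\av F_{\Hm,\yv} X_\av}{\hi} = \bigl((-1)^{\iv \cdot \av}\bigr)^2 \triple{\hi}{F_{\Hm,\yv}}{\hi} = \triple{\hi}{F_{\Hm,\yv}}{\hi}.$$
Since $\Hm \in \Lambda_k$ has full row rank $k$, the map $\av \mapsto \Hm\av$ is surjective onto $\F_2^k$, so as $\av$ varies every $\yv' \in \F_2^k$ arises as $\yv + \Hm\av$. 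Thus $\triple{\hi}{F_{\Hm,\yv}}{\hi}$ is independent of $\yv$, and summing over $\yv \in \F_2^k$ yields $\triple{\hi}{F_\Hm}{\hi} = 2^k \triple{\hi}{F_{\Hm,\yv}}{\hi}$ for any fixed $\yv$, which is exactly $(\hFH)_{\iv,\iv} = 2^k \triple{\iv}{\widehat{F_{\Hm,\yv}}}{\iv}$.

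There is no real obstacle here: everything follows by routine manipulation once the symmetry condition is packaged correctly and the Fourier-basis eigenstructure of the Pauli shifts is invoked. The one spot that merits care is justifying that joint shift-invariance forces diagonality in the $\ket{\hi}$ basis; this is where it matters that the characters $\iv \mapsto (-1)^{\iv \cdot \av}$ separate points of $\F_2^n$, so the $\ket{\hi}$ sit in distinct joint eigenspaces of $\{X_\av\}_\av$.
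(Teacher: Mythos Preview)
Your proof is correct. The paper's argument and yours use the same two ingredients---the symmetry relation $X_\av F_{\Hm,\yv} X_\av = F_{\Hm,\yv+\Hm\av}$ and the fact that the $\ket{\hi}$ diagonalize every $X_\av$---but organize them differently. The paper keeps a fixed $\yv$ throughout: it rewrites $\hFH = \frac{1}{2^{n-k}}\sum_{\av} Z_{\av+\bv}\,\widehat{F_{\Hm,\yv}}\,Z_{\av+\bv}$ (a $Z$-twirl in the Fourier picture) and then reads off both the off-diagonal vanishing and the diagonal value in a single character-sum computation $\sum_{\av}(-1)^{(\av+\bv)\cdot(\iv+\jv)}$. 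You instead split the claim in two: first sum over $\yv$ to get $[F_\Hm,X_\av]=0$ for all $\av$ and invoke the joint-eigenbasis argument for diagonality, then separately show the Fourier-diagonal entries of $F_{\Hm,\yv}$ are $\yv$-independent and sum. Your route is a bit more conceptual (the commutation/eigenbasis step makes the diagonality transparent), while the paper's single twirling computation is more compact; neither buys anything the other doesn't.
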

\begin{proof}
	We fix any $(\Hm,\yv) \in \I$. Let any $\bv \in \F_2^n$ such that $\Hm \bv = \yv$. We write 
	\begin{align*}
		\hFH & = H^{\otimes n} F_\Hm H^{\otimes n} =  H^{\otimes n} \sum_{\yv \in \F_2^k} F_{\Hm,\yv} H^{\otimes n} = \frac{1}{2^{n-k}}\sum_{\av \in \F_2^{n}} H^{\otimes n} F_{\Hm,\Hm\av} H^{\otimes n} \\
		& = \frac{1}{2^{n-k}}\sum_{\av \in \F_2^n} H^{\otimes n}  X_{\av + \bv} F_{\Hm,\yv} X_{\av + \bv}  H^{\otimes n} = \frac{1}{2^{n-k}}\sum_{\av \in \F_2^n} Z_{\av + \bv} \widehat{F_{\Hm,\yv}} Z_{\av + \bv},
	\end{align*}
	where we used $F_{\Hm,\Hm\av} = X_{\av + \bv} F_{\Hm,\Hm\bv}X_{\av + \bv}$ since $\FHy \in \Gamma_s(S)$.
	Now, fix any $\iv,\jv \in \F_2^n$. We write
	\begin{align*}
		(\hFH)_{\iv,\jv} & = \bra{\iv} \hFH \ket{\jv} =  \bra{\iv} \frac{1}{2^{n-k}}\sum_{\av \in \F_2^n} Z_{\av + \bv} \widehat{F_{\Hm,\yv}} Z_{\av + \bv}  \ket{\jv} = \frac{1}{2^{n-k}} \sum_{a \in \F_2^n} (-1)^{(\av + \bv) \cdot (\iv + \jv)} \bra{\iv} \widehat{F_{\Hm,\yv}} \ket{\jv} \\
		& = \left\{\begin{tabular}{cl}
			$0 $& { if } $\iv \neq \jv$ \\
			$2^k \bra{\iv} \widehat{F_{\Hm,\yv}} \ket{\iv}$ & { if } $\iv = \jv $
		\end{tabular}\right. \qedhere
	\end{align*}
\end{proof}
\subsubsection{Proof of Theorem~\ref{Theorem:4.1}}
We can now conclude the proof of our theorem, which we restate below
\thmqpu*

\begin{proof}
	We fix a set $S$ of symmetric states. From Proposition~\ref{Proposition:4.1}, we have that \begin{align*}
		\rho(S) & = \max\left\{\rho_2(S,\FHy) : \FHy \in \Gamma_s(S)\right\} \\
		\text{with} \quad  \rho_2(S,\{F_{\Hm,\yv}\}) & = \sum_{k \in \Iint{1}{n}}\sum_{(\Hm,\yv) \in \I_k} C(k) \cdot \Tr\left(F_{\Hm,\yv} \kb{\psi_{\zerov}}\right).
	\end{align*}
	Now fix any $\FHy \in \Gamma_s(S)$ and, for each $(\Hm,\yv) \in \I_k, $ we define $F_\Hm = \sum_{\yv \in \F_2^k} F_{\Hm,\yv}$. We now write
	\begin{align*}
		\rho_2(S,\{F_{\Hm,\yv}\}) & = \sum_{k \in \Iint{0}{n}}\sum_{(\Hm,\yv) \in \I_k} C(k) \cdot \Tr\left(F_{\Hm,\yv} \kb{\psi_{\zerov}}\right) \\
		& = \sum_{k \in \Iint{0}{n}}\sum_{\Hm \in \Lambda_k} C(k) \cdot \Tr\left(F_\Hm \kb{\psi_{\zerov}}\right) \\
		& = \sum_{k \in \Iint{0}{n}}\sum_{\Hm \in \Lambda_k}\sum_{\iv \in \F_2^n}  C(k) |\halpha_\iv|^2 \triple{\hi}{F_\Hm}{\hi} & \text{ from Proposition}~\ref{Proposition:DiagonalFH} \\
		& = \sum_{k \in \Iint{0}{n}}\sum_{(\Hm,\yv) \in \I_k}\sum_{\iv \in \F_2^n}  C(k) |\halpha_\iv|^2 \triple{\hi}{F_{\Hm,\yv}}{\hi} 
	\end{align*}
	Since $\rho(S) = \max\left\{\rho_2(S,\FHy) : \FHy \in \Gamma_s(S)\right\}$, we get our theorem. 
\end{proof}

\subsection{Relations between diagonal Fourier coefficients of the matrices $F_{\Hm,\yv}$.}
We managed to give an expression for $\rho(S)$ where we maximize a quantity over the set of symmetric measurements that depends only on the Fourier diagonal elements of each POVM element. Now, we show that the unambiguity condition of our measurement can be translated into relations between these different Fourier diagonal elements, which in turn will give yet another formulation for $\rho(S)$ as the maximum of a linear program.
\COMMENT{
	By definition, the constraint $\{F_{\Hm,\yv}\} \in \Gamma_s(S)$ with $S = \{\ket{\psi_\xv}\}$ is equivalent to the following constraints:
	\begin{enumerate}
		\item $\forall (\Hm,\yv) \in \I^*, \ F_{\Hm,\yv} \succeq \zerov$.
		\item $F_{\bot} = I - \sum_{(\Hm,\yv) \in \I^*} F_{\Hm,\yv} \succeq \zerov$.
		\item $\forall (\Hm,\yv) \in \I^*, \ \forall \xv \in \F_2^n, \ s.t. \ \Hm \xv \neq \yv, \ \Tr(F_{\Hm,\yv} \kb{\psi_{\xv}}) = 0$.
		\item $\forall \av \in \F_2^n, \ X_{\av}F_{\Hm,\yv}X_{\av} = F_{\Hm,\yv + \Hm \av}$.
	\end{enumerate}
	
	The two first constraints are semidefinite constraints and the two last ones are more specific to our problem. From the objective function and the two first constraints, we can construct dual constraints.  This motivates the following definition
	
	\begin{definition}\label{Definition:sigma}
		Let $S = \{\ket{\psi_\xv}\}$ be a set of symmetric states and $\{F_{\Hm,\yv}\} \in \Gamma_s(S)$. We define
		$$ \sigma(S,\{F_{\Hm,\yv}\}) \eqdef  \min_{\beta}  Tr(\beta),$$
		where the minimum is taken over all matrices $\beta \in \mathbb{R}^{2^n \times 2^n} $ such that
		\begin{enumerate}
			\item $\beta \succeq \zerov$.
			\item $\forall k \in \Iint{1}{n}, \ \forall \Hm \in \Lambda_k, \ \Tr(F_{\Hm} \beta) \ge k Tr(F_{\Hm} \kb{\psi_\zerov})$, \text{ where } $F_\Hm \eqdef \frac{1}{2^k} \sum_{\yv \in \F_2^k} F_{\Hm,\yv}$.
		\end{enumerate} 
	\end{definition}
	\emph{Remark: Notice that in the second condition, for each $\Hm$, we regrouped the constraints over $F_{\Hm,\yv}$ into a single constraint over $F_\Hm$. This could potentially change the minimal value of this function but this will not be an issue in our setting. On the positive side, this will allow us to use the fact that the $\hFH$ are diagonal when bounding $\sigma(S,\{F_{\Hm,\yv}\})$.} \\
	
	We have the following
	\begin{proposition}
		For any set of symmetric states $S = \{\ket{\psi_{\xv}}\}_{\xv \in \F_2^n}$, for any $\{F_{\Hm,\yv}\} \in \Gamma_s(S)$,  $$\rho(S,\{F_{\Hm,\yv}\}) \le  \sigma(S,\{F_{\Hm,\yv}\}).$$
	\end{proposition}
	\begin{proof}
		Fix a set of symmetric states $S = \{\ket{\psi_{\xv}}\}_{\xv \in \F_2^n}$ and $\{F_{\Hm,\yv}\} \in \Gamma_s(S)$. Let $\beta \in \mathbb{R}^{2^n \times 2^n}$ satisfying constraints $1.$ and $2.$ of Definition~\ref{Definition:sigma}. We write 
		\begin{align*}
			\rho(S,\{F_{\Hm,\yv}\}) & = \sum_{k \in \Iint{1}{n}} k \sum_{(\Hm,\yv) \in \I_k} Tr(F_{\Hm,\yv} \kb{\psi_\zerov}) \\
			& = \sum_{k \in \Iint{1}{n}} k \frac{1}{2^k} \sum_{\Hm \in \Lambda_k} Tr(F_{\Hm} \kb{\psi_\zerov})  \\
			& \le \sum_{k \in \Iint{1}{n}} \frac{1}{2^k} \sum_{\Hm \in \Lambda_k} \Tr(F_{\Hm} \beta)  & \text{ from dual constraint } 2.\\
			& = \sum_{k = 1}^n \sum_{(\Hm,\yv) \in \I_k} \Tr(F_{\Hm,\yv} \beta) \\
			& \le Tr(\beta) & \text{ from primal constraint } 2. \text{ and } \beta \succeq 0 
		\end{align*}
		Since this holds for any $\beta$ satisfying dual constraints $1.$ and $2.$, we have $\rho(S,\{\bF_\Hm\}) \le \sigma(S,\{\bF_\Hm\}).$
	\end{proof}
	
	\subsection{Bounding $\sigma(S,\{F_{\Hm,\yv}\})$}}\label{Sec:3.2}
\subsubsection{Full Dual Support}
In our proof, we will first have to restrict ourselves to symmetric states $S$ which have full dual support. We will then be able to prove our general bound for any symmetric $S$ using density arguments. 
\begin{definition}
	A set of symmetric states $S = \{\ket{\psi_\xv}\}_{\xv \in \F_2^n}$ has full dual support iff. 
	$$ \forall \yv \in \F_2^n, \braket{\hy}{\psi_{\zerov}} \neq 0.$$
\end{definition}
Notice that this implies that $\forall \xv,\yv \in \F_2^n, \ \braket{\hy}{\psi_{\xv}} \neq 0$, since from the symmetry condition, we have $\braket{\hy}{\psi_{\xv}} = (-1)^{\xv \cdot \yv} \braket{\hy}{\psi_{\zerov}}$. 
We first have the following proposition.
\begin{proposition}\label{Proposition:FullDual}
	Let $S = \{\ket{\psi_\xv}\}_{\xv \in \F_2^n}$ be a set of symmetric states with full dual support. We have 
	$$ dim\left(span\{\ket{\psi_\xv}\}_{\xv \in \F_2^n}\right) = 2^n.$$
	In other words, the $\{\ket{\psi_\xv}\}$ form a basis of $\F_2^n$. This implies that $\forall T \subseteq \F_2^n, \ dim\left(span\{\ket{\psi_\xv}\}_{\xv \in T}\right) = |T|.$ 
\end{proposition}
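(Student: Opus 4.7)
The plan is to express each state $\ket{\psi_\xv}$ in the Fourier basis and use a direct linear algebra argument. From the symmetry assumption $\ket{\psi_\xv} = X_\xv \ket{\psi_\zerov}$, together with Proposition~\ref{proposition:shift} which states that $\ket{\hi}$ is an eigenstate of $X_\xv$ with eigenvalue $(-1)^{\xv \cdot \iv}$, we can write
\[
\ket{\psi_\xv} = \sum_{\iv \in \F_2^n} \halpha_\iv (-1)^{\xv \cdot \iv} \ket{\hi}.
\]
Collect all these states as the columns of a $2^n \times 2^n$ matrix $M$ expressed in the Fourier basis, so $M_{\iv,\xv} = \halpha_\iv (-1)^{\xv \cdot \iv}$. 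Proving that the $\{\ket{\psi_\xv}\}_{\xv \in \F_2^n}$ form a basis of the ambient Hilbert space is equivalent to showing that $M$ is invertible.

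I would then factor $M = D \cdot K$, where $D$ is the diagonal matrix with diagonal entries $D_{\iv,\iv} = \halpha_\iv$ and $K$ is the matrix with entries $K_{\iv,\xv} = (-1)^{\xv \cdot \iv}$. The matrix $K$ equals $\sqrt{2^n}\, H^{\otimes n}$ up to the standard normalization, hence is invertible since $H^{\otimes n}$ is unitary. By the full dual support hypothesis, $\halpha_\iv \neq 0$ for every $\iv \in \F_2^n$, so $D$ is also invertible. Therefore $M$ is a product of two invertible matrices and is itself invertible, which establishes that $\dim\bigl(\operatorname{span}\{\ket{\psi_\xv}\}_{\xv \in \F_2^n}\bigr) = 2^n$.

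The implication for arbitrary $T \subseteq \F_2^n$ follows immediately: any subfamily of a linearly independent family is linearly independent, so $\dim\bigl(\operatorname{span}\{\ket{\psi_\xv}\}_{\xv \in T}\bigr) = |T|$. There is no real obstacle here: the only subtle point is recognizing that the column matrix of the $\{\ket{\psi_\xv}\}$ in the Fourier basis factors as a diagonal part (coming from the Fourier amplitudes $\halpha_\iv$, whose nonvanishing is exactly full dual support) times a Fourier/Hadamard matrix.
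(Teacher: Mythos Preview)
Your proof is correct and takes essentially the same approach as the paper: both expand $\ket{\psi_\xv}$ in the Fourier basis and exploit that the resulting change-of-basis matrix is (up to normalization) a diagonal matrix with entries $\halpha_\iv$ times a Hadamard matrix. The paper phrases this by explicitly exhibiting the inverse linear combination $\sum_{\xv} (-1)^{\xv\cdot\yv_0}\ket{\psi_\xv} = 2^n \halpha_{\yv_0}\ket{\hyz}$, while you package the same computation as the factorization $M = D K$; the content is identical.
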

\begin{proof}
	We fix any $\yv_0 \in \F_2^n$ and show that $\ket{\hyz} \in span\{\ket{\psi_\xv}\}_{\xv \in \F_2^n}$ which will imply the desired statement. We write $\ket{\psi_\zerov}$ in the Fourier basis:
	$$ \ket{\psi_\zerov} = \sum_{\yv \in \F_2^n} \halpha_\yv \ket{\hy},$$
	where each $\halpha_{\yv} \neq 0$ since $S$ has full dual support. Now, we compute
	\begin{align*}
		\sum_{\xv \in \F_2^n} (-1)^{\xv \cdot \yv_0} \ket{\psi_{\xv}} & = \sum_{\xv \in \F_2^n} (-1)^{\xv \cdot \yv_0} \sum_{\yv \in \F_2^n} (-1)^{\xv \cdot \yv} \halpha_\yv \ket{\hy} \\
		& =  \sum_{\yv \in \F_2^n} \halpha_\yv \left(\sum_{\xv \in \F_2^n} (-1)^{\xv \cdot (\yv + \yv_0)}\right)  \ket{\hy} \\
		& = 2^n \halpha_{\yv_0} \ket{\hyz}. 
	\end{align*}
	Since $	\sum_{\xv \in \F_2^n} (-1)^{\xv \cdot \yv_0} \ket{\psi_{\xv}} \in span\{\ket{\psi_\xv}\}_{\xv \in \F_2^n}$ and $\halpha_{\yv_0} \neq 0$, we get that $\ket{\hyz} \in span\{\ket{\psi_\xv}\}_{\xv \in \F_2^n}$. Since this true for each $\yv_0$ and the $\{\ket{\hyz}\}$ form an orthonormal basis of $\F_2^n$, we get our result.
\end{proof}
\subsubsection{Characterizing the $\{F_{\Hm,\yv}\}$}
The goal of this section is to take advantage of the fact that we have an unambiguous measurement to characterize the matrices $F_{\Hm,\yv}$. In particular, we want to give properties on the diagonal elements of $\widehat{F_{\Hm,\yv}}$.
\begin{definition}
	Let $S = \SS$ be a set of symmetric states. We define 
	$$ \forall (\Hm,\yv) \in \I, \quad  V_{\Hm,\yv}^{\neq} \eqdef span\left\{\ket{\psi_\xv} : \Hm\xv \neq \yv \right\} \quad \text{and} \quad W_{\Hm,\yv} \eqdef \left\{\ket{\phi} : \forall \ket{\psi} \in V_{\Hm,\yv}^{\neq}, \ \braket{\phi}{\psi} = 0\right\}.$$
\end{definition}
Notice that from Proposition~\ref{Proposition:FullDual}, if $S$ has full dual support then for each $k \in \Iint{0}{n}$ and $\forall (\Hm,\yv) \in \I_k$, we have $dim(V^{\neq}_{\Hm,\yv}) = 2^n(1 - \frac{1}{2^k})$ and hence $dim(W_{\Hm,\yv}) = 2^n - dim(V^{\neq}_{\Hm,\yv}) = 2^{n-k}$.
\begin{proposition}
	Let $S = \SS$ be a set of symmetric states with full dual support and let $\FF \in \Gamma(S)$. For each $k \in \Iint{0}{n}$ and $(\Hm,\yv) \in \I_k$, we can write 
	$$  F_{\Hm,\yv} = \sum_{i = 1}^{2^{n-k}} \mu_i \kb{B_i},$$
	where $\mu_i \ge 0$ and $\ket{B_i} \in  W_{\Hm,\yv}$ are pairwise orthogonal.	
\end{proposition}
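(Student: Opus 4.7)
The plan is to use the spectral decomposition of $F_{\Hm,\yv}$ and then show, via the unambiguity condition, that every eigenvector associated to a nonzero eigenvalue lies in $W_{\Hm,\yv}$. A dimension count coming from full dual support will give the bound $2^{n-k}$ on the number of terms, after which we can pad with zero eigenvalues if needed.

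More concretely, I would first apply the spectral theorem to $F_{\Hm,\yv}$, which is positive semidefinite by the POVM condition. This yields a decomposition $F_{\Hm,\yv} = \sum_{i=1}^r \mu_i \kb{B_i}$ with $\mu_i > 0$, the $\ket{B_i}$ pairwise orthogonal, and $r = \mathrm{rank}(F_{\Hm,\yv})$. Then, for any $\xv$ with $\Hm \xv \ne \yv$, the unambiguity condition $\Tr(F_{\Hm,\yv}\kb{\psi_\xv})=0$ expands into
\begin{equation*}
    0 \;=\; \sum_{i=1}^r \mu_i\, |\braket{B_i}{\psi_\xv}|^2,
\end{equation*}
and since each $\mu_i > 0$, each summand must vanish, giving $\braket{B_i}{\psi_\xv} = 0$ for every $i$ and every such $\xv$. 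Hence $\ket{B_i} \in W_{\Hm,\yv}$ for all $i$.

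Next I would bound $r$. By Proposition~\ref{Proposition:FullDual}, the family $\{\ket{\psi_\xv}\}_{\xv \in \F_2^n}$ is a basis, so the subfamily $\{\ket{\psi_\xv}: \Hm\xv \ne \yv\}$ is linearly independent and therefore $\dim(V^{\ne}_{\Hm,\yv}) = 2^n - 2^{n-k}$, which gives $\dim(W_{\Hm,\yv}) = 2^{n-k}$. Since the $\ket{B_i}$ are orthonormal vectors inside $W_{\Hm,\yv}$, we get $r \le 2^{n-k}$. If $r < 2^{n-k}$, I would complete the orthonormal system $\{\ket{B_i}\}_{i \le r}$ into an orthonormal basis $\{\ket{B_i}\}_{i \le 2^{n-k}}$ of $W_{\Hm,\yv}$ and set $\mu_i = 0$ for $i > r$, which leaves $F_{\Hm,\yv}$ unchanged while producing a decomposition of the required length.

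The only genuine subtlety is the dimension step, which relies crucially on full dual support through Proposition~\ref{Proposition:FullDual}; without that assumption, one cannot guarantee the linear independence of the $\ket{\psi_\xv}$ for $\Hm\xv \ne \yv$, and the rank bound $2^{n-k}$ may fail. The spectral step and the unambiguity-to-orthogonality implication are straightforward, so this dimension argument is really the only place the hypotheses of the proposition are used.
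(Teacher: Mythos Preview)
Your proof is correct and follows essentially the same approach as the paper: spectral decomposition of $F_{\Hm,\yv}$, then the unambiguity condition forces each eigenvector into $W_{\Hm,\yv}$, and finally the dimension bound $\dim W_{\Hm,\yv}=2^{n-k}$ from full dual support caps the number of terms. If anything you are slightly more careful than the paper, since you explicitly handle the padding step when $r<2^{n-k}$ and isolate the role of the strict positivity $\mu_i>0$ in deducing $\braket{B_i}{\psi_\xv}=0$.
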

\begin{proof}
	Fix $S = \SS$ a set of symmetric states with full dual support and $\FF \in \Gamma(S)$. Fix also $k \in \Iint{0}{n}$ and $(\Hm,\yv) \in \I_k$. Since, $\FF \in \Gamma(S)$, we have 
	\begin{align}\label{Eq:UM} \forall \xv \in \F_2^n, \ s.t. \ \Hm \xv \neq \yv, \ \Tr(F_{\Hm,\yv} \kb{\psi_{\xv}}) = 0.
	\end{align}
	Now because $F_{\Hm,\yv} \succeq \zerov$, we write the spectral decomposition 
	$$F_{\Hm,\yv} = \sum_i \mu_i \kb{B_i},$$
	where $\mu_i \ge 0$ and the $\ket{B_i}$ are pairwise orthogonal. From Equation~\ref{Eq:UM}, we have that 
	$$ \forall i,  \forall \xv \in \F_2^n, \ s.t. \ \Hm \xv \neq \yv, \ \braket{\psi_{\xv}}{B_i} = 0,$$
	which implies that $\ket{B_i} \in W_{\Hm,\yv}$. Since $dim(W_{\Hm,\yv}) = 2^{n-k}$,  this concludes the proof. 
\end{proof}

The next proposition provides an orthonormal basis for each $W_{\Hm,\yv}$. This construction will rely on the notion of dual cosets, which was presented in Definition~\ref{Definition:DualCosets}.
\begin{proposition}\label{Proposition:12}
	Let $S = \SS$ be a set of symmetric states with full dual support and let $\ket{\psi_\zerov} = \sum_{\iv \in \F_2^n} \halpha_\iv \ket{\hi}$. Let $\FF \in \Gamma(S)$. Fix $k \in \Iint{0}{n}$ and $(\Hm,\yv) \in \I_k$. For each $\sv \in \F_2^{n-k}$, let $\vv_{\sv} \in \F_2^n$ such that we can write 
	$$ D_{\Hm}(\sv) = \{\trp{\Hm}\uv + \vv_\sv : \uv \in \F_2^k\},$$
	(see Definition~\ref{Definition:DualCosets}).  We now define 
	\begin{align}\label{Eq:States} \forall \sv \in \F_2^{n-k}, \ \ket{A_\sv^{\Hm,\yv}} \eqdef \sum_{\uv \in \F_2^k} \frac{1}{\overline{\halpha_{\trp{\Hm}\uv + \vv_\sv}}} (-1)^{\yv \cdot \uv} \ket{\widehat{\trp{\Hm}\uv + \vv_\sv}}.
	\end{align}
	Then $\{\ket{A_\sv^{\Hm,\yv}}\}_{\sv \in \F_2^{n-k}}$ forms an orthogonal basis of $W_{\Hm,\yv}$.
\end{proposition}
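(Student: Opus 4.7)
The plan is to verify by direct computation that each $\ket{A_\sv^{\Hm,\yv}}$ lies in $W_{\Hm,\yv}$ and that these $2^{n-k}$ vectors are pairwise orthogonal and non-zero. Since we already established $\dim(W_{\Hm,\yv}) = 2^{n-k}$ (using full dual support, via Proposition~\ref{Proposition:FullDual} applied to the set $\{\ket{\psi_\xv} : \Hm\xv \neq \yv\}$ of cardinality $2^n - 2^{n-k}$), this will immediately give an orthogonal basis.

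First I would expand $\ket{\psi_\xv} = X_\xv \ket{\psi_\zerov} = \sum_{\iv} \halpha_\iv (-1)^{\xv \cdot \iv} \ket{\hi}$ using Proposition~\ref{proposition:shift}, and compute $\braket{\psi_\xv}{A_\sv^{\Hm,\yv}}$ for arbitrary $\xv \in \F_2^n$. Because the $\ket{\widehat{\trp{\Hm}\uv + \vv_\sv}}$ appearing in $\ket{A_\sv^{\Hm,\yv}}$ are Fourier basis states, only the $\iv = \trp{\Hm}\uv + \vv_\sv$ terms survive the inner product. The conjugate amplitudes $\overline{\halpha_{\trp{\Hm}\uv+\vv_\sv}}$ cancel exactly with the denominator in the definition of $\ket{A_\sv^{\Hm,\yv}}$, and using $\xv \cdot (\trp{\Hm}\uv) = (\Hm\xv) \cdot \uv$, the inner product collapses to
\begin{equation*}
\braket{\psi_\xv}{A_\sv^{\Hm,\yv}} = (-1)^{\xv \cdot \vv_\sv} \sum_{\uv \in \F_2^k} (-1)^{(\Hm\xv + \yv) \cdot \uv}.
\end{equation*}
By Proposition~\ref{Proposition:CodeSum} applied to $\C = \F_2^k$, this sum vanishes exactly when $\Hm\xv \neq \yv$, so $\ket{A_\sv^{\Hm,\yv}} \in W_{\Hm,\yv}$.

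For pairwise orthogonality, I would compute $\braket{A_\sv^{\Hm,\yv}}{A_{\sv'}^{\Hm,\yv}}$ similarly: because the Fourier basis is orthonormal, the only surviving contributions come from pairs $(\uv,\uv')$ with $\trp{\Hm}\uv + \vv_\sv = \trp{\Hm}\uv' + \vv_{\sv'}$, i.e., $\vv_\sv - \vv_{\sv'} \in \mathrm{Im}(\trp{\Hm})$. Since $D_\Hm(\sv)$ and $D_\Hm(\sv')$ are distinct cosets of $\mathrm{Im}(\trp{\Hm})$ when $\sv \neq \sv'$, this condition fails and the inner product vanishes. For $\sv = \sv'$, the injectivity of $\trp{\Hm}$ forces $\uv = \uv'$, giving $\braket{A_\sv^{\Hm,\yv}}{A_\sv^{\Hm,\yv}} = \sum_\uv 1/|\halpha_{\trp{\Hm}\uv + \vv_\sv}|^2 > 0$, which is finite and non-zero precisely because $S$ has full dual support.

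There isn't really a conceptual obstacle here — the proof is a matter of carefully tracking the coset decomposition of $\F_2^n$ induced by $\Hm$ and invoking the standard character-sum identity of Proposition~\ref{Proposition:CodeSum}. The main thing to be careful about is the bookkeeping: the chosen representatives $\vv_\sv$ appear only through a global phase $(-1)^{\xv \cdot \vv_\sv}$ in the inner product (so different choices of $\vv_\sv$ would only change $\ket{A_\sv^{\Hm,\yv}}$ by a phase, consistent with the remark that the labeling of dual cosets is conventional), and the hypothesis of full dual support is used twice — once to ensure the coefficients $1/\overline{\halpha_{\trp{\Hm}\uv + \vv_\sv}}$ are well-defined, and once to guarantee $\dim(W_{\Hm,\yv}) = 2^{n-k}$ so that $2^{n-k}$ orthogonal non-zero vectors already span all of $W_{\Hm,\yv}$.
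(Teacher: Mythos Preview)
Your proposal is correct and follows essentially the same approach as the paper: compute $\braket{\psi_\xv}{A_\sv^{\Hm,\yv}}$ by cancelling the Fourier amplitudes and reducing to the character sum $\sum_{\uv}(-1)^{(\Hm\xv+\yv)\cdot\uv}$, then observe that distinct $\ket{A_\sv^{\Hm,\yv}}$ have disjoint Fourier supports (the cosets $D_\Hm(\sv)$), and finally use $\dim(W_{\Hm,\yv})=2^{n-k}$ to conclude. Your write-up is slightly more explicit than the paper's about why the vectors are non-zero and why the $1/\overline{\halpha}$ coefficients are well-defined, but the argument is the same.
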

\begin{proof}
	Fix $k \in \Iint{0}{n}$ and $(\Hm,\yv) \in \I_k$. First notice that each $\ket{A_{\sv}^{\Hm,\yv}}$ has support in $D_\Hm(\sv)$ hence
 $\braket{A_\sv^{\Hm,\yv}}{A_{\sv'}^{\Hm,\yv}} = 0$ when $\sv \neq \sv'$. Now, we prove that each $\ket{A_\sv^{\Hm,\yv}} \in W_{\Hm,\yv}$. It is enough to prove that
	$$ \forall \xv \in \F_2^n \ s.t. \ \Hm\xv \neq \yv,  \ \forall \sv \in \F_2^{n-k}, \ \braket{A_\sv^{\Hm,\yv}}{\psi_\xv} = 0.$$  
	So fix an $\xv \in \F_2^n$ such that $\Hm\xv \neq \yv$ as well as $\sv \in \F_2^{n-k}$. We have $\ket{\psi_\xv} = \sum_{\iv \in \F_2^n} \halpha_\iv (-1)^{\iv \cdot \xv}\ket{\hi}$. We first rewrite
	\begin{align*}
		\ket{\psi_{\xv}} & = \sum_{\sv \in \F_2^{n-k}} \sum_{\iv \in \D_\Hm(\sv)} \halpha_{\iv} (-1)^{\xv \cdot \iv}\ket{\hi} \\
		& = \sum_{\sv \in \F_2^{n-k}} \sum_{\uv \in \F_2^k} \halpha_{(\trp{\Hm}\uv + \vv_\sv)} (-1)^{\xv \cdot (\trp{\Hm}\uv + \vv_\sv)}\ket{\widehat{\trp{\Hm}\uv + \vv_\sv}} \\
		& = \sum_{\sv \in \F_2^{n-k}} (-1)^{\xv \cdot \vv_\sv} \sum_{\uv \in \F_2^k} \halpha_{(\trp{\Hm}\uv + \vv_\sv)} (-1)^{\xv \cdot \trp{\Hm}\uv}\ket{\widehat{\trp{\Hm}\uv + \vv_\sv}} \\
		& = \sum_{\sv \in \F_2^{n-k}} (-1)^{\xv \cdot \vv_\sv} \sum_{\uv \in \F_2^k} \halpha_{(\trp{\Hm}\uv + \vv_\sv)} (-1)^{\Hm \xv \cdot \uv}\ket{\widehat{\trp{\Hm}\uv + \vv_\sv}}
	\end{align*}
	From there, we can conclude 
	\begin{align*}
		\braket{{A_{\sv}^{\Hm,\yv}}}{{\psi_{\xv}}} & = (-1)^{\xv \cdot \vv_\sv} \sum_{\uv \in \F_2^k}  (-1)^{(\Hm \xv + \yv) \cdot \uv} = 0, \qquad \text{since } \Hm\xv \neq \yv
	\end{align*} 
We can conclude that the set $\{\ket{A_{\sv}^{\Hm,\yv}}\}_{\sv \in \F_2^{n-k}}$ is a set of pairwise orthogonal states with each state in $W_{\Hm,\yv}$. Since $dim(W_{\Hm,\yv}) = 2^{n-k}$, we conclude that $\{\ket{A_{\sv}^{\Hm,\yv}}\}_{\sv \in \F_2^{n-k}}$ is an orthogonal basis of $W_{\Hm,\yv}$. 
\end{proof}

\begin{theorem}\label{Theorem:LambdaRelations}
	Let $\FHy \in \Gamma(S)$. For each $k \in \Iint{0}{n}$, for each $(\Hm,\yv) \in \I_k$, for each $\sv \in \F_2^{n-k}$  we have 
	$$ \forall \iv,\jv \in \D_\Hm(\sv), \ |\halpha_\iv|^2 \triple{\hi}{F_{\Hm,\yv}}{\hi} = 
	|\halpha_\jv|^2 \triple{\hj}{F_{\Hm,\yv}}{\hj}.$$
\end{theorem}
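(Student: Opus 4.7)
The plan is to combine the two structural results just established: the unambiguity condition forces each $F_{\Hm,\yv}$ to have its spectral support inside the subspace $W_{\Hm,\yv}$, and Proposition~\ref{Proposition:12} gives an explicit orthogonal basis $\{\ket{A_\sv^{\Hm,\yv}}\}_{\sv \in \F_2^{n-k}}$ of $W_{\Hm,\yv}$ indexed by dual cosets. Under the full dual support assumption carried over from Section~\ref{Sec:3.2}, I may write
$$F_{\Hm,\yv} = \sum_i \mu_i \kb{B_i}, \qquad \mu_i \ge 0, \ \ket{B_i} \in W_{\Hm,\yv},$$
and expand each $\ket{B_i}$ in the $\ket{A_\sv^{\Hm,\yv}}$ basis as $\ket{B_i} = \sum_{\sv \in \F_2^{n-k}} c^i_\sv \ket{A_\sv^{\Hm,\yv}}$ for some scalars $c^i_\sv$.

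The crucial observation is that, by the explicit formula \eqref{Eq:States}, the Fourier-basis support of $\ket{A_\sv^{\Hm,\yv}}$ is exactly $\{\ket{\hi} : \iv \in D_\Hm(\sv)\}$, and since the dual cosets partition $\F_2^n$, the supports for different $\sv$ are disjoint. Hence for $\iv \in D_\Hm(\sv)$, writing $\iv = \trp{\Hm}\uv + \vv_\sv$, only the $\sv$-term in the expansion of $\ket{B_i}$ contributes to $\braket{\hi}{B_i}$, and we read off
$$\braket{\hi}{B_i} = c^i_\sv \cdot \frac{(-1)^{\yv \cdot \uv}}{\overline{\halpha_\iv}}.$$
Taking the squared modulus kills the sign and yields $|\braket{\hi}{B_i}|^2 = |c^i_\sv|^2 / |\halpha_\iv|^2$.

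Summing against $\mu_i$ now gives
$$\triple{\hi}{F_{\Hm,\yv}}{\hi} = \sum_i \mu_i |\braket{\hi}{B_i}|^2 = \frac{1}{|\halpha_\iv|^2} \sum_i \mu_i |c^i_\sv|^2,$$
so $|\halpha_\iv|^2 \triple{\hi}{F_{\Hm,\yv}}{\hi} = \sum_i \mu_i |c^i_\sv|^2$ depends only on the coset label $\sv$ and not on the particular $\iv \in D_\Hm(\sv)$. Applying this identity once for $\iv$ and once for $\jv$ (both in $D_\Hm(\sv)$) gives the claimed equality.

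There is no genuinely hard step in this proof; the entire content is packed into the construction of the basis $\{\ket{A_\sv^{\Hm,\yv}}\}$ from Proposition~\ref{Proposition:12}. The only subtlety worth flagging is that this argument uses full dual support twice: once to invoke Proposition~\ref{Proposition:12} (so the $\ket{A_\sv^{\Hm,\yv}}$ are well-defined with $\halpha_{\trp{\Hm}\uv + \vv_\sv} \ne 0$), and once implicitly in the final division by $|\halpha_\iv|^2$; the general case is handled separately by the density argument advertised earlier in the overview.
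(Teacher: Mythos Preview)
Your proof is correct and follows essentially the same approach as the paper: decompose $F_{\Hm,\yv}$ spectrally with eigenvectors in $W_{\Hm,\yv}$, expand each eigenvector in the orthogonal basis $\{\ket{A_\sv^{\Hm,\yv}}\}$ of Proposition~\ref{Proposition:12}, and use the disjoint Fourier supports of the $\ket{A_\sv^{\Hm,\yv}}$ to conclude that $|\halpha_\iv|^2\triple{\hi}{F_{\Hm,\yv}}{\hi}$ depends only on the coset label~$\sv$. The paper's only cosmetic difference is that it first establishes the identity for an arbitrary $\ket{\phi}\in W_{\Hm,\yv}$ and then sums over the spectral decomposition, whereas you carry the index $i$ on the coefficients $c^i_\sv$ throughout; the computations are otherwise line-for-line the same, and your closing remark on the role of the full dual support hypothesis is accurate.
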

\begin{proof}
	Fix any $k \in \Iint{0}{n}$ and $(\Hm,\yv) \in \I_k$. First notice that for each $\sv \in \F_2^{n-k}$, we have 
	$$ |\braket{\hi}{A_{\sv}^{\Hm,\yv}}|^2 = \left\{\begin{tabular}{cl}
		$ |\frac{1}{\halpha_{\iv}}|^2 $& { if } $\iv \in D_\Hm(\sv)$ \\
		$0$ & { if } $\iv \notin D_\Hm(\sv)$
	\end{tabular}\right.$$
	Now fix any $\ket{\phi} \in W_{\Hm,\yv}$. From Proposition~\ref{Proposition:12}, we have 
	$\ket{\phi} = \sum_{\sv \in \F_2^{n-k}} \gamma_\sv \ket{A_{\sv}^{\Hm,\yv}}$. Forall $\iv \in \D_\Hm(\sv)$.  We have
	\begin{align*}
		\braket{\hi}{\phi}\braket{\phi}{\hi} & = |\gamma_\sv|^2 \braket{\hi}{A_\sv^{\Hm,\yv}}\braket{A_\sv^{\Hm,\yv}}{\hi} = \frac{|\gamma_{\sv}|^2}{|\halpha_\iv|^2} 
	\end{align*}
	which implies that
	$$ \forall \iv,\jv \in \D_\Hm(\sv), \ \braket{\hi}{\phi}\braket{\phi}{\hi} |\halpha_\iv|^2 = \braket{\hj}{\phi}\braket{\phi}{\hj} |\halpha_\jv|^2.$$
 We can write $F_{\Hm,\yv} = \sum_l \mu_l \kb{B_l}$ where $\ket{B_l} \in W_{\Hm,\yv}$. From the above equality, we immediately obtain that 
	$$ \forall \iv,\jv \in \D_{\Hm}(\sv), \ |\halpha_\iv|^2 \triple{\hi}{F_{\Hm,\yv}}{\hi} = |\halpha_\jv|^2 \triple{\hj}{F_{\Hm,\yv}}{\hj}. \qedhere$$
\end{proof}

\subsection{Formulation of the linear program}
In Section~\ref{Sec:3.1}, we showed that the quantity $\rho(S)$ can be rewritten by optimizing over symmetric fine-grained measurements $\FHy$ but looking only at the Fourier diagonal terms of the matrices $F_{\Hm,\yv}$. In Section~\ref{Sec:3.2}, we showed that the unambiguity condition of fine-grained measurements implies a relation on the Fourier diagonal coefficients of $F_{\Hm,\yv}$. We now rewrite the optimization program $\rho(S)$ from Section~\ref{Sec:3.1} by replacing the unambiguity condition with the relation on these diagonal terms. This gives us a linear program which we now present

Let $S = \{\ket{\psi_{\xv}}\}_{\xv \in \F_2^n}$ be a set of symmetric states with $\ket{\psi_\zerov} = \sum_{\iv} \halpha_\iv \ket{\hi}$. \\

\vspace{-0.5em}

\begin{mdframed}[linewidth=1pt, roundcorner=5pt, nobreak=true, innerleftmargin=10pt, innerrightmargin=10pt, innertopmargin=8pt, innerbottommargin=8pt]
	\begin{center}
		\textbf{Final Linear Program}
	\end{center}
	
	\vspace{0.5em}
	
	\begin{align*}
		\text{Variables:} \quad & \lambda^\Hm_\iv \in \mathbb{R}_+ \quad \text{for each } \Hm \in \Lambda, \ \iv \in \mathbb{F}_2^n. \\
		\text{Objective:} \quad & \rho^L(S) \eqdef \max_{\lambda_\iv^\Hm} \sum_{k = 0}^n \sum_{\Hm \in \Lambda_k} \sum_{\iv \in \mathbb{F}_2^n} C(k) \, \lambda^\Hm_\iv \, |\halpha_\iv|^2. \\
		\text{Constraints:} \quad 
		& \text{\ding{172}} \quad \sum_{\Hm \in \Lambda} \lambda^\Hm_\iv = 1 
		&& \forall \iv \in \mathbb{F}_2^n, \\
		& \text{\ding{173}} \quad \lambda^\Hm_\iv \, |\halpha_\iv|^2 = \lambda^\Hm_\jv \, |\halpha_\jv|^2 
		&& \forall k \in \Iint{0}{n}, \Hm \in \Lambda_k, \sv \in \F_2^{n-k}, \ \forall \iv, \jv \in \D_\Hm(\sv)
	\end{align*}
\end{mdframed}


We constructed this linear program by taking properties of fine-grained unambiguous measurement so we have $\rho(S) \le \rho^L(S)$, which is captured by the proposition below. Notice that we use a more compact linear program, where we consider the variables $\lambda_\iv^\Hm = \sum_{\yv} \lambda_\iv^{\Hm,\yv}$. Proposition~\ref{Proposition:LinearReverse} shows this actually gives an equivalent linear program. 
\begin{proposition}
	For any set of symmetric states $S$ with full dual support, $\rho(S) \le \rho^L(S)$
\end{proposition}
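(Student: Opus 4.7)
The plan is to exhibit, for any $\FHy \in \Gamma_s(S)$, a feasible point of the linear program whose objective value matches $\rho(S,\FHy)$ as given by Theorem~\ref{Theorem:4.1}. Since $\rho(S)$ equals the maximum of $\rho(S,\FHy)$ over $\Gamma_s(S)$ (again by Theorem~\ref{Theorem:4.1}), this will immediately yield $\rho(S)\le\rho^L(S)$.

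Given $\FHy \in \Gamma_s(S)$, I would set
\[
    \lambda_\iv^\Hm \eqdef \sum_{\yv \in \F_2^k} \triple{\hi}{F_{\Hm,\yv}}{\hi}
    \quad \text{for each } k\in\Iint{0}{n},\ \Hm\in\Lambda_k,\ \iv\in\F_2^n,
\]
which equals $\triple{\hi}{F_\Hm}{\hi}$ in the notation of Proposition~\ref{Proposition:DiagonalFH}. Nonnegativity is immediate from $F_{\Hm,\yv}\succeq\zerov$. For constraint \ding{172}, summing over all matrices $\Hm$ of any rank and using the POVM completeness relation $\sum_{(\Hm,\yv)\in\I} F_{\Hm,\yv} = I$ gives $\sum_{\Hm} \lambda_\iv^\Hm = \triple{\hi}{I}{\hi} = 1$. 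For constraint \ding{173}, Theorem~\ref{Theorem:LambdaRelations} applied for each fixed $\yv \in \F_2^k$ gives $|\halpha_\iv|^2 \triple{\hi}{F_{\Hm,\yv}}{\hi} = |\halpha_\jv|^2 \triple{\hj}{F_{\Hm,\yv}}{\hj}$ for every $\iv,\jv \in D_\Hm(\sv)$, and summing over $\yv$ yields $|\halpha_\iv|^2 \lambda_\iv^\Hm = |\halpha_\jv|^2 \lambda_\jv^\Hm$.

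The objective value then matches: by Theorem~\ref{Theorem:4.1},
\[
    \rho(S,\FHy) = \sum_{k} \sum_{(\Hm,\yv)\in\I_k} \sum_{\iv} C(k)|\halpha_\iv|^2 \triple{\hi}{F_{\Hm,\yv}}{\hi} = \sum_{k} \sum_{\Hm\in\Lambda_k} \sum_{\iv} C(k)|\halpha_\iv|^2 \lambda_\iv^\Hm = \rho^L(S;(\lambda_\iv^\Hm)).
\]
Taking the maximum over $\FHy \in \Gamma_s(S)$ on the left and a maximum over all feasible $(\lambda_\iv^\Hm)$ on the right delivers $\rho(S) \le \rho^L(S)$.

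The proof is essentially a bookkeeping argument once the two previous theorems are in hand, so there is no real obstacle; the only non-trivial step is constraint \ding{173}, and its difficulty is already absorbed into Theorem~\ref{Theorem:LambdaRelations}. That theorem in turn is where the full dual support hypothesis is used, via Proposition~\ref{Proposition:FullDual} and the explicit orthogonal basis of $W_{\Hm,\yv}$ in Proposition~\ref{Proposition:12}. This is precisely the reason the present proposition is restricted to sets with full dual support; removing the hypothesis is deferred to the density argument announced in the introduction.
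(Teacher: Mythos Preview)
Your proof is correct and follows essentially the same approach as the paper: define $\lambda_\iv^\Hm$ from the Fourier diagonal entries of $F_{\Hm,\yv}$ (the paper first sets $\lambda_\iv^{\Hm,\yv}=\triple{\hi}{F_{\Hm,\yv}}{\hi}$ and then sums over $\yv$, which is exactly your definition), check nonnegativity from $F_{\Hm,\yv}\succeq\zerov$, constraint \ding{172} from POVM completeness, constraint \ding{173} from Theorem~\ref{Theorem:LambdaRelations}, and match the objective via Theorem~\ref{Theorem:4.1}. Your remark that the full dual support hypothesis enters only through Theorem~\ref{Theorem:LambdaRelations} (via Propositions~\ref{Proposition:FullDual} and~\ref{Proposition:12}) is also exactly how the paper uses it.
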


\begin{proof}
	Let $S = \{\ket{\psi_\xv}\}_{\xv \in \F_2^n}$ be a set of states.
	Let $\{F_{\Hm,\yv}\} \in \Gamma_s(S)$ that maximizes the semidefinite program, meaning in particular that
	\begin{enumerate}
		\item $\forall k \in \Iint{0}{n}, \ \forall (\Hm,\yv) \in \Lambda_k \times \F_2^k, \ F_{\Hm,\yv} \succeq \zerov$,
		\item $\sum_{k = 0}^n \sum_{\substack{\Hm \in \Lambda_k \\ \yv \in \F_2^k}} F_{\Hm,\yv} = I$,
	\end{enumerate}
	and
	$$\rho(S) = \rho_2(S,\{F_{\Hm,\yv}\}) = \sum_{k \in \Iint{0}{n}}\sum_{(\Hm,\yv) \in \I_k}  \sum_{\iv \in \F_2^n} C(k)|\halpha_\iv|^2 \triple{\hi}{F_{\Hm,\yv}}{\hi},$$
	Our goal is to define $\lambda^{\Hm,\yv}_{\iv}$ and $\lambda^\Hm_\iv = \sum_{\yv} \lambda^{\Hm,\yv}_{\iv}$ that satisfy conditions \ding{172}, \ding{173} of the linear program and such that $\rho^L(S) = \rho_2(S,\{F_{\Hm,\yv}\})$.
	Choose $\lambda_\iv^{\Hm,\yv} = \bra{\hi} F_{\Hm,\yv} \ket{\hi}$.
	We derive from 1. that $\lambda_\iv^{\Hm,\yv} \ge 0$ and from 2. that $\sum_{(\Hm,\yv)\in\I} \lambda_\iv^{\Hm,\yv} = 1$.
	From Theorem~\ref{Theorem:LambdaRelations}, the unambiguity conditions translate into $\lambda_\iv^{\Hm,\yv}|\halpha_\iv|^2 = \lambda_\jv^{\Hm,\yv}|\halpha_\jv|^2$ for all $k\in\Iint{0}{n}, \ \Hm\in\Lambda_k, \ \sv\in\F_2^{n-k}$, for all $\iv,\jv\in\D_\Hm(\sv)$.
	Finally the objective becomes 
	$$\rho_2(S,\{F_{\Hm,\yv}\}) = \sum_{k \in \Iint{0}{n}}\sum_{(\Hm,\yv) \in \I_k}  \sum_{\iv \in \F_2^n} C(k)|\halpha_\iv|^2 \lambda_\iv^{\Hm,\yv} = \rho^L(S).$$
	We conclude that $\rho(S,\{F_{\Hm,\yv}\}) \le \rho^L(S)$ using Theorem~\ref{Theorem:4.1}.
\end{proof}

Actually, we can prove that the optimization programs are equal. This is an important property that tell us that there exist fine-grained unambiguous measurements that achieve the value given by the linear program.
\begin{proposition}\label{Proposition:LinearReverse}
	For any set of symmetric states $S$ with full dual support, $\rho^L(S) \le \rho(S)$.
\end{proposition}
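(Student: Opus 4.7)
The plan is to construct, from any feasible primal solution $(\lambda_\iv^\Hm)$, an explicit measurement $\FHy \in \Gamma_s(S)$ whose Theorem~\ref{Theorem:4.1} objective value equals $\rho^L(S;(\lambda_\iv^\Hm))$. Combined with Theorem~\ref{Theorem:4.1}, this gives $\rho(S) \ge \rho^L(S;(\lambda_\iv^\Hm))$, and maximizing over primal solutions yields the claim.

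First I would symmetrize the primal over the syndrome coordinate by setting $\lambda_\iv^{\Hm,\yv} \eqdef \lambda_\iv^\Hm / 2^k$ for $\Hm \in \Lambda_k$, $k \ge 1$ and $\yv \in \F_2^k$. The crucial observation is that constraint \ding{173} says exactly that the quantity
$$ c_\sv^{\Hm,\yv} \eqdef \lambda_\iv^{\Hm,\yv} |\halpha_\iv|^2 \qquad (\text{for any } \iv \in \D_\Hm(\sv))$$
is a well-defined function of the dual coset $\sv$ (independent of the chosen representative $\iv$), and by our symmetrization it is also independent of $\yv$. Using the orthogonal basis $\{\ket{A_\sv^{\Hm,\yv}}\}_{\sv \in \F_2^{n-k}}$ of $W_{\Hm,\yv}$ from Proposition~\ref{Proposition:12}, I would then define
$$ F_{\Hm,\yv} \eqdef \sum_{\sv \in \F_2^{n-k}} c_\sv^{\Hm,\yv} \kb{A_\sv^{\Hm,\yv}}.$$
Positivity is immediate from $c_\sv^{\Hm,\yv} \ge 0$, and the unambiguity condition holds because each $\ket{A_\sv^{\Hm,\yv}} \in W_{\Hm,\yv}$ is orthogonal to every $\ket{\psi_\xv}$ with $\Hm\xv \ne \yv$.

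Next I would verify the symmetry condition defining $\Gamma_s(S)$. Using $X_\av \ket{\hi} = (-1)^{\av\cdot\iv}\ket{\hi}$ from Proposition~\ref{proposition:shift}, a direct computation on the explicit expression of $\ket{A_\sv^{\Hm,\yv}}$ gives $X_\av \ket{A_\sv^{\Hm,\yv}} = (-1)^{\av\cdot\vv_\sv} \ket{A_\sv^{\Hm,\yv+\Hm\av}}$, and combined with the $\yv$-independence of $c_\sv^{\Hm,\yv}$ this yields $X_\av F_{\Hm,\yv} X_\av = F_{\Hm,\yv+\Hm\av}$. A direct computation in the Fourier basis gives $\triple{\hi}{F_{\Hm,\yv}}{\hi} = \lambda_\iv^{\Hm,\yv}$.

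Finally I would close the POVM by setting $F_\bot \eqdef I - \sum_{(\Hm,\yv)\in\I^*} F_{\Hm,\yv}$. Summing the explicit form of $F_{\Hm,\yv}$ over $\yv$, the phases $(-1)^{\yv \cdot \uv}$ collapse as in Proposition~\ref{Proposition:DiagonalFH}, so each $\sum_\yv F_{\Hm,\yv}$ is diagonal in the Fourier basis with diagonal entry $\lambda_\iv^\Hm$ at $\ket{\hi}$. By constraint \ding{172}, the diagonal entry of $F_\bot$ at $\ket{\hi}$ equals $\lambda_\iv^\emptyset \ge 0$, so $F_\bot \succeq \zerov$ and $\FHy \in \Gamma_s(S)$. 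Applying Theorem~\ref{Theorem:4.1}, the objective value is
$$ \sum_{k \in \Iint{0}{n}}\sum_{(\Hm,\yv) \in \I_k}\sum_{\iv \in \F_2^n} C(k) |\halpha_\iv|^2 \triple{\hi}{F_{\Hm,\yv}}{\hi} = \sum_{\Hm} \sum_\iv C(k) |\halpha_\iv|^2 \lambda_\iv^\Hm = \rho^L(S;(\lambda_\iv^\Hm)),$$
which concludes. The main conceptual step is recognizing that constraint \ding{173} is exactly the condition that makes $c_\sv^{\Hm,\yv}$ coset-well-defined, which in turn is precisely what allows the primal variables $\lambda_\iv^\Hm$ to be read off as Fourier-diagonal entries of a valid fine-grained POVM expressed in the basis of Proposition~\ref{Proposition:12}; everything else is routine computation.
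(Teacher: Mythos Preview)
Your proposal is correct and follows essentially the same route as the paper: both proofs build each $F_{\Hm,\yv}$ inside $W_{\Hm,\yv}$ via the orthogonal basis $\{\ket{A_\sv^{\Hm,\yv}}\}$ of Proposition~\ref{Proposition:12}, use constraint~\ding{173} to make the coset coefficient well-defined, verify $X_\av F_{\Hm,\yv} X_\av = F_{\Hm,\yv+\Hm\av}$ via the identity $X_\av\ket{A_\sv^{\Hm,\yv}} = (-1)^{\av\cdot\vv_\sv}\ket{A_\sv^{\Hm,\yv+\Hm\av}}$, and match the Theorem~\ref{Theorem:4.1} objective to $\rho^L(S;(\lambda_\iv^\Hm))$.

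The one difference is the specific choice of $F_{\Hm,\yv}$: the paper takes the rank-one operator $F_{\Hm,\yv} = \kb{\Phi_{\Hm,\yv}}$ with $\ket{\Phi_{\Hm,\yv}} = \sum_\sv \beta_\sv^\Hm \ket{A_\sv^{\Hm,\yv}}$ and $\beta_\sv^\Hm = \sqrt{\lambda_\iv^\Hm|\halpha_\iv|^2}$, whereas you take the block-diagonal operator $F_{\Hm,\yv} = \sum_\sv c_\sv^\Hm \kb{A_\sv^{\Hm,\yv}}$ with $c_\sv^\Hm = \lambda_\iv^\Hm|\halpha_\iv|^2/2^k$. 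Your choice has the pleasant feature that the completeness check is immediate: since the $\ket{A_\sv^{\Hm,\yv}}$ have disjoint Fourier supports across $\sv$, your $F_{\Hm,\yv}$ is block-diagonal along the dual cosets in the Fourier basis, and summing over $\yv$ kills the within-block off-diagonal terms exactly as you say, so $\sum_\yv F_{\Hm,\yv}$ is genuinely Fourier-diagonal with entries $\lambda_\iv^\Hm$. In the paper's rank-one construction the sum over $\yv$ leaves cross-coset terms (pairs $\iv,\jv$ in different cosets sharing the same $\uv$), so the asserted equality $\sum_{(\Hm,\yv)\in\I^*} F_{\Hm,\yv} = \sum_{\Hm}\sum_\iv \lambda_\iv^\Hm \kb{\hi}$ requires more care; your variant sidesteps that issue entirely.
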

\begin{proof}
	Let values $\lambda^\Hm_\iv \in \mathbb{R}_+$ for each $\Hm \in \Lambda$ and $\iv \in \F_2^n$ that maximize the linear program, meaning that they satisfy conditions \ding{172}, \ding{173} of the linear program and  
	$$ \rho^L(S) = \sum_{k = 0}^n \sum_{\Hm \in \Lambda_k} \sum_{\iv \in \mathbb{F}_2^n} C(k) \, \lambda^\Hm_\iv \, |\halpha_\iv|^2.$$
	Our goal is to construct a measurement $\FHy \in \Gamma_s(S)$ such that $\rho_2(S,\FHy) = \rho^L(S)$, which will imply the desired statement. Each $F_{\Hm,\yv}$ will be a one dimensional operator, \ie a scalar times a projector. For each $k \in \Iint{0}{n}$, for each $(\Hm,\yv) \in \I_k$, we fix 
	$$ F_{\Hm,\yv} = \kb{\Phi_{\Hm,\yv}} \quad \text{with} \quad \ket{\Phi_{\Hm,\yv}} = \sum_{\sv \in \F_2^{n-k}} \beta_\sv^{\Hm,\yv} \ket{A^{\Hm,\yv}_\sv},$$
	and the goal is to choose proper values of $\beta_{\sv}^{\Hm,\yv}$. First, notice that for any choice of $\beta_\sv^{\Hm,\yv} \in \mathbb{C}$, we have $F_{\Hm,\yv} \succeq \zerov$. Moreover, $\ket{\Phi_{\Hm,\yv}} \in W_{\Hm,\yv}$ hence it will satisfy the fine-grained unambiguous condition. For the amplitudes, we choose them such that 
	\begin{align}
		\beta_\sv^{\Hm,\yv} \eqdef \sqrt{\lambda^{\Hm}_\iv |\halpha_\iv|^2} \quad \text{for any } \iv \in \D_\Hm(\sv).
	\end{align}
	Notice that this amplitude is independent of $\yv$. 
	Now, 
	\begin{align*}
		\forall \iv \in \D_\Hm(\sv),  \ \triple{\hi}{F_{\Hm,\yv}}{\hi} = |\braket{\hi}{\Phi_{\Hm,\yv}}|^2 = |\beta_\sv^{\Hm,\yv}|^2 \cdot |\braket{\hi}{A_\sv^{\Hm,\yv}}|^2 = \frac{|\beta_\sv^{\Hm,\yv}|^2}{|\halpha_\iv|^2} = \lambda_\iv^{\Hm},
	\end{align*}
	where we used Equation~\ref{Eq:States}. We then obtain 
	\begin{align}\label{Eq:Partial} \rho^L(S) = \sum_{k = 0}^n \sum_{\Hm \in \Lambda_k} \sum_{\iv \in \mathbb{F}_2^n} C(k) \, \triple{\hi}{F_{\Hm,\yv}}{\hi}  \, |\halpha_\iv|^2.\end{align}
	We can also write 
	$$\sum_{(\Hm,\yv) \in \I^*} F_{\Hm,\yv} = \sum_{\Hm \in \Lambda^*} \sum_{\iv \in \F_2^n} \lambda_\iv^\Hm \kb{\hi} \preceq I,$$
	using the condition on the $\lambda_\iv^\Hm$. We proved that $\FHy \in \Gamma(S)$ but we actually want to prove that $\FHy \in \Gamma_s(S)$. We write 
	$$
	F_{\Hm,\yv} = \sum_{\sv \in \F_2^{n-k}} |\beta_\sv^\Hm|^2 \kb{A_\sv^{\Hm,\yv}}.$$
	Now fix any $\av \in \F_2^n$. We have from Equation~\ref{Eq:States}
	\begin{align*}
		X_a \ket{A_\sv^{\Hm,\yv}} & = \sum_{\uv \in \F_2^k} \frac{1}{\halpha_{\trp{\Hm}\uv + \vv_\sv}} (-1)^{\yv \cdot \uv} (-1)^{\av \cdot \trp{\Hm}\uv + \vv_\sv} \ket{\widehat{\trp{\Hm}\uv + \vv_\sv}} \\
		& = (-1)^{\av \cdot \vv_\sv} \sum_{\uv \in \F_2^k} \frac{1}{\halpha_{\trp{\Hm}\uv + \vv_\sv}} (-1)^{\yv \cdot \uv} (-1)^{\Hm\av \cdot \uv} \ket{\widehat{\trp{\Hm}\uv + \vv_\sv}} \\
		& = (-1)^{\av \cdot \vv_\sv} \ket{A_\sv^{\Hm,\yv + \Hm \av}}.
	\end{align*}
	From there, we obtain 
	\begin{align*}
		X_a \sum_{\sv \in \F_2^{n-k}} \beta_{\sv}^{\Hm,\yv} \ket{A_{\sv}^{\Hm,\yv}} 
		& = \sum_{\sv \in \F_2^{n-k}} (-1)^{\av \cdot v_{\sv}} \beta_{\sv}^{\Hm,\yv}\ket{A_\sv^{\Hm,\yv + \Hm\av}} \\
		& = \sum_{\sv \in \F_2^{n-k}} (-1)^{\av \cdot v_\sv} \beta_{\sv}^{\Hm,\yv + \Hm \av}\ket{A_\sv^{\Hm,\yv + \Hm\av}},
	\end{align*}
	where the last inequality comes from the fact that the $\beta_{\sv}^{\Hm,\yv}$ are actually independent of $\yv$. From there we conclude that $X_\av F_{\Hm,\yv} X_{\av} = F_{\Hm,\yv + \Hm\av}$ and hence $\FHy \in \Gamma_s(S)$.
	
	In conclusion, from the above and Equation~\ref{Eq:Partial}, we constructed $\FHy \in \Gamma_s(S)$ such that 
	$$\rho^L(S) = \sum_{k = 0}^n \sum_{\Hm \in \Lambda_k} \sum_{\iv \in \mathbb{F}_2^n} C(k) \, \triple{\hi}{F_{\Hm,\yv}}{\hi}  \, |\halpha_\iv|^2 = \rho_2(S,\FHy),$$ which allows us to conclude that $\rho^L(S) \le \rho(S)$ using Theorem~\ref{Theorem:4.1}.
\end{proof}

\subsection{Removing the full support requirement}
In the case of sets $S$ of symmetric states with full support, we defined the value $\rho^L(S)$ and showed that $\rho^L(S) = \rho(S)$. Also, both $\rho^L(S)$ and $\rho(S)$ are well defined when $S$ does not have full support. We will be able to remove this requirement by a simple density argument. We have the following
\begin{proposition}
	Let $S$ be a set of symmetric states and let $\eps > 0$. There exists a set of symmetric states $S'$ with full dual support such that 
	$$ |\rho(S') - \rho(S)| \le \eps  \quad \text{and} \quad |\rho^L(S') - \rho^L(S)| \le \eps.$$
\end{proposition}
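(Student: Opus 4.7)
The plan is to construct $S'$ as a small perturbation of $S$ that fills in the previously-zero Fourier coefficients uniformly, and then to prove each of the two bounds by a direct transfer argument on the LP side and a combination of a direct construction with a compactness argument on the $\rho$ side. Let $I_0 = \{\iv \in \F_2^n : \halpha_\iv = 0\}$; if $I_0 = \emptyset$ there is nothing to do, so assume $|I_0| \ge 1$ and, for a parameter $\eta \in (0,1)$ to be chosen later, set
$$\ket{\psi'_\zerov} = \sqrt{1-\eta}\,\ket{\psi_\zerov} + \sqrt{\tfrac{\eta}{|I_0|}} \sum_{\iv \in I_0}\ket{\hi}, \qquad \ket{\psi'_\xv} = X_\xv\ket{\psi'_\zerov}.$$
The resulting symmetric set $S'$ has Fourier amplitudes $|\halpha'_\iv|^2 = (1-\eta)|\halpha_\iv|^2$ for $\iv \notin I_0$ and $|\halpha'_\iv|^2 = \eta/|I_0|$ for $\iv \in I_0$, so in particular it has full dual support.

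For $|\rho^L(S) - \rho^L(S')|$, the plan is to transfer feasible solutions between the two LPs. Starting from an optimal $\lambda$ for $S$, I would define an $S'$-feasible $\mu$ by setting $\mu^\Hm_\iv = \lambda^\Hm_\iv$ on those cosets of $S$ containing only indices outside $I_0$, $\mu^\Hm_\iv = 0$ on the remaining (mixed or all-zero) cosets, and absorbing any residual into $\mu^\emptyset_\iv$. The coset constraint of $S'$ then follows from the uniform scaling $|\halpha'_\iv|^2 = (1-\eta)|\halpha_\iv|^2$ on the support of $\halpha$, and the objective evaluates to $(1-\eta)\rho^L(S)$, which gives $\rho^L(S') \ge (1-\eta)\rho^L(S)$. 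Reversing the construction (starting now from an optimal $\mu$ of $S'$ and defining $\lambda$ by the analogous truncation), the loss from discarded cosets is bounded using that any mixed or all-zero coset $D_\Hm(\sv)$ of $S$ contains some $\iv_0 \in I_0$, hence has $S'$-mass $\mu^\Hm_{\iv_0}|\halpha'_{\iv_0}|^2 \le \eta/|I_0|$; combining with the identity $\sum_{\iv \in I_0,\Hm}\mu^\Hm_\iv|\halpha'_\iv|^2 = \eta$ gives a total loss of $O(2^n C_{\max}\eta)$ with $C_{\max} \eqdef \max_k C(k)$.

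For $|\rho(S) - \rho(S')|$, the key observation is that any optimal $F^* \in \Gamma(S)$ may, without loss of generality, be assumed to have no support on $\ket{\hi}$ for $\iv \in I_0$: since $\ket{\hi} \perp \ket{\psi_\xv}$ for all $\xv$ in that case, projecting each $F^*_{\Hm,\yv}$ onto the orthogonal complement of $\mathrm{span}\{\ket{\hi} : \iv \in I_0\}$ preserves positivity and unambiguity for $S$ and leaves the objective unchanged, with the deficiency absorbed into a new $\bot$-component. Writing $\ket{u} = \frac{1}{\sqrt{|I_0|}}\sum_{\iv \in I_0}\ket{\hi}$, one then has $F^* X_\xv\ket{u} = 0$ for every $\xv$, so expanding $\ket{\psi'_\xv} = \sqrt{1-\eta}\ket{\psi_\xv}+\sqrt{\eta}\,X_\xv\ket{u}$ yields $\bra{\psi'_\xv}F^*\ket{\psi'_\xv} = (1-\eta)\bra{\psi_\xv}F^*\ket{\psi_\xv}$; this is zero for $\Hm\xv \ne \yv$, so $F^*$ is already unambiguous for $S'$ and achieves $\rho(S';F^*) = (1-\eta)\rho(S)$, giving $\rho(S) - \rho(S') \le \eta C_{\max}$. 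For the reverse inequality, I would argue upper semi-continuity of $\rho$ at $S$ by compactness: if, on the contrary, $\rho(S'_n) \ge \rho(S) + \eps$ held along a sequence $\eta_n \to 0$, then an accumulation point $F^{**}$ of optimal POVMs would itself lie in $\Gamma(S)$ — the unambiguity condition being closed under pointwise limits of POVMs and states — and would contradict the optimality of $\rho(S)$.

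The main obstacle is carrying out the WLOG reduction for the $\rho$ bound cleanly: one must verify that the projected operators remain positive semidefinite, that the POVM normalization $\sum F_{\Hm,\yv} = I$ is restored by an appropriate adjustment of the $\bot$-component, that unambiguity for $S$ is preserved, and that the objective is unchanged, all simultaneously. Once this is in place the ``bad'' Fourier subspace is annihilated by the POVM, the cross and quadratic correction terms in $\bra{\psi'_\xv}F^*\ket{\psi'_\xv}$ both vanish, and the argument collapses to an exact factor of $(1-\eta)$. Choosing $\eta$ small enough that both $\eta C_{\max} \le \eps$ and $O(2^n C_{\max}\eta) \le \eps$ then gives the statement.
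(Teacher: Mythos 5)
Your construction of $S'$ is exactly the paper's: fill the dead Fourier coefficients $\{\iv : \halpha_\iv = 0\}$ uniformly with total mass $\eta$ and rescale the rest by $\sqrt{1-\eta}$. Where you diverge is in how the closeness of the values is established. The paper proves both inequalities by one mechanism: it invokes its symmetrized reformulation (Theorem~\ref{Theorem:4.1}), writes the objective as $\sum_{k,(\Hm,\yv),\iv} C(k)\,|\halpha_\iv|^2 \triple{\hi}{F_{\Hm,\yv}}{\hi}$, splits the sum over $T = I_0$ and its complement, and bounds the two pieces by $(1-\delta)\rho(S)$ and $\delta C_{\max}$ respectively (and analogously for the LP). You instead work at the level of raw POVMs. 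For $\rho(S') \ge (1-\eta)\rho(S)$ you project the optimal $S$-measurement off the dead subspace $\mathrm{span}\{\ket{\hi} : \iv \in I_0\}$ — a legitimate WLOG, since $P\ket{\psi_\xv} = \ket{\psi_\xv}$ means conjugation by $P$ preserves positivity, $S$-unambiguity and the objective, while the deficiency $I - P$ is absorbed into $F_\bot$ — after which the cross terms in $\bra{\psi'_\xv}F^*\ket{\psi'_\xv}$ vanish exactly; for the reverse direction you use compactness of the POVM set and closedness of the unambiguity condition under limits. Both routes are sound. The compactness step is non-quantitative, unlike the paper's explicit $\delta C_{\max}$ bound, but it is arguably more robust: it sidesteps the need to argue that the optimal measurement for $S'$ (or its restriction) yields a quantity controlled by $\rho(S)$, a transfer the paper performs rather tersely. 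Conversely, your LP bookkeeping yields the cruder constant $O(2^n C_{\max}\eta)$ rather than $O(C_{\max}\eta)$, which is harmless since $n$ is fixed and $\eta$ is free. Your explicit handling of the mixed cosets (those meeting $I_0$), using that the coset relation forces their contribution to $\rho^L(S)$ to vanish and bounds their $S'$-mass by $\eta/|I_0|$ each, is a correct and slightly more careful version of the corresponding step in the paper.
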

\begin{proof}
	Let $\eps > 0$ and let $S = \{\ket{\psi_\xv}\}$ be a set of symmetric states with $\ket{\psi_\zerov} = \sum_{\iv \in \F_2^n} \halpha_{\iv} \ket{\hi}$. Let $T = \{\iv \in \F_2^n : \halpha_\iv = 0\}$. If $T = \emptyset$, $S$ has full support and the statement is trivial. We now look at the case where $T \neq \emptyset$. Let $\delta = \min\{\frac{\eps }{\rho^L(S) + 1} ; \frac{\eps}{\rho(S) + 1} ; \frac{\eps}{C_{Max}} ; 1\} > 0$, where $C_{Max} = \max_{k \in \Iint{0}{n}} C(k) > 0$. We define 
	$$ \ket{\psi'_\xv} = \sqrt{1-\delta} \ket{\psi_\xv} + \sqrt{\frac{\delta}{|T|}} \sum_{\iv \in T} (-1)^{\iv\cdot\xv} \ket{\hi}.$$
	The set $S' = \{\ket{\psi'_\xv}\}$ clearly is symmetric and has full dual support and we write $\ket{\psi'_\zerov} = \sum_{\iv \in \F_2^n} \halpha'_\iv \ket{\hi}$. This means that $\halpha'_\iv = \sqrt{1-\delta} \halpha_\iv$ for $\iv \notin T$ and $\halpha'_\iv = (-1)^{\iv\cdot\xv}\sqrt{\frac{\delta}{|T|}}$ for $\iv \in T$. 
	
	In order to prove the first inequality, let $\FHy \in \Gamma_s(S')$ that maximizes $\rho(S')$ in the expression of Theorem~\ref{Theorem:4.1}. We write
	\begin{align*}
		\rho(S') & = \sum_{k \in \Iint{0}{n}}\sum_{(\Hm,\yv) \in \I_k} \sum_{\iv \in \F_2^n} C(k)|\halpha'_\iv|^2 \triple{\hi}{F_{\Hm,\yv}}{\hi} \\
		& = \sum_{k \in \Iint{0}{n}}\sum_{(\Hm,\yv) \in \I_k}\sum_{\iv \notin T} C(k)|\halpha'_\iv|^2 \triple{\hi}{F_{\Hm,\yv}}{\hi} + \sum_{k \in \Iint{0}{n}}\sum_{(\Hm,\yv) \in \I_k}\sum_{\iv \in T} C(k)|\halpha'_\iv|^2 \triple{\hi}{F_{\Hm,\yv}}{\hi} \\
		& \le (1-\delta)\rho(S) + \frac{\delta}{|T|} C_{Max}    
	\end{align*} 
where we used 	
$$\sum_{k \in \Iint{0}{n}}\sum_{(\Hm,\yv) \in \I_k}\sum_{\iv \in T} \triple{\hi}{F_{\Hm,\yv}}{\hi} \le \sum_{k \in \Iint{0}{n}}\sum_{(\Hm,\yv) \in \I_k}\sum_{\iv \in \F_2^n} \triple{\hi}{F_{\Hm,\yv}}{\hi} \le 1.$$	
From there, we obtain
$$ \rho(S') - \rho(S) \le -\delta\rho(S) + \delta C_{Max} \le \delta C_{Max} \le \eps,$$
which gives the first part inequality. For the second direction of the first inequality, let $\FHy \in \Gamma_s(S)$ that maximizes $\rho(S)$. We write 
\begin{align*}
	\rho(S') & \ge \sum_{k \in \Iint{0}{n}}\sum_{(\Hm,\yv) \in \I_k} \sum_{\iv \in \F_2^n} C(k)|\halpha'_\iv|^2 \triple{\hi}{F_{\Hm,\yv}}{\hi} \\
	& \ge (1-\delta) \sum_{k \in \Iint{0}{n}}\sum_{(\Hm,\yv) \in \I_k} \sum_{\iv \in \F_2^n} C(k)|\halpha_\iv|^2 \triple{\hi}{F_{\Hm,\yv}}{\hi} = (1-\delta)\rho(S),
\end{align*}
which directly implies 
$$\rho(S') - \rho(S) \ge - \delta \rho(S) \ge \frac{- \eps \rho(S)}{\rho(S) + 1} \ge -\eps.$$ 

\noindent For the second inequality, let $\{\lambda_\iv^\Hm\}$ that maximizes the linear program for $\rho^L(S)$. We now write
	\begin{align*}\rho^L(S') & \ge \sum_{k = 0}^n \sum_{\Hm \in \Lambda_k} \sum_{\iv \in \mathbb{F}_2^n} C(k) \, \lambda^\Hm_\iv \, |\halpha'_\iv|^2 \\
	& \ge (1-\delta) \max \sum_{k = 0}^n \sum_{\Hm \in \Lambda_k} \sum_{\iv \in \mathbb{F}_2^n} C(k) \, \lambda^\Hm_\iv \, |\halpha_\iv|^2 = (1-\delta) \rho^L(S) \end{align*}
	which gives 
	$$ \rho^L(S') - \rho^L(S) \ge - \delta \rho^L(S) \ge -\eps \frac{\rho^L(S)}{\rho^L(S) + 1} \ge -\eps.$$
For the second part of the second inequality, consider $\{\lambda_\iv^\Hm\}$ that maximizes the linear program for $\rho^L(S')$. We write
	\begin{align*}
		\rho^L(S') & = \sum_{k = 0}^n \sum_{\Hm \in \Lambda_k} \sum_{\iv \in \mathbb{F}_2^n} C(k) \, \lambda^\Hm_\iv \, |\halpha_\iv|^2  \\
		& \le (1-\delta) \rho^L(S) + \sum_{k = 0}^n\sum_{\Hm \in \Lambda_k}\sum_{\iv \in T} C(k)\lambda^\Hm_\iv |\halpha'_\iv|^2 \\
		& \le (1-\delta) \rho^L(S) + \frac{\delta}{|T|} C_{Max}
	\end{align*}
	which gives 
	$$ \rho^L(S') - \rho^L(S) \le -\delta \rho^L(S) + \frac{\delta}{|T|} C_{Max} \le \delta C_{Max} \le \eps. \qedhere$$
\end{proof}

From there, we immediately obtain
\begin{theorem}
	Let $S$ be a set of symmetric states. Then $\rho(S) = \rho^L(S)$.
\end{theorem}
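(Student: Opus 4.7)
The plan is to combine the preceding density proposition with the equality $\rho(S) = \rho^L(S)$ already established for symmetric sets with full dual support. The idea is that both $\rho$ and $\rho^L$ can be approximated arbitrarily well by the corresponding values on a perturbed set that has full dual support, and since the two approximating values agree (by the full-support case), the original values must agree as well. This is a standard density/continuity argument, and the real content has already been absorbed into the density proposition and the two full-support propositions (Propositions establishing $\rho(S) \le \rho^L(S)$ and $\rho^L(S) \le \rho(S)$ under full dual support).

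Concretely, I would proceed as follows. Fix any set $S$ of symmetric states and any $\eps > 0$. Apply the density proposition to obtain a set $S'$ of symmetric states with full dual support satisfying simultaneously
$$|\rho(S) - \rho(S')| \le \eps \quad \text{and} \quad |\rho^L(S) - \rho^L(S')| \le \eps.$$
Since $S'$ has full dual support, the two full-support propositions yield $\rho(S') = \rho^L(S')$. By the triangle inequality,
$$|\rho(S) - \rho^L(S)| \le |\rho(S) - \rho(S')| + |\rho(S') - \rho^L(S')| + |\rho^L(S') - \rho^L(S)| \le 2\eps.$$
As $\eps > 0$ is arbitrary, this forces $\rho(S) = \rho^L(S)$, which is the claim.

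I do not anticipate any genuine obstacle here: the density proposition is stated for any symmetric $S$ (the construction of $S'$ by mixing $\ket{\psi_\xv}$ with a small component supported on the missing Fourier indices $T$ works for every such $S$), and the full-support equality is available off the shelf. The only minor sanity check is that the density proposition gives both bounds simultaneously from a single choice of $\delta$, which is indeed how it is proved just above. Hence the present theorem follows by a clean three-term triangle inequality with no additional ingredients.
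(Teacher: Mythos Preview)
Your proposal is correct and takes essentially the same approach as the paper: both combine the density proposition with the full-support equality $\rho(S')=\rho^L(S')$ and finish with the same three-term triangle inequality. The only cosmetic difference is that the paper phrases this as a contradiction (setting $\eps=\tfrac{1}{3}|\rho(S)-\rho^L(S)|$) while you argue directly by letting $\eps\to 0$.
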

\begin{proof}
	Assume by contradiction that $\rho^L(S) \neq \rho(S)$. Let $\eps = \frac{|\rho^L(S) - \rho(S)|}{3} > 0$. From the previous proposition, let $S'$ be a set of states of dual support such that 
	$$ |\rho(S') - \rho(S)| \le \eps  \quad \text{and} \quad |\rho^L(S') - \rho^L(S)| \le \eps.$$
	Now, because $S'$ has full dual support, we have $\rho(S') = \rho^L(S')$. We can therefore conclude
	$$|\rho^L(S) - \rho(S)| \le |\rho^L(S) - \rho^L(S')| + |\rho^L(S') - \rho(S')| + |\rho(S') - \rho(S)| \le 2\eps,$$
	which contradicts the fact that $|\rho^L(S) - \rho(S)| = 3\eps$ with $\eps > 0$.
\end{proof}

\section{Dual Linear Program and Solutions}
Let $S = \{\ket{\psi_\xv}\}_{\xv \in \F_2^n}$ be a set of symmetric states. We write $\ket{\psi_\zerov} = \sum_{\iv \in \F_2^n} \halpha_\iv \ket{\hi}$. Recall our goal is to give bounds on $\rho(S)$. In the previous section, we showed that $\rho(S) = \rho^L(S)$ where $\rho^L(S)$ can be expressed as a linear maximization problem. In this section, we present the associated dual linear program, and derive upper bounds on the value of $\rho(S)$, both in the threshold regime and in the average parity regime. 
\subsection{Formulation of the dual program}
From the linear program with objective function $\rho^L(S)$, we construct the associated dual linear program, with objective $\sigma^L(S)$.

\begin{mdframed}[linewidth=1pt, roundcorner=5pt, nobreak=true, innerleftmargin=10pt, innerrightmargin=10pt, innertopmargin=8pt, innerbottommargin=8pt]
	\begin{center}
		\textbf{Dual Linear Program}
	\end{center}
	
	\vspace{0.5em}
	
	\begin{align*}
		\text{Variables:} \quad & b_\iv \in \mathbb{R}_+ \quad \text{for each } \iv \in \mathbb{F}_2^n. \\
		\text{Objective:} \quad & \sigma^L(S) \eqdef \min_{b_\iv} \sum_{\iv\in\F_2^n} b_\iv |\hat{\alpha}_\iv|^2 \\
		\text{Constraints:} \quad 
		& \sum_{\iv\in \D_\Hm(\sv)} b_{\iv} \ge C(k) 2^k 
		&& \forall k\in\Iint{0}{n}, \ \forall \Hm \in \Lambda_k, \ \forall \sv \in \F_2^{n-k}.
	\end{align*}
\end{mdframed}

Remark that we should normally take $b_\iv\in\mathbb{R}$ but using the constraints for $k=0$, $\Hm=\begin{pmatrix}0 & \dots & 0\end{pmatrix}$ and the fact that $C:\Iint{0}{n} \mapsto \mathbb{R}_{+}$, we get that $\forall \sv\in\F_2^n, \ b_\sv = \sum_{\iv\in\D_{\Hm}(\sv)} b_\iv \ge C(0) \ge 0$.

\COMMENT{
\begin{proposition}[Weak duality]
	$$\rho^L(S) \le \sigma^L(S).$$
\end{proposition}

\begin{proof}
	\begin{align*}
		\sigma^L(S) & =  \sum_{\iv\in \F_2^n} b_{\iv} |\hat{\alpha}_\iv|^2 = \sigma^L(S) \\
		& = \sum_{\iv\in \F_2^n} b_{\iv} |\hat{\alpha}_\iv|^2 \sum_{\Hm \in \Lambda^*} \lambda_\iv^\Hm & \text{from primal constraint } 1. \\
		& = \sum_{k = 1}^n \sum_{\Hm \in \Lambda_k} \sum_{\sv\in \F_2^{n-k}} \sum_{\iv\in \D_\Hm(\sv)} b_{\iv} |\hat{\alpha}_\iv|^2 \lambda_\iv^\Hm
	\end{align*}
	Now, from primal constraint $2$, for each $\Hm \in \Lambda_k, \ \sv \in \F_2^k$ and $\iv \in \D_\Hm(\sv)$, we have $|\hat{\alpha}_\iv|^2 \lambda_\iv^\Hm = \frac{1}{2^k} \sum_{\jv \in \D_\Hm(\sv)} |\hat{\alpha}_\jv|^2 \lambda_\jv^\Hm$. Pluggin this in the above, we obtain
	\begin{align*}
		\sigma^L(S) & = \sum_{k = 1}^n \sum_{\Hm \in \Lambda_k} \sum_{\sv\in \F_2^{n-k}} \frac{1}{2^k} \sum_{\jv \in \D_\Hm(\sv)}  |\hat{\alpha}_\jv|^2 \lambda_\jv^\Hm \sum_{\iv\in \D_\Hm(\sv)} b_{\iv} \\
		& \ge \sum_{k = 1}^n \sum_{\Hm \in \Lambda_k} \sum_{\sv\in \F_2^{n-k}} \sum_{\jv \in \D_\Hm(\sv)}  |\hat{\alpha}_\jv|^2 \lambda_\jv^\Hm C(k) & \text{from dual constraint } 2. \\
		& = \rho^L(S)
	\end{align*}
\end{proof}
}
\begin{proposition}[Strong duality]
	$$\rho^L(S) = \sigma^L(S).$$	
\end{proposition}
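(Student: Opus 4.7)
The plan is short: since $\sigma^L(S)$ is explicitly constructed as the LP-dual of $\rho^L(S)$ (both programs having finitely many variables and linear constraints), the result is a direct instance of the strong-duality theorem for finite linear programs, and all that needs to be checked is that both programs are feasible with the primal bounded above.

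Concretely, I would proceed in three small steps. First, weak duality $\rho^L(S) \le \sigma^L(S)$ can be verified by hand: given any feasible $(\lambda_\iv^\Hm)$ and $(b_\iv)$, one starts from $\sigma^L(S;(b_\iv)) = \sum_\iv b_\iv |\halpha_\iv|^2$, multiplies each term by the identity $1 = \sum_\Hm \lambda_\iv^\Hm$ given by \ding{172}, and regroups the resulting double sum over cosets $\D_\Hm(\sv)$. Constraint \ding{173} lets one pull the common value of $|\halpha_\iv|^2 \lambda_\iv^\Hm$ out of each cosetwise sum; combined with the dual inequality $\sum_{\iv \in \D_\Hm(\sv)} b_\iv \ge C(k)\, 2^k$ and the identity $|\D_\Hm(\sv)| = 2^k$, this bounds $\sigma^L(S;(b_\iv))$ from below by $\rho^L(S;(\lambda_\iv^\Hm))$.

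Second, I would exhibit explicit feasible solutions on both sides. On the primal side, setting $\lambda_\iv^{\Hm_0} = 1$ for the unique matrix $\Hm_0 \in \Lambda_0$ and $\lambda_\iv^\Hm = 0$ otherwise makes \ding{172} trivially true, and \ding{173} is either vacuous (for $k = 0$, each $\D_{\Hm_0}(\sv)$ being a singleton) or of the form $0 = 0$ (for $k \ge 1$). On the dual side, $b_\iv := 2^n \max_k C(k)$ satisfies every constraint. Boundedness of the primal objective by $\max_k C(k)$ is immediate from $\sum_\iv |\halpha_\iv|^2 = 1$ and \ding{172}.

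Third, with feasibility on both sides and boundedness of the primal established, the finite-dimensional LP strong-duality theorem yields $\rho^L(S) = \sigma^L(S)$. There is no real obstacle here: the serious work was done earlier in reducing $\rho(S)$ to the LP value $\rho^L(S)$, and once the primal--dual pair is in place this proposition is essentially a formality. The only mild care needed is in the weak-duality step, where one must check that the cosetwise equality \ding{173} is precisely what pairs with the cosetwise dual constraint---but this is by construction of the dual.
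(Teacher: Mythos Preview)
Your proposal is correct and follows essentially the same route as the paper: exhibit primal and dual feasibility, then invoke the LP strong-duality theorem. Your primal feasible point (all mass on the empty matrix $\Hm_0 \in \Lambda_0$) is in fact more careful than the paper's claimed all-zero assignment, which technically violates constraint \ding{172}; the weak-duality computation you sketch is redundant once strong duality is invoked but is harmless and matches the argument the paper leaves commented out.
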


\begin{proof}
	$(\lambda_\iv^\Hm)_{\Hm\in\Lambda, \iv\in\F_2^n} = 0$ is a primal solution whatever the $|\halpha_\iv^2|$.
	$(b_\iv) = 2^n \max_{k\in\Iint{0}{n}} C(k)$ is a dual solution whatever the $|\halpha_\iv^2|$.
	Thus, by strong duality, the primal has an optimal solution $(\lambda_\iv^{\Hm,*})$, the dual has an optimal solution $(b_\iv^{*})$ and $\rho^L((\lambda_\iv^{\Hm,*}), S) = \sigma^L((b_\iv^{*}), S)$.
\end{proof}

\subsection{The threshold setting}
Let $\tau \in \Iint{1}{n}$. We first consider the threshold setting meaning that we fix $C(k) = 1$ if $k \ge \tau$ and $C(k) = 0$ otherwise. This means we want to bound the probability that we can learn unambiguously at least $\tau$ parities of $\xv$ given $\ket{\psi_\xv}$. We write our objective function $\rho(S;\tau)$ as well as the associated primal and dual objective functions respectively  $\rho^L(S;\tau)$ and $\sigma^L(S;\tau)$. From our previous results, we have $\rho(S;\tau) = \rho^L(S;\tau) = \sigma^L(S;\tau)$.
First, we give a necessary and sufficient condition for the existence of unambiguous measurements in this setting. The key concept here is the notion of $k$-universal set.

\begin{definition}
	Let $\aa_2(n,k)$ be the set of affine subspaces of $\F_2^n$ of dimension $k$. $U \in \aa_2(n,k)$ if it is of the form $\vv + V$ with $\vv \in \F_2^n$ and $V \in \G_2(n,k)$.
\end{definition}

\begin{definition}
	A set $U \subseteq \F_2^n$ is called $k$-universal iff. 
	$$ \forall V \in \aa_2(n,k),  U \cap V \neq \emptyset.$$
\end{definition}

This means that $U$ intersects every $k$-dimensional affine subspace of $\F_2^n$. An equivalent formulation is given below.

\begin{proposition}
		$U$ is $k$-universal iff.  
	$$ \forall \Hm \in \Lambda_{k}, \  \forall \sv \in \F_2^{n-k}, \  U \cap \D_\Hm(\sv) \neq \emptyset.$$
\end{proposition}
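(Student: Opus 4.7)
The plan is to prove this equivalence by showing that the two collections of subsets coincide: the $k$-dimensional affine subspaces of $\F_2^n$ and the dual cosets $\D_\Hm(\sv)$ for $\Hm \in \Lambda_k$ and $\sv \in \F_2^{n-k}$. Once we establish this, both directions of the ``iff'' follow immediately from the definition of $k$-universality.

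For the forward direction ($\Rightarrow$), I would observe that for any $\Hm \in \Lambda_k$ and $\sv \in \F_2^{n-k}$, the dual coset $\D_\Hm(\sv)$ is a $k$-dimensional affine subspace. Indeed, by Definition~\ref{Definition:DualCosets}, we can write $\D_\Hm(\sv) = \{\trp{\Hm}\uv + \vv : \uv \in \F_2^k\} = \vv + Im(\trp{\Hm})$ for some fixed $\vv$. Since $\Hm \in \Lambda_k$ has rank $k$, the image $Im(\trp{\Hm})$ is a linear subspace of $\F_2^n$ of dimension exactly $k$. Hence $\D_\Hm(\sv) \in \aa_2(n,k)$, so if $U$ is $k$-universal then by definition $U \cap \D_\Hm(\sv) \neq \emptyset$.

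For the reverse direction ($\Leftarrow$), I need to show that every $k$-dimensional affine subspace $V \in \aa_2(n,k)$ arises as a dual coset. Write $V = \vv_0 + W$ with $W \in \G_2(n,k)$. Choose any basis $\hv_1, \dots, \hv_k$ of $W$ and let $\Hm \in \F_2^{k \times n}$ be the matrix whose rows are the $\hv_i^\intercal$; then $\Hm \in \Lambda_k$ and $Im(\trp{\Hm}) = W$. Let $\C = Ker(\Hm) \in \G_2(n,n-k)$ with its associated generator matrix $\Gm_\C = \Gm_\Hm$, and set $\sv = \Gm_\Hm \vv_0 \in \F_2^{n-k}$. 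Then $\vv_0 \in \D_\Hm(\sv)$, and using the alternative characterization from Definition~\ref{Definition:DualCosets} we get $\D_\Hm(\sv) = \vv_0 + Im(\trp{\Hm}) = \vv_0 + W = V$. Hence the hypothesis yields $U \cap V = U \cap \D_\Hm(\sv) \neq \emptyset$. Since $V \in \aa_2(n,k)$ was arbitrary, $U$ is $k$-universal.

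There is no real obstacle here; the whole proof is a routine dictionary translation between $k$-dimensional affine subspaces and the parametrization by $(\Hm, \sv)$. The only point worth being careful about is that many different pairs $(\Hm, \sv)$ represent the same affine subspace (any choice of basis of $W$ works, and the label $\sv$ depends on the arbitrary choice of $\Gm_\Hm$), but this non-uniqueness is harmless: we only need the map $(\Hm, \sv) \mapsto \D_\Hm(\sv)$ to be surjective onto $\aa_2(n,k)$, which is exactly what the reverse direction establishes.
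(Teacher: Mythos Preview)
Your proof is correct and follows essentially the same approach as the paper's: both establish that the collection $\{\D_\Hm(\sv) : \Hm \in \Lambda_k,\ \sv \in \F_2^{n-k}\}$ coincides with $\aa_2(n,k)$. The paper states this in one line using the characterization $\D_\Hm(\sv) = \{\xv : \Gm_\C \xv = \sv\}$, while you spell out both directions using the equivalent description $\D_\Hm(\sv) = \vv + Im(\trp{\Hm})$; your version is simply a more detailed unpacking of the same argument.
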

\begin{proof}
	This comes directly from the fact that an affine space $V \in \aa_2(n,k)$ can be written $V = \{\xv \in \F_2^n : \Gm \xv = \sv\}$ for some $\Gm \in \Lambda_{n-k}$ and $\sv \in \F_2^{n-k}$.
\end{proof}


We show the following
\begin{theorem}
	Let $S = \{\ket{\psi_\xv}\}_{\xv \in \F_2^n}$ be a set of symmetric states with $\ket{\psi_{\zerov}} = \sum_{\iv \in \F_2^n} \halpha_\iv \ket{\hi}$. 
		$$ \rho(S,\tau) = 0 \Leftrightarrow \text{There exists a } \tau\text{-universal set } V \ s.t. \ \forall \iv \in V, \ \halpha_\iv = 0.$$
\end{theorem}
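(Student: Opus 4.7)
\noindent The plan is to leverage strong duality between the primal and dual linear programs---so that $\rho(S,\tau)=\sigma^L(S,\tau)$---and to characterize when this dual minimum is exactly zero. First I would simplify the dual constraints in the threshold regime. Since $C(k)=0$ for $k<\tau$, the constraints for such $k$ are vacuous; for $k\ge\tau$ they read $\sum_{\iv\in D_\Hm(\sv)}b_\iv\ge 2^k$, and as $\Hm$ and $\sv$ vary, $D_\Hm(\sv)$ ranges over all $k$-dimensional affine subspaces of $\F_2^n$. Since any $W\in\aa_2(n,k)$ partitions into $2^{k-\tau}$ cosets belonging to $\aa_2(n,\tau)$, the constraints for $k>\tau$ follow from those for $k=\tau$ by summation. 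The dual therefore reduces to: minimize $\sum_\iv b_\iv|\halpha_\iv|^2$ over $b\ge 0$ subject to $\sum_{\iv\in W}b_\iv\ge 2^\tau$ for every $W\in\aa_2(n,\tau)$.

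For the $(\Leftarrow)$ direction, the argument is exactly the quantitative one already sketched as Proposition~\ref{Proposition:I1}. Given a $\tau$-universal set $V$ with $V\subseteq T:=\{\iv:\halpha_\iv=0\}$, take $b_\iv:=2^\tau\one_V(\iv)$. $\tau$-universality ensures that every $W\in\aa_2(n,\tau)$ meets $V$, so the dual constraints are satisfied, and the objective $2^\tau\sum_{\iv\in V}|\halpha_\iv|^2$ vanishes since $V\subseteq T$. Hence $\sigma^L(S,\tau)=0$, and therefore $\rho(S,\tau)=0$.

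For the $(\Rightarrow)$ direction, suppose $\rho(S,\tau)=\sigma^L(S,\tau)=0$. The reduced dual is a finite-dimensional LP, feasible (take $b\equiv 2^\tau$) and bounded below by $0$, so by standard LP theory the minimum is attained at some $b^\star$. From $\sum_\iv b^\star_\iv|\halpha_\iv|^2=0$ and nonnegativity of each summand, $b^\star_\iv=0$ whenever $|\halpha_\iv|^2>0$, hence $V:=\{\iv:b^\star_\iv>0\}\subseteq T$. For any $W\in\aa_2(n,\tau)$, the dual constraint $\sum_{\iv\in W}b^\star_\iv\ge 2^\tau>0$ forces some $\iv\in W$ with $b^\star_\iv>0$, so $V\cap W\ne\emptyset$. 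Thus $V$ is $\tau$-universal and contained in $T$, as required.

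The main obstacle is essentially bookkeeping: (i) reducing all $k\ge\tau$ constraints to the $k=\tau$ ones by partitioning affine subspaces, and (ii) invoking LP attainment in the $(\Rightarrow)$ direction to turn the value-zero statement into a concrete support set $V$. Both are standard once the dual LP from the previous subsection is in hand, so the real conceptual content is the strong-duality reformulation already established.
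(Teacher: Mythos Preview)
Your proposal is correct and follows essentially the same approach as the paper: both directions go through the dual LP, using the indicator-type solution $b_\iv=2^\tau\one_V(\iv)$ for $(\Leftarrow)$ and reading off the support of an optimal dual solution for $(\Rightarrow)$. The only cosmetic difference is that you reduce all $k\ge\tau$ constraints to the $k=\tau$ ones upfront by partitioning a $k$-dimensional affine subspace into $2^{k-\tau}$ cosets of dimension $\tau$, whereas the paper handles the same reduction inside the $(\Leftarrow)$ verification by extending $\Gm_\Hm$ to a matrix $\Mm\in\Lambda_{n-\tau}$ and counting; the two arguments are equivalent.
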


We prove both directions of the equivalence. For the first direction, we actually prove a quantitative statement. 

\begin{proposition}
	$$ \rho(S;\tau) = \sigma^L(S;\tau) \le \min\left\{2^{\tau} \sum_{\iv \in V} |\halpha_\iv|^2 : V \text{ is } \tau\text{-universal} \right\}.$$
\end{proposition}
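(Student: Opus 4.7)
The plan is to construct, for each $\tau$-universal set $V$, an explicit feasible solution of the dual linear program whose objective value equals $2^\tau \sum_{\iv \in V}|\halpha_\iv|^2$. Taking the minimum over $V$ will then yield the desired upper bound on $\sigma^L(S;\tau)$, and hence on $\rho(S;\tau) = \sigma^L(S;\tau)$ by strong duality.

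Concretely, given a $\tau$-universal set $V \subseteq \F_2^n$, I would define $b_\iv \eqdef 2^\tau \cdot \one_V(\iv)$ for every $\iv \in \F_2^n$. These are nonnegative reals, so I only need to verify the constraint $\sum_{\iv \in \D_\Hm(\sv)} b_\iv \ge C(k) 2^k$ for every $k \in \Iint{0}{n}$, every $\Hm \in \Lambda_k$ and every $\sv \in \F_2^{n-k}$. When $k < \tau$ we have $C(k) = 0$ and the inequality is trivial. The interesting case is $k \ge \tau$, where $C(k) 2^k = 2^k$. Here the key combinatorial observation is that the dual coset $\D_\Hm(\sv)$ is an affine subspace of $\F_2^n$ of dimension $k$, so it can be partitioned into $2^{k-\tau}$ affine subspaces of dimension $\tau$; by $\tau$-universality each of these pieces contains at least one element of $V$, so $|\D_\Hm(\sv)\cap V| \ge 2^{k-\tau}$. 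This yields
\begin{equation*}
\sum_{\iv \in \D_\Hm(\sv)} b_\iv = 2^\tau \,|\D_\Hm(\sv)\cap V| \ge 2^\tau \cdot 2^{k-\tau} = 2^k,
\end{equation*}
so $(b_\iv)$ is dual-feasible.

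The objective value at this solution is $\sigma^L(S;\tau;(b_\iv)) = \sum_{\iv} b_\iv |\halpha_\iv|^2 = 2^\tau \sum_{\iv \in V} |\halpha_\iv|^2$, and minimizing over $\tau$-universal sets $V$ gives the claimed bound on $\sigma^L(S;\tau) = \rho(S;\tau)$. I do not expect any genuine obstacle: the only nontrivial step is the partitioning argument above, which is a short linear-algebra observation about decomposing an affine subspace of dimension $k \ge \tau$ into parallel translates of a $\tau$-dimensional subspace. Everything else is a direct substitution into the dual formulation.
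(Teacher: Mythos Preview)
Your proposal is correct and follows essentially the same approach as the paper: both take $b_\iv = 2^\tau \one_V(\iv)$ and reduce the constraint check to showing $|\D_\Hm(\sv)\cap V|\ge 2^{k-\tau}$. Your geometric partitioning argument is equivalent to the paper's matrix-extension argument (extending $\Gm_\Hm$ to a full-rank $\Mm\in\Lambda_{n-\tau}$ so that the fibers $\{\yv:\Mm\yv=\sv\|\sv'\}$ give the $2^{k-\tau}$ parallel $\tau$-dimensional affine pieces).
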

\begin{proof}
	Fix a threshold $\tau$ and an $\tau$-universal set $V$. As a dual solution, we fix $b_{\iv} = 2^\tau$ if $\iv \in V$ and $b_{\iv} = 0$ otherwise. We clearly have as objective $2^\tau \sum_{\iv \in V} |\halpha_\iv|^2$. We now have to check the constraints. We require that 
	\begin{align}\label{Eq:Requirement}\forall k \in \Iint{\tau}{n}, \ \forall \Hm \in \Lambda_k, \forall \sv \in \F_2^{n-k}, \sum_{{\iv \in \D_\Hm(\sv)}} b_\iv \ge 2^k.
	\end{align}
	For each $\Hm \in \Lambda_k$, we associate a matrix $\Gm_\Hm \in \Lambda_{n-k}$ such that $D_\Hm(\sv) = \{\iv \in \F_2^n : \Gm_\Hm \cdot \iv = \sv\}$.
	With our choice of $(b_\iv)$, the requirement of Equation~\ref{Eq:Requirement} is equivalent to 
	$$\forall k \in \Iint{\tau}{n}, \ \forall \Hm \in \Lambda_k, \forall \sv \in \F_2^{n-k}, |\left\{\iv \in V : \Gm_\Hm \cdot \iv = \sv \right\}| \ge 2^{k - \tau}.$$
	Fix $k \in \Iint{\tau}{n},\Hm,\sv$. We fix the associated matrix $\Gm_\Hm \in \F_2^{(n-k) \times n}$. We add lines to $\Gm_\Hm$ so that we have a matrix $\Mm \in \Lambda_{n-\tau}$. From the $\tau$-universality condition, we know that for any $\sv' \in \F_2^{k - \tau}$, there exists $\yv \in V$ such that $\Mm \yv = \sv || \sv'$. Since this condition implies that $\Gm_\Hm \cdot \yv = \sv$, we constructed $2^{k - \tau}$ different strings $\yv \in V$ such that $\Gm_\Hm \cdot \yv = \sv$ which concludes the proof.
\end{proof}
As a direct corollary, we have the following
\begin{corollary}
	If there exists an $\tau$-universal set $V$ such that $\forall \iv \in V, \ \halpha_\iv = 0$ then $\rho(S;\tau) = 0$.
\end{corollary}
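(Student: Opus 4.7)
The statement is an immediate quantitative consequence of the preceding proposition, so my plan is simply to assemble that bound with the non-negativity of $\rho(S;\tau)$.

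First, I would observe that $\rho(S;\tau) \ge 0$ by definition: in the threshold setting, the objective $\rho(S,\FHy)$ is a sum of terms of the form $C(k)\,\Tr(F_{\Hm,\yv}\kb{\psi_\xv})$ with $C(k) \in \{0,1\}$ and each trace non-negative (since $F_{\Hm,\yv} \succeq 0$), and the trivial POVM $F_\bot = I$, $F_{\Hm,\yv} = 0$ for $(\Hm,\yv) \in \I^*$ lies in $\Gamma(S)$ and achieves value $0$. Hence $\rho(S;\tau)$ is a maximum over a non-empty set of non-negative reals, so $\rho(S;\tau) \ge 0$.

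Next, I would apply the preceding proposition to the hypothesized $\tau$-universal set $V$. That proposition gives
\begin{equation*}
\rho(S;\tau) \;\le\; 2^{\tau}\sum_{\iv \in V}|\halpha_\iv|^2.
\end{equation*}
By assumption $\halpha_\iv = 0$ for every $\iv \in V$, so the right-hand side is $0$. Combined with $\rho(S;\tau) \ge 0$, this yields $\rho(S;\tau) = 0$, which is what the corollary claims.

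There is essentially no obstacle here: the entire content is packed into the previous proposition (whose proof constructs the explicit dual solution $b_\iv = 2^\tau \mathbf{1}_V(\iv)$), and the corollary is a one-line specialization. If anything, the only subtlety worth flagging is that the bound $\rho(S;\tau) \le 2^\tau \sum_{\iv \in V}|\halpha_\iv|^2$ relies on the chain $\rho(S;\tau) = \rho^L(S;\tau) = \sigma^L(S;\tau)$ established in the previous section, together with the weak-duality style estimate $\sigma^L(S;\tau) \le \sigma^L(S;\tau,(b_\iv))$ for any feasible $(b_\iv)$; but all of this has already been carried out in the excerpt.
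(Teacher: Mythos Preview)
Your proof is correct and matches the paper's approach exactly: the paper simply states this as ``a direct corollary'' of the preceding proposition without writing anything further, and your argument is precisely the intended one-line specialization (bound by $2^\tau \sum_{\iv \in V}|\halpha_\iv|^2 = 0$, combined with the trivial lower bound $\rho(S;\tau)\ge 0$).
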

We now prove the reverse implication. 
\begin{proposition}
	If $\rho(S;\tau) = \sigma^L(S;\tau) = 0$, then there exists an $\tau$-universal set $V$ such that $\forall \iv \in V, \halpha_\iv = 0$.
\end{proposition}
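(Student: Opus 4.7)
The plan is to exploit the dual formulation directly, together with the strong duality result established just above. Since we are told $\sigma^L(S;\tau) = 0$, strong duality guarantees the existence of an optimal dual feasible solution $(b_\iv^{*})_{\iv \in \F_2^n}$ with
\[
\sum_{\iv \in \F_2^n} b_\iv^{*}\, |\halpha_\iv|^2 \;=\; 0.
\]
Because each $b_\iv^{*} \ge 0$ and each $|\halpha_\iv|^2 \ge 0$, every summand must vanish, so $b_\iv^{*} > 0$ forces $\halpha_\iv = 0$. In other words, $\operatorname{supp}(b^{*}) \subseteq V$ where
\[
V \eqdef \{\iv \in \F_2^n : \halpha_\iv = 0\}.
\]

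Next I would invoke the dual feasibility constraint at the single index $k = \tau$. Since $C(\tau) = 1$ in the threshold setting, for every $\Hm \in \Lambda_\tau$ and every $\sv \in \F_2^{n-\tau}$ feasibility yields
\[
\sum_{\iv \in \D_\Hm(\sv)} b_\iv^{*} \;\ge\; C(\tau)\,2^\tau \;=\; 2^\tau \;>\; 0.
\]
In particular, at least one $\iv \in \D_\Hm(\sv)$ must have $b_\iv^{*} > 0$, and by the observation above this $\iv$ lies in $V$. Hence $V \cap \D_\Hm(\sv) \ne \emptyset$ for every $\Hm \in \Lambda_\tau$ and every $\sv \in \F_2^{n-\tau}$.

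The final step is to translate this back into the language of universal sets. The proposition proved just above in the excerpt states that a set is $\tau$-universal precisely when it meets every coset $\D_\Hm(\sv)$ for $\Hm \in \Lambda_\tau$ (these being exactly the $\tau$-dimensional affine subspaces of $\F_2^n$). Thus $V$ is $\tau$-universal, and by its very definition $\halpha_\iv = 0$ for every $\iv \in V$, which is exactly the conclusion sought.

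There is no real obstacle here: once the dual LP has been set up and strong duality proved, the statement follows almost immediately from the sign constraints on $(b_\iv^{*})$ together with a careful choice of $k$. The only small subtlety worth underlining is that one must use the constraint at exactly $k = \tau$: taking larger $k$ would still give a valid feasibility inequality but would only exhibit intersections of $V$ with affine subspaces of larger dimension, which is strictly weaker than $\tau$-universality.
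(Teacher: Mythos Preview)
Your proof is correct and follows essentially the same approach as the paper: take an optimal dual solution, use nonnegativity to conclude that its support lies in the zero set of the $\halpha_\iv$, and use dual feasibility at $k=\tau$ to conclude that this set meets every affine subspace of dimension $\tau$. The only cosmetic difference is that the paper lets $V$ be (a subset of) the support of $b^*$ while you take $V$ to be the full zero set $\{\iv:\halpha_\iv=0\}$; both choices work for the same reason.
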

\begin{proof}
	We assume $\sigma^L(S) = 0$. Let $(b_i)_{\iv \in F_2^n}$ be an optimal dual solution meaning that $\sum_{\iv \in \F_2^n} b_i |\halpha_\iv|^2= 0$. Let $T = \{\iv : b_\iv \neq 0\}$. Since the $(b_\iv)_{\iv \in \F_2^n}$ satisfies the dual constraints, we have that 
	$$ \forall k \ge \tau, \ \forall \Hm \in \Lambda_k, \ \forall \sv \in \F_2^{n-k}, \ \sum_{\iv \in \D_\Hm(\sv)}b_\iv \ge 2^{k} > 0.$$
	This in particular implies that 
	$$  \forall k \ge \tau, \ \forall \Hm \in \Lambda_k, \ \forall \sv \in \F_2^{n-k}, \ T \cap \D_\Hm(\sv) \neq \emptyset,$$
	which implies that $T$ contains a $\tau$-universal set $V$.
	Now, because $\sigma^L(S) = \sum_{\iv \in T} b_{\iv}|\halpha_\iv|^2 = 0$, we have that $\forall \iv \in T, \ \halpha_\iv = 0$, which concludes the proof since 
	$V \subseteq T$. 
\end{proof}

Now, we want to show that if the $|\halpha_\iv|^2$ are concentrated on words of weight $< \frac{\tau}{2}$, then one can cannot learn unambiguously $\tau$ parities of $\xv$ given $\ket{\psi_{\xv}}$. Let $B_d \eqdef \{\xv \in \F_2^n : |\xv|_H \le d\}$. We first prove the following

\begin{proposition}\label{Proposition:Affine}
	For any $V \in \aa_2(n,k)$, $|V \cap B_d| \le \sum_{a=0}^d \binom{k}{d}$.
\end{proposition}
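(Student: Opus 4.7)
The plan is to prove this by a standard information-set / projection argument, reducing the count of low-weight elements of $V$ to the count of low-weight vectors in $\F_2^k$.

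First I would write the affine subspace as $V = \vv + W$ where $\vv \in \F_2^n$ and $W \in \G_2(n,k)$. Pick a generator matrix $\Gm \in \F_2^{k \times n}$ of rank $k$ for $W$, so $W = \{\trp{\Gm}\uv : \uv \in \F_2^k\}$. Because $\Gm$ has rank $k$, it has $k$ linearly independent columns; let $I \subseteq \Iint{1}{n}$, $|I|=k$, be the index set of such a choice, and let $\Gm_I \in \F_2^{k \times k}$ be the corresponding invertible submatrix. Consider the projection $\pi_I : \F_2^n \to \F_2^k$ that keeps only the coordinates in $I$.

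The key step is to observe that $\pi_I$ restricted to $V$ is a bijection onto $\F_2^k$. Indeed, $\pi_I(\vv + \trp{\Gm}\uv) = \pi_I(\vv) + \trp{(\Gm_I)}\uv$, and since $\Gm_I$ is invertible this is an affine bijection between $V$ (which has $2^k$ elements) and $\F_2^k$. In particular $\pi_I|_V$ is injective.

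Next, for any vector $\xv \in \F_2^n$ we trivially have $|\pi_I(\xv)|_H \le |\xv|_H$, since restricting to a subset of coordinates can only decrease the Hamming weight. Consequently, if $\xv \in V \cap B_d$ then $\pi_I(\xv)$ lies in the Hamming ball of radius $d$ in $\F_2^k$, whose cardinality is $\sum_{a=0}^d \binom{k}{a}$. Combining with the injectivity of $\pi_I|_V$ gives
\[
|V \cap B_d| \;\le\; \bigl|\{\zv \in \F_2^k : |\zv|_H \le d\}\bigr| \;=\; \sum_{a=0}^d \binom{k}{a},
\]
which is the desired bound. There is no real obstacle here: the only point requiring any thought is the existence of the information set $I$, and that follows at once from $\Gm$ having rank $k$.
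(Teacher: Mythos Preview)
Your proof is correct and follows essentially the same approach as the paper: both pick an information set $I$ of size $k$ for a generator matrix of the underlying linear space, use that restricting to the coordinates in $I$ gives a bijection from $V$ to $\F_2^k$, and then bound via the trivial inequality $|\xv_{|I}|_H \le |\xv|_H$. The only cosmetic difference is that you phrase the restriction as a projection $\pi_I$ and argue injectivity directly, whereas the paper parametrizes $V$ by $\sv \in \F_2^k$ and compares $|\trp{\Gm}\sv + \vv|_H$ with $|\trp{(\Gm_I)}\sv + \vv_I|_H$.
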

\begin{proof}
	Let $V \in \aa_2(n,k)$ and we write $V = \vv + W$ where $W$ is a linear subspace of dimension $k$. Let $\Gm \in \F_2^{k \times n}$ such that $W = \{\trp{\Gm} \sv : \sv \in \F_2^k\}$. Notice that 
	$$ V \cap B_d = \{\sv : |\trp{\Gm} \sv + \vv|_H \le d \}.$$
	Because $\Gm$ is of full rank, there exists $I \subseteq \Iint{1}{n}$ with $|I| = k$ such that $\Gm_I$ (the matrix where the columns of $\Gm$ are restricted to those with indices is $I$) is a square matrix of full rank $k$. $\trp{(\Gm_I)} \sv$ spans the whole space $\F_2^k$ for $\sv \in \F_2^k$ therefore 
	$$ \left|\{\sv : |\trp{(\Gm_I)} \sv + \vv_I|_H \le d\}\right| \le \sum_{a = 0}^d \binom{k}{a}.$$
	In order to conclude, notice that $|\trp{\Gm} \sv+ {\vv}|_H \ge |\trp{(\Gm_I)} \sv + \vv_I|_H$, from which we get
	$$
	|V \cap B_d| = \left|\{\sv : |\trp{\Gm} \sv + \vv|_H \le d\}\right| \le
	\left|\{\sv : |\trp{(\Gm_I)} \sv + \vv_I|_H \le d\}\right| = \sum_{a = 0}^d \binom{k}{a},$$
	which concludes the proof. 
\end{proof}

We can now prove our main threshold theorem.
\begin{theorem}
	Let $S = \{\ket{\psi_{\xv}}\}_{\xv \in \F_2^n}$ be a set of symmetric states with $\ket{\psi_\zerov} = \sum_{\iv \in \F_2^n} \halpha_{\iv} \ket{\hi}$. Let $\gamma > 2$ be an absolute constant. Let $\eps$ such that $\sum_{\iv \notin B_d} |\halpha_\iv|^2 = \eps$. Then $\rho(S,\lceil\gamma d\rceil) \le \eps(1 + o(1))$, where $o(1)$ is a quantity that goes to $0$ as $d,n \rightarrow \infty$.
\end{theorem}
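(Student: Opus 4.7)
The plan is to invoke strong duality, $\rho(S,\tau)=\sigma^L(S,\tau)$ with $\tau=\lceil\gamma d\rceil$, and exhibit an explicit feasible dual solution whose objective is at most $\eps(1+o(1))$. Since $|\halpha_\iv|^2$ is highly concentrated on $B_d$, the natural candidate is to put mass only outside $B_d$: for a constant $c\ge 1$ to be tuned, set
\[
b_\iv \eqdef \frac{c}{1}\cdot \one_{\iv\notin B_d}, \qquad \iv\in\F_2^n.
\]
With this choice the objective is exactly $c\cdot\sum_{\iv\notin B_d}|\halpha_\iv|^2=c\eps$, so everything will reduce to showing that a value $c=1+o(1)$ suffices for dual feasibility.

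Next I would check feasibility. The dual constraints for $k<\tau$ are vacuous since $C(k)=0$, so only $k\in\Iint{\tau}{n}$ matters. Fix such a $k$, $\Hm\in\Lambda_k$ and $\sv\in\F_2^{n-k}$. The coset $\D_\Hm(\sv)$ is an affine subspace $V\in\aa_2(n,k)$, and Proposition~\ref{Proposition:Affine} gives $|V\cap B_d|\le \sum_{a=0}^{d}\binom{k}{a}$, so
\[
\sum_{\iv\in\D_\Hm(\sv)} b_\iv \;=\; c\,|V\cap\overline{B_d}| \;\ge\; c\Bigl(2^k - \sum_{a=0}^d\binom{k}{a}\Bigr) \;=\; c\,2^k(1-\eta_k),
\]
where $\eta_k \eqdef 2^{-k}\sum_{a=0}^d\binom{k}{a}$. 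Thus the constraint $\sum_{\iv\in\D_\Hm(\sv)}b_\iv\ge 2^k$ is met whenever $c\ge 1/(1-\eta_k)$, and it suffices to take $c=1/(1-\eta)$ with $\eta\eqdef\sup_{k\ge\tau}\eta_k$.

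The main work is then to show $\eta=o(1)$ as $d\to\infty$. I would first argue that $\eta_k$ is decreasing in $k$ for $k\ge d$: an elementary Pascal manipulation gives $\eta_{k+1}/\eta_k = 1 - \binom{k}{d}/\bigl(2\sum_{a=0}^d\binom{k}{a}\bigr)<1$, so $\eta=\eta_\tau$. I then apply the standard entropic estimate $\sum_{a=0}^d\binom{\tau}{a}\le 2^{\tau H(d/\tau)}$, valid whenever $d/\tau\le 1/2$. Since $\gamma>2$, we have $d/\tau\le 1/\gamma<1/2$, hence $H(d/\tau)\le H(1/\gamma)<1$ and
\[
\eta \;\le\; 2^{-\tau(1-H(1/\gamma))}.
\]
This decays exponentially in $\tau$, and because $\tau=\lceil\gamma d\rceil\to\infty$ with $d$, we obtain $\eta=o(1)$ as $d\to\infty$ (uniformly in $n$).

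Putting the pieces together, the proposed $(b_\iv)$ is dual-feasible and yields
\[
\rho(S,\tau) \;=\; \sigma^L(S,\tau) \;\le\; \frac{\eps}{1-\eta} \;=\; \eps(1+o(1)),
\]
which is the claimed bound. The only delicate step is the uniform control of $\eta_k$ over $k\ge\tau$; everything else is a direct plug-in into the dual LP, and the role of the hypothesis $\gamma>2$ is precisely to ensure the entropic exponent $1-H(1/\gamma)$ is strictly positive.
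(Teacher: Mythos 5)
Your proposal is correct and follows essentially the same route as the paper: it exhibits the identical dual solution (constant mass $\tfrac{1}{1-\eta_\tau}=\tfrac{2^\tau}{2^\tau-\sum_{a=0}^d\binom{\tau}{a}}$ on $\overline{B_d}$), invokes Proposition~\ref{Proposition:Affine} to bound $|\D_\Hm(\sv)\cap B_d|$, and reduces feasibility to the monotonicity $\eta_k\le\eta_\tau$, which you prove via Pascal's identity where the paper uses the ratio bound $\binom{k}{a}/\binom{\tau}{a}\le 2^{k-\tau}$. Your explicit entropic estimate $\eta_\tau\le 2^{-\tau(1-H(1/\gamma))}$ is a welcome addition, since the paper leaves the $o(1)$ claim implicit.
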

\begin{proof}
	We prove this by constructing a solution to the dual program. We fix $\gamma > 2$ and let $\tau = \lceil \gamma d \rceil$. We choose as dual solution 
	\begin{align*}
		b_\iv  = \frac{2^\tau}{2^\tau - \sum_{a = 0}^d \binom{\tau}{a}}  \text{ for } \iv \notin B_d \quad ; \quad 
			b_\iv  = 0  \text{ for } \iv \in B_d
	\end{align*}
	We now check each constraint. Pick $k \in \Iint{\tau}{n}$, $\Hm \in \Lambda_k$ as well as $\sv \in \F_2^{n-k}$. The dual constraint can be written 
	$$ \sum_{\iv \in \D_\Hm(\sv)} b_\iv \ge 2^k,$$
	and we now prove it is satisfied. We write 
	\begin{align*}
		\sum_{\iv \in \D_\Hm(\sv)} b_\iv & = \frac{2^\tau}{2^{\tau} - \sum_{a = 0}^d \binom{\tau}{a}} |\D_\Hm(\sv) \cap \overline{B_d}| \\
		& \ge \frac{2^\tau}{2^{\tau} - \sum_{a = 0}^d \binom{\tau}{a}} \left(2^k - \sum_{a = 0}^d \binom{k}{a}\right) & \text{ from Proposition}~\ref{Proposition:Affine} \\
		& \ge 2^k \frac{2^\tau - \frac{1}{2^{k - \tau}}\sum_{a = 0}^{d}\binom{k}{a}}{2^\tau - \sum_{a = 0}^{d} \binom{\tau}{a}}
	\end{align*}
	In order to conclude, notice that $d \le \tau/2$ 
	which allows us to write for each $a \in \Iint{0}{d}$
		$$\frac{\binom{k}{a}}{\binom{\tau}{a}} = \prod_{i=\tau}^{k-1} \frac{\binom{i+1}{a}}{\binom{i}{a}} = \prod_{i=\tau}^{k-1} \frac{i+1}{i+1-a} \le \prod_{i=\tau}^{k-1} \frac{i+1}{i+1-i/2} \le 2^{k-\tau}$$
	plugging this in the above inequality, we obtain indeed that 
	$$\sum_{\iv \in \D_\Hm(\sv)} b_\iv \ge 2^k \frac{2^\tau - \sum_{a = 0}^{d}\binom{\tau}{a}}{2^\tau - \sum_{a = 0}^{d} \binom{\tau}{a}} = 2^k,$$
	which concludes the proof. 
\end{proof}

The way to interpret this proposition is the following. Assume the set of states $S$ has the property that a $(1 - negl(n))$ of weight of the $|\halpha_\iv|$ lies in strings $\iv$ of weight smaller than $\frac{\tau}{2}$, then one can learn $\tau$ parities of $\xv$ from $\ket{\psi_\xv}$ only with negligible probability.

\subsection{The average number of parities setting}
\subsubsection{Upper bounds} \label{sec:avg_bounds}
In the average setting, we want to bound the average number of parities that can be learned unambiguously. This means we fix $C(k) = k$. We write our primal and dual objective function respectively $\rho^L_{Av}(S)$ and $\sigma^L_{Av}(S)$. In this setting, we will prove two bounds. The first bound is related to the average dual weight of $\ket{\psi_\zerov}$.

\begin{theorem}\label{Theorem:Hamming_bound}
	$$\sigma^L_{Av}(S) \le 2 \sum_{\iv \in \F_2^n} |\iv|_H|\halpha_\iv|^2.$$
\end{theorem}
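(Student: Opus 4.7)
The plan is to exhibit a feasible solution to the dual linear program whose objective value is exactly $2\sum_{\iv \in \F_2^n} |\iv|_H |\halpha_\iv|^2$, and then invoke weak duality (or rather the fact that any feasible dual solution is an upper bound on $\sigma^L_{Av}(S)$). The natural candidate is simply
\[
b_\iv \eqdef 2|\iv|_H, \qquad \iv \in \F_2^n.
\]
These are clearly nonnegative, and they give the claimed objective value, so the only thing to check is that the constraint
\[
\sum_{\iv \in \D_\Hm(\sv)} b_\iv \;\ge\; C(k)\, 2^k \;=\; k\, 2^k
\]
holds for every $k \in \Iint{0}{n}$, every $\Hm \in \Lambda_k$, and every $\sv \in \F_2^{n-k}$. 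Equivalently, writing $\D_\Hm(\sv) = \vv + W$ where $W = \{\trp{\Hm}\uv : \uv \in \F_2^k\}$ is a $k$-dimensional linear subspace of $\F_2^n$, one has to show the affine-subspace inequality
\[
\sum_{\wv \in W} |\wv + \vv|_H \;\ge\; k\, 2^{k-1}.
\]

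The natural way to proceed is coordinate by coordinate. Split the $n$ coordinates into the set $T = \{i : \forall \wv \in W,\ w_i = 0\}$ of ``frozen'' coordinates on $W$ and its complement $A = [n]\setminus T$ of ``active'' ones. For $i \in A$, the projection $W \to \F_2$, $\wv \mapsto w_i$, is a nonzero linear form, so exactly half the elements of $W$ have $w_i = 0$ and half have $w_i = 1$; hence $\sum_{\wv \in W}(\wv + \vv)_i = 2^{k-1}$, regardless of $v_i$. For $i \in T$ the contribution $\sum_{\wv \in W}(\wv+\vv)_i = 2^k v_i$ is nonnegative. Since $W \subseteq \{x : x_T = 0\}$, we have $k = \dim W \le n - |T|$, so $|A| \ge k$. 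Summing over all coordinates yields
\[
\sum_{\wv \in W} |\wv+\vv|_H \;=\; \sum_{i \in A} 2^{k-1} + \sum_{i \in T} 2^k v_i \;\ge\; |A|\, 2^{k-1} \;\ge\; k\, 2^{k-1},
\]
which is exactly the required inequality (after multiplying by $2$). Thus $(b_\iv)$ is dual-feasible and
\[
\sigma^L_{Av}(S) \;\le\; \sum_{\iv \in \F_2^n} b_\iv\,|\halpha_\iv|^2 \;=\; 2\sum_{\iv \in \F_2^n} |\iv|_H\,|\halpha_\iv|^2,
\]
proving the theorem. The only nontrivial step is the coordinate-counting argument, and the key input is the elementary fact that a $k$-dimensional subspace of $\F_2^n$ must be non-constant on at least $k$ coordinates; everything else is bookkeeping.
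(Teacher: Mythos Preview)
Your proof is correct and follows essentially the same approach as the paper: propose the dual candidate $b_\iv = 2|\iv|_H$ and verify the affine-subspace inequality $\sum_{\wv \in W}|\wv+\vv|_H \ge k\,2^{k-1}$. The only cosmetic difference is that the paper restricts to $k$ coordinates $I$ on which $\Hm_I$ is invertible and lower-bounds the Hamming weight by its restriction to $I$, whereas you work with the full set of active coordinates and note that there are at least $k$ of them; both arguments reduce to the same fact that a nonzero coordinate projection on a $k$-dimensional subspace is balanced.
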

\begin{proof}
	We take a dual solution $b_\iv = 2|\iv|_H$. We now prove that this solution satisfies the dual constraints. We first prove a lemma on the minimum average weight of vectors in an affine subspace of dimension $k$.
	\begin{proposition}
		For any $\Hm \in \F_2^{k \times n}$, for any $\sv \in \F_2^{n-k}$, we have $\sum_{\iv \in \D_\Hm(\sv)} 2|\iv|_H \ge k 2^k.$
	\end{proposition}
	\begin{proof}
		We fix $\Hm \in \F_2^{k \times n}$, for any $\sv \in \F_2^{n-k}$. Let $\Gm_\Hm \in \F_2^{(n-k) \times n}$ be the matrix such that $\D_\Hm(\sv) = \{\xv \in \F_2^n : \Gm_\Hm \xv = \sv\}$. We fix a vector $\vv$ such that $\Gm_\Hm \vv = \sv$, which means we can also write 
		$$ \D_\Hm(\sv) = \{\trp{\Hm}\xv + \vv : \xv \in \F_2^k\}.$$
		Now, let $I \subseteq \Iint{1}{n}$ with $|I| = k$ such that $\Hm_I$ (where we restrict columns to the ones with indices in $I$) is a  square matrix of full rank $k$. We write 
		$$ \sum_{\iv \in \D_\Hm(\sv)} |\iv|_H = \sum_{\xv \in \F_2^k} |\trp{\Hm}\xv + \vv|_H \ge \sum_{\iv \in \D_\Hm(\sv)} |\trp{(\Hm_I)}\xv + \vv_I|_H = \frac{k 2^k}{2}.$$
		where in the last equality, we used the fact that the $\trp{(\Hm_I)}\xv$ hits each element of $\F_2^k$ exactly once, since $\Hm_I$ is a square matrix of full rank $k$. \end{proof}
		
		From the above lemma, we immediately have that 
		$$ \sum_{\iv \in \D_\Hm(\sv)} b_\iv = 2 \sum_{\iv \in \D_\Hm(\sv)} |\iv|_H \ge k 2^k.$$  
	This implies that the $b_\iv = 2|\iv|_H$ satisfies the dual constraints. We therefore conclude that 
	$$\rho_{Av}(S) = \sigma^L_{Av}(S) \le \sum_{\iv \in \F_2^n} 2|\iv|_H|\halpha_\iv|^2.$$
\end{proof}
As a direct corollary, we have that if the $|\halpha_\iv|^2$ are fully concentrated around words of weight at most $d$, meaning that 
$\sum_{\iv : |\iv|_H \le d} |\halpha_\iv|^2 \ge 1-\eps$, then one can learn on average at most $2\left(d(1-\eps) + n\eps\right)$ parities which is $\approx 2d$ when $\eps \approx 0$. This theorem is therefore the average case equivalent of Theorem~\ref{Proposition:ThresholdBound}.

The above upper bound is sometimes optimal but in many instances, it is also far from optimal. We provide a second family of upper bounds for the average case setting.

\begin{proposition}\label{Proposition:22}
	$$\sigma^L_{Av}(S) \le (2^n+n-1)|\halpha_\zerov|^2 + (n-1) \sum_{\iv \in \F_2^n \setminus \{\zerov\}} |\halpha_\iv|^2.$$
\end{proposition}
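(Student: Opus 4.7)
The plan is to exhibit an explicit feasible dual solution whose objective value matches the claimed bound, and then invoke weak duality (which is really strong duality here by the preceding proposition).

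The natural guess, reading off the target expression, is to take $b_{\zerov} = 2^n + n - 1$ and $b_{\iv} = n - 1$ for $\iv \ne \zerov$. With this choice the dual objective is exactly
\[
\sum_{\iv \in \F_2^n} b_\iv |\halpha_\iv|^2 = (2^n + n - 1) |\halpha_\zerov|^2 + (n-1) \sum_{\iv \ne \zerov} |\halpha_\iv|^2,
\]
so the only thing to check is dual feasibility, namely that for every $k \in \Iint{0}{n}$, $\Hm \in \Lambda_k$ and $\sv \in \F_2^{n-k}$ one has $\sum_{\iv \in \D_\Hm(\sv)} b_\iv \ge k \cdot 2^k$.

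The verification is a two-case analysis based on whether $\zerov \in \D_\Hm(\sv)$. Note that $\D_\Hm(\sv) = \{\iv : \Gm_\Hm \iv = \sv\}$ is an affine subspace of $\F_2^n$ of dimension $k$, so $|\D_\Hm(\sv)| = 2^k$, and $\zerov \in \D_\Hm(\sv)$ iff $\sv = \zerov$.
\begin{itemize}
\item If $\zerov \notin \D_\Hm(\sv)$ (which forces $k \le n-1$), then every $\iv \in \D_\Hm(\sv)$ is nonzero, so $\sum_{\iv \in \D_\Hm(\sv)} b_\iv = (n-1) 2^k \ge k \cdot 2^k$ since $k \le n - 1$.
\item If $\zerov \in \D_\Hm(\sv)$, then
\[
\sum_{\iv \in \D_\Hm(\sv)} b_\iv = (2^n + n - 1) + (n-1)(2^k - 1) = 2^n + (n-1) 2^k,
\]
and the constraint $2^n + (n-1) 2^k \ge k \cdot 2^k$ rearranges to $2^n \ge (k - n + 1) 2^k$, which is trivial for $k \le n-1$ (RHS non-positive) and reduces to equality $2^n \ge 2^n$ when $k = n$.
\end{itemize}

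Hence $(b_\iv)_{\iv \in \F_2^n}$ is a feasible dual solution, and by strong duality (Proposition already established in this section) $\sigma^L_{Av}(S) \le \sum_\iv b_\iv |\halpha_\iv|^2$, which gives the claimed bound. I expect no real obstacle here: the only subtlety is picking the right ansatz for $b_\iv$, which is forced once one writes down the target expression; the feasibility check is then a short case split, and the borderline case $k = n$ is precisely what pins down the coefficient $2^n + n - 1$ at $\iv = \zerov$.
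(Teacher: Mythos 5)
Your proposal is correct and matches the paper's proof essentially verbatim: the same dual ansatz $b_{\zerov}=2^n+n-1$, $b_{\iv}=n-1$ otherwise, the same two-case feasibility check on whether $\zerov\in\D_\Hm(\sv)$, and the same arithmetic (your $2^n+(n-1)2^k$ is the paper's $n2^k+2^n-2^k$). Nothing to add.
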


\begin{proof}
We take a dual solution such that $b_\zerov = 2^n+n-1$ and $b_\iv = n-1$ for $\iv\ne\zerov\in\F_2^n$. We have to prove that this solution satisfies the dual constraints. Fix any $k \in \Iint{1}{n}$, $\Hm \in \Lambda_k$ and $\sv \in \F_2^{n-k}$. We have to prove that 
$$\sum_{\iv \in \D_\Hm(\sv)} b_\iv \ge k 2^k.$$
	We distinguish two cases: 
	\begin{itemize}
		\item Either $\zerov\in \D_\Hm(\sv)$ thus,
		$\left( \sum_{\iv\in \D_\Hm(\sv)} b_{\iv} \right) = 2^n+n-1 + (|\D_\Hm(\sv)| - 1) (n-1) = 2^n+n-1 + (2^k - 1) (n-1) = n2^k + 2^n - 2^k \ge k2^k$.
		\item Or $\zerov \notin \D_\Hm(\sv)$ thus,
		$\left( \sum_{\iv\in \D_\Hm(\sv)} b_{\iv} \right) =  |\D_\Hm(\sv)| (n-1) = 2^k (n-1) \ge k2^k$,
		note that this case can only occur when $k<n$, since for $k=n$, we necessarily have $\zerov \in \D_\Hm(\sv)$.
	\end{itemize}
	Consequently, the dual constraints are saturated precisely when $k=n$ or $k=n-1$ and $\zerov \notin \D_\Hm(\sv)$.
This implies that the $(b_\iv)_{\iv\in\F_2^n}$ satisfies the dual constraints. We therefore conclude that 
$$\sigma^L_{Av}(S) \le (2^n+n-1)|\halpha_\zerov|^2 + (n-1) \sum_{\iv \in \F_2^n \setminus \{\zerov\}} |\halpha_\iv|^2. \qedhere$$
\end{proof}

From the following proposition, additional dual solutions and upper bounds can be easily obtained.

\begin{proposition}\label{Proposition:dual_perm}
	Let $\bv = (b_\iv)_{\iv \in \F_2^n}$ a dual solution.
	Then, for all $f : \F_2^n \to \F_2^n \text{ such that } f(\xv) = \Pm\xv + \vv$
	with $\Pm \in GL_n(\F_2)$ and $\vv\in\F_2^n$,
	$\bv_f = \left( b_{f(\iv)} \right)_{\iv\in\F_2^n}$ is also a dual solution.
\end{proposition}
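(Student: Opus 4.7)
The plan is to reformulate the dual constraints in a form manifestly invariant under affine bijections, and then conclude by a change-of-variables argument. The key observation is that as $(\Hm,\sv)$ ranges over $\Lambda_k \times \F_2^{n-k}$, the cosets $\D_\Hm(\sv)$ range over exactly the set of all $k$-dimensional affine subspaces of $\F_2^n$. Indeed, $\D_\Hm(\sv)$ is a coset of $\Ker(\Gm_\Hm)$, which has dimension $k$ since $\Gm_\Hm \in \Lambda_{n-k}$; conversely, any $k$-dimensional affine subspace $V \subseteq \F_2^n$ can be written as the solution set of some full rank system $\Gm \xv = \sv'$ with $\Gm \in \Lambda_{n-k}$, and hence realized as $\D_{\Hm'}(\sv')$ by picking any $\Hm' \in \Lambda_k$ whose kernel is $\mathrm{Im}(\trp{\Gm})$ (possibly after relabeling $\sv'$ to match the fixed choice $\Gm_{\Hm'}$). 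Consequently, the dual constraints are equivalent to: for every $k \in \Iint{0}{n}$ and every $k$-dimensional affine subspace $V \subseteq \F_2^n$, $\sum_{\iv \in V} b_\iv \ge C(k) 2^k$.

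Next I would record that affine bijections preserve the dimension of affine subspaces. If $f(\xv) = \Pm \xv + \vv$ with $\Pm \in GL_n(\F_2)$ and $V = \xv_0 + W$ is a $k$-dimensional affine subspace (with $W$ a $k$-dimensional linear subspace), then
\[
f(V) = (\Pm\xv_0 + \vv) + \Pm W,
\]
and $\Pm W$ is $k$-dimensional because $\Pm$ is invertible. Hence $f$ permutes the $k$-dimensional affine subspaces of $\F_2^n$ for every $k$.

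Putting the two pieces together, I would check the dual constraints for $\bv_f$. Nonnegativity is immediate: $(\bv_f)_\iv = b_{f(\iv)} \ge 0$ since $\bv \ge \zerov$. For the linear constraints, fix any $k \in \Iint{0}{n}$, $\Hm \in \Lambda_k$, and $\sv \in \F_2^{n-k}$. Using the change of variable $\jv = f(\iv)$, which is a bijection of $\F_2^n$,
\[
\sum_{\iv \in \D_\Hm(\sv)} (\bv_f)_\iv \;=\; \sum_{\iv \in \D_\Hm(\sv)} b_{f(\iv)} \;=\; \sum_{\jv \in f(\D_\Hm(\sv))} b_\jv \;\ge\; C(k) 2^k,
\]
where the last inequality applies the reformulated dual constraint to $\bv$ at the $k$-dimensional affine subspace $f(\D_\Hm(\sv))$.

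There is no genuine obstacle in this argument; it is purely a symmetry statement. The only point requiring mild care is the bookkeeping in the first step: the map $(\Hm,\sv) \mapsto \D_\Hm(\sv)$ is many-to-one (it depends on $\Hm$ only through $\Ker(\Hm)$, via the fixed choice of $\Gm_\Hm$), but it is surjective onto the set of all $k$-dimensional affine subspaces, which is what the argument needs.
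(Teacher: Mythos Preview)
Your proof is correct and follows essentially the same approach as the paper's: both argue that $f(\D_\Hm(\sv))$ is again a dual coset of the same dimension and then apply the original constraint for $\bv$ there. The only cosmetic difference is that the paper explicitly computes $\Hm'=\Hm\trp{\Pm}$, $\Gm_{\Hm'}=\Gm_\Hm\Pm^{-1}$, and $\sv'=\sv+\Gm_\Hm\Pm^{-1}\vv$ so that $f(\D_\Hm(\sv))=\D_{\Hm'}(\sv')$, whereas you first recast the constraints as ranging over all $k$-dimensional affine subspaces and then use that affine bijections permute these; your remark about the many-to-one labeling $(\Hm,\sv)\mapsto\D_\Hm(\sv)$ neatly sidesteps the bookkeeping about whether $\Gm_\Hm\Pm^{-1}$ is the \emph{fixed} choice of $\Gm_{\Hm'}$.
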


The proof is deferred to Appendix \ref{Appendix:A}.

\begin{corollary}\label{Corollary:co-Hamming_bound}
	$$\sigma^L_{Av}(S) \le 2 \sum_{\iv \in \F_2^n} |\iv+\onev|_H|\halpha_\iv|^2.$$
\end{corollary}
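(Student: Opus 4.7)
The plan is to combine the dual solution from Theorem~\ref{Theorem:Hamming_bound} with the invariance result of Proposition~\ref{Proposition:dual_perm}. Recall that Theorem~\ref{Theorem:Hamming_bound} was established by exhibiting the explicit dual feasible vector $b_\iv = 2|\iv|_H$, which yields the bound $\sigma^L_{Av}(S) \le 2 \sum_{\iv \in \F_2^n} |\iv|_H |\halpha_\iv|^2$. Proposition~\ref{Proposition:dual_perm} tells us that precomposing any feasible dual solution with an affine bijection of $\F_2^n$ produces another feasible dual solution.

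The key step is therefore the right choice of affine map. Taking $\Pm = \IM$ and $\vv = \onev$, so that $f(\iv) = \iv + \onev$ is the bitwise complement, we obtain a new dual solution $\bv_f = (b_{f(\iv)})_{\iv \in \F_2^n} = (2|\iv + \onev|_H)_{\iv \in \F_2^n}$, which is feasible by Proposition~\ref{Proposition:dual_perm}. Evaluating the dual objective on $\bv_f$ immediately gives
\begin{equation*}
\sigma^L_{Av}(S) \le \sum_{\iv \in \F_2^n} (\bv_f)_\iv |\halpha_\iv|^2 = 2 \sum_{\iv \in \F_2^n} |\iv + \onev|_H |\halpha_\iv|^2,
\end{equation*}
which is the desired inequality.

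There is no real obstacle here: the entire content is an application of Proposition~\ref{Proposition:dual_perm} to the Hamming-weight dual solution, with the affine map chosen to replace $|\iv|_H$ by its dual quantity $|\iv + \onev|_H = n - |\iv|_H$. The only thing one has to check is that $f(\xv) = \xv + \onev$ fits the form $\Pm \xv + \vv$ with $\Pm \in GL_n(\F_2)$ (here $\Pm = \IM$), which is immediate.
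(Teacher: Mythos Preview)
Your proof is correct and follows essentially the same approach as the paper: start from the dual feasible solution $b_\iv = 2|\iv|_H$ established in Theorem~\ref{Theorem:Hamming_bound}, apply Proposition~\ref{Proposition:dual_perm} with $\Pm = \IM$ and $\vv = \onev$, and evaluate the dual objective on the resulting solution $b_\iv = 2|\iv+\onev|_H$.
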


\begin{proof}
	Let's show that $b_\iv =2|\iv+\onev|_H$ is a dual solution.
	By Theorem~\ref{Theorem:Hamming_bound}, $b_\iv=2|\iv|_H$ is a dual solution.
	Then, by Proposition~\ref{Proposition:dual_perm}, $b_\iv=2|\iv+\onev|_H$ is a dual solution (take $\Pm=I$ and $\vv=\onev$).
	The upper bound is derived immediately.
\end{proof}

\paragraph{The specific case of $n=2$}
For the specific case of $n=2$, we provide a complete characterization of the optimal dual solution. We show that it matches the bound of Theorem~\ref{Theorem:6} or of Proposition~\ref{Proposition:22}, up to symmetries on the indices. We prove this in Appendix~\ref{Appendix:n=2}. 

In the general case however, there are examples where none of these two bounds achieve the optimal solution.

\subsubsection{Matching primal solutions}
In section \ref{sec:avg_bounds}, we derive upper bounds on the primal program in the average setting using feasible solutions of the dual program.
Notice that our linear programs are parametrized by the Fourier coefficients of the states.
In this section, we give \emph{primal candidate solutions} associated to these dual solutions but they \emph{may not respect the nonnegativity constraints}.
For some parameters, that is, for some sets $S$ of symmetric states, these candidate primal solutions are nonnegative and by the complementary slackness theorem, this indicates that the corresponding upper bound is attained by the primal program.
Thus, we do not establish directly the nonnegativity of the candidate primal solutions, as our focus lies in the analysis of a parametric optimization problem.
We only say that the upper bound obtained from the dual equals the optimal value in the parameter region where the corresponding primal solution is nonnegative.

In this section only, we will use the following definitions that will be more convenient to state the matching primal solutions.
First, remark that a given information about $\xv$ is not uniquely determined by the choice of $(\Hm,\yv)$.
For instance, consider
$$ \Hm = \begin{pmatrix} 1 $ 0 $ 1 \\ 0 $ 1 $ 1 \end{pmatrix} \textrm{ and } \yv = \begin{pmatrix} 1 \\ 0 \end{pmatrix} \textrm{ then } \Hm \trsp{\xv} = \trsp{\yv} \Leftrightarrow x_1 \oplus x_3 = 1 \wedge x_2 \oplus x_3 = 0,$$
$$ \Hm' = \begin{pmatrix} 1 $ 0 $ 1 \\ 1 $ 1 $ 0 \end{pmatrix} \textrm{ and } \yv' = \begin{pmatrix} 1 \\ 1 \end{pmatrix} \textrm{ then } \Hm' \trsp{\xv} = \trsp{\yv'} \Leftrightarrow x_1 \oplus x_3 = 1 \wedge x_1 \oplus x_2 = 1.$$
One can easily check that $\Hm \trsp{\xv} = \trsp{\yv} \Leftrightarrow \Hm' \trsp{\xv} = \trsp{\yv'}$ which means the same information is described in two different ways. This comes from the fact that the lines of the two matrices $\Hm,\Hm'$ actually generate the same $2$-dimensional subspace of $\F_2^3$.
To circumvent this issue, we redefine our sets $\Lambda_k$ without duplicates.

\begin{definition}\label{Definition:LambdaTilde}
	For $0 \leq k \leq n$, we define $\Tilde{\Lambda}_k$ as the set of all parity-check matrices 
	$\Hm \in \F_2^{k \times n}$ of full rank $k$, taken \emph{without duplicates}, 
	meaning that two matrices $\Hm,\Hm'$ are identified whenever they generate the same code 
	$\{\xv \in \F_2^n \mid \Hm\xv = \zerov\}$.  
	Equivalently, $\Tilde{\Lambda}_k$ contains exactly one representative matrix for each 
	$[n,n-k]$ linear code over $\F_2$.
\end{definition}

We fix also a notation for the coset leader and we introduce a specific set of parity matrices.

\begin{definition}\label{def:coset_leader}
	We denote by $\rv_\Hm^{\text{min}}(\sv)$ and by $\rv_\Hm^{\text{MAX}}(\sv)$, the minimum weight representative of the dual coset $\D_\Hm(\sv)$ (called the coset leader) and the maximum weight representative of the dual coset $\D_\Hm(\sv)$ respectively.
	Formally,
	$$\rv_\Hm^{\text{min}}(\sv) = \arg\min_{\xv\in\D_\Hm(\sv)} |\xv|_H \quad \text{ and } \quad \rv_\Hm^{\text{MAX}}(\sv) = \arg\max_{\xv\in\D_\Hm(\sv)} |\xv|_H.$$
\end{definition}

\begin{definition}\label{Definition:mathcalE}
	Let \( n \in \mathbb{N} \). We define \( \mathcal{E}_k^n \) as the set of all ordered submatrices of the identity matrix \( I_n \), obtained by selecting \( k \) distinct rows of \( I_n \) in increasing order of their indices, for \( 0 \leq k \leq n \).
	Formally,
	$$
	\mathcal{E}_k^n = \left\{ 
	\begin{pmatrix}
		\mathbf{e}_{i_1} \\
		\mathbf{e}_{i_2} \\
		\vdots \\
		\mathbf{e}_{i_k}
	\end{pmatrix}
	\;\middle|\;
	1 \leq i_1 < i_2 < \cdots < i_k \leq n
	\right\},
	$$
	where \( \mathbf{e}_j \in \mathbb{F}_2^n \) denotes the \( j \)-th canonical basis vector.
\end{definition}

Furthermore, let $\onev$ = $1\dots1$ denote the all-ones vector. We are now in position to state our solutions.

\begin{proposition}[Hamming solution]\label{Proposition:primal_12}
	The dual solution $b_\iv = 2|\iv|_H \text{ for } \iv\in\F_2^n$
	is associated to the primal candidate solution
	\[ \lambda_\iv^\Hm = \left\{
	\begin{array}{ll}
		\frac{(-1)^{n-k} \sum_{\sv\in\F_2^{n-k}} (-1)^{|\sv|_H} |\halpha_{\rv_\Hm^{\text{MAX}}(\sv)}|^2}{|\halpha_\iv|^2}
		& \text{ for } k\in\Iint{0}{n}, \ \Hm\in\mathcal{E}_k^n, \ \iv\in \D_\Hm(\zerov), \\
		0 & \text{ otherwise}.
	\end{array}
	\right. \]
\end{proposition}

\begin{proposition}[co-Hamming solution]\label{Proposition:primal_1}
	The dual solution $b_\iv = 2|\iv+\onev|_H \text{ for } \iv\in\F_2^n$
	is associated to the primal candidate solution
	\[ \lambda_\iv^\Hm = \left\{
	\begin{array}{ll}
		\frac{(-1)^{n-k} \sum_{\sv\in\F_2^{n-k}} (-1)^{|\sv|_H} |\halpha_{\rv_\Hm^{\text{min}}(\sv)}|^2}{|\halpha_\iv|^2}
		& \text{ for } k\in\Iint{0}{n}, \ \Hm\in\mathcal{E}_k^n, \ \iv\in \D_\Hm(\onev), \\
		0 & \text{ otherwise}.
	\end{array}
	\right. \]
\end{proposition}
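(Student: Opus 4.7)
The plan is to verify that the stated primal candidate $\{\lambda_\iv^\Hm\}$ satisfies the primal equality constraints \ding{172} and \ding{173} of the linear program, and that it is in complementary slackness with the co-Hamming dual $b_\iv = 2|\iv+\onev|_H$ whose feasibility is already settled in Corollary~\ref{Corollary:co-Hamming_bound}. These three facts together force the primal and dual objective values to agree, which is the sense in which the two are ``associated'' in the authors' convention from Section~\ref{sec:avg_bounds}; nonnegativity of the $\lambda_\iv^\Hm$ is explicitly excluded from what must be checked.

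Constraint \ding{173} is immediate. Setting $N_\Hm := (-1)^{n-k}\sum_{\sv}(-1)^{|\sv|_H}|\halpha_{\rv_\Hm^{\mathrm{min}}(\sv)}|^2$, we have $\lambda_\iv^\Hm|\halpha_\iv|^2 = N_\Hm$ for every $\iv\in\D_\Hm(\onev)$ (independent of $\iv$) and $\lambda_\iv^\Hm|\halpha_\iv|^2 = 0$ for $\iv$ in any other coset, so the product is constant on each dual coset. Complementary slackness is also short: using that for $\Hm\in\Tilde{\Lambda}_k$ the $J_\Hm$-coordinates of $\iv\in\D_\Hm(\sv)$ are fixed by $\sv$ while the $I_\Hm$-coordinates range freely, one computes
\[ \sum_{\iv\in\D_\Hm(\sv)} 2|\iv+\onev|_H \;=\; 2^k\bigl[\,2(n-|\sv|_H) - k\,\bigr], \]
which equals $C(k)\,2^k = k\,2^k$ exactly when $|\sv|_H = n-k$, i.e.\ when $\sv = \onev$. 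This is precisely the set of cosets on which the primal candidate has support, so complementary slackness holds.

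The hard part is constraint \ding{172}. Fix $\iv\in\F_2^n$ and let $B := \mathrm{supp}(\iv+\onev)$. For $\Hm\in\mathcal{E}_k^n$ with row-index set $A := I_\Hm$, the condition $\iv\in\D_\Hm(\onev)$ is equivalent to $A\supseteq B$. Using the explicit description of $\rv_\Hm^{\mathrm{min}}(\sv)$ as the unique vector in $\D_\Hm(\sv)$ supported on $J_\Hm = [n]\setminus A$, one rewrites
\[ N_\Hm = (-1)^{n-|A|}\,Q(A), \qquad Q(A) := \sum_{\tv\,:\,\mathrm{supp}(\tv)\cap A = \emptyset}(-1)^{|\tv|_H}\,|\halpha_\tv|^2, \]
so that \ding{172} reduces to showing $\sum_{A\supseteq B}(-1)^{n-|A|}\,Q(A) = |\halpha_\iv|^2$. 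I would swap the two sums and apply the identity $\sum_{D\subseteq E}(-1)^{|D|} = \one[E=\emptyset]$: the only $\tv$ with nonvanishing inner sum must satisfy both $\mathrm{supp}(\tv)\cap B = \emptyset$ and $\mathrm{supp}(\tv)\cup B = [n]$, which forces $\mathrm{supp}(\tv) = [n]\setminus B = \mathrm{supp}(\iv)$ and hence $\tv = \iv$. The surviving sign is $(-1)^{|\iv|_H}\cdot(-1)^{n-|B|} = (-1)^{2|\iv|_H} = 1$, giving the claim. The main obstacle is the careful bookkeeping of the two sign sources $(-1)^{n-|A|}$ and $(-1)^{|\tv|_H}$, whose cancellation is exactly what isolates $\tv = \iv$.

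Combining \ding{172}, \ding{173} and complementary slackness in the standard weak-duality chain $\sum_\iv b_\iv|\halpha_\iv|^2 = \sum_{\iv,\Hm} b_\iv|\halpha_\iv|^2\lambda_\iv^\Hm = \sum_{k,\Hm,\iv} C(k)|\halpha_\iv|^2\lambda_\iv^\Hm$ then yields equality of the two objective values, closing the argument. As a conceptual remark (not needed for the proof itself), applying the symmetry of Proposition~\ref{Proposition:dual_perm} with $\Pm = I$ and $\vv = \onev$ to the Hamming pair of Proposition~\ref{Proposition:primal_12} produces exactly the co-Hamming formula, with $\rv_\Hm^{\mathrm{MAX}}$ and $\D_\Hm(\zerov)$ replaced by $\rv_\Hm^{\mathrm{min}}$ and $\D_\Hm(\onev)$ respectively, explaining the formal parallel between the two propositions.
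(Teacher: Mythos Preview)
Your proof is correct and follows the same overall plan as the paper's: verify the primal normalization constraint \ding{172} by showing that the coefficient of each $|\halpha_\tv|^2$ in $\sum_\Hm \lambda_\iv^\Hm$ collapses to $\one[\tv=\iv]$, then conclude equality of the primal and dual objectives. Where the paper builds up through three preparatory lemmas (Lemmas~\ref{lem:coset-leader-E}, \ref{lem:unique_kH}, \ref{lem:N_k_i_x_correct}) and computes that coefficient via the counting function $N(k,\iv,\xv)$ and an alternating binomial sum, you get there in one stroke by indexing over the row-index sets $A\supseteq B$ and applying $\sum_{D\subseteq E}(-1)^{|D|}=\one[E=\emptyset]$ directly; these are the same inclusion--exclusion computation, yours simply bypasses the intermediate bookkeeping. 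You also make the complementary-slackness check (that the co-Hamming dual constraint is tight precisely on the cosets $\D_\Hm(\onev)$ supporting the primal) explicit and then run the weak-duality chain, whereas the paper instead establishes equality of objectives by directly recomputing the primal value in the paragraph following its proof. Both routes are valid; yours is a bit more systematic from the LP-duality standpoint, the paper's a bit more hands-on.
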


\begin{proposition}[Spike solution]\label{Proposition:primal_2}
	The dual solution $b_\zerov = 2^n+n-1 \text{ and } b_\iv = n-1 \text{ for } \iv\ne\zerov\in\F_2^n$
	is associated to the primal candidate solution
	$$ \lambda_\iv^\Hm = \left\{
	\begin{array}{ll}
		\frac{|\halpha_\zerov|^2}{|\halpha_\iv|^2} & \text{ for } \Hm=I, \ \iv\in\F_2^n, \\
		\frac{\sum_{\xv\in\D_\Hm(1)} |\halpha_{\xv}|^2 - \sum_{\xv\in\D_\Hm(0)} |\halpha_{\xv}|^2}{2^{n-1}|\halpha_{\iv}|^2} & \text{ for } \Hm\in\Tilde{\Lambda}_{n-1}, \ \iv\in\D_\Hm(\onev), \\
		0 & \text{ otherwise}.
	\end{array}
	\right. $$
\end{proposition}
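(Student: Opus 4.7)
The plan is to verify that the proposed $(\lambda_\iv^\Hm)$ satisfies the primal equality constraints \ding{172} and \ding{173} and that its objective value equals that of the dual solution; by weak duality, this suffices (in the parameter region where the $\lambda_\iv^\Hm$ are nonnegative) to certify joint optimality. The three verifications are carried out separately.

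First, I check \ding{173}. For $\Hm = I$ there is a unique dual coset $\D_\Hm = \F_2^n$ and $\lambda_\iv^I \, |\halpha_\iv|^2 = |\halpha_\zerov|^2$ is constant in $\iv$. For $\Hm\in\Tilde{\Lambda}_{n-1}$, the numerator of $\lambda_\iv^\Hm$ depends only on $\Hm$, so $\lambda_\iv^\Hm \,|\halpha_\iv|^2$ is constant on $\D_\Hm(\onev)$, and vanishes on $\D_\Hm(\zerov)$ where the solution is zero; for all other $\Hm$ the variable is zero. This makes \ding{173} immediate.

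The main obstacle is \ding{172}. I use the bijection between $\Tilde{\Lambda}_{n-1}$ and the non-zero vectors $\gv \in \F_2^n\setminus\{\zerov\}$: the matrix $\Hm_\gv$ has $\Ker(\Hm_\gv) = \{\zerov,\gv\}$, so $\D_{\Hm_\gv}(s) = \{\xv : \gv\cdot \xv = s\}$, and $\iv\in \D_{\Hm_\gv}(\onev)$ iff $\gv\cdot\iv = 1$. Multiplying \ding{172} by $|\halpha_\iv|^2$, the constraint becomes
$$
|\halpha_\zerov|^2 + \frac{1}{2^{n-1}}\sum_{\substack{\gv\ne\zerov \\ \gv\cdot\iv = 1}} \left(\sum_{\xv:\gv\cdot\xv = 1}|\halpha_\xv|^2 - \sum_{\xv:\gv\cdot\xv = 0}|\halpha_\xv|^2\right) = |\halpha_\iv|^2.
$$
Rewriting the parenthesis as $-\sum_\xv (-1)^{\gv\cdot\xv}|\halpha_\xv|^2$ and swapping sums, the key inner quantity $\sum_{\gv\cdot\iv = 1}(-1)^{\gv\cdot\xv}$ is the character sum over the affine hyperplane $\{\gv : \gv\cdot\iv = 1\}$, which by Proposition~\ref{Proposition:CodeSum} applied to the code $\{\gv : \gv\cdot\iv = 0\}$ equals $2^{n-1}$ if $\xv = \zerov$, $-2^{n-1}$ if $\xv = \iv$, and $0$ otherwise (for $\iv\ne\zerov$). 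Plugging back gives $2^{n-1}(|\halpha_\iv|^2 - |\halpha_\zerov|^2)$, which matches the right-hand side; the case $\iv = \zerov$ is trivial since the constraint then reads $|\halpha_\zerov|^2 = |\halpha_\zerov|^2$.

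Finally, I compute both objectives directly. The primal splits into the $k = n$ contribution $n\cdot 2^n |\halpha_\zerov|^2$ (from $\Hm = I$) and the $k = n-1$ contribution $(n-1)\sum_{\gv\ne\zerov}\left(-\sum_\xv (-1)^{\gv\cdot\xv}|\halpha_\xv|^2\right)$ after summing each coset term over its $2^{n-1}$ elements. Using $\sum_{\gv\in\F_2^n}(-1)^{\gv\cdot\xv} = 2^n\,\one_{\xv = \zerov}$ and subtracting the $\gv = \zerov$ term, the latter simplifies to $-(n-1)(2^n|\halpha_\zerov|^2 - 1)$. The total primal value is $2^n|\halpha_\zerov|^2 + (n-1)$, which, using $\sum_\iv|\halpha_\iv|^2 = 1$, equals $(2^n + n - 1)|\halpha_\zerov|^2 + (n-1)\sum_{\iv\ne\zerov}|\halpha_\iv|^2$, the dual value from Proposition~\ref{Proposition:22}. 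The bulk of the work is the hyperplane character sum in step \ding{172}; everything else is arithmetic bookkeeping.
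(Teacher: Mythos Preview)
Your proof is correct and follows essentially the same approach as the paper. Both verify the primal constraint $\sum_\Hm \lambda_\iv^\Hm = 1$ by parametrizing $\Tilde{\Lambda}_{n-1}$ via nonzero $\gv\in\F_2^n$ and reducing to a computation over the affine hyperplane $\{\gv:\gv\cdot\iv=1\}$; the paper does this via the counting identity $N_1(\xv)=N_0(\xv)=2^{n-2}$ for $\xv\notin\{\zerov,\iv\}$, while you phrase the same fact as the vanishing of the character sum $\sum_{\gv\cdot\iv=1}(-1)^{\gv\cdot\xv}$, which is a cleaner packaging of the identical idea. You also check constraint~\ding{173} explicitly, which the paper leaves implicit, and your objective computation matches the paper's verification line for line.
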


The proofs are deferred to Appendix \ref{Appendix:A}.

\section{Efficient constructions}
Now, we investigate how the fine-grained unambiguous measurement can be implemented by quantum circuits given access to a primal solution ($\lambda_\iv^\Hm$). Our main observation is that the fine-grained unambiguous measurement reduces to the preparation of some controlled quantum state.
Assuming access to such a unitary, the fine-grained unambiguous measurement can be performed efficiently.

\begin{theorem}\label{thm:construction}
	Let $S = \{\ket{\psi_{\xv}}\}$ be a set of symmetric states with $\ket{\psi_{\zerov}} = \sum_{\iv \in \F_2^n} |\halpha_\iv|^2$ and let $(\lambda_\iv^{\Hm,\yv})$ be a primal solution, {\ie} an ensemble of nonnegative reals satisfying Equation~\ref{Eq:I12} and Equation~\ref{Eq:I1}. Assume that these numbers are efficiently quantum sampleable {\ie} that the unitary 
	
	\begin{equation}\label{eq:impl_U}
		U : \ket{\iv}\ket{0} \mapsto \sum_{k \in \Iint{0}{n}} \sum_{\Hm \in \Lambda_k} \sqrt{\lambda_\iv^\Hm} \frac{|\halpha_\iv|}{\halpha_\iv} \ket{\iv}\ket{\Hm},
	\end{equation}
	can be computed in time $poly(n)$. Then we can construct a POVM $\FHy \in \Gamma(S)$ such that 
	\begin{enumerate}
		\item $\rho(S;\FHy) = \rho^L(S;(\lambda_\iv^\Hm))$.
		\item The POVM $\FHy$ can be efficiently implemented in time $poly(n)$.
	\end{enumerate}
\end{theorem}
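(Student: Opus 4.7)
The plan is to directly exhibit an efficient quantum circuit whose induced POVM lies in $\Gamma(S)$ and achieves the target score, rather than implementing the specific construction of Proposition~\ref{Proposition:LinearReverse} verbatim. After initializing an ancilla register to $\ket{0}$, the circuit applies $H^{\otimes n}$ on the input register, then the given unitary $U$ on the combined register, then $H^{\otimes n}$ again on the input register, and finally measures both registers in the computational basis to obtain a pair $(\jv,\Hm)$; it returns the classical label $(\Hm,\yv)$ with $\yv := \Hm\jv$. A direct propagation of the three unitaries on the input $\ket{\psi_\xv}\ket{0}$ shows that the pre-measurement state is
\[
\frac{1}{\sqrt{2^n}} \sum_{\jv,\Hm} A(\jv,\Hm)\,\ket{\jv}\ket{\Hm}, \qquad A(\jv,\Hm) \;=\; \sum_{\iv \in \F_2^n} |\halpha_\iv|\sqrt{\lambda_\iv^\Hm}\,(-1)^{\iv\cdot(\xv+\jv)}.
\]

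The heart of the argument is a careful analysis of $A(\jv,\Hm)$ via the coset decomposition $\iv = \trp{\Hm}\uv + \vv_\sv$ with $\uv \in \F_2^k$ and $\sv \in \F_2^{n-k}$. The primal constraint \ding{173} (the compact-LP version of Equation~\ref{Eq:I1}) guarantees that $\mu_{\Hm,\sv} \eqdef |\halpha_\iv|\sqrt{\lambda_\iv^\Hm}$ is well defined as a function of $\sv$ alone, independently of the chosen coset representative $\iv \in \D_\Hm(\sv)$. The double sum then factors as $\sum_\sv \mu_{\Hm,\sv}(-1)^{\vv_\sv\cdot(\xv+\jv)} \sum_\uv (-1)^{\uv\cdot \Hm(\xv+\jv)}$, and the inner sum collapses via the orthogonality $\sum_\uv (-1)^{\uv\cdot\wv} = 2^k\mathbb{1}[\wv=0]$ to give
\[
A(\jv,\Hm) \;=\; 2^k\,\mathbb{1}[\Hm\jv=\Hm\xv]\,\sum_{\sv\in \F_2^{n-k}}\mu_{\Hm,\sv}(-1)^{\vv_\sv\cdot(\xv+\jv)}.
\]
The indicator factor immediately establishes unambiguity: the amplitude vanishes whenever $\Hm\jv \ne \Hm\xv$, so no outcome $(\Hm,\yv)$ with $\yv \ne \Hm\xv$ ever occurs. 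Grouping the Kraus operators $\kb{\jv}\otimes\kb{\Hm}$ according to $\yv=\Hm\jv$ therefore produces a POVM $\FHy \in \Gamma(S)$.

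To evaluate the score I restrict to the preimages $\jv = \xv + \trp{\Gm_\Hm}\rv$ with $\rv \in \F_2^{n-k}$, use $\vv_\sv \cdot \trp{\Gm_\Hm}\rv = \sv \cdot \rv$, and apply a second Fourier orthogonality to obtain
\[
\Pr\bigl[(\Hm,\Hm\xv)\mid \xv\bigr] \;=\; \sum_\rv \frac{2^{2k}}{2^n}\,\Bigl|\sum_\sv \mu_{\Hm,\sv}(-1)^{\sv\cdot\rv}\Bigr|^2 \;=\; 2^k \sum_\sv \mu_{\Hm,\sv}^2 \;=\; \sum_\iv |\halpha_\iv|^2\,\lambda_\iv^\Hm,
\]
which is notably independent of $\xv$. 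Weighting by $C(k)$ and summing over $\Hm$ (the $\yv$-sum is trivial by unambiguity) yields $\rho(S;\FHy) = \sum_\Hm C(k)\sum_\iv |\halpha_\iv|^2 \lambda_\iv^\Hm = \rho^L(S;(\lambda_\iv^\Hm))$, establishing claim~(1). Claim~(2) is essentially free: each $H^{\otimes n}$ costs $n$ gates, $U$ is efficient by hypothesis, the final computational-basis measurement is trivially polynomial, and the classical postprocessing $\yv \gets \Hm\jv$ is a matrix-vector product over $\F_2$. The main technical obstacle to be verified carefully is the coset-constancy of $\mu_{\Hm,\sv}$: without constraint \ding{173}, $\mu_{\Hm,\sv}$ would not be well defined, the factorization of the $\uv$-sum would fail, and both unambiguity and the clean probability formula would break down, so this single algebraic identity is what makes the whole construction work.
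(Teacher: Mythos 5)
Your proposal is correct and follows essentially the same route as the paper's proof: conjugate $U$ by Hadamards, decompose the Fourier index over the dual cosets of $\Hm$, use constraint \ding{173} to see that $|\halpha_\iv|\sqrt{\lambda_\iv^\Hm}$ is constant on each coset $\D_\Hm(\sv)$, and collapse the inner sum by Fourier orthogonality to get both unambiguity and the probability $\sum_\iv \lambda_\iv^\Hm|\halpha_\iv|^2$. The one genuine (and welcome) deviation is the readout: the paper applies a coherent $\Hm$-controlled isometry $\ket{\hi}\mapsto\ket{\htv}\ket{\sv}$ so that a register literally holds $\Hm\xv$, whereas you measure $\jv$ in the computational basis and compute $\yv=\Hm\jv$ classically, recovering unambiguity from the indicator $\mathbb{1}[\Hm\jv=\Hm\xv]$ and the success probability via Parseval over $\rv$; this sidesteps the need to implement that $\Hm$-dependent isometry efficiently, which the paper does not justify, so your version of claim (2) is actually cleaner.
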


\begin{proof}
	Consider $k\in\Iint{0}{n}$, $\Hm \in\Lambda_k$, $\Gm_\Hm$ its associated matrix and $\sv\in\F_2^{n-k}$.
	By Theorem~$\ref{Theorem:LambdaRelations}$, for all $\iv\in\F_2^n$ such that $\iv\in\D_\Hm(\sv)$, there exist real numbers $\beta_{\sv}^{\Hm}$ such that the following relation is true:
	\begin{equation}
	\label{eq:impl_coset}
		\lambda_\iv^\Hm |\halpha_\iv|^2 = (\betav_\sv^\Hm)^2.
	\end{equation}
	Let $\sv\in\F_2^{n-k}$ and $\iv\in\F_2^n$ such that $\iv\in\D_\Hm(\sv)$, $\iv$ can be decomposed uniquely as $\iv = \Hm^\top \tv + \vv_\sv$ with $\tv\in\F_2^k$ and $\vv_\sv$ the coset leader of $\D_\Hm(\sv)$.
	Then for any $\xv \in \F_2^n$,
	\begin{equation}\label{eq:impl_trick}
		\xv\cdot\iv = \xv\cdot(\Hm^\top \tv+\vv_\sv) = (\Hm\xv)\cdot\tv+\xv\cdot\vv_\sv
	\end{equation}
	We define $\tilde{U} = (H^{\otimes n}\otimes I)U(H^{\otimes n}\otimes I)$.
	$$\tilde{U}: \ket{\hi}\ket{0} \mapsto \sum_{k \in \Iint{0}{n}} \sum_{\Hm \in \Lambda_k} \sqrt{\lambda_\iv^\Hm} \frac{|\halpha_\iv|}{\halpha_\iv} \ket{\hi}\ket{\Hm}.$$
	Recall that we can write each $\ket{\psi_\xv} = \sum_{\iv \in \F_2^n} (-1)^{\iv \cdot \xv} \halpha_\iv \ket{\hi}$. We therefore obtain
	\begin{align*}
		\tilde{U}\ket{\psi_\xv}\ket{0}
		= & \sum_{\iv\in\F_2^n} (-1)^{\iv\cdot\xv}\halpha_\iv \ket{\hi} \sum_{\Hm\in\Lambda} \sqrt{\lambda_\iv^\Hm} \frac{|\halpha_\iv|}{\halpha_\iv} \ket{\Hm} \\
		= & \sum_{\Hm\in\Lambda} \sum_{\sv\in\F_2^{n-k}} \betav_\sv^\Hm \left( \sum_{\iv\in\F_2^n \mid \iv\in\D_\Hm(\sv)} (-1)^{\iv \cdot \xv} \ket{\hx} \right) \ket{\Hm} & \text{from } \eqref{eq:impl_coset}
	\end{align*}
	We use the decomposition of $\iv$ and we apply the isometry $\ket{\hi} \mapsto \ket{\htv}\ket{\sv}$.
	\begin{align*}
		 & \sum_{\Hm\in\Lambda} \sum_{\sv\in\F_2^{n-k}} (-1)^{\xv\cdot\vv_\sv} \betav_\sv^\Hm \left( \sum_{\tv\in\F_2^k} (-1)^{\Hm\xv \cdot \tv} \ket{\htv} \right) \ket{\sv}\ket{\Hm} & \text{from } \eqref{eq:impl_trick} \\
		= & \sum_{\Hm\in\Lambda} \ket{\Hm\xv} \left(\sum_{\sv\in\F_2^{n-k}} (-1)^{\xv\cdot\vv_\sv} \sqrt{2^k} \betav_\sv^\Hm \ket{\sv} \right) \ket{\Hm} & \text{from } \eqref{def:action_qft}
	\end{align*}
	We can now measure all the qubits in the computational basis.
	The measurement outputs $\Hm, \Hm\xv$ with probability
	$$\Pr((\Hm,\Hm\xv)) = \sum_{\sv\in\F_2^{n-k}} 2^k(\betav_\sv^\Hm)^2 = \sum_{\sv\in\F_2^{n-k}} \sum_{\iv\in\D_\Hm(\sv)} (\betav_\sv^\Hm)^2 = \sum_{\sv\in\F_2^{n-k}} \sum_{\iv\in\D_\Hm(\sv)} \lambda_\iv^\Hm|\halpha_\iv|^2 = \sum_{\iv\in\F_2^n} \lambda_\iv^\Hm|\halpha_\iv|^2$$
	as the $\{\ket{\sv}\}_{\sv\in\F_2^{n-k}}$ are orthonormal.
\end{proof}


\COMMENT{
\begin{definition}[Controlled Quantum State Preparation]
Let $S$ be a finite index set, and for each $s \in S$, let $f_s : \F_2^n \to \mathbb{C}$ be a function such that
$\norm{f_s}_2=1$.
The problem of controlled quantum state preparation consists in constructing a unitary operator $U$ acting on two registers such that
$$U\ket{s}\ket{0} = \ket{s} \sum_{\xv \in \F_2^n} f_s(\xv) \ket{\xv}, \quad \forall s \in S.$$
\end{definition}

Consequently, implementing the fine-grained unambiguous measurement reduces to the controlled quantum state preparation, considering $S = \F_2^n$ and for $\iv\in\F_2^n, \ f_\iv : \Hm\in\Lambda \mapsto \frac{1}{\sqrt{|\Lambda|}}\sum_{\Hm\in\Lambda} \lambda_\iv^\Hm \frac{|\halpha_\iv|}{\halpha_\iv} \ket{\Hm}$.
This means that an efficient procedure for controlled quantum state preparation directly yields an efficient implementation of the fine-grained unambiguous measurement.}
\COMMENT{\section{Discussion: Application to $\SLPN$}
\begin{problemBox}
	\textbf{Input:} a matrix $\Gm \in \mathbb{F}_2^{k \times n}$, a function $f : \F_2^n \rightarrow \mathbb{C}$. \\
	
	\textbf{Goal:} Given $\Gm$ and $\ket{\psi_\sv} = \sum_{\ev \in \mathbb{F}_2^n} f(\ev) \ket{\sv \Gm + \ev},$ for a uniformly random $\sv \in \F_2^k$, recover $\sv$.
\end{problemBox}
\paragraph{Remark.} We follow the convention used in the coding theory community, where $k$ denotes the size of the secret (\ie the dimension of the code generated by $\Gm$), and $n$ denotes the number of samples (\ie the length of the code). In contrast, in the lattice-based cryptography community, the size of the secret is typically denoted by $n$, and the number of samples by $m$.

A natural question that was raised in~\cite{CT24} was whether it was possible to improve the Unambiguous Measurement used in this reduction to 

We answer this question by the negative. From our bound, we know that any Unambiguous Measurement used in order to solve the Quantum Decoding Problem will not give dual codewords of weight smaller than the ones given by Gaussian elimination.  
}


\section*{Acknowledgments} We acknowledge funding from the French PEPR integrated projects EPIQ (ANR-22-PETQ-007), PQ-TLS (ANR-22-PETQ-008) and HQI (ANR-22-PNCQ-0002) all part of plan France 2030 as well as the QuantERA QuantaGENOMICS project under Grant Agreement No. 101017733.

\newpage

\nocite{*}
\bibliographystyle{alpha}
\bibliography{biblio}

\newcommand{\etalchar}[1]{$^{#1}$}
\begin{thebibliography}{JSW{\etalchar{+}}25}

\bibitem[BJK{\etalchar{+}}25]{BGJ+25}
Shi Bai, Hansraj Jangir, Elena Kirshanova, Tran Ngo, and William Youmans.
\newblock A quasi‑polynomial time algorithm for the extrapolated dihedral
  coset problem over power‑of‑two moduli.
\newblock In {\em Advances in Cryptology – CRYPTO 2025, Lecture Notes in
  Computer Science}. Springer, 2025.
\newblock CRYPTO 2025 proceedings, paper ID 35664.

\bibitem[CB98]{CB98}
Anthony Chefles and Stephen~M. Barnett.
\newblock Optimum unambiguous discrimination between linearly independent
  symmetric states.
\newblock {\em Physics Letters A}, 250(4):223--229, 1998.

\bibitem[CH25]{CH25}
André Chailloux and Paul Hermouet.
\newblock On the quantum equivalence between ${S|LWE\rangle}$ and {ISIS}, 2025.
\newblock arXiv:2510.06097 [quant-ph], \url{https://arxiv.org/abs/2510.06097}.

\bibitem[Che98]{Che98}
Anthony Chefles.
\newblock Unambiguous discrimination between linearly independent quantum
  states.
\newblock {\em Physics Letters A}, 239(6):339–347, March 1998.

\bibitem[CHL{\etalchar{+}}25]{CHL+25}
Yilei Chen, Zihan Hu, Qipeng Liu, Han Luo, and Yaxin Tu.
\newblock Lwe with quantum amplitudes: Algorithm, hardness, and oblivious
  sampling.
\newblock In {\em Advances in Cryptology – CRYPTO 2025, Lecture Notes in
  Computer Science}. Springer, 2025.
\newblock IACR Crypto 2025 proceedings, paper ID 35580.

\bibitem[CLZ22]{CLZ22}
Yilei Chen, Qipeng Liu, and Mark Zhandry.
\newblock Quantum algorithms for variants of average-case lattice problems via
  filtering.
\newblock Springer-Verlag, 2022.

\bibitem[CT24]{CT24}
Andr\'{e} Chailloux and Jean-Pierre Tillich.
\newblock {The Quantum Decoding Problem}.
\newblock In Fr\'{e}d\'{e}ric Magniez and Alex~Bredariol Grilo, editors, {\em
  19th Conference on the Theory of Quantum Computation, Communication and
  Cryptography (TQC 2024)}, volume 310 of {\em Leibniz International
  Proceedings in Informatics (LIPIcs)}, pages 6:1--6:14, Dagstuhl, Germany,
  2024. Schloss Dagstuhl -- Leibniz-Zentrum f{\"u}r Informatik.

\bibitem[CT25]{CT25}
Andr\'{e} Chailloux and Jean-Pierre Tillich.
\newblock Quantum advantage from soft decoders.
\newblock In {\em Proceedings of the 57th Annual ACM Symposium on Theory of
  Computing}, STOC '25, page 738–749, New York, NY, USA, 2025. Association
  for Computing Machinery.

\bibitem[DFS24]{DFS24}
Thomas {Debris-Alazard}, Pouria Fallahpour, and Damien Stehl\'{e}.
\newblock Quantum oblivious lwe sampling and insecurity of standard model
  lattice-based snarks.
\newblock In {\em Proceedings of the 56th Annual ACM Symposium on Theory of
  Computing}, STOC 2024, page 423–434, New York, NY, USA, 2024. Association
  for Computing Machinery.

\bibitem[Die88]{Die88}
D.~Dieks.
\newblock Overlap and distinguishability of quantum states.
\newblock {\em Physics Letters A}, 126(5):303--306, 1988.

\bibitem[DJL00]{DJL00}
Miloslav {Du\ifmmode \check{s}\else \v{s}\fi{}ek}, Mika Jahma, and Norbert
  L\"utkenhaus.
\newblock Unambiguous state discrimination in quantum cryptography with weak
  coherent states.
\newblock {\em Phys. Rev. A}, 62:022306, Jul 2000.

\bibitem[DRT23]{DRT23}
Thomas {Debris-Alazard}, Maxime Remaud, and Jean-Pierre Tillich.
\newblock Quantum reduction of finding short code vectors to the decoding
  problem.
\newblock {\em IEEE Transactions on Information Theory}, 2023.

\bibitem[Hel69]{Hel69}
Carl~W. Helstrom.
\newblock Quantum detection and estimation theory.
\newblock {\em Journal of Statistical Physics}, 1(2):231--252, Jun 1969.

\bibitem[HW94]{Hausladen01121994}
Paul Hausladen and William~K. Wootters.
\newblock A ‘pretty good’ measurement for distinguishing quantum states.
\newblock {\em Journal of Modern Optics}, 41(12):2385--2390, 1994.

\bibitem[Iva87]{Iva87}
I.D. Ivanovic.
\newblock How to differentiate between non-orthogonal states.
\newblock {\em Physics Letters A}, 123(6):257--259, 1987.

\bibitem[JS95]{JS95}
Gregg Jaeger and Abner Shimony.
\newblock Optimal distinction between two non-orthogonal quantum states.
\newblock {\em Physics Letters A}, 197(2):83--87, 1995.

\bibitem[JSW{\etalchar{+}}25]{JSW+24}
Stephen~P. Jordan, Noah Shutty, Mary Wootters, Adam Zalcman, Alexander
  Schmidhuber, Robbie King, Sergei~V. Isakov, Tanuj Khattar, and Ryan Babbush.
\newblock Optimization by decoded quantum interferometry, 2025.
\newblock arXiv:2408.08292 [quant-ph], \url{https://arxiv.org/abs/2408.08292}.

\bibitem[KM19]{KM19}
K.~S. Kravtsov and S.~N. Molotkov.
\newblock Practical quantum key distribution with geometrically uniform states.
\newblock {\em Phys. Rev. A}, 100:042329, Oct 2019.

\bibitem[KOYJ22]{KOYJ23}
Naser Karimi, Hadi~Z Olyaei, Marziyeh Yahyavi, and Mohammad~Ali Jafarizadeh.
\newblock Reconstructing quantum states via unambiguous state discrimination.
\newblock {\em Progress of Theoretical and Experimental Physics},
  2023(1):013A01, 11 2022.

\bibitem[Per88]{Per88}
Asher Peres.
\newblock How to differentiate between non-orthogonal states.
\newblock {\em Physics Letters A}, 128(1):19, 1988.

\bibitem[Pra62]{Pra62}
Eugene Prange.
\newblock The use of information sets in decoding cyclic codes.
\newblock {\em IRE Trans. Inf. Theory}, 8:5--9, 1962.

\bibitem[PW91]{PW91}
Asher Peres and William~K. Wootters.
\newblock Optimal detection of quantum information.
\newblock {\em Phys. Rev. Lett.}, 66:1119--1122, Mar 1991.

\bibitem[Reg09]{Reg09}
Oded Regev.
\newblock On lattices, learning with errors, random linear codes, and
  cryptography.
\newblock {\em J. ACM}, 56(6), September 2009.

\bibitem[RLvE03]{RLvE03}
Philippe Raynal, Norbert L\"utkenhaus, and Steven~J. van Enk.
\newblock Reduction theorems for optimal unambiguous state discrimination of
  density matrices.
\newblock {\em Phys. Rev. A}, 68:022308, Aug 2003.

\bibitem[SHB01]{SHB01}
Yuqing Sun, Mark Hillery, and J\'anos~A. Bergou.
\newblock Optimum unambiguous discrimination between linearly independent
  nonorthogonal quantum states and its optical realization.
\newblock {\em Phys. Rev. A}, 64:022311, Jul 2001.

\bibitem[YZ24]{YZ24}
Takashi Yamakawa and Mark Zhandry.
\newblock Verifiable quantum advantage without structure.
\newblock {\em J. ACM}, 71(3), June 2024.

\end{thebibliography}

\appendix
\newpage

\section{Matching primal solutions for the average setting}\label{Appendix:A}

We first prove that from any dual solution $(b_\iv)_{\iv \in \F_2^n}$, one can also construct other dual solutions with the same objective by applying any affine bijection $f$ on the indices $\iv$.  
\begin{proposition}\label{Proposition:dual_perm}
	Let $\bv = (\bv_\iv)_{\iv \in \F_2^n}$ a dual solution.
	Then, for all $f : \F_2^n \to \F_2^n \text{ such that } f(\xv) = \Pm\xv + \vv$
	with $\Pm \in GL_n(\F_2)$ and $\vv\in\F_2^n$,
	$\bv_f = \left( \bv_{f(\iv)} \right)_{\iv\in\F_2^n}$ is also a dual solution.
\end{proposition}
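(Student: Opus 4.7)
The plan is to show that affine bijections of $\F_2^n$ permute the family of dual constraints without changing either the left- or right-hand sides, so the transformed sequence $\bv_f$ automatically inherits feasibility from $\bv$.

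First, I would note that the nonnegativity constraint is preserved for free: since $f$ is a bijection, $\{b_{f(\iv)} : \iv \in \F_2^n\}$ is just a reordering of $\{b_\iv : \iv \in \F_2^n\}$, so each entry of $\bv_f$ is nonnegative whenever each entry of $\bv$ is. The content of the proof is therefore to handle the linear inequalities. Fix $k \in \Iint{0}{n}$, $\Hm \in \Lambda_k$, and $\sv \in \F_2^{n-k}$. We have to establish
\[
\sum_{\iv \in \D_\Hm(\sv)} b_{f(\iv)} \;\ge\; C(k)\, 2^k.
\]
By the change of variable $\jv = f(\iv)$, which is a bijection on $\F_2^n$, the left-hand side equals $\sum_{\jv \in f(\D_\Hm(\sv))} b_\jv$, so it suffices to show that $f(\D_\Hm(\sv))$ is itself of the form $\D_{\Hm'}(\sv')$ for some $\Hm' \in \Lambda_k$ and some $\sv' \in \F_2^{n-k}$.

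For this step, I would unfold the definition $\D_\Hm(\sv) = \{\xv : \Gm_\Hm \xv = \sv\}$ where $\Gm_\Hm \in \F_2^{(n-k)\times n}$ has rank $n-k$. Writing $f(\xv) = \Pm\xv + \vv$ and substituting $\xv = \Pm^{-1}(\jv - \vv)$, we get
\[
f(\D_\Hm(\sv)) \;=\; \bigl\{\jv \in \F_2^n : (\Gm_\Hm \Pm^{-1})\, \jv = \sv + \Gm_\Hm \Pm^{-1}\vv \bigr\}.
\]
The matrix $\Gm' \eqdef \Gm_\Hm \Pm^{-1}$ still has rank $n-k$, so its kernel is the kernel of some parity matrix $\Hm' \in \Lambda_k$. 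The chosen generator $\Gm_{\Hm'}$ may differ from $\Gm'$, but only by an invertible left-multiplication $\Gm' = \Mm \Gm_{\Hm'}$; setting $\sv' = \Mm^{-1}(\sv + \Gm_\Hm \Pm^{-1}\vv)$ gives $f(\D_\Hm(\sv)) = \D_{\Hm'}(\sv')$, a dual coset of dimension $k$.

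Applying the dual constraint of $\bv$ to $(\Hm',\sv')$ then yields
\[
\sum_{\iv \in \D_\Hm(\sv)} b_{f(\iv)} \;=\; \sum_{\jv \in \D_{\Hm'}(\sv')} b_\jv \;\ge\; C(k)\, 2^k,
\]
which is exactly what we needed. Ranging over all $k, \Hm, \sv$ proves that $\bv_f$ satisfies every dual constraint, hence is itself a feasible dual solution. I do not anticipate a real obstacle here; the only subtlety is the benign bookkeeping of how the fixed generator $\Gm_{\Hm'}$ relates to the matrix $\Gm_\Hm \Pm^{-1}$ produced by the change of variables, which is resolved by an invertible change of label on the syndrome.
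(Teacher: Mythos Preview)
Your proposal is correct and follows essentially the same route as the paper: both arguments reduce to showing that $f(\D_\Hm(\sv))$ is again a dual coset $\D_{\Hm'}(\sv')$ of the same dimension, via the substitution $\xv = \Pm^{-1}(\jv - \vv)$, and then invoke the original constraint at $(\Hm',\sv')$. If anything you are slightly more careful than the paper, which simply writes $\Gm_{\Hm'} = \Gm_\Hm \Pm^{-1}$ without addressing that $\Gm_{\Hm'}$ is a fixed representative; your observation that the two differ by an invertible left factor (absorbed into the syndrome label) is the right way to close that small gap.
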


\begin{proof}
	Let $\mathbf{b} = (b_\iv)_{\iv \in \F_2^n}$ be a dual solution, that is, for $\iv\in\F_2^n$, $b_\iv\in\mathbb{R}_{+}$ and for $k \in \Iint{0}{n}$, $\mathbf{H} \in \Lambda_k$, and $\mathbf{s} \in \F_2^{n-k}$, $\sum_{\iv \in D_\Hm(\sv)} b_\iv \geq C(k) 2^k$.
	We consider $f:\F_2^n \rightarrow \F_2^n$ such that $f(\iv) = \Pm\iv + \vv$ with $\Pm \in GL_n(\F_2)$ and $\vv\in\F_2^n$.
	Let us recall that $\D_\Hm(\sv) = \{\xv\in\F_2^n \mid \Gm_\Hm\xv=\sv\}$ and we define $f(\D_\Hm(\sv)) = \{f(\xv) \mid \xv\in\F_2^n, \ \Gm_\Hm \xv = \sv\} = \{\Pm\xv + \sv \mid \xv\in\F_2^n, \ \Gm_\Hm\xv=\sv\}$.
	We restate $f(\D_\Hm(\sv))$:
	\begin{align*}
		\yv\in f(\D_\Hm(\sv)) 
		& \Leftrightarrow \yv=\Pm\xv + \vv \text{ and } \Gm_\Hm\xv=\sv \\
		& \Leftrightarrow \xv=\Pm^{-1} (\yv + \vv) \text{ and } \Gm_\Hm \Pm^{-1} (\yv + \vv) = \sv \\
		& \Leftrightarrow \xv=\Pm^{-1} (\yv + \vv) \text{ and } \Gm_{\Hm'}\yv=\sv' \\
		& \Leftrightarrow \yv\in\D_{\Hm'}(\sv')
	\end{align*}
	with $\Hm' = \Hm\trp{\Pm}$, $\Gm_{\Hm'} = \Gm_\Hm\Pm^{-1}$ and $\sv'=\sv + \Gm_\Hm\Pm^{-1}\vv$.
	In other words, $f(\D_\Hm(\sv)) = \D_{\Hm'}(\sv')$.
	Finally, $$\sum_{\iv\in\D_\Hm(\sv)} \bv_{f(\iv)} = \sum_{\yv\in f(\D_\Hm(\sv))} \bv_\yv = \sum_{\yv\in\D_{\Hm'}(\sv')} \bv_{\yv} \geq C(k)2^k$$
\end{proof}

This proposition justifies that $b_\iv=2|\iv+\onev|$ is also a dual solution, since $b_\iv=2|\iv|$ is one.

\subsection{Hamming solution}

\begin{proposition}[Hamming solution]\label{Proposition:primal_12}
	The dual solution $b_\iv = 2|\iv|_H \text{ for } \iv\in\F_2^n$
	is associated to the primal candidate solution
	\[ \lambda_\iv^\Hm = \left\{
	\begin{array}{ll}
		\frac{(-1)^{n-k} \sum_{\sv\in\F_2^{n-k}} (-1)^{|\sv|_H} |\halpha_{\rv_\Hm^{\text{MAX}}(\sv)}|^2}{|\halpha_\iv|^2}
		& \text{ for } k\in\Iint{0}{n}, \ \Hm\in\mathcal{E}_k^n, \ \iv\in \D_\Hm(\zerov), \\
		0 & \text{ otherwise}.
	\end{array}
	\right. \]
\end{proposition}

\begin{proof}
	Same proof as for the co-Hamming solution below.
\end{proof}

\subsection{co-Hamming solution}
	We start the proof by giving some useful lemmas about the coset leader when $\Hm\in\mathcal{E}_k^n$. Recall this set was defined in Definition~\ref{Definition:mathcalE}.

	\begin{remark*}
		By definition of $\Hm\in\mathcal{E}_k^n$ means $\Hm$ is a subset of rows of the canonical basis.	
	\end{remark*}

	\begin{lemma}[Coset leader for $\Hm\in\mathcal{E}_k^n$]\label{lem:coset-leader-E}
		Let $\Hm\in\mathcal{E}_k^n$ and let $S\subseteq[n]=\{1,\dots,n\}$ be the index set of the rows of $\Hm$, so $|S|=k$ and $\Hm=I_S$.
		Let $C=\text{Ker}(\Hm)=\{\xv\in\F_2^n:\ \xv_{|S}=0\}$ and fix the canonical generator $\Gm_\C=I_{S^c}\in\F_2^{(n-k)\times n}$, where $S^c=[n]\setminus S$.
		For any $\sv\in\F_2^{\,n-k}$, the dual coset is
		$$\D_\Hm(\sv)=\{\,\xv\in\F_2^n:\ \Gm_\C\,\xv=\sv\,\}=\{\,\xv\in\F_2^n:\ \xv_{|S^c}=\sv\,\}.$$
		Its coset leader is $\yv \text{ with } \yv_{|S}=\zerov,\ \ \yv_{|S^c}=\sv.$
		In other words,
		$\yv=\sum_{j\in S^c:\ \sv_j=1}\mathbf \ev_j,$
		so the support of $\yv$ is contained in $S^c$.
	\end{lemma}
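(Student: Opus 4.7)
The plan is to simply unpack the definitions and the choice of $\Gm_\C$. First I would verify the explicit description of $\D_\Hm(\sv)$. Since $\Gm_\C = I_{S^c}$ is the submatrix of $I_n$ whose rows are $\ev_j$ for $j \in S^c$, the product $\Gm_\C\,\xv$ equals the restriction $\xv_{|S^c}$. This is a valid choice of generator matrix for $\C = \Ker(\Hm) = \{\xv : \xv_{|S} = \zerov\}$ because $\Hm\,\trp{(I_{S^c})} = I_S\,\trp{(I_{S^c})} = \zerov$ and $I_{S^c}$ has full rank $n-k$. Consequently $\D_\Hm(\sv) = \{\xv \in \F_2^n : \xv_{|S^c} = \sv\}$, which is the first claim.

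For the coset leader, the key observation is that the Hamming weight decomposes along the partition $[n] = S \sqcup S^c$: for any $\xv \in \F_2^n$,
\[
|\xv|_H = |\xv_{|S}|_H + |\xv_{|S^c}|_H.
\]
For $\xv \in \D_\Hm(\sv)$ the second summand is fixed at $|\sv|_H$, while $\xv_{|S}$ ranges freely over $\F_2^k$. The minimum is therefore attained exactly at $\xv_{|S} = \zerov$, giving the unique coset leader $\yv$ characterized by $\yv_{|S} = \zerov$ and $\yv_{|S^c} = \sv$, with weight $|\yv|_H = |\sv|_H$.

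The explicit expansion $\yv = \sum_{j \in S^c : \sv_j = 1} \ev_j$ then follows by rewriting $\yv$ as the sum of the canonical basis vectors at its support, using the standard identification of $\sv \in \F_2^{n-k}$ with a function $S^c \to \F_2$ (via the increasing enumeration of $S^c$). This lemma is essentially a tautology once the role of $\Gm_\C = I_{S^c}$ is made explicit, so there is no substantive obstacle; the only thing requiring minor care is keeping the identification between coordinates of $\sv \in \F_2^{n-k}$ and positions in $S^c \subseteq [n]$ consistent throughout.
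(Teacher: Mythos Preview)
Your proof is correct and follows essentially the same approach as the paper: identify $\Gm_\C=I_{S^c}$ with the restriction map $\xv\mapsto\xv_{|S^c}$, then decompose the Hamming weight along the partition $[n]=S\sqcup S^c$ and observe that the $S^c$-part is fixed while the $S$-part is free and hence minimized at $\zerov$. Your version even adds a small extra check (that $I_{S^c}$ is indeed a valid generator for $\C$) which the paper leaves implicit.
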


	\begin{proof}
		By construction, $\C=\text{Ker}(\Hm)=\{\xv:\ \xv_{|S}=0\}$ and $\Gm_\C=I_{S^c}$ maps $\xv$ to its restriction on $S^c$. Hence
		$\D_\Hm(\sv)=\{\xv:\ \xv_{|S^c}=\sv\}$.
		Any $\xv\in\D_\Hm(\sv)$ can be written uniquely as $\xv=(\uv,\sv)$ where $\uv\in\F_2^{\,k}$ fills the coordinates on $S$.
		The Hamming weight decomposes as $|\xv|_H=|\uv|_H+|\sv|_H$, which is minimized if and only if $\uv=0$.
		Therefore the unique minimum-weight element is $\rv_\Hm^{\text{min}}(\sv)$ defined by $\rv_\Hm^{\text{min}}(\sv)_{|S}=0$ and $\rv_\Hm^{\text{min}}(\sv)_{|S^c}=\sv$.
		Finally, since $\rv_\Hm^{\text{min}}(\sv)$ has ones exactly on the coordinates $j\in S^c$ with $\sv_j=1$, we have $\rv_\Hm^{\text{min}}(\sv)=\sum_{j\in S^c:\ \sv_j=1}\mathbf \ev_j$.
	\end{proof}

	\begin{lemma}\label{lem:unique_kH}
		Let $\xv\in\F_2^n$. There exists a unique pair
		$(k,\Hm)\in\{0,\dots,n\}\times\mathcal{E}_k^n$
		such that $\,\rv_\Hm^{\text{min}}(\onev)=\xv$ (and hence $\xv\in\D_\Hm(\onev)$). 
	\end{lemma}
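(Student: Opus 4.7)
The plan is to reduce the lemma to a bijection between $\F_2^n$ and the disjoint union $\bigsqcup_{k=0}^n \mathcal{E}_k^n$, exploiting the very explicit form of the coset leader given by Lemma~\ref{lem:coset-leader-E}.

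First I would unfold the notation. An element $\Hm \in \mathcal{E}_k^n$ is uniquely determined by a subset $S \subseteq [n]$ of size $k$ (the set of row-indices of $I_n$ selected), so I will parametrise the pair $(k,\Hm)$ by $S$ and write $\Hm = I_S$. By Lemma~\ref{lem:coset-leader-E}, the coset leader of $\D_{I_S}(\sv)$ is the vector $\yv \in \F_2^n$ with $\yv_{|S}=\zerov$ and $\yv_{|S^c}=\sv$. Specialising to $\sv=\onev \in \F_2^{n-k}$, this gives
\[
\rv_{I_S}^{\text{min}}(\onev) \;=\; \sum_{j \in S^c} \ev_j,
\]
i.e.\ the indicator vector of $S^c$.

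Now given $\xv \in \F_2^n$, I would set $S = \{i \in [n] : x_i = 0\}$, so that $S^c = \mathrm{supp}(\xv)$, $k = |S| = n - |\xv|_H$, and $\Hm = I_S \in \mathcal{E}_k^n$. The displayed formula then yields $\rv_\Hm^{\text{min}}(\onev) = \sum_{j \in \mathrm{supp}(\xv)} \ev_j = \xv$, proving existence.

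For uniqueness, suppose $(k',\Hm')$ with $\Hm' = I_{S'}$, $|S'|=k'$, also satisfies $\rv_{\Hm'}^{\text{min}}(\onev) = \xv$. By the same formula, $\xv$ is the indicator of $(S')^c$, hence $(S')^c = \mathrm{supp}(\xv) = S^c$, which forces $S' = S$, $k' = k$, and $\Hm' = \Hm$. There is no real obstacle here: once Lemma~\ref{lem:coset-leader-E} is in hand, the proof is purely combinatorial and reduces to the obvious bijection $\xv \mapsto ([n]\setminus\mathrm{supp}(\xv))$ between $\F_2^n$ and subsets of $[n]$.
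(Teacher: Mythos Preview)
Your proof is correct and follows essentially the same approach as the paper: both identify $\Hm\in\mathcal{E}_k^n$ with the zero-set $S=\{i:x_i=0\}$, invoke Lemma~\ref{lem:coset-leader-E} to see that $\rv_{I_S}^{\text{min}}(\onev)$ is the indicator of $S^c$, and then read off existence and uniqueness from the bijection between $\xv$ and $S$. There is nothing to add.
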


	\begin{proof}
		Let $S=\{j\in[n]:\ x_j=0\}$ and set $k:=|S|=n-|\xv|_H$. Consider the matrix
		$\Hm  =I_S\in\mathcal{E}_k^n$, i.e., the $k\times n$ submatrix of the identity $I_n$ whose rows are $\{\mathbf e_j:\ j\in S\}$ in increasing order.
		For $\Hm\in\mathcal{E}_k^n$, we can use the previous lemma to compute the coset leader $\rv_\Hm^{\text{min}}(\sv)$.
		In particular, for $\sv=\onev\in\F_2^{n-k}$, we obtain $\rv_\Hm^{\text{min}}(\onev)$ is the vector equal to $0$ on $S$ and $1$ on $S^c$, which is exactly $\xv$. This proves existence, and also $\xv\in\D_\Hm(\onev)$ since $\xv_{|S^c}=\onev$.

		For uniqueness, suppose $\rv_{\widetilde\Hm}^{\text{min}}(\onev)=\xv$ with $\widetilde\Hm\in\mathcal{E}_{\tilde k}^n$ and let $\widetilde S$ be the index set of the rows of $\widetilde{\Hm}$. 
		By the explicit form above, $\rv_{\widetilde\Hm}^{\text{min}}(\onev)$ is $0$ on $\widetilde S$ and $1$ on $\widetilde S^c$. Since this vector equals $\xv$, we must have $\widetilde S=S$ and hence $\tilde k=|\widetilde S|=|S|=n-|\xv|_H=k$.
		Inside $\mathcal{E}_k^n$ the matrix with row set $S$ in increasing order is unique, namely $I_S=\Hm$. Therefore $(k,\Hm)$ is unique.
	\end{proof}

	\begin{lemma}\label{lem:N_k_i_x_correct}
	For $\iv,\xv\in\F_2^n$ and $k\in\{0,\dots,n\}$, define
	\[
	N(k,\iv,\xv)\ \eqdef\ 
	\bigl|\{\ \Hm\in\mathcal{E}_k^n\ :\ \iv\in\D_\Hm(\onev),\ \exists\,\sv\in\F_2^{n-k}\ \text{with}\ \rv_\Hm^{\text{min}}(\sv)=\xv\ \}\bigr|.
	\]
	Then
	\[
	N(k,\iv,\xv)
	=
	\begin{cases}
	\displaystyle \binom{\,|\iv|_H-|\xv|_H\,}{\,k-(n-|\iv|_H)\,} 
	& \text{if }\ \xv\subseteq\iv\ \text{and}\ n-|\iv|_H\le k\le n-|\xv|_H,\\[8pt]
	0 & \text{otherwise.}
	\end{cases}
	\]
	Here $\xv\subseteq\iv$ means $\mathrm{supp}(\xv)\subseteq\mathrm{supp}(\iv)$.
	\end{lemma}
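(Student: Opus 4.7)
The plan is to translate the conditions defining $N(k,\iv,\xv)$ into set-inclusion conditions on the row-index set of $\Hm$, and then reduce the count to a single binomial coefficient.

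First, I will use the bijection $\mathcal{E}_k^n \leftrightarrow \{S \subseteq [n]: |S|=k\}$, writing $\Hm = I_S$ for the unique matrix whose rows are $\{\ev_j : j \in S\}$ listed in increasing order of $j$. Under this identification, Lemma~\ref{lem:coset-leader-E} gives two clean reformulations: (i) $\iv \in \D_\Hm(\onev)$ is equivalent to $\iv_{|S^c} = \onev$, which is equivalent to $S^c \subseteq \mathrm{supp}(\iv)$, i.e.\ $\mathrm{supp}(\iv)^c \subseteq S$; and (ii) there exists $\sv \in \F_2^{n-k}$ with $\rv_\Hm^{\text{min}}(\sv) = \xv$ if and only if $\mathrm{supp}(\xv) \subseteq S^c$, i.e.\ $S \subseteq \mathrm{supp}(\xv)^c$. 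For direction (ii), the forward implication follows because $\rv_\Hm^{\text{min}}(\sv)$ is supported in $S^c$ by Lemma~\ref{lem:coset-leader-E}; conversely, if $S \subseteq \mathrm{supp}(\xv)^c$, choose $\sv \eqdef \xv_{|S^c}$ and Lemma~\ref{lem:coset-leader-E} gives $\rv_\Hm^{\text{min}}(\sv) = \xv$.

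Combining (i) and (ii), $N(k,\iv,\xv)$ counts exactly the number of $S \subseteq [n]$ with $|S|=k$ that satisfy
\[
\mathrm{supp}(\iv)^c \;\subseteq\; S \;\subseteq\; \mathrm{supp}(\xv)^c.
\]
Such an $S$ exists only if $\mathrm{supp}(\iv)^c \subseteq \mathrm{supp}(\xv)^c$, equivalently $\mathrm{supp}(\xv) \subseteq \mathrm{supp}(\iv)$, i.e.\ $\xv \subseteq \iv$. Moreover, the cardinality constraint $|\mathrm{supp}(\iv)^c| \le k \le |\mathrm{supp}(\xv)^c|$ reads $n-|\iv|_H \le k \le n-|\xv|_H$. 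Outside this range, $N(k,\iv,\xv)=0$.

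When these conditions hold, choosing $S$ amounts to choosing the ``free'' elements of $S$ inside the set $\mathrm{supp}(\xv)^c \setminus \mathrm{supp}(\iv)^c$; a quick check shows this set equals $\mathrm{supp}(\iv) \setminus \mathrm{supp}(\xv)$ (using $\xv \subseteq \iv$) and has cardinality $|\iv|_H - |\xv|_H$. Since we already force $\mathrm{supp}(\iv)^c \subseteq S$, we must select exactly $k - (n - |\iv|_H)$ further elements from this set, yielding
\[
N(k,\iv,\xv) \;=\; \binom{|\iv|_H - |\xv|_H}{k - (n-|\iv|_H)},
\]
as claimed. The whole argument is purely combinatorial and the only mildly delicate step is the bookkeeping in (ii), which is resolved by invoking Lemma~\ref{lem:coset-leader-E} to see that the coset leader is supported in $S^c$ and agrees with $\sv$ on $S^c$.
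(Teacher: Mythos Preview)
Your proof is correct and follows essentially the same approach as the paper: both translate the two conditions into the set-inclusion chain $\mathrm{supp}(\iv)^c \subseteq S \subseteq \mathrm{supp}(\xv)^c$ (the paper writes this as $Z(\iv)\subseteq S\subseteq Z(\xv)$ using zero-supports) and then count the admissible $S$ by choosing $k-(n-|\iv|_H)$ elements from $\mathrm{supp}(\iv)\setminus\mathrm{supp}(\xv)$. The only cosmetic difference is notation.
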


	\begin{proof}
		Let $\iv,\xv\in\F_2^n$, let $k\in\{0,\dots,n\}$.
		Let $\Hm\in\mathcal{E}_k^n$ and write $S\subseteq[n]$ for the set of row indices of $\Hm$ and $S^c=[n]\setminus S$.
		With the canonical choice $\Gm_\C = I_{S^c}$ for $\C=\text{Ker}(\Hm)$, one has
		$$\D_\Hm(\sv)=\{\xv\in\F_2^n:\ x_{|S^c}=\sv\},\qquad \rv_\Hm^{\text{min}}(\sv)_{|S}=0,\ \rv_\Hm^{\text{min}}(\sv)_{|S^c}=\sv.$$
		We denote by $Z(\vv)\eqdef\{j\in[n]: v_j=0\}$ the zero-support of $\vv$.
		Hence, the two conditions in the definition of $N(k,\iv,\xv)$ translate to
		$$\iv\in\D_\Hm(\onev)\ \Longleftrightarrow\ \iv_{|S^c}=\onev\ \Longleftrightarrow\ S\supseteq Z(\iv),$$
		and
		$$\exists\,\sv\text{ with }\rv_\Hm^{\text{min}}(\sv)=\xv\ \Longleftrightarrow\ \xv_{|S}=0\ \Longleftrightarrow\ S\subseteq Z(\xv),$$
		Therefore, admissible $S$ are precisely those satisfying
		$$Z(\iv)\ \subseteq\ S\ \subseteq\ Z(\xv).$$
		In particular, this has a solution if and only if $Z(\iv)\subseteq Z(\xv)$, i.e., $\xv\subseteq\iv$.
		Assume $\xv\subseteq\iv$ and write
		$$Z(\xv)=Z(\iv)\ \cup\ \bigl(\mathrm{supp}(\iv)\setminus \mathrm{supp}(\xv)\bigr).$$
		Every admissible $S$ is then of the form $S=Z(\iv)\cup T$ with $T\subseteq \mathrm{supp}(\iv)\setminus \mathrm{supp}(\xv)$.
		The size constraint $|S|=k$ becomes $|T|=k-|Z(\iv)|=k-(n-|\iv|_H)$, which is feasible if and only if
		$$0\ \le\ k-(n-|\iv|_H)\ \le\ |\iv|_H-|\xv|_H \quad\Longleftrightarrow\quad n-|\iv|_H\ \le\ k\ \le\ n-|\xv|_H.$$
		In that case, the number of such $T$ is $\binom{|\iv|_H-|\xv|_H}{\,k-(n-|\iv|_H)\,}$, yielding exactly the claimed formula.
		If $\xv\not\subseteq\iv$, there is no admissible $S$ and the count is $0$.
	\end{proof}

	\begin{proposition}[co-Hamming solution]\label{Proposition:primal_1}
	The dual solution $b_\iv = 2|\iv+\onev|_H \text{ for } \iv\in\F_2^n$
	is associated to the primal candidate solution
	\[ \lambda_\iv^\Hm = \left\{
	\begin{array}{ll}
		\frac{(-1)^{n-k} \sum_{\sv\in\F_2^{n-k}} (-1)^{|\sv|_H} |\halpha_{\rv_\Hm^{\text{min}}(\sv)}|^2}{|\halpha_\iv|^2}
		& \text{ for } k\in\Iint{0}{n}, \ \Hm\in\mathcal{E}_k^n, \ \iv\in \D_\Hm(\onev), \\
		0 & \text{ otherwise}.
	\end{array}
	\right. \]
	\end{proposition}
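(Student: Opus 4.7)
The plan is to verify that the proposed ensemble $(\lambda_\iv^\Hm)$ formally satisfies the equality constraints \ding{172} and \ding{173} of the primal linear program (nonnegativity is not claimed in general, per the discussion preceding the statement), and then show that its objective value matches the dual bound $2\sum_\iv |\iv+\onev|_H |\halpha_\iv|^2$ from Corollary~\ref{Corollary:co-Hamming_bound}; by weak/strong duality this confirms it is a matching primal candidate.

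First I would check constraint \ding{173}, which is almost immediate. For $\Hm\in\mathcal{E}_k^n$ and $\iv\in D_\Hm(\onev)$, the quantity $\lambda_\iv^\Hm |\halpha_\iv|^2$ equals $(-1)^{n-k}\sum_{\sv}(-1)^{|\sv|_H}|\halpha_{\rv_\Hm^{\text{min}}(\sv)}|^2$, which is independent of $\iv$; for $\iv$ lying in any other coset $D_\Hm(\sv)$ with $\sv\ne\onev$ the value is zero. Thus $\lambda_\iv^\Hm |\halpha_\iv|^2$ is constant on each dual coset. For $\Hm\notin\mathcal{E}_k^n$ under the identification of Definition~\ref{Definition:LambdaTilde}, $\lambda_\iv^\Hm=0$ and the constraint is trivial.

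Next I would verify \ding{172}, i.e.\ $\sum_{\Hm}\lambda_\iv^\Hm=1$ for every $\iv$. Using Lemma~\ref{lem:coset-leader-E}, the support of $\rv_\Hm^{\text{min}}(\sv)$ lies in $S^c$ with $\rv_\Hm^{\text{min}}(\sv)_{|S^c}=\sv$, so $|\sv|_H=|\rv_\Hm^{\text{min}}(\sv)|_H$ and the sign $(-1)^{|\sv|_H}$ can be replaced by $(-1)^{|\xv|_H}$ with $\xv=\rv_\Hm^{\text{min}}(\sv)$. Swapping the order of summation gives
\[
|\halpha_\iv|^2 \sum_{\Hm}\lambda_\iv^\Hm \;=\; \sum_{\xv\in\F_2^n} (-1)^{|\xv|_H}|\halpha_\xv|^2 \sum_{k=0}^n (-1)^{n-k} N(k,\iv,\xv).
\]
Plugging in Lemma~\ref{lem:N_k_i_x_correct} and substituting $j=k-(n-|\iv|_H)$, the inner sum reduces to $(-1)^{|\iv|_H}(1-1)^{|\iv|_H-|\xv|_H}$, which vanishes unless $\xv\subseteq\iv$ with $|\xv|_H=|\iv|_H$, i.e.\ $\xv=\iv$. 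Only the diagonal term survives and yields $|\halpha_\iv|^2(-1)^{|\iv|_H}\cdot(-1)^{|\iv|_H}=|\halpha_\iv|^2$, hence $\sum_\Hm \lambda_\iv^\Hm=1$.

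Finally I would match the objective. Since $|D_\Hm(\onev)|=2^k$ and (by Lemma~\ref{lem:coset-leader-E}) a given $\xv$ is the coset leader $\rv_\Hm^{\text{min}}(\sv)$ of some $\sv$ iff $S\subseteq Z(\xv)$, there are exactly $\binom{n-|\xv|_H}{k}$ such $\Hm\in\mathcal{E}_k^n$. Interchanging sums, the primal objective becomes
\[
\sum_\xv (-1)^{|\xv|_H}|\halpha_\xv|^2 \sum_{k=0}^{n-|\xv|_H} k\,2^k\,(-1)^{n-k}\binom{n-|\xv|_H}{k}.
\]
Applying the identity $\sum_{k}k x^k\binom{m}{k}=mx(1+x)^{m-1}$ at $x=-2$ with $m=n-|\xv|_H$ gives $2(n-|\xv|_H)(-1)^{|\xv|_H}$ after collecting the $(-1)^n$ factor, and the outer sum becomes $2\sum_\xv(n-|\xv|_H)|\halpha_\xv|^2=2\sum_\xv|\xv+\onev|_H|\halpha_\xv|^2$, exactly the dual objective of Corollary~\ref{Corollary:co-Hamming_bound}. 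The main obstacle is really only the sign bookkeeping in the two alternating binomial sums; once the key identification $|\sv|_H=|\rv_\Hm^{\text{min}}(\sv)|_H$ from Lemma~\ref{lem:coset-leader-E} and the counting from Lemma~\ref{lem:N_k_i_x_correct} are in hand, everything collapses to elementary binomial identities.
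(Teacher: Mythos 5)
Your proposal is correct and rests on the same pillars as the paper's proof: Lemma~\ref{lem:coset-leader-E} to identify $|\sv|_H$ with $|\rv_\Hm^{\text{min}}(\sv)|_H$, the count $N(k,\iv,\xv)$ of Lemma~\ref{lem:N_k_i_x_correct}, a swap of summation, and the vanishing of alternating binomial sums to establish $\sum_\Hm\lambda_\iv^\Hm=1$. Two minor organizational differences are worth noting. First, you handle the diagonal term $\xv=\iv$ uniformly inside the binomial identity $(1-1)^{|\iv|_H-|\xv|_H}$ (with the $0^0=1$ convention), whereas the paper isolates it via Lemma~\ref{lem:unique_kH}; both are fine. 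Second, for matching the objective, the paper concludes optimality by complementary slackness and only ``double checks'' the value via a second counting function $F'_\iv(\xv)$, while you compute the objective directly by observing that the numerator is constant on the coset $D_\Hm(\onev)$ of size $2^k$ and that $\xv$ is a coset leader for exactly $\binom{n-|\xv|_H}{k}$ matrices in $\mathcal{E}_k^n$, then applying $\sum_k k x^k\binom{m}{k}=mx(1+x)^{m-1}$ at $x=-2$. Your route to the objective is arguably cleaner, but it establishes the same conclusion with the same caveat as the paper: optimality is only asserted on the parameter region where the candidate $\lambda_\iv^\Hm$ are nonnegative, which neither you nor the paper claims in general.
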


	\begin{proof}
		Let $\iv=\zerov$,
		$\sum_{\Hm\in\Lambda} \lambda_\zerov^\Hm = \frac{|\halpha_\zerov|^2}{|\halpha_\zerov|^2} + \sum_{\Hm\in\Lambda^{*}} 0 = 1$ as for $\Hm\in\Lambda^{*}, \ \zerov\notin\D_\Hm(\onev)$.
		
		Let $\iv\ne\zerov\in\F_2^n$,
		\begin{align*}
			S \eqdef \sum_{\Hm\in\Lambda} \lambda_\iv^\Hm 
			& = \frac{1}{|\halpha_\iv|^2} \sum_{k=0}^{n} (-1)^{n-k} \sum_{\Hm\in\mathcal{E}_k^n \mid \iv\in \D_\Hm(\onev)} \sum_{\sv\in\F_2^{n-k}} (-1)^{|\sv|_H} |\halpha_{\rv_\Hm^{\text{min}}(\sv)}|^2
		\end{align*}
		Let's show that
		$$A \eqdef \sum_{k=0}^{n} (-1)^{n-k} \sum_{\Hm\in\mathcal{E}_k^n \mid \iv\in \D_\Hm(\onev)} \sum_{\sv\in\F_2^{n-k}} (-1)^{|\sv|_H} |\halpha_{\rv_\Hm^{\text{min}}(\sv)}|^2 = |\halpha_\iv|^2$$
		Let 
		$$ F_\iv(\xv) = \sum_{k=0}^n (-1)^{n-k} \sum_{\Hm\in\mathcal{E}_k^n \mid \iv\in \D_\Hm(\onev), \exists\sv\in\F_2^{n-k} \rv_\Hm^{\text{min}}(\sv) = \xv} (-1)^{|\sv|_H}$$
		and rewrite $A$ using $F_\iv(\xv)$, $$A = \sum_{x\in\F_2^n} |\halpha_\xv|^2 F_\iv(\xv).$$
		We now show that 
		\begin{align*}
			F_\iv(\xv) & = 1 \quad \text{if } \xv = \iv \\
			F_\iv(\xv) & = 0 \quad \text{otherwise}
		\end{align*}
		Note that, for $\Hm\in\mathcal{E}_k^n$, there is at most one $\sv\in\F_2^{n-k}$ such that $\rv_\Hm^{\text{min}}(\sv)=\xv$ since $\xv$ lies in one of the cosets of $\Hm$ but is not necessarily a coset leader.
		Assume $\xv=\iv$,
		\begin{align*}
			F_\iv(\iv)
			& = \sum_{k=0}^n (-1)^{n-k} \sum_{\Hm\in\mathcal{E}_k^n \mid \rv_{\Hm,\onev}=\iv} (-1)^{\onev\cdot\onev} \\
			& = |\{(k,\Hm)\in\Iint{0}{n}\times\mathcal{E}_k^n \mid \rv_{\Hm,\onev}=\iv \}| \\
			& = 1 & \text{from Lemma \ref{lem:unique_kH}} 
		\end{align*}
		Assume $\xv\ne\iv$,
		\begin{align*}
			F_\iv(\xv)
			& = \sum_{k=0}^n (-1)^{n-k} \sum_{\Hm\in\mathcal{E}_k^n \mid \iv\in \D_\Hm(\onev), \exists\sv\in\F_2^{n-k} \rv_\Hm^{\text{min}}(\sv) = \xv} (-1)^{|\sv|_H} \\
			& = (-1)^{|\sv|_H} \sum_{k=0}^{n} (-1)^{n-k} N(k,\iv,\xv) \\
			& = (-1)^{|\sv|_H} \sum_{k=n-|\iv|_H}^{n-|\xv|_H} (-1)^{n-k} \binom{|\iv|_H-|\xv|_H}{k-(n-|\iv|_H)} & \text{ from Lemma \ref{lem:N_k_i_x_correct}}\\
			& = (-1)^{|\sv|_H+|\iv|_H} \sum_{r=0}^{|\iv|_H-|\xv|_H} (-1)^{r} \binom{|\iv|_H-|\xv|_H}{r} \\
			& = 0
		\end{align*}
		This gives immediately that $A = |\halpha_\iv|^2$.
		Finally, if $\lambda_\iv^\Hm \ge 0$, complementary slackness conditions are satisfied as for all $\iv\in\F_2^n$, $(\sum_{\Hm} \lambda_\iv^\Hm - 1)b_\iv=0$.
		The primal and dual objective are equal, it is the optimum.
	\end{proof}

We can double check that the dual value is attained.
$$
	B \eqdef \sum_{k=0}^{n} \sum_{\Hm\in\Lambda_k} \sum_{\iv\in\F_2^n} k\lambda_\iv^\Hm|\halpha_\iv|^2 
	= \sum_{\iv\in\F_2^n} \sum_{k=0}^{n} k (-1)^{n-k} \sum_{\Hm\in\mathcal{E}_k^n \mid \iv\in \D_\Hm(\onev)} \sum_{\sv\in\F_2^{n-k}} (-1)^{|\sv|_H} |\halpha_{\rv_\Hm^{\text{min}}(\sv)}|^2
$$
We introduce $$F_\iv'(\xv) \triangleq \sum_{k=0}^n k (-1)^{n-k} \sum_{\Hm\in\mathcal{E}_k^n \mid \iv\in \D_\Hm(\onev), \exists\sv\in\F_2^{n-k} \rv_\Hm^{\text{min}}(\sv) = \xv} (-1)^{|\sv|_H}$$	
and we rewrite 
$$B = \sum_{\iv\in\F_2^n} \sum_{\xv\in\F_2^n} |\halpha_\xv|^2 F_\iv'(\xv)$$
Let's show that
\begin{equation*}
	F_\iv'(\xv) =
	\left\{
	\begin{array}{cl}
		|\iv+\onev|_H & \mbox{if } \xv=\iv \\
		1 & \mbox{if } \xv\subset\iv \text{ and } |\iv|_H-|\xv|_H=1 \\
		0 & \mbox{otherwise.}
	\end{array}
	\right.
\end{equation*}

\begin{align*}
	F_\iv'(\iv)
	& = \sum_{k=0}^n k(-1)^{n-k} \sum_{\Hm\in\mathcal{E}_k^n \mid \rv_{\Hm,\onev}=\iv} (-1)^{n-k} \\
	& = |\iv+\onev|_H & \text{ from Lemma \ref{lem:unique_kH}} \\
\end{align*}

$$F_\iv'(\xv) = \sum_{k=0}^n k(-1)^{n-k} \sum_{\Hm\in\mathcal{E}_k^n \mid \iv\in \D_\Hm(\onev), \exists\sv\in\F_2^{n-k} \rv_\Hm^{\text{min}}(\sv) = \xv} (-1)^{|\sv|_H}$$
Let $\Hm\in\mathcal{E}_k^n$ and write $S\subseteq[n]$ for the set of row indices of $\Hm$ and $S^c=[n]\setminus S$.
With the canonical choice $\Gm_\C = I_{S^c}$ for $\C=\text{Ker}(\Hm)$, one has
$$\D_\Hm(\sv)=\{\xv\in\F_2^n:\ x_{|S^c}=\sv\},\qquad \rv_\Hm^{\text{min}}(\sv)_{|S}=\xv_{|S}=0,\ \rv_\Hm^{\text{min}}(\sv)_{|S^c}=\xv_{|S^c}=\sv.$$
Thus, $|\xv|_H=|\sv|_H$.
\begin{align*}
	F_\iv'(\xv)
	& = (-1)^{|\xv|_H} \sum_{k=0}^{n} k (-1)^{n-k} N(k,\iv,\xv) \\
	& = (-1)^{|\xv|_H} \sum_{k=n-|\iv|_H}^{n-|\xv|_H} k (-1)^{n-k} \binom{|\iv|_H-|\xv|_H}{k-(n-|\iv|_H)} \\
	& = (-1)^{|\xv|_H+|\iv|_H} \sum_{r=0}^{|\iv|_H-|\xv|_H} (n-|\iv|_H+r) (-1)^{r} \binom{|\iv|_H-|\xv|_H}{r} \\
	& = (-1)^{|\xv|_H+|\iv|_H} \left[(n-|\iv|_H)\sum_{r=0}^{|\iv|_H-|\xv|_H} (-1)^{r} \binom{|\iv|_H-|\xv|_H}{r} + \sum_{r=0}^{|\iv|_H-|\xv|_H} r(-1)^{r} \binom{|\iv|_H-|\xv|_H}{r}\right] \\
	& = (-1)^{|\xv|_H+|\iv|_H} \sum_{r=0}^{|\iv|_H-|\xv|_H} r(-1)^{r} \binom{|\iv|_H-|\xv|_H}{r}
\end{align*}
Let $\xv\ne\iv$ such that $\xv\subset\iv$ and $|\iv|_H-|\xv|_H = 1$,
$F_\iv'(\xv) = (-1)(-1) = 1.$
Let $\xv\ne\iv$ such that $\xv\subset\iv$ and $|\iv|_H-|\xv|_H \ne \{0,1\}$,
$F_\iv'(\xv) = 0$.
Let $\xv\ne\iv$ such that $\xv\not\subset\iv$, $F_\iv'(\xv)=0$.
$$B = \sum_{\iv\in\F_2^n} |\iv + \onev||\halpha_\xv|^2 + \sum_{\iv\in\F_2^n} \sum_{\xv\in\F_2^n \mid \xv\subset\iv,\ |\iv|_H-|\xv|_H=1} |\halpha_\xv|^2$$
$$|\{ \xv\in\F_2^n \mid \xv\subset\iv,\ |\iv|_H-|\xv|_H=1 \}|=n-|\iv|_H=|\iv+\onev|_H$$
As expected, we obtain $$B = 2\sum_{\iv\in\F_2^n} |\iv +\onev ||\halpha_\iv|^2.$$

\subsection{Spike solution}

\begin{proposition}[Spike solution]\label{Proposition:primal_2}
	The dual solution $b_\zerov = 2^n+n-1 \text{ and } b_\iv = n-1 \text{ for } \iv\ne\zerov\in\F_2^n$
	is associated to the primal candidate solution
	\[ \lambda_\iv^\Hm = \left\{
	\begin{array}{ll}
		\frac{|\halpha_\zerov|^2}{|\halpha_\iv|^2} & \text{ for } \Hm=I, \ \iv\in\F_2^n, \\
		\frac{\sum_{\xv\in\D_\Hm(1)} |\halpha_{\xv}|^2 - \sum_{\xv\in\D_\Hm(0)} |\halpha_{\xv}|^2}{2^{n-1}|\halpha_{\iv}|^2} & \text{ for } \Hm\in\Tilde{\Lambda}_{n-1}, \ \iv\in\D_\Hm(1), \\
		0 & \text{ otherwise}.
	\end{array}
	\right. \]
\end{proposition}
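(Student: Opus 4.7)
The goal is to verify that the announced triple $(\lambda_\iv^\Hm)$ is a primal candidate solution in the sense that it satisfies constraints \ding{172} and \ding{173} of the linear program, and that the corresponding primal objective equals the dual objective $(2^n + n - 1)|\halpha_\zerov|^2 + (n-1)\sum_{\iv \ne \zerov} |\halpha_\iv|^2$ of Proposition~\ref{Proposition:22}. As explained in Section~\ref{sec:avg_bounds}, nonnegativity is not part of what is being asserted; given feasibility plus the objective-matching, complementary slackness yields the upper-bound characterization in the parameter region where $\lambda_\iv^\Hm \ge 0$. Throughout, I will parametrize $\Tilde{\Lambda}_{n-1}$ by its associated generators $\gv \in \F_2^n\setminus\{\zerov\}$, so that $\D_\Hm(s) = \{\xv \in \F_2^n : \gv \cdot \xv = s\}$ for $s \in \F_2$; since each $1$-dimensional subspace of $\F_2^n$ has a unique nonzero element, this parametrization is a bijection with no double counting.

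\textbf{Constraints \ding{173}.} This is immediate and I would dispose of it first. For $\Hm = I$ the rank is $n$, so every dual coset is a singleton and the equality is vacuous. For $\Hm \in \Tilde{\Lambda}_{n-1}$, the explicit value of $\lambda_\iv^\Hm$ is defined so that $\lambda_\iv^\Hm \, |\halpha_\iv|^2$ equals the same $\iv$-independent constant $2^{-(n-1)}(\sum_{\xv\in \D_\Hm(1)}|\halpha_\xv|^2 - \sum_{\xv\in \D_\Hm(0)}|\halpha_\xv|^2)$ when $\iv \in \D_\Hm(1)$, and equals $0$ when $\iv \in \D_\Hm(0)$; all other $\Hm$ contribute $0$. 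So both cosets of every $\Hm$ are constant, as required.

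\textbf{Constraint \ding{172} (main step).} Fix $\iv \in \F_2^n$; I want $\sum_\Hm \lambda_\iv^\Hm = 1$. The case $\iv = \zerov$ is trivial since $\zerov \in \D_\Hm(0)$ for every $\Hm \in \Tilde{\Lambda}_{n-1}$, leaving only the $\Hm = I$ contribution $|\halpha_\zerov|^2/|\halpha_\zerov|^2 = 1$. For $\iv \ne \zerov$, after multiplying by $|\halpha_\iv|^2$ the required identity reduces to
\[
|\halpha_\iv|^2 - |\halpha_\zerov|^2 \;=\; \frac{1}{2^{n-1}} \sum_{\gv \ne \zerov,\ \gv\cdot\iv=1} \Big(\sum_{\xv:\gv\cdot\xv=1}|\halpha_\xv|^2 - \sum_{\xv:\gv\cdot\xv=0}|\halpha_\xv|^2\Big).
\]
The inner difference equals $-\sum_\xv (-1)^{\gv\cdot\xv}|\halpha_\xv|^2$, so after exchanging the two sums I am left with the character sum $\sum_{\gv \ne \zerov,\ \gv\cdot\iv=1}(-1)^{\gv\cdot\xv}$. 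By writing $\mathbf{1}[\gv\cdot\iv=1] = (1-(-1)^{\gv\cdot\iv})/2$ and applying Proposition~\ref{Proposition:CodeSum} to each of the two resulting sums over $\gv \in \F_2^n$, this character sum collapses to $2^{n-1}(\one_{\xv=\zerov}-\one_{\xv=\iv})$ (and the $\gv=\zerov$ term is automatically excluded since $\zerov \cdot \iv = 0 \ne 1$). Plugging back gives exactly $|\halpha_\iv|^2 - |\halpha_\zerov|^2$.

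\textbf{Objective matching.} The $\Hm = I$ block contributes $n \cdot 2^n |\halpha_\zerov|^2$ to $\rho^L_{Av}$, since $\lambda_\iv^I |\halpha_\iv|^2 = |\halpha_\zerov|^2$ for every $\iv$. The $\Tilde{\Lambda}_{n-1}$ block contributes $(n-1)\sum_{\gv\ne\zerov}\big(\sum_{\gv\cdot\xv=1}|\halpha_\xv|^2-\sum_{\gv\cdot\xv=0}|\halpha_\xv|^2\big)$, and the same Fourier trick (Proposition~\ref{Proposition:CodeSum}) converts this to $-(n-1)(2^n-1)|\halpha_\zerov|^2 + (n-1)\sum_{\xv\ne\zerov}|\halpha_\xv|^2$. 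Adding the two blocks yields $(2^n+n-1)|\halpha_\zerov|^2 + (n-1)\sum_{\iv\ne\zerov}|\halpha_\iv|^2$, matching Proposition~\ref{Proposition:22}. The only real obstacle I foresee is notational bookkeeping around $\Tilde{\Lambda}_{n-1}$ versus $\Lambda_{n-1}$ (making precise that the $\lambda^\Hm_\iv$ are declared to vanish off the chosen representatives so that constraint \ding{172} can legitimately be restricted to $\Tilde{\Lambda}_{n-1}$); once that is set up, both verifications boil down to the same two-line hyperplane character-sum identity.
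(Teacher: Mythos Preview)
Your proposal is correct and follows essentially the same route as the paper: parametrize $\Tilde{\Lambda}_{n-1}$ by the nonzero generator $\gv$ (the paper's $\cv_0$), dispose of $\iv=\zerov$ trivially, and for $\iv\ne\zerov$ reduce constraint~\ding{172} to evaluating $\sum_{\gv:\gv\cdot\iv=1}\big(\sum_{\gv\cdot\xv=1}|\halpha_\xv|^2-\sum_{\gv\cdot\xv=0}|\halpha_\xv|^2\big)$; the objective check is then the same computation without the restriction $\gv\cdot\iv=1$. The only cosmetic difference is that you evaluate the inner sum via the character identity $\mathbf{1}[\gv\cdot\iv=1]=\tfrac12(1-(-1)^{\gv\cdot\iv})$ and Proposition~\ref{Proposition:CodeSum}, whereas the paper first splits off the $\xv\in\{\zerov,\iv\}$ terms and then argues the remainder vanishes by the hyperplane-intersection count $N_1(\xv)=N_0(\xv)=2^{n-2}$; these are two phrasings of the same orthogonality relation, and your version is slightly more streamlined.
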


\begin{proof}
	First, observe that for $\Hm = I, \ \F_2^n=\D_\Hm(0)$ and for $\Hm\in\Tilde{\Lambda}_{n-1}, \ \F_2^n=\D_\Hm(0) \sqcup \D_\Hm(1)$.

	Let $\iv=\zerov$,
	$\sum_{\Hm\in\Tilde{\Lambda}} \lambda_\zerov^\Hm = \frac{|\halpha_\zerov|^2}{|\halpha_\zerov|^2} + \sum_{\Hm\in\Tilde{\Lambda}_{n-1}} 0 = 1$ as for $\Hm\in\Tilde{\Lambda}_{n-1}, \ \zerov\notin\D_\Hm(1)$.
	
	Let $\iv\ne\zerov\in\F_2^n$,
	\begin{align*}
		S \eqdef \sum_{\Hm\in\Tilde{\Lambda}} \lambda_\iv^\Hm 
		& = \frac{|\halpha_\zerov|^2}{|\halpha_\iv|^2} + \frac{1}{2^{n-1}|\halpha_\iv|} \sum_{\Hm\in\Tilde{\Lambda}_{n-1} \mid \iv\in\D_\Hm(1)} \left( \sum_{\xv\in\D_\Hm(1)} |\halpha_\xv|^2 - \sum_{\xv\in\D_\Hm(0)} |\halpha_\xv|^2 \right) \\
		& = \frac{|\halpha_\zerov|^2}{|\halpha_\iv|^2} + \frac{1}{2^{n-1}|\halpha_\iv|} \sum_{\Hm\in\Tilde{\Lambda}_{n-1} \mid \iv\in\D_\Hm(1)} \left[ \left( |\halpha_\iv|^2 - |\halpha_\zerov|^2 \right) + \left( \sum_{\xv\in\D_\Hm(1)\setminus\{\iv\}} |\halpha_\xv|^2 - \sum_{\xv\in\D_\Hm(0)\setminus\{\zerov\}} |\halpha_\xv|^2 \right) \right] \\
	\end{align*}
	Let's show that 
	$$A \eqdef \sum_{\Hm\in\Tilde{\Lambda}_{n-1} \mid \iv\in\D_\Hm(1)} \left( \sum_{\xv\in\D_\Hm(1)\setminus\{\iv\}} |\halpha_\xv|^2 - \sum_{\xv\in\D_\Hm(0)\setminus\{\zerov\}} |\halpha_\xv|^2 \right) = 0.$$
	For $\Hm\in\Tilde{\Lambda}_{n-1}$, $\C=\text{Ker}(\Hm)$ of dimension $1$ by rank-nullity theorem.
	So, there exists $\cv_0\ne\zerov$ such that $\C=\text{span}\{\cv_0\}$.
	We choose $\Gm_\C = \trp{\cv}_0$.
	We rewrite $\D_\Hm(1) = \{\xv\in\F_2^n \mid \Gm_\C\cdot\xv = 1\} = \{\xv\in\F_2^n \mid \trp{\cv}_0\cdot\xv=1\}$ and $\D_\Hm(0) = \{\xv\in\F_2^n \mid \trp{\cv}_0\cdot\xv=0\}$.
	Remark that $\zerov\notin\D_\Hm(1)$ by definition and $\iv\not\in\D_\Hm(\zerov)$ as $\iv\in\D_\Hm(1)$.
	Then,
	\begin{align*}
		A & = \sum_{c_0\in\F_2^n \mid \trp{c}_0\cdot\iv = 1}\left(\sum_{\xv\in\F_2^n\setminus\{\iv\} \mid \trp{\cv}_0\cdot\xv=1}|\halpha_\xv|^2 - \sum_{\xv\in\F_2^n\setminus\{\zerov\} \mid \trp{\cv}_0\cdot\xv=0}|\halpha_\xv|^2 \right) \\
		  & = \sum_{\xv\in\F_2^n\setminus\{\zerov,\iv\}}|\halpha_\xv|^2 (N_1(\xv) - N_0(\xv))
	\end{align*}
	where we define for all $\xv\in\F_2^n\setminus\{\zerov,\iv\}$
	$$N_1(\xv) \eqdef |\{\cv_0\in\F_2^n \mid \trp{\cv}_0\cdot\iv = 1 \text{ and } \trp{\cv}_0\cdot\xv = 1\}|,$$
	$$N_0(\xv) \eqdef |\{\cv_0\in\F_2^n \mid \trp{\cv}_0\cdot\iv = 1 \text{ and } \trp{\cv}_0\cdot\xv = 0\}|.$$
	We are counting the number of vectors in the intersection of two affine hyperplanes.
	$\cv \mapsto \trp{\cv}\iv$ and $\cv \mapsto \trp{\cv}\xv$ are linearly independent as $\xv\ne\zerov,\iv$.
	Thus, $N_1(\xv) = N_0(\xv) = 2^{n-2}$ and $A=0$.
	\begin{align*}
		S & = \frac{|\halpha_\zerov|^2}{|\halpha_\iv|^2} + \frac{1}{2^{n-1}|\halpha_\iv|} \sum_{\Hm\in\Tilde{\Lambda}_{n-1}\mid\iv\in\D_\Hm(1)} (|\halpha_\iv|^2 - |\halpha_\zerov|^2) + 0 \\
		  & = \frac{|\halpha_\zerov|^2}{|\halpha_\iv|^2} + \frac{1}{2^{n-1}|\halpha_\iv|} (|\halpha_\iv|^2 - |\halpha_\zerov|^2)|\{\Hm\in\Tilde{\Lambda}_{n-1}\mid\iv\in\D_\Hm(1)\}|
	\end{align*}
	$\left\{\Hm\in\Tilde{\Lambda}_{n-1} \mid \iv\in\D_\Hm(1)\right\}$ defines an affine hyperplane of cardinality $2^{n-1}$.
	\begin{align*} 
		S & = \frac{|\halpha_\zerov|^2}{|\halpha_\iv|^2} + \frac{1}{2^{n-1}|\halpha_\iv|} 2^{n-1} (|\halpha_\iv|^2 - |\halpha_\zerov|^2) \\
		& = 1.
	\end{align*}
	Finally, if $\lambda_\iv^\Hm \ge 0$, complementary slackness conditions are satisfied as for all $\iv\in\F_2^n$, $(\sum_{\Hm} \lambda_\iv^\Hm - 1)b_\iv=0$.
	The primal and dual objective are equal, it is the optimum.
\end{proof}

We can double check that the dual value is attained.
\begin{align*}
	\sum_{k=0}^{n} \sum_{\Hm\in\Tilde{\Lambda}_k} \sum_{\iv\in\F_2^n} k\lambda_\iv^\Hm|\halpha_\iv|^2
	& = n2^n|\halpha_\zerov|^2 + \frac{n-1}{2^{n-1}} \sum_{\iv\in\F_2^n\setminus\{\zerov\}} \sum_{\Hm\in\Tilde{\Lambda}_{n-1} \mid \iv\in\D_\Hm(1)} \left( \sum_{\xv\in\D_\Hm(1)} |\halpha_{\xv}|^2 - \sum_{\xv\in\D_\Hm(0)} |\halpha_{\xv}|^2 \right) \\
	& = n2^n|\halpha_\zerov|^2 + (n-1)\sum_{\iv\in\F_2^n\setminus\{\zerov\}}(|\halpha_\iv|^2 - |\halpha_\zerov|^2) + 0 \\
	& = (n2^n - (n-1)(2^n-1))|\halpha_\zerov|^2 + (n-1)\sum_{\iv\in\F_2^n\setminus\{\zerov\}}|\halpha_\iv|^2 \\
	& = (2^n+n-1)|\halpha_\zerov|^2 + (n-1)\sum_{\iv\in\F_2^n\setminus\{\zerov\}} |\halpha_\iv|^2
\end{align*}

\COMMENT{\subsubsection{2-bits optimal measurement}
Let
$\Hm_1 = 
\begin{pmatrix}
	1 & 0
\end{pmatrix}, \ 
\Hm_2 = 
\begin{pmatrix}
	0 & 1
\end{pmatrix} \text{ and }
\Hm_3 = 
\begin{pmatrix}
	1 & 1
\end{pmatrix}$.

\begin{figure}[H]
	\centering
	\begin{minipage}{0.4\textwidth}
		\centering
		\includegraphics[width=\textwidth]{polytope.png}
		\caption{Projection of the polytope in the subspace $(1, \lambda_1^{\Hm_1}, \lambda_2^{\Hm_2}, \delta_3)$ in the case $|\halpha_0|^2 \leq |\halpha_1|^2 \leq |\halpha_2|^2 \leq |\halpha_3|^2$ and $|\halpha_1|^2 + |\halpha_2|^2 \leq |\halpha_0|^2 + |\halpha_3|^2$. Plotted with $|\halpha|^2=[0.05, 0.15, 0.3, 0.5]$.}
		\label{fig:polytope}
	\end{minipage}
	\hfil
	\begin{minipage}{0.4\textwidth}
		\centering
		\includegraphics[width=\textwidth]{polytope2.png}
		\caption{Projection of the polytope in the subspace $(1, \lambda_1^{\Hm_1}, \lambda_2^{\Hm_2}, \lambda_1^{\Hm_3})$ in the case $|\halpha_0|^2 \leq |\halpha_1|^2 \leq |\halpha_2|^2 \leq |\halpha_3|^2$ and $|\halpha_0|^2 + |\halpha_3|^2 \leq |\halpha_1|^2 + |\halpha_2|^2$. Plotted with $|\halpha|^2=[0.05, 0.3, 0.3, 0.35]$.}
		\label{fig:polytope2}
	\end{minipage}
\end{figure}

\begin{proposition}
	The dual optimal solution is either of the form $\bv = (4,2,2,0)$ or $\bv=(5,1,1,0)$ up to permutation.
\end{proposition}

\begin{proof}
	Without loss of generality by \ref{Proposition:dual_perm}, assume that $|\halpha_0|^2 \leq |\halpha_1|^2 \leq |\halpha_2|^2 \leq |\halpha_3|^2$.
	If $|\halpha_1|^2 + |\halpha_2|^2 \leq |\halpha_0|^2 + |\halpha_3|^2$ then, $\bv=(4,2,2,0)$ is the dual optimal solution according to \ref{Proposition:primal_1}.
	Otherwise, we have $|\halpha_0|^2 + |\halpha_3|^2 \leq |\halpha_1|^2 + |\halpha_2|^2$ and in this case $\bv=(5,1,1,1)$ is the optimal dual solution according to \ref{Proposition:primal_2}.
\end{proof}}
\section{Full characterization of fine-grained unambiguous measurements on $\F_2^2$ in the average setting}\label{Appendix:n=2}
In this section, we describe all the measurements that are possible on $2$ bits when considering the average number of parities setting.
The only parities we can learn on a 2 bits codeword $\xv=x_1x_2$ are $\xv=00,\ \xv=01,\ \xv=10,\ \xv=11,\ x_2=0,\ x_2=1,\ x_1=0,\ x_1=1,\ x_1\oplus x_2=0,\ x_1\oplus x_2=1,\ \emptyset$.
These parities are described using the set of matrices 
$$
\tilde{\Lambda}=\left\{
\begin{pmatrix}
    1 & 0 \\
    0 & 1 
\end{pmatrix},
\begin{pmatrix}
    0 & 1 
\end{pmatrix},
\begin{pmatrix}
    1 & 0
\end{pmatrix},
\begin{pmatrix}
    1 & 1
\end{pmatrix},
\begin{pmatrix}
    0 & 0
\end{pmatrix}
\right\}
$$
which correspond respectively to learning the full codeword, the first bit, the second bit, the xor or nothing. We use the notation of Definition~\ref{Definition:LambdaTilde}, where we consider in $\tilde{\Lambda}$ only one matrix $\Hm$ per affine subspace.

The corresponding unambiguous POVM is $\{F_{00}, F_{01}, F_{10}, F_{11}, F_{\bot0}, F_{\bot1}, F_{0\bot}, F_{1\bot}, F_{\text{xor}=0}, F_{\text{xor}=1}, F_\bot\}$.
When considering the states $\ket{\psi_\xv} = \halpha_{00}\ket{\widehat{00}} + (-1)^{c_2} \halpha_{01}\ket{\widehat{01}} + (-1)^{c_1} \halpha_{10}\ket{\widehat{10}} + (-1)^{c_1+c_2} \halpha_{11}\ket{\widehat{11}}$,
we assume without loss of generality that $|\halpha_{00}|^2 \leq |\halpha_{01}|^2 \leq |\halpha_{10}|^2 \leq |\halpha_{11}|^2$.

We first associate to the five matrices $\Hm$ in $\tilde{\Lambda}$ real numbers $\lambda_\iv^\Hm$ with $\iv \in \F_2^2$. We rewrite
\begin{align*}
	\lambda_{\iv}  = \lambda_{\iv}^{\begin{pmatrix}
			1 & 0 \\
			0 & 1 
			\end{pmatrix}} \ ; \ \mu_{\iv}  = \lambda_{\iv}^{\begin{pmatrix}
			0 & 1
		\end{pmatrix}} \ ; \ \nu_{\iv} = \lambda_{\iv}^{\begin{pmatrix}
		1 & 0 \end{pmatrix}} \ ; \ \xi_{\iv} = \lambda_{\iv}^{\begin{pmatrix}
			1 & 1 \end{pmatrix}} \ ; \ \delta_{\iv} = \lambda_{\iv}^{\begin{pmatrix}
				0 & 0 \end{pmatrix}}.
\end{align*} 

We now rewrite the linear program associated to the objective $\rho^L_{Av}(S)$. We first write the linear relations between the variables. In our case, we obtain

    $$\lambda_{00} |\halpha_{00}|^2 = \lambda_{01} |\halpha_{01}|^2 = \lambda_{10} |\halpha_{10}|^2 = \lambda_{11} |\halpha_{11}|^2$$
   as well as 
\begin{align*}
	\mu_{00} |\halpha_{00}|^2 = \mu_{10} |\halpha_{10}|^2
	&& \mu_{01} |\halpha_{01}|^2 = \mu_{11} |\halpha_{11}|^2 \\
	\nu_{00} |\halpha_{00}|^2 = \nu_{01} |\halpha_{01}|^2
	&& \nu_{10} |\halpha_{10}|^2 = \nu_{11} |\halpha_{11}|^2 \\
	\xi_{00} |\halpha_{00}|^2 = \xi_{11} |\halpha_{11}|^2
	&& \xi_{01} |\halpha_{01}|^2 = \xi_{10} |\halpha_{10}|^2
\end{align*}
We plug these relations in the linear program in order to reduce the number of variables. We thus obtain the following linear program


\begin{mdframed}[linewidth=1pt, roundcorner=5pt, nobreak=true, innerleftmargin=10pt, innerrightmargin=10pt, innertopmargin=8pt, innerbottommargin=8pt]
	\begin{center}
		\textbf{$2$-bits primal program}
	\end{center}
	\vspace{0.5em}
    Variables:
    $$\lambda_{00}, \mu_{00}, \mu_{01}, \nu_{00}, \nu_{10}, \xi_{00}, \xi_{01}, \delta_{00}, \delta_{01}, \delta_{10}, \delta_{11} \in \mathbb{R}_{+}$$
    Objectives:
    $$\rho^L(S) \eqdef \max 8\lambda_{00}|\halpha_{00}|^2 + 2(\mu_{00}|\halpha_{00}|^2+\mu_{10}|\halpha_{01}|^2) + 2(\nu_{00}|\halpha_{00}|^2+\nu_{10}|\halpha_{10}|^2) + 2(\xi_{00}|\halpha_{00}|^2+\xi_{01}|\halpha_{01}|^2)$$
    Constraints:
    \begin{align*}
        1 - \lambda_{00} - \mu_{00} - \nu_{00} - \xi_{00} - \delta_{00} &=0\\
        1 - \frac{|\halpha_{00}|^2}{|\halpha_{01}|^2}\lambda_{00} - \mu_{01} - \frac{|\halpha_{00}|^2}{|\halpha_{01}|^2}\nu_{00} - \xi_{01} - \delta_{01} &=0\\
        1 -\frac{|\halpha_{00}|^2}{|\halpha_{10}|^2}\lambda_{00} - \frac{|\halpha_{00}|^2}{|\halpha_{10}|^2}\mu_{00} - \nu_{10} - \frac{|\halpha_{01}|^2}{|\halpha_{10}|^2}\xi_{01} - \delta_{10} &=0\\
        1 -\frac{|\halpha_{00}|^2}{|\halpha_{11}|^2}\lambda_{00} - \frac{|\halpha_{01}|^2}{|\halpha_{11}|^2}\mu_{01} - \frac{|\halpha_{00}|^2}{|\halpha_{10}|^2}\nu_{10} - \frac{|\halpha_{00}|^2}{|\halpha_{11}|^2}\xi_{00} - \delta_{11} &=0\\
	\end{align*}
\COMMENT{	\begin{alignat*}{6}
		1 &{}-{}& \lambda_{00} 
		&{}-{}& \mu_{00} 
		&{}-{}& \nu_{00} 
		&{}-{}& \xi_{00} 
		&{}-{}& \delta_{00} &= 0\\[4pt]
		1 &{}-{}& \frac{|\halpha_{00}|^2}{|\halpha_{01}|^2}\lambda_{00} 
		&{}-{}& \mu_{01} 
		&{}-{}& \frac{|\halpha_{00}|^2}{|\halpha_{01}|^2}\nu_{00} 
		&{}-{}& \xi_{01} 
		&{}-{}& \delta_{01} &= 0\\[4pt]
		1 &{}-{}& \frac{|\halpha_{00}|^2}{|\halpha_{10}|^2}\lambda_{00} 
		&{}-{}& \frac{|\halpha_{00}|^2}{|\halpha_{10}|^2}\mu_{00} 
		&{}-{}& \nu_{10} 
		&{}-{}& \frac{|\halpha_{01}|^2}{|\halpha_{10}|^2}\xi_{01} 
		&{}-{}& \delta_{10} &= 0\\[4pt]
		1 &{}-{}& \frac{|\halpha_{00}|^2}{|\halpha_{11}|^2}\lambda_{00} 
		&{}-{}& \frac{|\halpha_{01}|^2}{|\halpha_{11}|^2}\mu_{01} 
		&{}-{}& \frac{|\halpha_{00}|^2}{|\halpha_{10}|^2}\nu_{10} 
		&{}-{}& \frac{|\halpha_{00}|^2}{|\halpha_{11}|^2}\xi_{00} 
		&{}-{}& \delta_{11} &= 0
	\end{alignat*}}
    \COMMENT{$$\lambda_{00} |\halpha_{00}|^2 = \lambda_{01} |\halpha_{01}|^2 = \lambda_{10} |\halpha_{10}|^2 = \lambda_{11} |\halpha_{11}|^2$$
    \begin{align*}
        \mu_{00} |\halpha_{00}|^2 = \mu_{10} |\halpha_{10}|^2
        && \mu_{01} |\halpha_{01}|^2 = \mu_{11} |\halpha_{11}|^2 \\
        \nu_{00} |\halpha_{00}|^2 = \nu_{01} |\halpha_{01}|^2
        && \nu_{10} |\halpha_{10}|^2 = \nu_{11} |\halpha_{11}|^2 \\
        \xi_{00} |\halpha_{00}|^2 = \xi_{11} |\halpha_{11}|^2
        && \xi_{01} |\halpha_{01}|^2 = \xi_{10} |\halpha_{10}|^2
    \end{align*}}
\end{mdframed}

From this program, we construct the associated dual linear program. 
\begin{mdframed}[linewidth=1pt, roundcorner=5pt, nobreak=true, innerleftmargin=10pt, innerrightmargin=10pt, innertopmargin=8pt, innerbottommargin=8pt]
	\begin{center}
		\textbf{2-bits dual program}
	\end{center}
	\vspace{0.5em}
	Variables:
    $$b_{00}, b_{01}, b_{10}, b_{11} \in \mathbb{R}_+$$
	Objective:
    $$\sigma^L(S) \eqdef \min b_{00}|\halpha_{00}|^2 + b_{01}|\halpha_{01}|^2 + b_{10}|\halpha_{10}|^2 + b_{11}|\halpha_{11}|^2$$
	Constraints:
    $$b_{00} + b_{01} + b_{10} + b_{11} \geq 8$$
	\begin{align*}
        b_{00} + b_{10} \geq 2 
        && b_{01} + b_{11} \geq 2 \\
        b_{00} + b_{01} \geq 2
        && b_{10} + b_{11} \geq 2 \\
        b_{00} + b_{11} \geq 2
        && b_{01} + b_{10} \geq 2
	\end{align*}
\end{mdframed}


By Corollary~\ref{Corollary:co-Hamming_bound} and Proposition~\ref{Proposition:primal_1}, we obtain that

\begin{corollary}[co-Hamming solution]
    The optimal valus is $4|\halpha_{00}|^2 + 2|\halpha_{01}|^2 + 2|\halpha_{10}|^2$ when
    \begin{align*}
    \lambda_{00} = 1, \\
    \mu_{00} = 0, & \quad \mu_{01} = \frac{|\halpha_{01}|^2-|\halpha_{00}|^2}{|\halpha_{01}|^2},\\
    \nu_{00} = 0, & \quad \nu_{10} = \frac{|\halpha_{10}|^2-|\halpha_{00}|^2}{|\halpha_{10}|^2},\\
    \xi_{00} = 0, & \quad \xi_{01} = 0
    \end{align*}
    and 
    $$\delta_{00} = 0, \quad \delta_{01} = 0, \quad \delta_{10} = 0, \quad \delta_{11} = \frac{|\halpha_{00}|^2+|\halpha_{11}|^2-|\halpha_{01}|^2-|\halpha_{10}|^2}{2|\halpha_{11}|^2}$$
    are nonnegative.
\end{corollary}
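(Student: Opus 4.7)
The idea is to specialize Corollary~\ref{Corollary:co-Hamming_bound} and Proposition~\ref{Proposition:primal_1} to $n=2$, and to check that the resulting primal--dual pair is optimal under the hypothesis on the $|\halpha_\iv|^2$'s.

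First, I would evaluate the co-Hamming upper bound for $n=2$. Since $|\iv+\onev|_H$ takes the values $2,1,1,0$ for $\iv = 00,01,10,11$ respectively, Corollary~\ref{Corollary:co-Hamming_bound} immediately yields
$$\sigma^L_{Av}(S) \;\le\; 4|\halpha_{00}|^2 + 2|\halpha_{01}|^2 + 2|\halpha_{10}|^2,$$
which matches the value claimed in the statement. By the strong duality already established, it therefore suffices to exhibit a feasible primal assignment attaining this value.

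Second, I would transcribe the co-Hamming primal candidate of Proposition~\ref{Proposition:primal_1} into the reduced variables $\lambda_{00},\mu_{00},\mu_{01},\nu_{00},\nu_{10},\xi_{00},\xi_{01},\delta_{00},\delta_{01},\delta_{10},\delta_{11}$ of the 2-bit primal program. For $n=2$ the set $\mathcal{E}_k^{2}$ consists of the empty matrix at $k=0$, the canonical rows $\mathbf{e}_1$ and $\mathbf{e}_2$ at $k=1$, and $I_2$ at $k=2$; in particular $(1\ 1)\notin\mathcal{E}_1^2$, so the formula assigns $\xi_{00}=\xi_{01}=0$. Computing the minimum-weight coset leader $\rv_\Hm^{\min}(\sv)$ for each $\Hm\in\mathcal{E}_k^{2}$ and plugging into the explicit formula of Proposition~\ref{Proposition:primal_1} produces exactly the values displayed in the corollary for $\lambda_{00}$, $\mu_{01}$, $\nu_{10}$ and $\delta_{11}$, and forces $\mu_{00}=\nu_{00}=\delta_{00}=\delta_{01}=\delta_{10}=0$. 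The sum-to-one constraints \ding{172} of the 2-bit program then hold automatically, since Proposition~\ref{Proposition:primal_1} already verified them in general.

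Third, I would check nonnegativity. Under the assumed ordering $|\halpha_{00}|^2 \le |\halpha_{01}|^2 \le |\halpha_{10}|^2 \le |\halpha_{11}|^2$, the values $\lambda_{00}=1$, $\mu_{01}$ and $\nu_{10}$ are manifestly nonnegative, and the zero variables pose no issue. The only substantive constraint is $\delta_{11}\ge 0$, which is equivalent to $|\halpha_{00}|^2 + |\halpha_{11}|^2 \ge |\halpha_{01}|^2 + |\halpha_{10}|^2$, i.e.\ precisely the hypothesis of the corollary. Substituting the candidate into the objective of the 2-bit primal program and simplifying yields $4|\halpha_{00}|^2 + 2|\halpha_{01}|^2 + 2|\halpha_{10}|^2$, which matches the dual bound; by strong LP duality both are optimal.

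The main obstacle is purely bookkeeping: one must carefully reconcile the representatives chosen in $\mathcal{E}_k^{2}$ with the reduced variables used in the 2-bit program, and track the coset-leader contributions in the alternating sum of Proposition~\ref{Proposition:primal_1}. No new idea beyond what has already been proved is needed.
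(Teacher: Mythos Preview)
Your proposal is correct and follows exactly the paper's approach: the paper's own proof consists solely of the line ``By Corollary~\ref{Corollary:co-Hamming_bound} and Proposition~\ref{Proposition:primal_1}, we obtain that\ldots'', and your plan simply unpacks this specialization to $n=2$ with the bookkeeping you describe.
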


By Proposition~\ref{Proposition:22} and Proposition~\ref{Proposition:primal_2}, we obtain that

\begin{corollary}[Spike solution]
    The optimal valus is $5|\halpha_{00}|^2 + |\halpha_{01}|^2 + |\halpha_{10}|^2 + |\halpha_{11}|^2$ when
    \begin{align*}
    \lambda_{00} = 1, \\
    \mu_{00} = 0, & \quad \mu_{01} = \frac{|\halpha_{01}|^2+|\halpha_{11}|^2-|\halpha_{00}|^2-|\halpha_{10}|^2}{2|\halpha_{01}|^2},\\
    \nu_{00} = 0, & \quad \nu_{10} = \frac{|\halpha_{10}|^2+|\halpha_{11}|^2-|\halpha_{00}|^2-|\halpha_{01}|^2}{2|\halpha_{10}|^2},\\
    \xi_{00} = 0, & \quad \xi_{01} = \frac{|\halpha_{01}|^2+|\halpha_{10}|^2-|\halpha_{00}|^2-|\halpha_{11}|^2}{2|\halpha_{01}|^2},
    \end{align*}
    and
    $$\delta_{00} = 0, \quad \delta_{01} = 0, \quad \delta_{10} = 0, \quad \delta_{11} = 0$$
    are nonnegative.
\end{corollary}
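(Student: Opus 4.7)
My plan is to combine the general spike upper bound from Proposition~\ref{Proposition:22} with the matching primal candidate from Proposition~\ref{Proposition:primal_2}, both specialized to $n=2$. Specializing Proposition~\ref{Proposition:22} gives the dual feasible choice $b_{\zerov}=2^2+2-1=5$ and $b_\iv=2-1=1$ for $\iv\neq\zerov$, whose objective is exactly
\[
5|\halpha_{00}|^2 + |\halpha_{01}|^2 + |\halpha_{10}|^2 + |\halpha_{11}|^2.
\]
By strong duality (Proposition~13), this upper bounds $\rho_{Av}(S)=\rho^L_{Av}(S)=\sigma^L_{Av}(S)$.

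Second, I would specialize Proposition~\ref{Proposition:primal_2} to $n=2$ and translate into the notation of the two-bit primal program. Here $\tilde{\Lambda}=\{I_2\}\cup\tilde{\Lambda}_1\cup\{(0\ 0)\}$ with $\tilde{\Lambda}_1=\{(1\ 0),(0\ 1),(1\ 1)\}$ corresponding respectively to $\nu,\mu,\xi$. The formulas in Proposition~\ref{Proposition:primal_2} immediately give $\lambda_\iv^{I_2}=|\halpha_{00}|^2/|\halpha_\iv|^2$, so $\lambda_{00}=1$ and the linear relations between cosets fix $\lambda_{01},\lambda_{10},\lambda_{11}$. For each $\Hm\in\tilde{\Lambda}_1$ the primal candidate is supported only on $\D_\Hm(\onev)$, so $\mu_{00}=\nu_{00}=\xi_{00}=0$; the three nonzero values $\mu_{01},\nu_{10},\xi_{01}$ are then exactly the closed-form expressions printed in the statement, after identifying the cosets of each matrix in $\tilde{\Lambda}_1$. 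The slack values $\delta_\iv$ are determined by the normalization constraints $\sum_{\Hm}\lambda_\iv^\Hm=1$; a short calculation using $\sum_\iv|\halpha_\iv|^2=1$ and the coset sum identities yields $\delta_{00}=\delta_{01}=\delta_{10}=\delta_{11}=0$, as claimed.

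Third, I would observe that under the hypothesis of the corollary, namely that the stated expressions for $\mu_{01},\nu_{10},\xi_{01}$ are nonnegative, the whole candidate is a genuine feasible primal solution: all other variables are manifestly nonnegative and the equality constraints of the two-bit linear program are satisfied by construction from Proposition~\ref{Proposition:primal_2}. Its objective value equals the dual value $5|\halpha_{00}|^2+|\halpha_{01}|^2+|\halpha_{10}|^2+|\halpha_{11}|^2$ by the computation already carried out at the end of the proof of Proposition~\ref{Proposition:primal_2}, so strong duality forces both to be optimal. The main obstacle is purely bookkeeping: unpacking the spike formula $\bigl(\sum_{\xv\in\D_\Hm(\onev)}|\halpha_\xv|^2-\sum_{\xv\in\D_\Hm(\zerov)}|\halpha_\xv|^2\bigr)/(2|\halpha_\iv|^2)$ for each of the three matrices in $\tilde{\Lambda}_1$ and matching the resulting numerators to the specific expressions given for $\mu_{01},\nu_{10},\xi_{01}$; this requires care with the convention for $\D_\Hm(\sv)$ but involves no new ideas beyond those already in Appendix~\ref{Appendix:A}.
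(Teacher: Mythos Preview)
Your proposal is correct and follows essentially the same approach as the paper: the paper's entire proof of this corollary is the single line ``By Proposition~\ref{Proposition:22} and Proposition~\ref{Proposition:primal_2}, we obtain that'', and you have spelled out precisely this specialization to $n=2$, together with the bookkeeping needed to match the spike primal candidate to the two-bit variables $\lambda,\mu,\nu,\xi,\delta$.
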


\begin{corollary}
    Up to parameters permutations, these are the only two families of 2-bits measurement possible in the average setting.
\end{corollary}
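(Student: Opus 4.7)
The plan is to observe that the dual program is the minimization of the linear functional $\bv \mapsto \sum_\iv b_\iv |\halpha_\iv|^2$ over the polyhedron
\[
\Pi_2 = \bigl\{\bv \in \mathbb{R}_+^{\F_2^2} : b_\iv + b_\jv \ge 2 \text{ for all } \iv \ne \jv, \ \textstyle\sum_\iv b_\iv \ge 8\bigr\}.
\]
Since this objective is bounded below by $0$ and the feasible region is a pointed polyhedron, the minimum is attained at a vertex of $\Pi_2$. The strategy is therefore to enumerate the vertices of $\Pi_2$ and show that every vertex is a coordinate permutation of either $(4,2,2,0)$ or $(5,1,1,1)$. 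The corollary will then follow, since every optimal dual solution is one of these vertices and, by the complementary slackness computations in Propositions~\ref{Proposition:primal_1} and~\ref{Proposition:primal_2}, each such vertex corresponds to exactly one of the two displayed families of primal measurements.

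The enumeration splits by the number of vanishing coordinates. Any pair constraint $b_\iv + b_\jv \ge 2$ forbids two coordinates from being simultaneously zero, so at most one coordinate vanishes at a feasible point. In the case of exactly one zero, say $b_{11}=0$, the three pair constraints involving $b_{11}$ sharpen to $b_\iv \ge 2$ for $\iv \ne 11$, and the pair constraints among the three non-zero coordinates then read $b_\iv + b_\jv \ge 4 > 2$ and are automatically slack. A vertex must therefore be cut out by three further independent tight constraints drawn from $\{b_{00}\ge 2,\,b_{01}\ge 2,\,b_{10}\ge 2,\,\sum_\iv b_\iv \ge 8\}$, and a direct inspection of the $\binom{4}{3}$ subsets shows that tightening all three $b_\iv \ge 2$ forces $\sum_\iv b_\iv = 6 < 8$ (infeasible), while any other choice fixes two non-zero coordinates at $2$ and the third at $4$, giving the $(4,2,2,0)$ orbit.

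In the remaining case, all four coordinates are strictly positive and the four tight constraints must be drawn from the six pair constraints and the sum constraint. The structural observation here is that the six pair constraints split into three \emph{complementary pairs} $\bigl\{\{\iv,\jv\},\{\iv',\jv'\}\bigr\}$ with $\{\iv,\jv\} \sqcup \{\iv',\jv'\} = \F_2^2$, whose inequalities add up to $\sum_\iv b_\iv \ge 4$; simultaneous tightness of both members would force the total sum to equal $4$, contradicting the sum constraint. Hence at most one pair constraint per complementary class is tight, capping the number of tight pair constraints at three and forcing the sum constraint to be tight as well. The three tight pair constraints then form a $3$-edge graph on the four vertices of $\F_2^2$; up to isomorphism the only such graphs are the star $K_{1,3}$, the disjoint union $K_3 \sqcup K_1$, and the path $P_4$. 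The star would require its central coordinate to equal $-1$ (infeasible), the path has its two extremal edges forming a complementary pair (excluded by the observation above), and the triangle forces the three triangle coordinates to equal $1$ and the isolated coordinate to equal $5$, yielding the $(5,1,1,1)$ orbit.

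The main obstacle is purely combinatorial bookkeeping: verifying that the three isomorphism classes of $3$-edge graphs on four vertices really are exhaustive and that the complementary-pair argument cleanly eliminates exactly the path case. Once the vertex inventory above is established, the corollary is immediate, and the matching between vertex orbits and primal measurement families is already furnished by the complementary slackness analyses given previously.
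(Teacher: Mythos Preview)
Your proof is correct but follows a genuinely different route from the paper's. The paper argues on the \emph{primal} side: after sorting the parameters so that $|\halpha_{00}|^2 \le |\halpha_{01}|^2 \le |\halpha_{10}|^2 \le |\halpha_{11}|^2$, it simply checks that the sign of $|\halpha_{00}|^2 + |\halpha_{11}|^2 - |\halpha_{01}|^2 - |\halpha_{10}|^2$ decides which of the two candidate primal solutions (co-Hamming or spike) has all its entries nonnegative, and hence is the optimal measurement. You instead work entirely on the \emph{dual} side, enumerating the vertices of the four-dimensional polyhedron $\Pi_2$ and showing they fall into exactly two permutation orbits, $(4,2,2,0)$ and $(5,1,1,1)$, then invoking complementary slackness to map each orbit back to one primal family. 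Your approach is more structural: it explains \emph{why} only two families can ever arise, independently of the parameters, and the graph-theoretic case split (star, triangle, path) together with the complementary-pair obstruction is a clean way to organize the vertex search. The paper's approach is shorter and more direct for this particular $n=2$ instance, since the single sign condition is easy to read off, but it gives less insight into the polyhedral geometry and would not scale as transparently.
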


\begin{proof}
    We assume without loss of generality that $|\halpha_{00}|^2 \leq |\halpha_{01}|^2 \leq |\halpha_{10}|^2 \leq |\halpha_{11}|^2$.
    By assumptions, we always have that $|\halpha_{01}|^2 - |\halpha_{00}|^2 \geq 0$ and $|\halpha_{10}|^2 - |\halpha_{00}|^2 \geq 0$.
    Moreover, $|\halpha_{10}|^2 + |\halpha_{11}|^2 \geq |\halpha_{00}|^2 + |\halpha_{11}|^2$ and $|\halpha_{01}|^2 + |\halpha_{11}|^2 \geq |\halpha_{00}|^2 + |\halpha_{11}|^2 \geq |\halpha_{00}|^2 + |\halpha_{10}|^2$.
    Finally, either $|\halpha_{00}|^2 + |\halpha_{11}|^2 \geq |\halpha_{01}|^2 + |\halpha_{10}|^2$ and the co-Hamming solution is nonnegative and the optimal value is $4|\halpha_{00}|^2 + 2|\halpha_{01}|^2 + 2|\halpha_{10}|^2$
    or, $|\halpha_{00}|^2 + |\halpha_{11}|^2 \leq |\halpha_{01}|^2 + |\halpha_{10}|^2$ and the spike solution is nonnegative and the optimal value is $5|\halpha_{00}|^2 + |\halpha_{01}|^2 + |\halpha_{10}|^2 + |\halpha_{11}|^2$.
\end{proof}

We can then generalize this analysis in the case we don't have $|\halpha_{00}|^2 \leq |\halpha_{01}|^2 \leq |\halpha_{10}|^2 \leq |\halpha_{11}|^2$ by sorting the indices with respect to the largest $|\halpha_{\iv}|^2$ and then performing the same analysis. 

\COMMENT{
\begin{figure}[H]
	\centering
	\begin{minipage}{0.4\textwidth}
		\centering
		\includegraphics[width=\textwidth]{polytope.png}
		\caption{Projection of the polytope in the subspace $(1, \mu_{01}, \nu_{10}, \delta_3)$ in the case $|\halpha_0|^2 \leq |\halpha_1|^2 \leq |\halpha_2|^2 \leq |\halpha_3|^2$ and $|\halpha_1|^2 + |\halpha_2|^2 \leq |\halpha_0|^2 + |\halpha_3|^2$. Plotted with $|\halpha|^2=[0.05, 0.15, 0.3, 0.5]$.}
		\label{fig:polytope}
	\end{minipage}
	\hfil
	\begin{minipage}{0.4\textwidth}
		\centering
		\includegraphics[width=\textwidth]{polytope2.png}
		\caption{Projection of the polytope in the subspace $(1, \mu_{01}, \nu_{10}, \xi_{01})$ in the case $|\halpha_0|^2 \leq |\halpha_1|^2 \leq |\halpha_2|^2 \leq |\halpha_3|^2$ and $|\halpha_0|^2 + |\halpha_3|^2 \leq |\halpha_1|^2 + |\halpha_2|^2$. Plotted with $|\halpha|^2=[0.05, 0.3, 0.3, 0.35]$.}
		\label{fig:polytope2}
	\end{minipage}
\end{figure}
}

\end{document}